\DeclareMathAlphabet{\mathcal}{OMS}{ntxm}{m}{n}
\let\mathbb\relax
\let\mathbb\mathds
\renewcommand{\paragraph}{%
  \@startsection{paragraph}{4}%
  {\z@}{2.25ex \@plus 1ex \@minus .2ex}{-1em}%
  {\normalfont\normalsize\bfseries}%
}
\definecolor{linkblue}{HTML}{001487}
\newtheorem{theorem}{Theorem}[section]
\newtheorem*{theorem*}{Theorem}
\newtheorem{lemma}[theorem]{Lemma}
\newtheorem{corollary}[theorem]{Corollary}
\theoremstyle{remark}
\newtheorem{remark}[theorem]{Remark}
\theoremstyle{definition}
\newtheorem{definition}[theorem]{Definition}
\newtheorem{protocol}{Protocol}
\newtheorem{process}{Process}
\numberwithin{equation}{section}
\newcommand{\setft}[1]{\textnormal{#1}}
\newcommand{\eps}{\epsilon}
\newcommand{\1}{\mathds{1}}
\newcommand{\id}{\setft{id}}
\newcommand{\C}{\ensuremath{\mathds{C}}}
\newcommand{\Z}{\ensuremath{\mathds{Z}}}
\newcommand{\bits}{\ensuremath{\{0, 1\}}}
\newcommand{\ot}{\ensuremath{\otimes}}
\newcommand{\deq}{\coloneqq}
\newcommand{\norm}[1]{\left\lVert#1\right\rVert}
\DeclareMathOperator{\pos}{Pos}
\DeclareMathOperator{\poly}{poly}
\DeclareMathOperator{\negl}{negl}
\DeclareMathOperator*{\E}{\mathds{E}}
\newcommand{\sth}{{\setft{~s.t.~}}}
\newcommand{\ket}[1]{|#1\rangle}
\newcommand{\bra}[1]{\langle#1|}
\DeclarePairedDelimiterX\braket[2]{\langle}{\rangle}{#1 \delimsize\vert #2}
\newcommand{\cA}{\ensuremath{\mathcal{A}}}
\newcommand{\cB}{\ensuremath{\mathcal{B}}}
\newcommand{\cC}{\ensuremath{\mathcal{C}}}
\newcommand{\cF}{\ensuremath{\mathcal{F}}}
\newcommand{\cG}{\ensuremath{\mathcal{G}}}
\newcommand{\cH}{\ensuremath{\mathcal{H}}}
\newcommand{\cI}{\ensuremath{\mathcal{I}}}
\newcommand{\cK}{\ensuremath{\mathcal{K}}}
\newcommand{\cQ}{\ensuremath{\mathcal{Q}}}
\newcommand{\cR}{\ensuremath{\mathcal{R}}}
\newcommand{\cX}{\ensuremath{\mathcal{X}}}
\DeclareSymbolFont{greekletters}{OML}{ntxmi}{m}{it}
\DeclareMathSymbol{\alpha}{\mathord}{greekletters}{"0B}
\DeclareMathSymbol{\beta}{\mathord}{greekletters}{"0C}
\DeclareMathSymbol{\gamma}{\mathord}{greekletters}{"0D}
\DeclareMathSymbol{\delta}{\mathord}{greekletters}{"0E}
\DeclareMathSymbol{\epsilon}{\mathord}{greekletters}{"0F}
\DeclareMathSymbol{\zeta}{\mathord}{greekletters}{"10}
\DeclareMathSymbol{\eta}{\mathord}{greekletters}{"11}
\DeclareMathSymbol{\theta}{\mathord}{greekletters}{"12}
\DeclareMathSymbol{\iota}{\mathord}{greekletters}{"13}
\DeclareMathSymbol{\kappa}{\mathord}{greekletters}{"14}
\DeclareMathSymbol{\lambda}{\mathord}{greekletters}{"15}
\DeclareMathSymbol{\mu}{\mathord}{greekletters}{"16}
\DeclareMathSymbol{\nu}{\mathord}{greekletters}{"17}
\DeclareMathSymbol{\xi}{\mathord}{greekletters}{"18}
\DeclareMathSymbol{\pi}{\mathord}{greekletters}{"19}
\DeclareMathSymbol{\rho}{\mathord}{greekletters}{"1A}
\DeclareMathSymbol{\sigma}{\mathord}{greekletters}{"1B}
\DeclareMathSymbol{\tau}{\mathord}{greekletters}{"1C}
\DeclareMathSymbol{\upsilon}{\mathord}{greekletters}{"1D}
\DeclareMathSymbol{\phi}{\mathord}{greekletters}{"1E}
\DeclareMathSymbol{\chi}{\mathord}{greekletters}{"1F}
\DeclareMathSymbol{\psi}{\mathord}{greekletters}{"20}
\DeclareMathSymbol{\omega}{\mathord}{greekletters}{"21}
\DeclareMathSymbol{\varepsilon}{\mathord}{greekletters}{"22}
\DeclareMathSymbol{\vartheta}{\mathord}{greekletters}{"23}
\DeclareMathSymbol{\varpi}{\mathord}{greekletters}{"24}
\DeclareMathSymbol{\varrho}{\mathord}{greekletters}{"25}
\DeclareMathSymbol{\varsigma}{\mathord}{greekletters}{"26}
\DeclareMathSymbol{\varphi}{\mathord}{greekletters}{"27}
\newcommand{\comm}{\mathsf{com}}
\newcommand{\tele}{\mathsf{tele}}
\DeclareMathOperator{\tr}{tr}
\DeclareMathOperator{\Tr}{Tr}
\DeclareMathOperator{\diag}{diag}
\newcommand{\capprox}{\overset{c}{\approx}}
\newcommand{\Gen}{\mathsf{Gen}}
\def\Eval{\mathsf{Eval}}
\newcommand{\secp}{\lambda}
\newcommand{\sk}{sk}
\newcommand{\clef}{\E_{sk} \leftarrow \Gen(1^\lambda)}
\newcommand{\pST}{\; \middle| \;}
\newcommand{\Nat}{\mathbb{N}} 
\newcommand{\aux}{\mathsf{aux}}
\newcommand{\PPT}{\mathsf{PPT}}
\newcommand{\QPT}{\mathsf{QPT}}
\newcommand{\Enc}{\mathsf{Enc}}
\newcommand{\Dec}{\mathsf{Dec}}
\newcommand{\ct}{\mathsf{ct}}
\newcommand{\QHE}{\mathsf{QHE}}
\newcommand{\LQHE}{\mathsf{QHE}}
\newcommand{\QMA}{\mathsf{QMA}}
\newif\ifnotes\notestrue
\newcommand{\znote}[1]{\textcolor{blue}{(Tina: #1)}}
\newcommand{\anote}[1]{\textcolor{red}{(Anand: #1)}}
\newcommand{\tm}[1]{\textcolor{teal}{(Tony: #1)}}
\newcommand{\znote}[1]{}
\newcommand{\anote}[1]{}
\newcommand{\tm}[1]{}
\newcommand{\setr}[1]{|_{\{#1\}}}
\newcommand{\setrw}{\setr{w=W}}
\begin{document}

\title{Succinct arguments for QMA from standard assumptions \\ via compiled nonlocal games}

\date{}
\author[1]{Tony Metger\footnote{Email: tmetger@ethz.ch}}
\author[2]{Anand Natarajan\footnote{Email: anandn@mit.edu}}
\author[2]{Tina Zhang\footnote{Email: tinaz@mit.edu}}
\affil[1]{ETH Zurich}
\affil[2]{MIT}

\maketitle

\begin{abstract}
We construct a succinct classical argument system for $\QMA$, the quantum analogue of $\mathsf{NP}$, from generic and standard cryptographic assumptions. Previously, building on the prior work of Mahadev (FOCS '18), Bartusek et al.~(CRYPTO '22) also constructed a succinct classical argument system for $\QMA$.
However, their construction relied on  post-quantumly secure indistinguishability obfuscation, a very strong primitive which is not known from standard cryptographic assumptions. 
In contrast, the primitives we use (namely, collapsing hash functions and a mild version of quantum homomorphic encryption) are much weaker and are implied by standard assumptions such as LWE.
Our protocol is constructed using a general transformation which was designed by Kalai et al.~(STOC '23) as a candidate method to compile any quantum nonlocal game into an argument system. Our main technical contribution is to analyze the soundness of this transformation when it is applied to a \emph{succinct} self-test for Pauli measurements on maximally entangled states, the latter of which is a key component in the proof of $\mathsf{MIP}^* = \mathsf{RE}$ in quantum complexity.
\end{abstract}

\newpage 
{
\hypersetup{linkcolor=black}
\setcounter{tocdepth}{2}
\tableofcontents
}

\newpage

\section{Introduction}
Succinct verification of computation is a notion that has been extensively studied in the classical setting. A weak classical client may delegate a classical computation to a powerful server, and may then wish to check whether the server performed the computation correctly without having to compute the answer for itself. In this case, the client can ask the server to execute a \emph{succinct interactive argument}, in which the server (efficiently) convinces the client beyond reasonable doubt that the computation was performed correctly, and the client only has to do work scaling with $\poly \log T$ in order to be convinced, where $T$ is the time that it took to do the computation itself. The messages in this succinct interactive protocol should also be $\poly \log T$ in length.

Not long after it came to light that quantum algorithms could outperform the best known classical algorithms in certain computational tasks, the question was posed of whether a quantum prover could convince a classical verifier of the answer to a problem in $\mathsf{BQP}$ without requiring the classical verifier to simulate the computation itself. For certain problems, like factoring, a classical verifier can check correctness by exploiting the fact that the problem lies in $\mathsf{NP}$; however, $\mathsf{NP}$ is not known to contain $\mathsf{BQP}$, and for some problems this may be infeasible. This line of inquiry was initiated by Gottesman in 2004~\cite{aaronson-blogpost}, and has led to a long line of work on the problem now known as \emph{quantum verification}.

\paragraph{Succinct quantum verification with a single cryptographically bounded prover.}

Mahadev's work in 2018~\cite{mahadev2018classical} showed that it is indeed possible for an efficient quantum prover to convince a classical verifier of the answer to any problem in $\mathsf{BQP}$, given that the quantum prover is subject to certain (post-quantum) cryptographic assumptions. (In fact, her work also showed that it is possible for an efficient quantum prover to convince a classical verifier of the answer to any problem in $\mathsf{QMA}$, assuming the prover is given polynomially many copies of the witness state for the $\mathsf{QMA}$ problem.) Mahadev's quantum verification protocol inspired a slew of followup work in which her techniques were used to design other cryptographic quantum verification protocols with desirable additional properties, e.g.~the property of being non-interactive~\cite{BKVV20} or composable~\cite{GV19} or linear-time~\cite{zhang2022classical}. In 2022, Bartusek et al.~\cite{BKLMMVVY22} showed, \emph{assuming post-quantum $\mathsf{iO}$}, that some version of Mahadev's protocol can be made \emph{succinct}, in the same sense that we described in the opening paragraph: the classical verifier only needs to read messages that are $\poly \log n$ bits long, where $n$ is the size of the instance, and do work scaling with $\poly \log T + \tilde O(n)$, where $T$ is the time required to execute the verification circuit.

$\mathsf{iO}$, or indistinguishability obfuscation, is an immensely powerful and subtle primitive that has recently been constructed from a combination of several standard assumptions~\cite{JLS21}. However, some of these assumptions are \emph{not} post-quantum, and there is currently no construction of post-quantum $\mathsf{iO}$ from standard assumptions. Post-quantum $\mathsf{iO}$ is known to imply other elusive cryptographic objects, e.g.~public-key quantum money~\cite{zhandry2021quantum}, and constructing it from standard post-quantum assumptions remains a difficult and important open problem.

The essential difficulty, and the reason for the use of $\mathsf{iO}$ in~\cite{BKLMMVVY22}, is that Mahadev's approach to verification is in some sense a qubit-by-qubit approach, and requires $\Omega(\lambda)$ bits of communication (where $\lambda$ is the security parameter) \emph{for every qubit} in the prover's witness state, because the verifier needs to send the prover as many $\Omega(\lambda)$-sized public keys as the witness has qubits. As a result, Mahadev's approach is difficult to make succinct, since setting up the keys already requires at least $n \cdot \Omega(\lambda)$ bits of communication, where $n$ is the number of qubits in the prover's witness. The authors of~\cite{BKLMMVVY22} use $\mathsf{iO}$ in a clever way to compress the keys and thus reduce the amount of required communication to $\poly(\lambda) \cdot \poly\log n$; this is the bulk of their work. More specifically, the authors begin by constructing a \emph{question-succinct} (short questions, long answers) protocol for verifying $\mathsf{QMA}$ using $\mathsf{iO}$.
Then they present a general compiler which uses a recent post-quantum analysis of Killian's succinct arguments of knowledge~\cite{CMSZ,LMS} to turn any question-succinct protocol that satisfies certain properties into a fully succinct protocol.

\paragraph{Our result.}

Our main contribution in this paper is to construct succinct classical-verifier arguments for $\mathsf{QMA}$ from standard (and even relatively general) assumptions, without relying on post-quantum $\mathsf{iO}$. More specifically, we prove the following theorem:

\begin{theorem}[Informal version of~\Cref{thm:main-succinct-technical}]
\label{thm:main-informal}
Assume that a quantum levelled homomorphic encryption scheme exists which specialises to a classical encryption scheme when it is used on classical plaintexts.\footnote{In fact, we do not need all the properties of a typical QHE scheme: for example, we do not use the standard notion of \emph{compactness}, which says that decryption time cannot depend on the size of the circuit being evaluated. Instead, we only need a weak notion of compactness which says that classical ciphertexts encrypted under the QHE scheme should be classically decryptable (in any polynomial time, even if the decryption time depends on the evaluated circuit). We also expect that the weaker primitive of classical-client quantum blind delegation (in which interaction is allowed) would likely suffice. These weaker primitives could plausibly be instantiated from weaker assumptions than LWE, since they do not imply classical FHE, which is only known assuming LWE to date. For recent progress towards this, see~\cite{GV24}.} Assume also that post-quantum succinct arguments of classical knowledge exist.\footnote{These can be constructed from any collapsing hash function.} Then a constant-round classical-verifier argument system for any promise problem in $\mathsf{QMA}$ exists, in which:
\begin{enumerate}
\item the honest quantum prover runs in quantum polynomial time, given polynomially many copies of an accepting $\QMA$ witness state,
\item the completeness-soundness gap is a constant, and
\item the total communication required is of length $\poly\log n \cdot \poly \lambda$, where $n$ is the instance size and $\lambda$ is the security parameter. The verifier runs in time $\poly(\log T, \lambda) + \tilde O(n)$, where $T$ is the size of the $\mathsf{QMA}$ verification circuit. 
\end{enumerate}
\end{theorem}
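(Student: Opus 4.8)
The plan is to obtain the protocol as a composition of three pieces: a \emph{succinct} two-prover nonlocal game for $\QMA$; the Kalai--Lombardi--Vaikuntanathan--Yang (KLVY) compiler, instantiated with the mild $\QHE$ scheme from the hypothesis; and a Kilian-style post-quantum succinct argument of classical knowledge (available from collapsing hash functions) to keep the prover's messages short and to supply, in the soundness reduction, a classical description of the prover's underlying answers. General soundness of the KLVY compiler is open; the real work is to make it go through for this particular, rather complex, game.

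\textbf{Step 1: the nonlocal game.} First I would reduce the $\QMA$ promise problem to a local Hamiltonian problem whose terms are ``explicit'', i.e.\ computable in $\poly\log$ time after an $\tilde O(n)$-time preprocessing of the instance, and then instantiate the standard nonlocal-game template for $\QMA$ in which two provers are meant to share $\poly(n)$ EPR pairs together with $\poly(n)$ copies of the ground state, suitably encoded so that each honest prover applies only a polynomial-size quantum circuit. One prover's questions run the \emph{succinct} self-test for Pauli measurements on the encoded maximally entangled register (a quantum low-degree test together with a Pauli-braiding layer), and with constant probability a cross-check against the second prover is used to estimate the energy of the Hamiltonian. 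The properties I need are: questions and answers of length $\poly\log n$, verifier time $\poly\log T + \tilde O(n)$, an honest prover that is quantum polynomial-time given $\poly(n)$ copies of the witness, a constant completeness--soundness gap, and -- crucially -- a \emph{robust rigidity} guarantee whose proof is ``efficiently certifiable'': the rigidity conclusion should follow from the game's algebraic consistency relations via bounded-complexity manipulations (a low-degree sum-of-squares certificate, say, or a proof that touches the strategy operators only polynomially many times and never invokes an exact operator identity), so that it degrades only polynomially when the relations hold merely approximately and only against efficient distinguishers.

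\textbf{Step 2: KLVY compilation.} The verifier samples a question pair $(x,y)$ for the game, sends a public key together with a classical ciphertext $\ct_x$ encrypting the first prover's question; the prover encrypts its own quantum share under the public key, homomorphically evaluates the first prover's polynomial-size measurement circuit on $(\ct_x, \text{the encrypted share})$ to obtain a classical ciphertext $\ct_a$, and returns it; the verifier then sends $y$ in the clear; the prover measures its residual (plaintext) share and returns $b$; finally the verifier classically decrypts $\ct_a$ to $a$ and applies the game predicate to $(x,y,a,b)$. Completeness, the constant round complexity, the verifier's running time, and -- because the game's messages are $\poly\log n$ bits and the $\QHE$ scheme specialises to classical encryption with classical decryption on classical plaintexts -- the $\poly\log n \cdot \poly\lambda$ communication bound are inherited directly. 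Wherever a prover message in the game is long (a full codeword rather than a queried restriction), I would instead have the prover send a Merkle digest inside the ciphertext and run the succinct argument of knowledge to commit to, and later prove knowledge of, a consistent opening; in the soundness proof the extractor recovers the full classical string, which is then fed into the rigidity analysis.

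\textbf{Step 3: soundness -- the main obstacle.} Given a $\QPT$ prover winning the compiled protocol with probability noticeably above the soundness threshold, I would follow the KLVY paradigm: semantic security of the $\QHE$ scheme forces the prover's first message to be essentially independent of the encrypted question, which lets me extract ``effective'' measurement operators for the first prover that (again by $\QHE$ security) approximately commute with the second prover's operators and approximately satisfy the game's consistency relations -- but only \emph{computationally}, i.e.\ no efficient test detects a violation. The heart of the argument, and where I expect the real difficulty, is plugging this computationally-near-optimal commuting-operator strategy into the robust rigidity analysis of the succinct Pauli self-test. Two issues must be handled: (a) \emph{composition} -- the succinct self-test is built by composing a low-degree test, a Pauli-braiding layer, and the energy cross-check, and the rigidity conclusion must survive each composition step with only $\poly$ loss in the \emph{computational} error parameter, which forces a re-examination of every sub-test to confirm its soundness proof never uses an exact identity or an exponential union bound and hence lifts to the computational regime; and (b) \emph{from rigidity to energy} -- once the extracted strategy is pinned, up to a local isometry and small error, to EPR pairs plus a genuine quantum state, the energy cross-check must be shown to force that state to be a low-energy witness for the Hamiltonian, again using only relations valid against efficient adversaries. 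Combining (a) and (b) contradicts the $\QMA$ NO case and establishes the claimed constant soundness gap, completing the construction.
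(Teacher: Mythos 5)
Your overall architecture is the right one and matches the paper's: a question-succinct two-prover game, compiled with KLVY, with answer-succinctness supplied afterwards by Kilian-style succinct arguments of knowledge rather than by nonlocal answer reduction. However, there are two concrete gaps. First, your self-test is the wrong object: you propose ``a quantum low-degree test together with a Pauli-braiding layer'', but the only efficient-prover $\QMA$ protocol built on the quantum low-degree test is the seven-prover protocol of~\cite{NV18}, which still has an unresolved error and in any case cannot be fed into the KLVY compiler (the compiled-soundness techniques of~\cite{NZ23,CMMNPSWZ24} handle two provers only); no two-prover, efficient-prover succinct game of that kind is known. The paper instead uses de la Salle's $\eps$-biased-set version of Pauli braiding~\cite{delaSalle2022spectral}, in which Bob's questions are only ``all-$X$'' or ``all-$Z$'' (so his observables form an \emph{exact} representation of $\Z_2^n$ by construction) and only Alice's questions are subsampled; this exactness is what lets (anti-)commutation relations tested on an $\eps$-biased set lift to the whole group with constant soundness. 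Relatedly, you do not address how to obtain a \emph{constant} promise gap with succinctly indexable terms: sequential repetition is unavailable in the succinct regime, naive parallel amplification of the Hamiltonian produces exponentially many terms, and the paper resolves this by subsampling the amplified Hamiltonian with a PRG secure against quantum advice (\Cref{sec:hamiltonian-subsamp}). ``Terms computable in $\poly\log$ time after preprocessing'' does not by itself give you this.

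Second, your soundness step rests on an undefined notion of ``efficiently certifiable rigidity'' and does not identify the obstacles that actually arise in the compiled setting. Concretely: (i) a compiled prover cannot be assumed synchronous, so the prover-switching manipulations used in every nonlocal analysis of these tests (including de la Salle's) are unavailable and must be replaced by consistency-test lemmas in the state-dependent norm (\Cref{sec:prover-switching}); (ii) the energy test requires \emph{mixed-basis} measurements whose induced distribution is far from uniform over the Pauli group, and rounding a product $Z(c)X(d)$ with the standard Gowers--Hatami theorem produces spurious $VV^\dagger$ factors (the rounding isometry is not a unitary) that cannot be eliminated without tensor-product structure. The paper's fix is a new Stinespring-dilation-based proof of a Gowers--Hatami-type rounding valid for \emph{arbitrary} measures over the group (\Cref{lem:non-uniform-gh}), a distribution-switching lemma (\Cref{lem:pauli_rounding_all_dist}), and an explicit mixed-vs-pure basis consistency test (\Cref{prot:mixed-vs-pure}). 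Without substitutes for these ingredients, your steps (a) and (b) do not go through.
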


The main advantage of our protocol compared with Bartusek et al.'s protocol~\cite{BKLMMVVY22} is that our protocol does not use post-quantum $\mathsf{iO}$, which at this time cannot be instantiated from standard assumptions. Even setting aside the issue of post-quantum $\mathsf{iO}$, however, we remark that the non-$\mathsf{iO}$ assumptions that our approach relies on are more generic than the Learning With Errors (LWE)--based assumptions which Bartusek et al.~use. For example, our approach avoids using the delicate `adaptive hardcore bit' property of LWE-based trapdoor claw-free functions (TCFs), which was introduced in~\cite{BCMVV18} and used in Mahadev's original verification protocol (as well as Bartusek et al.'s protocol).
The main primitive we rely on, quantum homomorphic encryption (QHE), can be constructed in its usual form from LWE without the adaptive hardcore bit assumption~\cite{mahadev2020classical}. Moreover, we do not in fact need all the properties of standard QHE: for example, we do not use the standard notion of \emph{compactness}, which says that decryption time cannot depend on the size of the circuit being evaluated. Instead, we only need a weak notion of compactness which says that classical ciphertexts encrypted under the QHE scheme should be classically decryptable (in any polynomial time, even if the decryption time depends on the evaluated circuit). This more general notion of \emph{non-compact QHE with classical decryption for classical ciphertexts} plausibly exists from assumptions other than LWE: for instance, \cite{GV24} represents recent progress in this direction. As such, our approach shows that the important primitive of quantum verification---and even succinct verification---may exist from a wider range of assumptions than LWE only. (In contrast, a large number of post-quantum primitives that use techniques from Mahadev's original verification protocol can only, as far as we can see, be constructed from LWE.)

We achieve \Cref{thm:main-informal} by combining powerful information-theoretic tools which originate in the study of nonlocal games (e.g.~those found in~\cite{delaSalle2022spectral}) with tools that cryptography offers (in particular, cryptographic succinct arguments of knowledge turn out to be very useful for us). The resulting protocol is (compared with the protocol designed by Bartusek et al.) a remarkably clean object which has a natural intuitive interpretation. The tool that allows us to combine self-testing techniques with cryptographic techniques is a \emph{compilation procedure} introduced by Kalai, Lombardi, Vaikuntanathan, and Yang~\cite{KLVY21}, which Natarajan and Zhang~\cite{NZ23} recently exploited in order to achieve classical-verifier quantum verification using a different approach from Mahadev's original approach.

\paragraph{A different approach to verification based on nonlocal games.}

Since Bell's historical observation~\cite{bell1964einstein} that there are certain \emph{nonlocal games} which quantum entangled players can win with higher probability than classical players, the \emph{entangled two-prover} model of computation has been a model of great interest in quantum complexity theory and quantum foundations~\cite{scarani2013device}. A nonlocal game is a game played between a single efficient classical referee (or verifier) and two or more unbounded players (or provers) who cannot communicate with each other but are allowed to share entanglement. The study of the computational power of nonlocal games (i.e.~what can the verifier compute efficiently with the help of the provers, if the verifier doesn't trust the provers?) has led to a fruitful line of work which, in particular, has shown that the verifier in this setting can decide any problem in $\mathsf{RE}$~\cite{JNVWY20}. In addition, it is known~\cite{Grilo17} that, even if the honest provers are required to be efficient, the verifier can still decide any problem in $\mathsf{BQP}$ (or $\mathsf{QMA}$, if one of the provers gets access to polynomially many copies of a witness). Put another way, quantum verification in the \emph{entangled two-prover} setting is known to exist.

In 2023 Natarajan and Zhang~\cite{NZ23} presented a reproof of Mahadev's result which took a different approach to her original approach, building on previous work on quantum blind delegation~\cite{mahadev2020classical} and the work of Kalai, Lombardi, Vaikuntanathan, and Yang~\cite{KLVY21}. Kalai, Lombardi, Vaikuntanathan, and Yang used quantum blind delegation (in particular, quantum homomorphic encryption) in order to design a \emph{compilation} scheme which maps any entangled two-prover proof system to a \emph{single-prover argument}, using cryptography to enforce the no-communication assumption between the provers. Kalai et al.~showed that their compilation scheme preserves quantum completeness and classical soundness, and Natarajan and Zhang showed that it also preserves quantum soundness for a certain restricted class of two-prover nonlocal games, which was sufficient to compile a two-prover quantum verification protocol into a single-prover cryptographic protocol and thus recover Mahadev's result.

From the point of view of designing succinct arguments, this approach is more attractive than Mahadev's original approach as a starting point, because the verifier only needs to send the prover a \emph{single} public key of length $\poly(\lambda)$ in order to allow it to do homomorphic evaluations. One might then hope to construct a succinct cryptographic verification protocol for $\mathsf{QMA}$ in the following way: start with a succinct \emph{two-prover} quantum verification protocol, pass it through the KLVY compiler, and prove soundness using similar techniques to those which Natarajan and Zhang used in~\cite{NZ23}. This approach avoids using $\mathsf{iO}$ entirely, because the KLVY compiler is `naturally' succinct when applied to a succinct protocol.

\paragraph{Succinct quantum verification in the entangled two-prover setting.}

It is therefore natural to ask whether \emph{succinct} quantum verification in the entangled two-prover setting is known. The answer to this question is---unfortunately---no, but for surprisingly complicated reasons. Below is a list of the partial results in this area which are known:
\begin{enumerate}
\item If the honest provers are allowed to be inefficient, and if the (classical) verifier is allowed to take $\poly n$ time, then there is a protocol with $\poly \log n$ total communication in the entangled two-prover setting to decide $\mathsf{QMA}$ (in fact, to decide all of $\mathsf{RE}$). This was shown by~\cite{NZ23free}.

Unfortunately, this result is not useful to us if our goal is to compile a succinct two-prover proof system into a succinct one-prover quantum verification protocol, since we want the honest prover to be efficient.

\item In a setting where the verifier interacts with \emph{seven} provers instead of two,~\cite{NV18} claimed to show that efficient-prover quantum verification of $\mathsf{QMA}$ is possible. However, the proof of this result had two substantial errors in it. One of these errors has been resolved by~\cite{ji2022tensorcodes}. The other one remains unresolved: see this erratum notice with an explanation of the error~\cite{NN24}.

Even assuming the errors in~\cite{NV18} can be fixed, a seven-prover protocol is not useful to us because the techniques from~\cite{KLVY21,NZ23} were only designed for nonlocal games with two provers.
It seems difficult to extend these techniques to a larger number of provers, which would be necessary to compile the seven-prover protocol from~\cite{NV18}.
\item Examining the proof of $\mathsf{MIP}^* = \mathsf{RE}$ from~\cite{JNVWY20} shows that it relies on two so-called \emph{compression theorems}: a \emph{question reduction theorem} which takes a two-prover nonlocal game with long questions (messages from the verifier to the provers) and maps it to a nonlocal game with exponentially smaller questions while preserving most other properties of the game, and an \emph{answer reduction theorem} which takes a two-prover nonlocal game with long answers (messages from the provers to the verifier) and maps it to a nonlocal game with exponentially smaller answers.

One would think that these theorems would make proving succinctness in the nonlocal setting easy. Unfortunately, these compression theorems come with caveats: in particular, the answer reduction theorem can only be applied to so-called \emph{oracularisable} protocols, and no one has come up with a two-prover verification protocol for $\mathsf{QMA}$ with efficient honest provers which satisfies this property. Moreover, even supposing that we had a protocol to which we could apply answer reduction, the answer reduction procedure itself happens to be so complicated and delicate that there is no clear way to analyse its soundness in the compiled setting, even given the techniques from~\cite{NZ23} and the additional techniques for compiling nonlocal games which have been developed since then~\cite{CMMNPSWZ24}.

Question reduction is both simpler and more lenient, however: while it has never been published, \emph{question-succinct} quantum verification for $\mathsf{QMA}$ in the two-prover setting can be elegantly obtained from known results~\cite{delaSalle2022spectral,Grilo17}. 
\end{enumerate}

\paragraph{The best of both worlds.}

The essential reason that two-prover succinct verification remains an open problem is that nonlocal \emph{answer reduction} is hard. The only known way to make the answers in a nonlocal game shorter is to use an `entanglement-sound' classical PCPP, and constructing this object is arguably the most technical and delicate part of the proof that $\mathsf{MIP}^* = \mathsf{RE}$. On the other hand, one can make the \emph{questions} in certain (useful) classes of nonlocal games shorter using only the elegant machinery of de la Salle~\cite{delaSalle2022spectral}, who simplified the question reduction theorems of~\cite{JNVWY20} by rephrasing them in terms of sampling from $\eps$-biased sets. Therefore, in the nonlocal world, \emph{question reduction} is now considered to be relatively easy, and \emph{answer reduction} remains hard.

In Bartusek et al.'s approach to succinct verification, meanwhile, the situation was just the opposite: shortening the questions in the Mahadev protocol using only cryptography was a significant challenge, and shortening the answers could be done using known techniques in a relatively black-box manner. Given that this is the case, one might hope to combine the Bartusek et al.~approach with the compilation approach in order that the strengths of each might cancel out the weaknesses of the other.

This is precisely what we do in this work. We construct a succinct verification protocol for $\mathsf{QMA}$ by firstly compiling, using the KLVY compiler, a \emph{question-succinct} two-prover protocol for $\mathsf{QMA}$, and then compressing the answers in a generic way using Bartusek et al.'s Killian-based compiler.

The success of this approach makes a case for using the KLVY compiler as a general way to translate techniques that are well-understood in the entangled two-prover world into the single-prover cryptographic world. Once this has been done, they can be combined with `natively' cryptographic techniques in order to marry the desirable properties of both. It seems plausible that many of the existing results in the sphere of classical-client quantum delegation and verification could have been obtained in a more unified way and from milder or more generic assumptions if the KLVY compiler had been known at the time of their genesis, because many tasks that appear difficult in the cryptographic single-prover setting are well-studied already in the nonlocal setting (and vice versa).

\subsection{Technical overview}

We focus here on how we obtain question-succinct quantum verification in the single-prover cryptographic setting, since the Killian-based answer compression protocol and its analysis were already presented in~\cite[Section 9]{BKLMMVVY22}, and we include an exposition-oriented sketch of these results in \Cref{sec:killian} only for completeness.

\subsubsection{The basic template from~\cite{NZ23}}

Like~\cite{NZ23}, our starting point is a basic framework for $\mathsf{QMA}$ verification in the two-prover setting due to  Grilo~\cite{Grilo17}. The verifier and the two provers (who we will call Alice and Bob) receive as input an instance of the $\mathsf{QMA}$-complete promise problem \emph{2-local XZ Hamiltonian}~\cite{biamonte2008realizable}. In other words, the problem that the verifier is trying to decide is whether a certain Hamiltonian $H$ on $n$ qubits, expressed as a sum of polynomially many 2-local X/Z Pauli terms (where each term is a tensor product of $n$ operators, each of which is chosen from $\{\1, \sigma_X, \sigma_Z\}$, such that all but 2 factors in the tensor product are $\1$), has lowest eigenvalue $\leq \alpha$ or $\geq \beta$ for two real numbers $(\alpha, \beta)$, where we are promised that $\beta - \alpha \geq \frac{1}{\poly(n)}$.

Honest Alice and Bob start out by sharing $n$ EPR pairs. The two-prover protocol underlying~\cite{NZ23} for deciding whether $H$ has lowest eigenvalue $\leq \alpha$ or $\geq \beta$ consists of two subtests, the \emph{Pauli braiding test} and the \emph{energy test}:
\begin{protocol}[informal]
\label{prot:nz-informal}
\end{protocol}
\begin{enumerate}
\item \textbf{Pauli braiding.} Alice and Bob execute a version of the Pauli braiding protocol from~\cite{natarajan2017quantum}, in which they play interleaved copies of CHSH (or another similar game, like Magic Square) and a simple game known as the `commutation test'. This protocol is a \emph{robust self-test} for the $n$-qubit Pauli group\footnote{More precisely the Heisenberg-Weyl group, the group consisting of tensor products of $\1, \sigma_X, \sigma_Z$ with $\pm 1$ signs, but we ignore this distinction in this introduction.}, in the sense that entangled players who win with high probability in this game must both be playing with measurement operators that are close (up to local isometries) to actual Pauli measurements. 
In other words, the Pauli braiding test allows the verifier to `force' entangled provers to perform Pauli measurements when requested to do so, even without trusting the provers. 
The most modular analysis of this protocol proceeds via a theorem from approximate representation theory that was first proven by Gowers and Hatami~\cite{GH15}.
\item \textbf{Energy testing via teleportation.} Alice is asked to \emph{teleport} the $n$-qubit witness state to Bob using their $n$ shared EPR pairs. She then reports the teleportation corrections to the verifier. Bob is asked to measure certain Pauli operators and report the outcomes. The verifier corrects Bob's reported outcomes using Alice's reported teleportation corrections, and interprets the result as a measurement of a term from $H$. It accepts or rejects depending on whether this measurement indicates that the state which Alice was meant to teleport to Bob is low-energy or high-energy.
\end{enumerate}
The intuition for the soundness of \Cref{prot:nz-informal} is as follows: the Pauli braiding test guarantees in some sense, through the use of the Gowers-Hatami theorem~\cite{GH15}, that all successful Bobs are in fact equivalent to honest Bob; and the energy test is straightforward to analyse if Bob is honest. In order to translate the intuition into reality, we have to make sure that Bob uses the \emph{same} strategy in both subtests so that the guarantee on Bob in the Pauli braiding test also applies to Bob in the energy test. That is, we must make sure he cannot play honestly only in the Pauli braiding test and then deviate however he likes in the energy test.

Suppose for the moment that the two subtests can be made \emph{perfectly indistinguishable} to Bob: that is, suppose that Bob's questions in both subtests are drawn from the same distribution. This would ensure that he does the same measurements in both subtests, since he does not know which subtest is being performed. The Pauli braiding subtest then guarantees that these measurements are `close' to honest measurements, and the soundness of subtest (ii) follows directly from the soundness of subtest (ii) with an honest Bob.

In~\cite{NZ23}, following a template laid out by Vidick in~\cite{Vid20-course}, the two subtests were indistinguishable because Bob's questions are very simple: in both subtests, Bob only ever receives one of two questions, each with $\frac{1}{2}$ probability. One of these two questions is an instruction to measure all of his qubits in the $Z$ basis (and report all $n$ outcomes), and the other is an instruction to measure all his qubits in the $X$ basis. Slightly more formally, honest Bob will in one case apply the projective measurement $\{ \ket{z} \bra{z} : z \in \{0,1\}^n \}$, and in the other case he will apply the projective measurement $\{ H^{\otimes n} \ket{x} \bra{x} H^{\otimes n} : x \in \{0,1\}^n \}$. 

Measurements of this form, as it turns out, are particularly `compatible' with the Gowers-Hatami-based analysis of the Pauli braiding test, in a sense that we will make somewhat more precise later (when we explain our `mixed-vs-pure basis test' later in this overview). It would therefore be convenient if this question structure was also sufficient for the energy test. Fortunately, this happens to be the case in the non-succinct setting: it turns out that 2-local X/Z Hamiltonian with inverse polynomial gap is complete for $\mathsf{QMA}$ even if we restrict the 2-local terms to $XX$ and $ZZ$ terms, i.e., terms where the two non-identity components of the $n$-fold tensor product are always of the same type ($\sigma_X$ or $\sigma_Z$). Note that the verifier can reconstruct a measurement of any $XX$-type term from the outcomes of an all-$X$ measurement performed by Bob, and any $ZZ$-type term from the outcomes of an all-$Z$ measurement performed by Bob. This means that in~\cite{NZ23}, it was sufficient in both subtests to ask Bob the same two questions (all-$X$ and all-$Z$), each with $\frac{1}{2}$ probability. Perfect indistinguishability of the two subtests in \Cref{prot:nz-informal} then follows.

\subsubsection{Obtaining succinctness}

In designing a (question-)succinct protocol with two entangled provers (which we will later compile into a crypographically secure single-prover protocol), we are faced with two new challenges compared with~\cite{NZ23}:
\begin{enumerate}
\item The Pauli braiding test (subtest (i) of \Cref{prot:nz-informal}) does not have succinct questions. In particular, while \emph{Bob's} questions can easily be made succinct (as we just described, it suffices to have only two Bob questions), \emph{Alice's} questions are more complicated.
\item In the non-succinct setting, $\frac{1}{\poly(n)}$ completeness-soundness gap is generally tolerated because it is assumed that $\poly(n)$ many rounds of sequential repetition can be performed in order to boost the gap. In the succinct setting, this is not feasible, since repeating a succinct protocol $\poly(n)$ times results in $\poly\log(n) \cdot \poly(n)$ communication; therefore, in the succinct setting, we must design a protocol which has \emph{constant} soundness gap even without any repetition. This means that we cannot start with a 2-local XX/ZZ Hamiltonian, since it is not known whether this problem is $\mathsf{QMA}$-complete with a constant promise gap. If we are to take the same approach of starting from some Hamiltonian problem, then it has to be a Hamiltonian problem with \emph{constant} promise gap such that the terms can be grouped into a small number of subsets (at most $2^{\poly \log n}$ subsets), each of which contains only terms that commute. If this is the case, then Bob can measure all the terms in a single subset simultaneously and report all the outcomes together, and the verifier only needs to use $\poly\log n$ bits to tell Bob which subset to measure. If this is not the case, then the energy testing template from subtest (ii) of \Cref{prot:nz-informal} will not work, since the verifier will not be able to tell Bob which terms he should measure in a succinct way.
\end{enumerate}

\paragraph{Subsampling Hamiltonians.}
We take a similar approach to Bartusek et al.~\cite{BKLMMVVY22} in order to deal with the second issue. We use na\"ive $\mathsf{QMA}$ parallel amplification (first written down in~\cite{KSV02}; the procedure simply repeats the $\mathsf{QMA}$ verifier in parallel a polynomial number of times) in order to boost the promise gap to a constant; this results in a Hamiltonian that is a sum of exponentially many terms, each of which can be efficiently measured by measuring each of the $n$ qubits of the witness in either the $X$ or the $Z$ Pauli basis (with potentially different basis choices for different qubits). We then use a generic \emph{PRG with soundness against adversaries with quantum advice} in order to `subsample' these terms and emerge with a Hamiltonian that is a sum of $2^{\poly \log n}$ terms, each of which can be efficiently measured by measuring each of the $n$ qubits of the witness in either the $X$ or the $Z$ Pauli basis. This part of the work is done in \Cref{sec:hamiltonian-subsamp}. 

\paragraph{The mixed-vs-pure basis test and a new self-testing-oriented proof of Gowers-Hatami.}
At this point we have created a new problem: the terms of the Hamiltonian we want to use in the energy subtest can no longer be measured by a Bob who only ever measures every qubit of his state in either the $X$ basis or the $Z$ basis. This is because the amplified Hamiltonian contains tensor products of arbitrary combinations of $XX$ and $ZZ$ terms from the original Hamiltonian, and not only tensor products of terms in the same basis. 
These mixed terms \emph{can} be measured by a Bob who does what we call \emph{mixed basis} measurements (measurements that involve measuring each of $n$ qubits in either the $X$ or the $Z$ basis, with potentially different basis choices for different qubits). However, if the verifier picks the mixed bases depending on the distribution induced by the constant-gap Hamiltonian, the resulting distribution over Bob questions is not necessarily `compatible' with even the regular Pauli braiding test. Moreover, it becomes even more difficult to use anything other than the all-$X$ and all-$Z$ measurements when we consider the \emph{succinct} version of Pauli braiding, for reasons that we will elaborate on shortly (in the section `Succinct Pauli braiding').

The natural solution is to use the all-$X$ and all-$Z$ measurements when we play Pauli braiding, use the mixed basis measurements when we do the energy test, and introduce some sort of \emph{consistency test} to ensure that the operators that Bob uses in the energy test are in some sense the same ones as the ones he uses in Pauli braiding. (We call this test the mixed-vs-pure basis test; the protocol is described in \Cref{prot:mixed-vs-pure}.) Such tests have been analysed in the nonlocal setting before~\cite{NW19}, but we are the first to attempt to analyse such a test in the compiled setting, and the compilation introduces unforeseen difficulties (see `Difficulties in the analysis of the mixed-vs-pure basis test' below).

The easiest solution to the difficulties that we were able to come up with involves reproving the Gowers-Hatami theorem (or, rather, the parts of the theorem relevant for self-testing) in a way that supports arbitrary non-uniform expectations. The (informal) theorem statement for our version of Gowers-Hatami is as follows:
\begin{theorem}[informal]
\label{thm:nonuniform-gh-informal}
    Let $f: G \to U(\cH)$ be a function from a finite group $G$ to the set of unitaries on some Hilbert space $\cH$. Then there exists a finite-dimensional Hilbert space $\cH'$, an isometry $V: \cH \to \cH'$, and a unitary representation $\pi: G \to U(\cH')$ of $G$ such that for all measures $\mu$ over $G$,
    \[ \E_{g \sim \mu, h \sim W_n} \| f(h) f(g) - f(hg) \|^2 \leq \eps \implies \E_{g \sim \mu} \| f(g) - V^\dagger \pi(g) V \|^2 \leq \eps, \]
    where we are being purposefully vague about the norm.
\end{theorem}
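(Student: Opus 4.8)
The plan is to run the ``dilation'' proof of Gowers--Hatami~\cite{GH15} (as used for self-testing), but arranged so that the only distribution the dilation depends on is the one over the \emph{left-hand} argument of $f$; the right-hand argument then enters as an inert parameter, and this is exactly why it may be distributed according to an arbitrary $\mu$. (The usual write-ups take both arguments uniform, which obscures this decoupling.)

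Concretely, assuming as usual that $\cH$ is finite-dimensional (or after a standard finite-dimensional reduction), I would set $\cH' \deq \C[G] \ot \cH$, let $\pi$ be the right regular representation tensored with the identity, $\pi(g) \deq \rho(g)\ot\1$ with $\rho(g)\ket h = \ket{hg^{-1}}$, and define the candidate isometry $V\colon\cH\to\cH'$ by $V\ket\psi \deq \sum_h \sqrt{W_n(h)}\,\ket h\ot f(h)\ket\psi$. First I would check $V^\dagger V = \sum_h W_n(h)\,f(h)^\dagger f(h) = \1$, which uses only that $f$ is unitary-valued (in particular, we need neither $f(e)=\1$ nor $f(g^{-1})=f(g)^\dagger$). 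Then comes the key computation: expanding $\pi(g)V\ket\psi = \sum_h\sqrt{W_n(h)}\,\ket{hg^{-1}}\ot f(h)\ket\psi$, reindexing by $k=hg^{-1}$, and using (right-)translation-invariance of $W_n$, one gets $\pi(g)V\ket\psi = \sum_k\sqrt{W_n(k)}\,\ket k\ot f(kg)\ket\psi$; comparing termwise with $Vf(g)\ket\psi = \sum_k\sqrt{W_n(k)}\,\ket k\ot f(k)f(g)\ket\psi$ and using orthogonality of the $\ket k$ gives
\[ \big\|\pi(g)V\ket\psi - Vf(g)\ket\psi\big\|^2 \;=\; \E_{h\sim W_n}\big\|\big(f(hg)-f(h)f(g)\big)\ket\psi\big\|^2 . \]
Averaging over $g\sim\mu$, the right-hand side is exactly what the hypothesis bounds by $\eps$; crucially, both the reindexing and the invariance step are carried out pointwise in $g$, so $\mu$ is never used and may be arbitrary.

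To finish, I would pass from $\pi(g)V-Vf(g)$ to $f(g)-V^\dagger\pi(g)V$ by the standard isometry trick: since $V$ is an isometry, $f(g)$ unitary and $V^\dagger\pi(g)V$ a contraction, for every unit $\ket\psi$ one has $\|f(g)\ket\psi - V^\dagger\pi(g)V\ket\psi\|^2 = \|Vf(g)\ket\psi - \pi(g)V\ket\psi\|^2 - \big(1-\|V^\dagger\pi(g)V\ket\psi\|^2\big) \le \|Vf(g)\ket\psi - \pi(g)V\ket\psi\|^2$, and likewise for the state-averaged version of the norm. Combining the displays and averaging over $g\sim\mu$ yields the conclusion with no loss in the constant.

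I expect the main obstacle to lie not in this calculation but in the choice of ``norm''. In the intended application the $f(g)$ are not honest unitaries acting on one fixed state, but operators extracted from a compiled prover that may act on a \emph{question-dependent} state; so the theorem must be phrased with a seminorm --- a state-dependent $\|A\|^2 = \Tr[A^\dagger A\,\sigma]$, suitably averaged over the prover's internal randomness --- that is \emph{(a)} genuinely invariant under the reindexing above (this is precisely why $W_n$ must be translation-invariant, in contrast to $\mu$), and \emph{(b)} still strong enough to be consumed by the downstream mixed-vs-pure basis test analysis. Producing a statement that meets both requirements, rather than the clean abstract version sketched here, is where the real care is needed.
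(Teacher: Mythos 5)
Your proposal is correct and follows essentially the same route as the paper: both round $f$ to $f^*(g) = \E_{h}\, f(h)^\dagger f(hg) = V^\dagger \pi(g) V$ with $\pi$ the regular representation, and both exploit the fact that only the $h$-average need be uniform, so $\mu$ rides along for free. The only difference is presentational: you write the dilation isometry $V\ket{\psi} = \E_h^{1/2}\,\ket{h}\otimes f(h)\ket{\psi}$ explicitly and verify it by hand, whereas the paper obtains the same $V$ abstractly by checking that the associated convolution map is unital and completely positive and invoking Stinespring; your closing remarks about phrasing the result in the state-dependent seminorm match how the paper actually states it.
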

The difference between this theorem and the more typical formulation is that the typical formulation has uniform expectations over the group everywhere. A version of Gowers-Hatami similar to \Cref{thm:nonuniform-gh-informal} is often needed in the self-testing setting when $\mu$ is in particular the uniform distribution over $\{ \sigma_Z(a) : a \in \{0,1\}^n \}$ or $\{ \sigma_X(b) : b \in \{0,1\}^n \}$, and it is plausible that \Cref{thm:nonuniform-gh-informal} could also be proven by modifying in some way Gowers and Hatami's original proof of their theorem. 
Nonetheless, the proof that we present under \Cref{lem:non-uniform-gh} (the formal version of \Cref{thm:nonuniform-gh-informal}) is an entirely different proof that only uses basic tools from quantum information, namely Stinespring dilation (instead of matrix Fourier analysis on non-Abelian groups~\cite{GH15}). We emphasise that our proof is \emph{not} a reproof of the full Gowers-Hatami theorem, because the original theorem gets bounds on the dimension of the `post-rounding' Hilbert space $\cH'$ (which one typically does not need in self-testing-related applications of Gowers-Hatami). However, our proof has the advantage that it is completely elementary and self-contained. We believe this proof may be of independent interest, because the fact that it is simple and self-contained makes it easier to modify the statement when necessary to incorporate additional desirable properties (such as, for example, the tolerance for non-uniform expectations that we needed for this work). Together with a `distribution-switching' trick presented in \Cref{lem:pauli_rounding_all_dist}, we are able to use this version of Gowers-Hatami to work out an analysis of the mixed-vs-pure basis test. We give more details about how we did this at the end of the following section.

\paragraph{Difficulties in the analysis of the mixed-vs-pure basis test.}
Now we elaborate more thoroughly on the nature of the difficulties that we encountered in analysing the mixed-vs-pure basis test, which we finally resolved by using \Cref{lem:non-uniform-gh} and \Cref{lem:pauli_rounding_all_dist}. We firstly justify the sense in which the all-$X$ and all-$Z$ measurements are particularly `compatible' with Pauli braiding, in order to clarify why the consistency test is necessary in the first place. \\

\noindent \emph{Why the mixed-vs-pure basis test is necessary. \:} The reason why the all-$X$ and all-$Z$ measurements are particularly suitable for use in the Pauli braiding protocol is that the all-$Z$ question can be interpreted as a simultaneous measurement of the $2^n$ binary observables $\{ \sigma_Z(a) : a \in \{0,1\}^n \}$, where $\sigma_Z(a)$ is the binary observable that is the tensor product of $\sigma_Z$ on all the qubits $i$ where $a_i = 1$ and identity otherwise; and, similarly, the all-$X$ question can be interpreted as a simultaneous measurement of the $2^n$ binary observables $\{ \sigma_X(b) : b \in \{0,1\}^n \}$. Another (more precise) way to say this is that, given the (potentially cheating) projective measurement $\{P^Z_{u} : u \in \{0,1\}^n\}$ that Bob applies when he receives the instruction to measure everything in the $Z$ basis, we can construct a set of $2^n$ binary observables $\{Z(a) : a \in \{0,1\}^n\}$ which are \emph{exactly linear}, in the sense that
\begin{align}
Z(a) Z(a') = Z(a + a'), \label{eq:linZ}
\end{align}
even if Bob is dishonest: simply take
\begin{align*}
Z(a) \coloneqq \sum_{u \in \{0,1\}^n} (-1)^{u \cdot a} P^Z_u.
\end{align*}
A similar statement holds true for the all-$X$ measurement: we can define a set of $2^n$ binary observables $\{X(b) : b \in \{0,1\}^n\}$ such that
\begin{align}
X(b) X(b') = X(b + b'). \label{eq:linX}
\end{align}
We can use the CHSH game and the commutation test in order to certify that these $2 \cdot 2^n$ binary observables $\{Z(a), X(b) : a, b \in \{0,1\}^n\}$ satisfy the commutation relations that would hold if they were genuine Paulis, i.e.
\begin{align}
\| Z(a) X(b) - (-1)^{a \cdot b} X(b) Z(a) \|_2 \leq O(\eps). \label{eq:com}
\end{align}
Taking the linearity (\Cref{eq:linZ} and \Cref{eq:linX}) and commutation (\Cref{eq:com}) relations together, we can prove that $\{Z(a) : a \in \{0,1\}^n\}$ and $\{X(b) : b \in \{0,1\}^n\}$ approximately satisfy the relations satisfied by the corresponding elements of the Pauli group. Moreover, by taking products, we can extend $Z(a)$ and $X(b)$ to a matrix-valued function $f(s, a,b) = (-1)^s Z(a)X(b)$ that approximately obeys the multiplication law of the Pauli group. The Gowers-Hatami theorem then implies that there is a \emph{rounding} of $f$ which \emph{exactly} satisfies the Pauli group relations (up to isometry). That is, there exists a representation $\rho$ of the Pauli group such that, on average over $s,a,b$, $f(s,a,b)$ is close to $\rho(s,a,b)$ conjugated by the isometry. 

Zooming back out to the level of designing Bob's questions, note that the all-$Z$ and all-$X$ questions were particularly nice for the Pauli braiding test because (1) the sets $\{ \sigma_Z(a) : a \in \{0,1\}^n \}$ and $\{ \sigma_X(b) : b \in \{0,1\}^n \}$ taken together generate the entire $n$-qubit Pauli group, and (2) the trick of constructing many binary observables from a single projective measurement gave us \emph{exact linearity} on the $Z$ side and the $X$ side individually almost for free: that is, $\{ Z(a) : a \in \{0,1\}^n \}$ is automatically an exact representation of $\mathbb{Z}_2^n$, and the same is true of $\{ X(b) : b \in \{0,1\}^n \}$.

There is no guarantee that these nice properties hold if we consider (instead of the all-$X$ and all-$Z$ questions) the set of mixed-basis questions induced by the energy test for our constant-gap Hamiltonian. In particular, there is no guarantee that the binary observables which can be constructed from Bob's set of mixed basis measurements will generate the whole Pauli group, in the way that $\{ \sigma_Z(a) : a \in \{0,1\}^n \}$ and $\{ \sigma_X(b) : b \in \{0,1\}^n \}$ generate the whole Pauli group. It becomes even more important to use the all-$X$ and all-$Z$ questions if we want to eventually make the Pauli braiding test question-succinct: we give some intuition as to why this is the case in the section `Succinct Pauli braiding'.

The easiest solution seems to be to introduce a consistency test between Bob's mixed basis measurements (that we would like Bob to use when he plays the energy test) and Bob's pure basis measurements (that we would like Bob to use when he plays the Pauli braiding test). More specifically---following the standard template for designing tests of this form---we will introduce two new questions into Alice's question set that are identical to Bob's pure basis questions (i.e.~`measure all in $X$' and `measure all in $Z$'); we will ask Bob to play his pure basis operators against Alice's pure basis operators, in order to check that Bob's pure basis operators are consistent with Alice's pure basis operators; and then we will ask Bob to play his mixed basis operators against Alice's pure basis operators, and check that they agree whenever the bases align, which (since we checked that Bob's and Alice's pure basis operators agree) is essentially equivalent to checking that Bob's mixed basis operators are consistent with Bob's pure basis operators. We might hope that this test, combined with the usual analysis, will be sufficient to allow us to `round' Bob's mixed operators in the same way that we can round Bob's all-$X$ and all-$Z$ measurements by using the usual Gowers-Hatami analysis. \\

\noindent \emph{Difficulties in the analysis. \:} Unfortunately, instantiating this intuition proves to be nontrivial in the compiled setting, even though the analysis is fairly routine in the nonlocal setting. The tensor product structure of the provers' Hilbert space in the nonlocal setting is useful because it supports a large range of convenient operations that are loosely grouped together under the name of `prover-switching'. The ordinary nonlocal analysis of a consistency test like this one would proceed primarily through prover-switching calculations. While we did find it necessary to prove some lemmas which capture certain applications of prover-switching in the compiled setting (see \Cref{sec:prover-switching}), we found that these lemmas were insufficient in order to analyse the mixed-vs-pure basis test.

More specifically, the main difficulty we encountered was the following. The statement we would like to show, in order to make the energy test work in the presence of mixed terms, is of the following form. Let $w \in \{\1,X,Z\}^n$ be a string indicating which Pauli bases to measure $n$ qubits in. We want to show that, if Alice and Bob win in our protocol with high probability, then there exists an isometry $V$ such that, for the distribution $D$ on Pauli basis choices induced by the constant-gap Hamiltonian,
\begin{align*}
\E_{w \sim D} \E_{a \in \bits^n} \| O^w(a) - V^\dagger (\sigma_w(a) \otimes \1_{\rm aux}) V \|^2 \leq \mathrm{small},
\end{align*}
where $\sigma_w(a)$ is the honest Pauli observable that corresponds to the tensor product
\begin{align*}
\sigma_w(a) = \bigotimes_i \sigma_{w_i}^{a_i} \,.
\end{align*}
and $O^w(a)$ is Bob's potentially cheating version of $\sigma_w(a)$.

Normal pure-basis Gowers-Hatami tells us that, if Alice and Bob win with high probability in Pauli braiding, then for any $W \in \{X,Z\}$ it is the case that
\begin{align}
\label{eq:pure-gh}
\E_{a \in \bits^n} \| W(a) - V^\dagger (\sigma_W(a) \otimes \1_{\rm aux}) V \|^2 \leq \mathrm{small},
\end{align}
for some fixed isometry $V$. One idea for proceeding with the analysis might be to show that $O^w(a) \approx Z(c) X(d)$ using the mixed-vs-pure basis test (with $c$ being the string such that $c_i = 1$ iff $a_i = 1$ and $w_i = W$, and similarly for $d$), and then to `round' $Z(c)$ and $X(d)$ separately using \Cref{eq:pure-gh}. Unfortunately, rounding something of the form $Z(c) X(d)$ na\"ively using \Cref{eq:pure-gh} produces something of the form
\begin{align*}
V^\dagger (\sigma_Z(c) \otimes \1_{\rm aux}) V V^\dagger (\sigma_X(d) \otimes \1_{\rm aux}) V.
\end{align*}
Since $V$ is an isometry and not a unitary, $V V^\dagger$ is not necessarily $\1$, and it is unclear how to get rid of it: we call this the `$V V^\dagger$ problem'. There are ways to bypass this problem in the nonlocal setting using tensor product structure, but we were not able to replicate these techniques in the compiled setting.

Instead, we bypass the problem by `directly' proving a form of Gowers-Hatami that, perhaps surprisingly, allows us to round in expectation over \emph{any} distribution over the Pauli group, even though the Pauli braiding test is only played with the uniform distribution. 
More specifically, we prove our version of Gowers-Hatami (\Cref{lem:non-uniform-gh}, stated earlier informally as \Cref{thm:nonuniform-gh-informal}), which can be used to round \emph{arbitrary} distributions over the underlying group, provided with the right hypothesis; and then we prove, using a `distribution-switching' trick (\Cref{lem:pauli_rounding_all_dist}), that the hypothesis of \Cref{lem:non-uniform-gh} can be obtained for \emph{any} distribution $\mu$ even if we only start with commutation relations that hold on uniform average over pure-basis elements (and a few other conditions, such as exact linearity), which is what we have access to through the pure-basis Pauli braiding test.

\paragraph{Succinct Pauli braiding.}
Finally, armed with the mixed-vs-pure basis test, we can focus on making the Pauli braiding test succinct (where, by `Pauli braiding test', we mean the version in which Bob always gets asked either the all-$X$ or the all-$Z$ question). Our starting point for this mission is de la Salle's elegant simplification~\cite{delaSalle2022spectral} of `question reduction' from~\cite{JNVWY20}, in which he introduces a version of Pauli braiding where Alice's questions are sampled from $\eps$-biased sets. The normal Pauli braiding game proceeds as follows:
\begin{itemize}
\item The verifier chooses two strings $a, b \in \{0,1\}^n$ uniformly at random.
\item The verifier decides what to do next based on the parity of $a \cdot b$:
\begin{itemize}
\item If $a \cdot b = 0$, the verifier referees a \emph{commutation game} (in which honest Alice plays with $\sigma_Z(a)$ and $\sigma_X(b)$).
\item If $a \cdot b = 1$, the verifier referees an \emph{anticommutation game} (in which, again, honest Alice plays by embedding $\sigma_Z(a)$ and $\sigma_X(b)$ into her strategy).
\end{itemize}
\end{itemize}
The commutation game is designed to test that two particular operators commute, and the anticommutation game (based on CHSH or Magic Square) is designed to test that two particular operators anticommute.

Note that the verifier has to send $a,b$ to Alice for this protocol to work. The protocol was made succinct by de la Salle simply by choosing $a, b$ from $\eps$-biased sets instead of from all of $\{0,1\}^n$. This is a natural idea, but it is at first surprising that it works at all: after all, the sets of Paulis $\{ \sigma_Z(a) : a \in S \}$ and $\{ \sigma_X(b) : b \in S \}$ for some $\eps$-biased $S$, where $|S| = \poly(n)$, only cover an exponentially small fraction of the Pauli group! All that the protocol directly certifies is commutation and anticommutation relations among pairs of operators in these sets. Na\"{i}vely, to deduce relations about representations of \emph{general} group elements, one would need to write these elements as $\poly(n)$-length words in the group elements from the $\eps$-biased sets, and apply the relations on the $\eps$-biased sets $\poly(n)$-many times. This would seemingly rule out a test with constant soundness.

Miraculously, however, everything still works as before, and the reason is that we do probe the entire group through Bob, who still measures the all-$X$ and all-$Z$ mesaurements. In particular, we have `for free' (or by construction) that Bob's $X(b)$ operators, taken as a set, form an \emph{exact} representation of $\mathbb{Z}_2^n$, and the same for his $Z(a)$ operators. Meanwhile, all elements of the Pauli group can be written as words of constant length in the operators $\{ \sigma_Z(a) : a \in \{0,1\}^n \}$ and $\{ \sigma_X(b) : b \in \{0,1\}^n \}$. In some sense, de la Salle's test works because probing the commutation relations between two exact representations of $\mathbb{Z}_2^n$ on only an $\eps$-biased set is sufficient to establish the commutation relations everywhere, because the function of $\eps$-biased sets is precisely to `fool' exactly linear functions. In fact,  de la Salle's test and its analysis are analogous to the ``derandomized BLR test'' for linear functions and the Fourier-based analysis of it presented in Section~6.4 of~\cite{odonnellbook}.

In order to use the succinct version of Pauli braiding in our protocol, we have to come up with a version of the analysis that works in the compiled setting. Unfortunately, de la Salle's original proof in the nonlocal case is written in the `synchronous' setting, in which the provers (even malicious provers) are assumed to start out by sharing EPR pairs. This assumption simplifies the calculations because it allows us to move (`prover switch') measurements freely from one prover to the other. The synchronicity assumption is without loss of generality in the nonlocal setting by~\cite{vidick2022almost}, but no compiled version of this result exists. Therefore, we have to redo the proof in our setting using the state-dependent distance, and come up with ways to use the cryptography to simulate the ``prover switching" steps in de la Salle's analysis. (At the time of~\cite{NZ23} it was not known whether the cryptography could in fact simulate these properties.) In the process, we pare down de la Salle's proof to the parts that are essential for analysing succinct Pauli braiding and state it in more computer-science-like language, which may be useful for future readers with a computer science background. Our version of de la Salle's analysis is presented as \Cref{lem:dls-commutator-close}.

\paragraph{Related work.}
Simultaneously, a succinct argument system for $\QMA$ based only on the post-quantum security of LWE (a standard assumption) was achieved by \cite{GKNV24}. Both of these works use tools from~\cite{BKLMMVVY22}, in particular the technique of ``subsampling'' a Hamiltonian using a PRG, and the technique of transforming a semi-succinct protocol into a fully succinct one by using succinct arguments of knowledge. However, the methods they use to solve the key technical challenge of succinctly delegating many-qubit Pauli measurements are essentially disjoint. In particular, for us, the ``heavy lifting" to achieve question-succinctness is performed \emph{information theoretically}, in our question-succinct two-prover self-test for EPR pairs, whereas for them, succinctness is achieved by using specific technical features of a cryptographic construction using LWE.

\paragraph{Acknowledgements.}
We are grateful to Fermi Ma for allowing us to rewind him until we could extract an understanding from the succinct arguments in Section 9 of~\cite{BKLMMVVY22}. We also thank Yael Kalai, Alex Lombardi, and Thomas Vidick for useful discussions.
TM acknowledges support from the ETH Zurich Quantum Center and an ETH Doc.Mobility Fellowship.

\section{Preliminaries}

\subsection{Notation}
We assume basic familiarity with quantum states and measurements.
We consider finite dimensional Hilbert spaces, which we commonly denote by $\cH$.
The set of linear operator $A: \cH \to \cH$ is $L(\cH)$, the set of positive semidefinite $A: \cH \to \cH$ is $\pos(\cH)$, and the set of unitaries $U: \cH \to \cH$ is $U(\cH)$.
For $A \in L(\cH)$, we use $\norm{A}_p \deq \Tr[(A^\dagger A)^{p/2}]^{1/p}$ to denote its Schatten $p$-norm. 
For any (subnormalised) pure state $\ket{\psi}$, we denote the (subnormalised) density matrix by $\psi$.

For a probability distribution $D$ over some set $\cX$, we write $x \sim D$ to denote a sample drawn according to $D$.
We write $x \sim \cX$ to denote a sample drawn uniformly from $\cX$.

For $a, b \in \C$ and $\delta > 0$, we write $a \approx_\delta b$ if $|a - b| \leq \delta$.

\subsection{Nonlocal games}

In a non-local game, a verifier (or referee) sends questions to multiple non-communicating provers, receives an answer from each prover, and evaluates the question-answer combination using a verification predicate to decide whether the provers ``win'' or ``lose'' the game.
Formally, a two-prover non-local game is specified by the following.

\begin{definition}
    A nonlocal game $G$ is given by natural numbers $n_1, n_2, m_1, m_2$, a distribution $\cQ$ over pairs $(x,y) \in \{0,1\}^{n_1} \times \{0,1\}^{n_2}$, and a polynomial-time verification predicate $V( x, y, a, b) \in \{0,1\}$, where $a \in \{0,1\}^{m_1}$ and $b \in \{0,1\}^{m_2}$.
\end{definition}

\subsection{Pauli matrices and Heisenberg-Weyl group}
We use the usual Pauli matrices $\sigma_X, \sigma_Y, \sigma_Z$. We will also find it convenient to set $\sigma_{\1} = \1$.
For $w \in \{\1, X,Z\}^n$ and $a \in \bits^n$, we define
\begin{align*}
\sigma_w(a) = \bigotimes_i \sigma_{w_i}^{a_i} \,.
\end{align*}
We also write $\sigma_w = \sigma_w(\vec 1) = \otimes_i \sigma_{w_i}$.

In addition, we define the Pauli \emph{projections} $\pi^w_a$ as
\begin{align}
\label{eq:pauli-proj}
\pi^w_u = \bigotimes_i \left( \frac{\1 + (-1)^{u_i} \sigma_{w_i}}{2} \right) = \E_{a \in \bits^n} (-1)^{u \cdot a} \sigma_w(a)\,.
\end{align}

The Heisenberg-Weyl group on $n$ qubits, denoted $W_n$, is the group generated by $\{\sigma_X(a), \sigma_Z(b)\}_{a, b \in \bits^n}$, where the group operation is induced by matrix multiplication.
More formally, to differentiate between the group in the abstract and its concrete representation in terms of Pauli matrices, we view $W_n$ as a group with generators $X_1, \dots, X_n, Z_1, \dots, Z_n$, with the multiplication rule given by associating $X_i$ with $\sigma_X(e_i)$ and  $Z_j$ with $\sigma_X(e_j)$ (where $e_i \in \bits^n$ is 0 everywhere except on position $i$, where it is 1). 
We record the representations of $W_n$ in the following lemma.
\begin{lemma}\label{lem:hw-representations}
    The irreducible representations of the Heisenberg-Weyl group $W_n$ consist of
    \begin{itemize}
        \item The one-dimensional representations, given by choosing an assignment of 1 or $-1$ for each of the generators
        \begin{align*}
            X_1, Z_1, \dots, X_n, Z_n
        \end{align*}
        and filling out the rest of the multiplication table accordingly.
        Note that for any one-dimension irrep $\pi$, $\pi(-\id) = \pi(X_1 Z_1 X_1 Z_1) = 1$, where the first equality uses that $X_1 Z_1 X_1 Z_1 = -\id$ according to the group relations, and the second equality uses the representation property and the fact that multiplication of scalars is commutative.
        \item The `fundamental representation' given by the $2^n$-dimensional Pauli matrices: that is,
        \begin{align*}
            \forall j, \quad &X_j \mapsto \sigma_X(e_j), \\
            &Z_j \mapsto \sigma_Z(e_j).
        \end{align*}
    \end{itemize}
\end{lemma}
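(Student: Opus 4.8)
The plan is to use the standard character-theoretic identity $\sum_\pi (\dim\pi)^2 = |W_n|$ for a finite group, together with two elementary structural facts about $W_n$. First I would record that $W_n$ is a finite group of order $2^{2n+1}$: every element can be written uniquely in the form $(-\id)^c\, \sigma_X(a)\,\sigma_Z(b)$ with $c \in \bits$ and $a, b \in \bits^n$, since the $X_i$ mutually commute, the $Z_j$ mutually commute, $X_i$ and $Z_j$ commute for $i \neq j$, each generator squares to $\id$, and $X_i Z_i X_i Z_i = -\id$ is a central involution; the Pauli-matrix realization shows these $2^{2n+1}$ elements are pairwise distinct. Next I would verify that the commutator subgroup is exactly $[W_n, W_n] = \{\pm\id\}$: it contains $-\id$ by the displayed relation, and it is contained in $\{\pm\id\}$ because modulo $\{\pm\id\}$ all generators commute.

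From the computation of the commutator subgroup, the abelianisation is $W_n / \{\pm\id\} \cong \Z_2^{2n}$, which has exactly $2^{2n}$ characters; pulling these back gives exactly $2^{2n}$ one-dimensional representations of $W_n$, each obtained by freely assigning a sign in $\{\pm 1\}$ to each of $X_1, Z_1, \dots, X_n, Z_n$ and extending multiplicatively, and each necessarily sending $-\id \mapsto 1$ (as already noted in the statement, since $-\id$ lies in the commutator subgroup). These are pairwise non-isomorphic (they are distinguished by their values on the generators) and contribute a total of $2^{2n} \cdot 1^2 = 2^{2n}$ towards $\sum_\pi (\dim\pi)^2 = |W_n| = 2^{2n+1}$.

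It remains to account for the remaining $2^{2n}$, which I would do by exhibiting the fundamental (Pauli) representation $\rho$ on $(\C^2)^{\otimes n}$ and showing it is irreducible: any operator commuting with every $\rho(g)$ commutes with all of $\{\sigma_X(a), \sigma_Z(b) : a,b \in \bits^n\}$, hence with a spanning set of $L((\C^2)^{\otimes n})$, hence is a scalar, so $\rho$ is irreducible by Schur's lemma. Since $\dim\rho = 2^n$ we have $(\dim\rho)^2 = 2^{2n}$, which exactly exhausts $\sum_\pi (\dim\pi)^2 = 2^{2n+1}$; therefore the one-dimensional representations together with $\rho$ form a complete list of the irreducible representations of $W_n$ up to isomorphism, and $\rho$ is distinguished from all the one-dimensional ones by $\rho(-\id) = -\id \neq \id$. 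I do not anticipate a genuine obstacle here: the only points requiring care are the bookkeeping for $|W_n|$ and $[W_n,W_n]$, and the (standard) fact that the $n$-qubit Paulis span the full matrix algebra, which hands us irreducibility of $\rho$ immediately.
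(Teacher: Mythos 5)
Your proof is correct. The paper actually states \Cref{lem:hw-representations} without any proof at all (it is ``recorded'' as a known fact), so there is no argument of the paper's to compare against; your write-up supplies the standard justification and would serve as a complete proof. All the key points check out: the normal form $(-\id)^c\,\sigma_X(a)\,\sigma_Z(b)$ gives $|W_n| = 2^{2n+1}$ (only $\pm 1$ phases arise since reordering $\sigma_Z$'s past $\sigma_X$'s costs only a sign); the commutator subgroup is $\{\pm\id\}$, so the abelianisation $\Z_2^{2n}$ yields exactly $2^{2n}$ one-dimensional irreps, each a free sign assignment on the generators and each killing $-\id$; the $4^n$ operators $\sigma_X(a)\sigma_Z(b)$ span $L((\C^2)^{\otimes n})$, so the fundamental representation is irreducible by Schur; and $2^{2n}\cdot 1^2 + (2^n)^2 = 2^{2n+1}$ closes the count. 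The only stylistic remark is that the completeness of the list via $\sum_\pi(\dim\pi)^2 = |G|$ could equivalently be phrased by counting conjugacy classes ($2^{2n}+1$ of them: the central $\{\id\}$, $\{-\id\}$ is not a class on its own under this count --- rather $\{\pm g\}$ for non-central $g$ plus the two central singletons), but the sum-of-squares route you chose is cleaner here.
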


\subsection{Reduced measurements}

We will frequently consider measurements that return tuples as outcomes.
For those measurements, it is convenient to define reduced (or marginalised) measurements, where we only care about some of the elements of the outcome tuple and ignore the others.
For this, we define the following notation.

\begin{definition}
    For any alphabet $\Sigma$, let $u \in \Sigma^n$, and let $S = \{i_1, i_2, \dots, i_k\} \subseteq [n]$. Then $u|_{S} \deq (u_{i_1}, \dots, u_{i_k}) \in \Sigma^{k}$ is the restriction of $u$ to $S$.
\end{definition}
\begin{definition}
    For any alphabet $\Sigma$, $w \in \Sigma^n$, and $W \in \Sigma$, the set $\{w = W\}$ is defined as $\{i \in [n]: w_i = W\}$.
\end{definition}
\begin{definition}[Reduced measurements] \label{def:reduced_meas}
    For any alphabet $\Sigma$, measurement $\{Q_u\}_{u \in \Sigma^n}$, and set $S \subseteq [n]$, we define the reduced measurement
    \[ (Q|_S)_{v \in \Sigma^{|S|}} = \sum_{u: u|_S = v} Q_u. \]
\end{definition}
We will often apply this definition to measurements that measure many qubits, each in one of the Pauli bases, to select the outcomes corresponding to a single basis. Specifically, for $\Sigma = \bits$, and for any Pauli string $w \in \{X,Z\}^n$, basis choice $W \in \{X,Z\}$, and string $v \in \bits^{|\{w = W\}|}$, we have
\begin{align*}
(Q\setr{w = W})_v = \sum_{u: u|_{\{w=W\}} = v} Q_u \,,
\end{align*}
In other words $(Q|_{\{w = W\}})_v$ is the marginalisation over all outcomes on indices where $w_i \neq W$.

\subsection{State dependent norm}
\begin{definition}[State-dependent inner product and norm]\label{def:state_dep_inner_product}
 Let $\cH$ be a finite-dimensional Hilbert space and $A, B \in L(\cH)$ be linear operators on $\cH$. Let $\psi \in \pos(\cH)$. We define the state-dependent (semi) inner product of $A$ and $B$ w.r.t $\psi$ as 
\begin{equation*}
\langle A, B \rangle_\psi = \Tr[ A^\dagger B \psi ] \,.
\end{equation*} 
This induces the state-dependent (semi) norm 
\begin{equation*}
\norm{A}_\psi^2 = \langle A, A \rangle_\psi = \Tr[ A^\dagger A \psi ] \,.
\end{equation*}
\end{definition}

\begin{remark} \label{rem:schatten_norm}
The state dependent (semi) norm can also be expressed as a Schatten 2-norm (also called the Hilbert-Schmidt norm): 
\begin{equation*}
\norm{A}_\psi = \norm{A \psi^{1/2}}_2 \,.
\end{equation*}
\end{remark}

We collect a number of basic properties of the state dependent norm.
These are standard properties and easy to prove from the definitions.
\begin{lemma}[Basic properties of the state dependent norm] \label{lem:state_dep_norm_props}
For all (not necessarily normalised) states $\psi, \psi' \in \pos(\cH)$ and linear operators $A, B \in L(\cH)$ on some finite-dimensional Hilbert space $\cH$:
\begin{enumerate}
\item $\norm{A}_{B \psi B^\dagger} = \norm{AB}_{\psi}$. \label{item:conj_mult}
\item $\norm{A B}_{\psi} \leq \norm{A}_{\infty} \norm{B}_{\psi}$. \label{item:submult_infty}
\item For any unitary $U$, $\norm{UA}_\psi = \norm{A}_\psi$. \label{item:left_unitary_inv}
\item Linearity of the squared norm in the state: $\norm{A}_{\psi + \psi'}^2 = \norm{A}_{\psi}^2 + \norm{A}_{\psi'}^2$. \label{item:linearity}
\item Triangle inequality for the squared norm: $\norm{A+B}_\psi^2 \leq 2 \norm{A}_\psi^2 + 2 \norm{B}_\psi^2$. \label{item:triangle_ineq_sq}
\end{enumerate}
\end{lemma}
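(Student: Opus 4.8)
The plan is to verify each of the five items by a short direct computation, switching freely between the two equivalent expressions for the state-dependent norm: the defining trace form $\norm{A}_\psi^2 = \Tr[A^\dagger A \psi]$ from \Cref{def:state_dep_inner_product}, and the Hilbert--Schmidt form $\norm{A}_\psi = \norm{A\psi^{1/2}}_2$ from \Cref{rem:schatten_norm}, whichever is more convenient. Items~\ref{item:conj_mult}, \ref{item:left_unitary_inv}, and~\ref{item:linearity} follow purely from cyclicity and linearity of the trace. For item~\ref{item:conj_mult}, I would write $\norm{A}_{B\psi B^\dagger}^2 = \Tr[A^\dagger A B \psi B^\dagger] = \Tr[(AB)^\dagger (AB) \psi] = \norm{AB}_\psi^2$, moving $B^\dagger$ to the front by cyclicity of the trace. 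For item~\ref{item:left_unitary_inv}, $\norm{UA}_\psi^2 = \Tr[A^\dagger U^\dagger U A \psi] = \Tr[A^\dagger A \psi] = \norm{A}_\psi^2$ using $U^\dagger U = \1$. For item~\ref{item:linearity}, linearity of the trace immediately gives $\norm{A}_{\psi+\psi'}^2 = \Tr[A^\dagger A \psi] + \Tr[A^\dagger A \psi'] = \norm{A}_\psi^2 + \norm{A}_{\psi'}^2$.

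For the remaining two items I would pass to the Hilbert--Schmidt form. For item~\ref{item:submult_infty}, $\norm{AB}_\psi = \norm{AB\psi^{1/2}}_2 \leq \norm{A}_\infty \, \norm{B\psi^{1/2}}_2 = \norm{A}_\infty \, \norm{B}_\psi$, where the middle inequality is the standard bound $\norm{AX}_2 \leq \norm{A}_\infty \norm{X}_2$ (a special case of H\"older's inequality for Schatten norms). For item~\ref{item:triangle_ineq_sq}, the cleanest route is to apply the parallelogram identity $\norm{X+Y}_2^2 + \norm{X-Y}_2^2 = 2\norm{X}_2^2 + 2\norm{Y}_2^2$ in Hilbert--Schmidt space with $X = A\psi^{1/2}$ and $Y = B\psi^{1/2}$, which yields $\norm{A+B}_\psi^2 \leq 2\norm{A}_\psi^2 + 2\norm{B}_\psi^2$. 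Equivalently, one can note that $\langle A, B\rangle_\psi = \langle A\psi^{1/2}, B\psi^{1/2}\rangle$ is a genuine (possibly degenerate) semi-inner-product, invoke Cauchy--Schwarz to obtain the triangle inequality $\norm{A+B}_\psi \leq \norm{A}_\psi + \norm{B}_\psi$, and then use $(x+y)^2 \leq 2x^2 + 2y^2$ for $x,y \geq 0$.

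There is no substantive obstacle here: every claim reduces to an elementary property of the trace or of Schatten/Hilbert--Schmidt norms. The only point deserving a word of care is that $\psi$ is assumed merely positive semidefinite, so $\psi^{1/2}$ denotes its (well-defined) positive semidefinite square root and $\norm{\cdot}_\psi$ is only a seminorm rather than a norm; but all of the arguments above go through verbatim for a seminorm, and the factor-of-two formulation of item~\ref{item:triangle_ineq_sq} sidesteps any delicacy arising from the possible degeneracy of $\psi$.
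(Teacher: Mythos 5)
Your proof is correct, and it is exactly the elementary argument the paper has in mind: the lemma is stated there without proof, with the remark that the properties are "standard and easy to prove from the definitions," and your trace-cyclicity computations for items (i), (iii), (iv), the H\"older bound $\norm{AX}_2 \leq \norm{A}_\infty \norm{X}_2$ for item (ii), and the parallelogram identity for item (v) are precisely the intended verifications. Your closing remark that everything goes through for the seminorm induced by a merely positive semidefinite $\psi$ is also the right point of care.
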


\begin{lemma} \label{lem:state_replace}
For an observable $A$ on $\cH$ and two states $\psi, \psi$ on $\cH$ with $\norm{\psi - \psi'}_1 \leq \eps$, we have that 
\begin{align*}
\norm{A}_{\psi}^2 \approx_{\norm{A}_{\infty}^2 \eps} \norm{A}_{\psi'}^2 \,.
\end{align*}
\end{lemma}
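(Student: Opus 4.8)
The plan is to reduce the statement to a standard matrix-analysis inequality by moving to the Hilbert–Schmidt picture via Remark~\ref{rem:schatten_norm}, and then to control the difference of the two squared norms by a single application of Hölder's inequality for Schatten norms. Concretely, writing $\psi^{1/2}$ and $(\psi')^{1/2}$ for the positive semidefinite square roots, we have $\norm{A}_\psi^2 = \Tr[A^\dagger A \psi]$ and $\norm{A}_{\psi'}^2 = \Tr[A^\dagger A \psi']$, so the quantity to bound is $|\Tr[A^\dagger A (\psi - \psi')]|$. Since $A^\dagger A$ is positive semidefinite with $\norm{A^\dagger A}_\infty = \norm{A}_\infty^2$, the trace-duality (Hölder) inequality $|\Tr[MN]| \leq \norm{M}_\infty \norm{N}_1$ with $M = A^\dagger A$ and $N = \psi - \psi'$ gives $|\Tr[A^\dagger A(\psi-\psi')]| \leq \norm{A}_\infty^2 \norm{\psi - \psi'}_1 \leq \norm{A}_\infty^2 \eps$, which is exactly the claimed bound $\norm{A}_\psi^2 \approx_{\norm{A}_\infty^2 \eps} \norm{A}_{\psi'}^2$.

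The only mild subtlety is that the hypothesis says $A$ is an \emph{observable}; I would note that this is not actually needed beyond boundedness — all that is used is $A \in L(\cH)$ with finite operator norm, and the bound is stated in terms of $\norm{A}_\infty$ anyway. (If one wants to use observable-ness, then $A^\dagger A = \1$ when $A$ is additionally unitary, but in general an observable need only satisfy $A = A^\dagger$, so I would just treat $A$ as an arbitrary bounded operator.) One should also double-check the harmless typo in the statement (it writes ``two states $\psi, \psi$'' where the second should be $\psi'$); the intended reading is clear from the displayed conclusion.

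I do not expect any real obstacle here: the entire argument is one line of Hölder/trace-duality once the squared state-dependent norm is unfolded via Definition~\ref{def:state_dep_inner_product}. The ``hardest'' part is purely bookkeeping — making sure the operator whose operator norm appears is $A^\dagger A$ (contributing $\norm{A}_\infty^2$) rather than $A$ — but this is immediate. No induction, no isometry tricks, and no appeal to the earlier structural lemmas beyond Remark~\ref{rem:schatten_norm} (which itself is only a convenient reformulation and not strictly necessary) are required.
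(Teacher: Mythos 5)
Your proposal is correct and matches the paper's proof, which is exactly the one-line argument of unfolding $\norm{A}_\psi^2 = \Tr[A^\dagger A \psi]$ and applying H\"older's inequality to bound $|\Tr[A^\dagger A(\psi - \psi')]| \leq \norm{A}_\infty^2 \norm{\psi - \psi'}_1$. Your observations about the typo and about observable-ness being unnecessary are accurate but incidental.
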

\begin{proof}
This follows immediately from the definition of the state dependent norm and H\"older's inequality.
\end{proof}

\begin{lemma} \label{lem:replace_on_state}
For all $i \in \cI$ (for some index set $\cI$) let $A_i\in L(\cH)$ and $\psi_i \in \pos(\cH)$ such that $\sum_i \Tr[\psi_i] \leq 1$. Then 
\begin{align*}
\sum_i \norm{A_i \psi_i}_1 \leq \sqrt{\sum_i \norm{A_i}_{\psi_i}^2} \,.
\end{align*}
\end{lemma}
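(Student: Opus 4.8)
The plan is to reduce everything to two applications of the Cauchy--Schwarz/H\"older inequality: once on Schatten norms (to split $A_i\psi_i = (A_i\psi_i^{1/2})(\psi_i^{1/2})$), and once on the sum over $i$. The key bridge is \Cref{rem:schatten_norm}, which lets us rewrite the state-dependent norm $\norm{A_i}_{\psi_i}$ as the Hilbert--Schmidt norm $\norm{A_i\psi_i^{1/2}}_2$; note $\psi_i^{1/2}$ is well-defined since $\psi_i\in\pos(\cH)$, and $\norm{\psi_i^{1/2}}_2^2 = \Tr[\psi_i^{1/2}\psi_i^{1/2}] = \Tr[\psi_i]$.

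First I would bound each term individually. Writing $A_i\psi_i = (A_i\psi_i^{1/2})\,\psi_i^{1/2}$ and applying H\"older's inequality for Schatten norms in the form $\norm{XY}_1 \leq \norm{X}_2\norm{Y}_2$, we get
\begin{align*}
\norm{A_i\psi_i}_1 \;\leq\; \norm{A_i\psi_i^{1/2}}_2 \cdot \norm{\psi_i^{1/2}}_2 \;=\; \norm{A_i}_{\psi_i}\cdot \sqrt{\Tr[\psi_i]}\,.
\end{align*}

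Next I would sum over $i$ and apply the Cauchy--Schwarz inequality for sums with $x_i = \norm{A_i}_{\psi_i}$ and $y_i = \sqrt{\Tr[\psi_i]}$:
\begin{align*}
\sum_i \norm{A_i\psi_i}_1 \;\leq\; \sum_i \norm{A_i}_{\psi_i}\sqrt{\Tr[\psi_i]} \;\leq\; \sqrt{\sum_i \norm{A_i}_{\psi_i}^2}\,\cdot\,\sqrt{\sum_i \Tr[\psi_i]} \;\leq\; \sqrt{\sum_i \norm{A_i}_{\psi_i}^2}\,,
\end{align*}
where the last step uses the hypothesis $\sum_i \Tr[\psi_i]\leq 1$. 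This is exactly the claimed bound.

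I do not expect any genuine obstacle here; the statement is essentially a packaging of Cauchy--Schwarz. The only points requiring a modicum of care are the conventions (that $\norm{\cdot}_1$ denotes the trace/Schatten-$1$ norm and $\norm{\cdot}_\psi$ is as in \Cref{def:state_dep_inner_product}, so that \Cref{rem:schatten_norm} applies verbatim) and the observation that $\Tr[\psi_i]\geq 0$ so the square roots are real; both are immediate from $\psi_i\in\pos(\cH)$. If one wanted to avoid invoking \Cref{rem:schatten_norm}, the same argument goes through directly by noting $\Tr[A_i^\dagger A_i\psi_i] = \norm{A_i\psi_i^{1/2}}_2^2$ from the definition of the Hilbert--Schmidt norm.
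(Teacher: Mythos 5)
Your proposal is correct and follows essentially the same route as the paper's proof: the same decomposition $A_i\psi_i=(A_i\psi_i^{1/2})\psi_i^{1/2}$, the same application of H\"older's inequality $\norm{XY}_1\leq\norm{X}_2\norm{Y}_2$ termwise, and then Cauchy--Schwarz over the sum combined with $\sum_i\Tr[\psi_i]\leq 1$. No gaps.
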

\begin{proof}
We first split $\psi_i =  \psi_i^{1/2} \psi_i^{1/2}$ and apply H\"older's inequality: 
\begin{align*}
\sum_i \norm{A_i \psi_i}_1 
&\leq \sum_i \norm{A_i \psi_i^{1/2}}_2 \norm{\psi_i^{1/2}}_2 \,. \\
\intertext{Now applying Cauchy Schwarz:}
&\leq \sqrt{\sum_i \norm{A_i \psi_i^{1/2}}_2^2} \cdot \sqrt{\sum_i \norm{\psi_i^{1/2}}_2^2} 
\leq \sqrt{\sum_i \norm{A_i}_{\psi_i}^2} \,,
\end{align*}
where in the last line we rewrote the first factor as a state-dependent norm and for the second factor] observed that $\sum_i \norm{\psi_i^{1/2}}_{2}^2 = \sum_i \Tr[\psi_i] \leq 1$.
\end{proof}

\subsection{Efficient observables and computational indistinguishability}

\begin{definition}[Computational indistinguishability] \label{def:comp_indist}

Two families of (subnormalized) states $\{\psi_1(\lambda)\}_\lambda$ and $\{\psi_2(\lambda)\}_\lambda$, indexed by a security parameter $\lambda$, are \emph{compuationally indistinguishable}
if for any family of computationally eficient two-outcome  POVMs $\{M(\lambda), 1 - M(\lambda)\}_\lambda$ indexed by $\lambda$, it holds that
\[ \Tr[ M(\lambda) (\psi_1(\lambda) - \psi_2(\lambda)] \leq \negl(\lambda). \]
We typically indicate this by
\[ \psi_1 \capprox \psi_2,\]
suppressing the dependence on $\lambda$ in the notation.
\end{definition}

\begin{lemma} \label{lem:computational_state_switching}
Let $\{U^1_a\}_{a \in \cA}, \dots, \{U^k_a\}_{a \in \cA}$ and $\{V^1_b\}_{b \in \cB}, \dots, \{V^\ell_b\}_{b \in \cB}$ be families of efficient unitaries on some Hilbert space $\cH$, $s: \cA^k \times \cB^\ell \to \bits$ an efficiently computable function, and $\mu$ an efficiently sampleable distribution over $\cA^{\times k} \times \cB^{\times \ell}$.

Consider two states $\psi \capprox_\delta \psi'$.
Then 
\begin{multline*}
\E_{(a_1, \dots, a_k, b_1, \dots, b_\ell) \sim \mu} \norm{U^1_{a_1} \cdots U^k_{a_k} - (-1)^{s(a_1, \dots, a_k, b_1, \dots, b_\ell)} V^1_{b_1} \cdots V^\ell_{b_\ell}}_\psi^2 \\
\approx_\delta 
\E_{(a_1, \dots, a_k, b_1, \dots, b_\ell) \sim \mu} \norm{U^1_{a_1} \cdots U^k_{a_k} - (-1)^{s(a_1, \dots, a_k, b_1, \dots, b_\ell)} V^1_{b_1} \cdots V^\ell_{b_\ell}}_{\psi'}^2 \,.
\end{multline*}
\end{lemma}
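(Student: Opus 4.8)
The plan is to reduce the claimed inequality to a single invocation of the definition of computational indistinguishability (Definition \ref{def:comp_indist}) by exhibiting an efficient two-outcome POVM $\{M, \1 - M\}$ whose acceptance probability on a state $\xi$ equals (up to an affine rescaling that is the same for $\psi$ and $\psi'$) the quantity
\[
\E_{(a_1,\dots,a_k,b_1,\dots,b_\ell)\sim\mu}\,\norm{U^1_{a_1}\cdots U^k_{a_k} - (-1)^{s(\vec a,\vec b)} V^1_{b_1}\cdots V^\ell_{b_\ell}}_\xi^2\,.
\]
First I would expand the squared state-dependent norm using $\norm{A-B}_\xi^2 = \Tr[(A-B)^\dagger(A-B)\xi]$. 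Since each $U^i_{a_i}$ and each $V^j_{b_j}$ is unitary, the ``diagonal'' terms $\Tr[(U^1_{a_1}\cdots U^k_{a_k})^\dagger(U^1_{a_1}\cdots U^k_{a_k})\xi]$ and the analogous $V$-term are both equal to $\Tr[\xi]$, which is a property of the state alone and hence identical for $\psi$ and $\psi'$ up to an error absorbed into $\delta$ (in fact $|\Tr[\psi]-\Tr[\psi']|\le\delta$ follows from $\psi\capprox_\delta\psi'$ by testing against $M=\1$). So the only content is the cross term
\[
-2\,\E_{\vec a,\vec b\sim\mu}\,(-1)^{s(\vec a,\vec b)}\,\mathrm{Re}\,\Tr\!\left[(V^1_{b_1}\cdots V^\ell_{b_\ell})^\dagger\, U^1_{a_1}\cdots U^k_{a_k}\,\xi\right],
\]
and it suffices to show this cross term is $\delta$-close for $\psi$ and $\psi'$.

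Next I would build the distinguisher. Consider the efficient measurement that: samples $(\vec a,\vec b)\sim\mu$ using the efficient sampler (and stores the randomness, or equivalently works coherently on an ancilla initialised to $\sum\sqrt{\mu(\vec a,\vec b)}\ket{\vec a,\vec b}$); controlled on $(\vec a,\vec b)$, applies the efficient unitary $W_{\vec a,\vec b} \deq (-1)^{s(\vec a,\vec b)}\,(V^1_{b_1}\cdots V^\ell_{b_\ell})^\dagger\, U^1_{a_1}\cdots U^k_{a_k}$ (efficient because $s$, $\mu$, and all the $U^i,V^j$ are efficient, and inverses/products of efficient unitaries are efficient); and then performs a Hadamard test — i.e.\ prepares a control qubit in $\ket{+}$, applies $W_{\vec a,\vec b}$ controlled on that qubit, and measures the control in the $X$ basis. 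The probability of outcome $0$ on input state $\xi$ is $\tfrac12 + \tfrac12\,\E_{\vec a,\vec b\sim\mu}\,\mathrm{Re}\,\Tr[W_{\vec a,\vec b}\,\xi]$, which is an affine function (with coefficients independent of $\xi$) of exactly the cross term above. Applying Definition \ref{def:comp_indist} to this POVM and the pair $\psi\capprox_\delta\psi'$ gives that the two cross terms differ by at most $O(\delta)$; combined with the diagonal-term equality this yields the lemma. (If one wants exactly $\approx_\delta$ rather than $\approx_{O(\delta)}$, one can either track constants through a standard-form Hadamard test, or simply note that the statement is only used up to constants; I would state it as $\approx_\delta$ and absorb the constant as is conventional.)

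The only mild subtlety — and the closest thing to an obstacle — is the bookkeeping that the Hadamard-test distinguisher is genuinely \emph{efficient} and \emph{two-outcome} in the sense required by Definition \ref{def:comp_indist}: one must note that sampling $\mu$, computing $s$, and composing the $\poly$-many efficient unitaries $U^i_{a_i}, V^j_{b_j}$ (together with one controlled application and a constant-size Hadamard test) is still a polynomial-time quantum operation, and that the ancillas introduced for the sampler and the control qubit are part of the POVM and not of the state, so they do not affect the comparison $\psi\capprox_\delta\psi'$ (which is a statement about the original Hilbert space $\cH$). This is routine but should be spelled out, since it is the one place where the hypothesis that \emph{all} the ingredients are efficient is actually used. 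Everything else is the elementary algebra of expanding $\norm{\cdot}_\xi^2$ and the observation that unitarity kills the diagonal terms.
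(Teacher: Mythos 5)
Your argument is correct and is essentially the content of the external result (Lemma~2.6 of \cite{metger2021self}) that the paper's one-line proof simply cites: expand the squared state-dependent norm, observe that unitarity reduces the diagonal terms to $\Tr[\xi]$, and estimate the cross term via an efficient, $\mu$-sampled controlled Hadamard test so that \cref{def:comp_indist} applies. The only caveat is the one you already flag yourself --- tracking constants gives $O(\delta)$ rather than literally $\delta$ --- which is harmless since the lemma is only ever invoked with $\delta = \negl(\lambda)$ or up to constants.
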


\begin{proof}
Note that the product of efficient unitaries is another efficient unitary.
Then this follows directly from the definition of computational indistinguishability (\cref{def:comp_indist}), linearity of expectation, and~\cite[Lemma 2.6]{metger2021self}. 
\end{proof}

\subsection{Quantum homomorphic encryption}

The following definitions are taken with some modifications from \cite{KLVY21}. Note that this definition of quantum homomorphic encryption requires, among other specialised requirements, that the scheme specialises to a classical encryption scheme when applied to classical messages: we need this property in our constructions.
\begin{definition}[Quantum Homomorphic Encryption (QHE)]\label{def:QHE-aux}
A quantum homomorphic encryption scheme $\QHE=(\Gen,\Enc,\Eval,\Dec)$ for a class of quantum circuits $\cC$ is a tuple of algorithms with the following syntax:
\begin{itemize}
    \item $\Gen$ is a $\PPT$ algorithm that takes as input the security parameter $1^\secp$ and outputs a (classical) secret key $\sk$ of $\poly(\secp)$ bits;
    \item $\Enc$ is a $\PPT$ algorithm that takes as input a secret key $\sk$ and a classical input $x$, and outputs a classical ciphertext $\ct$;
    \item $\Eval$ is a $\QPT$ algorithm that takes as input a tuple $(C,\ket{\Psi},\ct_{\mathrm{in}})$, where $C:\cH\times(\mathbb{C}^2)^{\otimes n}\rightarrow (\mathbb{C}^2)^{\otimes m}$ is a quantum circuit, $\ket{\Psi}\in\cH$ is a quantum state, and $\ct_{\mathrm{in}}$ is a ciphertext corresponding to an $n$-bit classical plaintext. 
    $\Eval$ computes a quantum circuit
    $\Eval_C(\ket{\Psi}\otimes \ket{0}^{\otimes \poly(\secp, n)},\ct_{\mathrm{in}})$ which outputs a ciphertext $\ct_{\mathrm{out}}$. If $C$ has classical output, we require that $\Eval_C$ also has classical output.
    \item $\Dec$ is a $\sf{QPT}$ algorithm that takes as input a secret key $\sk$ and ciphertext $\ct$, and outputs a state $\ket{\phi}$. Additionally, if $\ct$ is a classical ciphertext, the decryption algorithm outputs a classical string $y$.
\end{itemize}

We require the following three properties from $(\Gen,\Enc,\Eval,\Dec)$:
\begin{itemize}
    \item \textbf{Correctness with auxiliary input:} For every security parameter $\secp\in\Nat$, any quantum circuit $C:\cH_{\mathsf{A}} \times (\mathbb{C}^2)^{\otimes n} \to \{0,1\}^*$ (with classical output), any quantum state $\ket{\Psi}_{\mathsf{A} \mathsf{B}} \in\cH_{\mathsf{A}} \otimes \cH_{\mathsf{B}}$, any message $x\in \{0,1\}^n$, any secret key $\sk \gets \Gen(1^\secp)$ and any ciphertext $\ct \gets \Enc(\sk,x)$, the following states have negligible trace distance:
    \begin{description}
        \item \textit{Game $1$.} Start with $(x, \ket{\Psi}_{\mathsf{A} \mathsf{B}})$. Evaluate $C$ on $x$ and register $\mathsf{A}$, obtaining classical string $y$. Output $y$ and the contents of register $\mathsf{B}$.
        \item \textit{Game $2$.} Start with $\ct \gets \Enc(\sk, x)$ and $\ket{\Psi}_{\mathsf{A} \mathsf{B}}$. Compute  $\ct' \gets\Eval_C(\cdot \otimes \ket{0}^{\poly(\secp, n)} ,\ct)$ on register $\mathsf{A}$. Compute $y'= \Dec(\sk,\ct')$. Output $y'$ and the contents of register $\mathsf{B}$.
    \end{description}
    
In words, ``correctness with auxiliary input'' requires that if QHE evaluation is applied to a register $\mathsf{A}$ that is a part of a joint (entangled) state in $\cH_{\mathsf{A}}\otimes \cH_{\mathsf{B}}$, the entanglement between the QHE evaluated output and $\mathsf{B}$ is preserved.
    
    \item \textbf{IND-CPA security against quantum distinguishers:} For any two messages $x_0, x_1$ and any $\sf{QPT}$ adversary ~$\cA$:
    \begin{gather*}\left|\Pr\left[\cA^{\Enc_\sk(\cdot)}(\ct_0) = 1 \pST
        \begin{array}{l}
        \sk\gets\Gen(1^\secp)\\
        \ct_0\gets\Enc(\sk,x_0)\\
        \end{array}
        \right]  \right.\\
    -\left. \Pr\left[\cA^{\Enc_\sk(\cdot)}(\ct_1) = 1 \pST
        \begin{array}{l}
        \sk\gets\Gen(1^\secp)\\
        \ct_1\gets\Enc(\sk,x_1)\\
        \end{array}
        \right]\right| \\
    \le \sf{negl}(\secp)\enspace.
    \end{gather*}
\item \textbf{Relaxed compactness:} We do \emph{not} need the typical notion of homomorphic encryption compactness, which states that decryption must not depend on the evaluated circuit. However, we do require the following (much weaker) properties:
\begin{itemize}
\item \emph{Classical-quantum (CQ) compactness.} There is a PPT algorithm $\mathsf{DecClassical}$ which, for any entirely classical ciphertext $\ct'$ originating from the experiment `Game 2' described in `Correctness with auxiliary input', satisfies the definition of correctness with auxiliary input when substituted for $\Dec$.
\item \emph{Compactness of encryption.} The running time of $\Enc$ on input single-bit messages should not be larger than $\poly(L) \cdot \poly\log(S) \cdot \poly(\lambda)$, where $L$ is a depth upper bound on the circuits that need to be evaluated and $S$ is a size upper bound. (Note that this restriction does not apply to $\Eval$.)
\end{itemize}
\end{itemize}

Compactness of encryption, which we require in order to achieve the question-succinctness of \Cref{prot:main}, is a property satisfied by most levelled homomorphic encryption schemes. This property suffices for us because all of the `Alice' computations in \Cref{prot:main-2-prover} can be done by log depth circuits (log in the instance size $n$), and for question succinctness we just need the verifier's questions to be $\poly\log n$ in length (and computable in the same time).

\end{definition}

\section{Approximate representation theory}

\subsection{Gowers-Hatami theorem with non-uniform measures}

The Gowers-Hatami theorem~\cite{GH15} states that if a function $f: G \to U(\cH)$ from a finite group to unitary matrices approximately behaves like a representation, then it can be ``rounded'' to an exact representation.
The formulation of this statement in~\cite{GH15} required that the ``approximate representation behaviour'' hold on average over uniformly sampled pairs of group elements.
As explained in the introduction, we require a similar statement, but for arbitrary measures over the group.
Formally, we show the following.

\begin{theorem} \label{lem:non-uniform-gh}
    Let $f: G \to U(\cH)$. Then there exists a finite-dimensional Hilbert space $\cH'$, an isometry $V: \cH \to \cH'$, and a unitary representation $\pi: G \to U(\cH')$ of $G$ such that for all measures $\mu$ over $G$ and all normalised states $\psi$,
    \[ \E_{g \sim \mu, h \sim W_n} \| f(h) f(g) - f(hg) \|_\psi^2 \leq \eps \implies \E_{g \sim \mu} \| f(g) - V^\dagger \pi(g) V \|_\psi^2 \leq \eps. \]
\end{theorem}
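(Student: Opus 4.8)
The plan is to write down the rounding data $\cH', V, \pi$ \emph{explicitly}, depending only on $f$ (not on $\mu$ or $\psi$), and then verify the implication by a short calculation in the $\norm{\cdot}_\psi$ seminorm. Let $G$ denote the underlying group (which in the statement is being sampled from via the uniform distribution, written there as $W_n$; I read ``$h\sim W_n$'' as ``$h$ uniform over $G$'', which is what the argument uses). Take $\cH' \deq \C^{|G|}\otimes\cH$ with orthonormal basis $\{\ket g\}_{g\in G}$ on the first factor, let $\pi(h) \deq \rho(h)\otimes\1_\cH$ where $\rho$ is the \emph{right} regular representation of $G$ (that is, $\rho(h)\ket g = \ket{gh^{-1}}$), and define
\[ V:\cH\to\cH', \qquad V\ket\phi \deq \frac{1}{\sqrt{|G|}}\sum_{g\in G}\ket g\otimes f(g)\ket\phi. \]
Since each $f(g)$ is unitary, $\norm{V\ket\phi}^2 = \E_{g\in G}\norm{f(g)\ket\phi}^2 = \norm{\ket\phi}^2$, so $V$ is an isometry; concretely, $V$ is a Stinespring dilation of the twirling channel $\sigma\mapsto\E_{g\in G} f(g)\,\sigma\, f(g)^\dagger$, which is where the advertised use of Stinespring dilation (rather than Fourier analysis on $G$) enters.

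The key computation is $V^\dagger\pi(h)V$. Expanding, $\pi(h)V\ket\phi = \frac{1}{\sqrt{|G|}}\sum_{g}\ket{gh^{-1}}\otimes f(g)\ket\phi = \frac{1}{\sqrt{|G|}}\sum_{g'}\ket{g'}\otimes f(g'h)\ket\phi$, while $V^\dagger(\ket{g'}\otimes\ket\chi) = \frac{1}{\sqrt{|G|}}f(g')^\dagger\ket\chi$, so
\[ V^\dagger\pi(h)V = \E_{g\in G} f(g)^\dagger f(gh). \]
Since $f(g)^\dagger f(g) = \1$ for every $g$, averaging gives $\E_{g\in G} f(g)^\dagger f(g) = \1$, and hence the clean identity
\[ f(h) - V^\dagger\pi(h)V = \E_{g\in G} f(g)^\dagger\bigl(f(g)f(h) - f(gh)\bigr). \]

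It remains to bound $\E_{h\sim\mu}\norm{f(h)-V^\dagger\pi(h)V}_\psi^2$. Applying the triangle inequality for the seminorm $\norm{\cdot}_\psi$ to the average over $g$, and then using that left-multiplication by a unitary preserves $\norm{\cdot}_\psi$ (\Cref{lem:state_dep_norm_props}) to discard each factor $f(g)^\dagger$,
\[ \norm{f(h)-V^\dagger\pi(h)V}_\psi \;\leq\; \E_{g\in G}\norm{f(g)f(h)-f(gh)}_\psi. \]
Squaring and applying Cauchy--Schwarz (equivalently, convexity of $t\mapsto t^2$), then averaging over $h\sim\mu$,
\[ \E_{h\sim\mu}\norm{f(h)-V^\dagger\pi(h)V}_\psi^2 \;\leq\; \E_{h\sim\mu}\,\E_{g\in G}\norm{f(g)f(h)-f(gh)}_\psi^2, \]
and after relabelling the dummy variables $g\leftrightarrow h$ the right-hand side is exactly the quantity assumed to be $\leq\eps$ in the hypothesis. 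This yields the implication with the same $\eps$ on both sides, and $V,\pi$ work uniformly over all $\mu$ and all (normalised) $\psi$ since they were defined without reference to either.

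\textbf{Where the care is needed.} The one genuinely non-symmetric choice is using the \emph{right} regular representation rather than the left one. This is what forces the error operator in the key identity to be an average of terms of the form $f(g)^\dagger(\cdots)$, with the unitary factor on the \emph{left}, which is precisely the side on which $\norm{\cdot}_\psi$ is unitary-invariant; with the left regular representation the unitary would land on the right, where $\norm{\cdot}_\psi$ is \emph{not} invariant, and a $VV^\dagger$-type obstruction (of the kind discussed in the introduction) would resurface. Everything else is elementary: the argument uses only the definition of $V$, unitarity of $f$, and the basic seminorm properties recorded in \Cref{lem:state_dep_norm_props} and \Cref{rem:schatten_norm}. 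Consistent with the paper's remark, the construction gives $\dim\cH' = |G|\cdot\dim\cH$ and therefore no dimension bound on $\cH'$ — which is fine, since none is claimed.
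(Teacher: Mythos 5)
Your proof is correct and follows essentially the same route as the paper's: both hinge on realising $V^\dagger\pi(g)V$ as the convolution $f^*(g)=\E_{h}f(h)^\dagger f(hg)$ and then concluding via the triangle inequality, left-unitary invariance of $\norm{\cdot}_\psi$, and Cauchy--Schwarz. The only difference is cosmetic — the paper obtains $V$ abstractly by checking that $A\mapsto V^\dagger (A\otimes\1)V$ defines a unital CP map (via a Choi-operator computation) and invoking Stinespring, whereas you write the dilation isometry down explicitly; the resulting objects and the final estimate are identical.
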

\begin{proof}
    Let $\cH_G$ be the complex space of formal sums of group elements and define a representation $\pi: G \to L(\cH_G)$ by 
    \begin{align*}
        \pi(g) = \sum_{h \sim W_n} \ket{h}\!\bra{hg} \,.
    \end{align*}
    It is clear that $\pi$ is a unitary representation of $G$ (in fact, $\pi$ is just the regular representation, written in Dirac notation).

    We define the convolution of $f$ with itself as 
    \begin{align*}
        f^*(g) = \E_{h \sim W_n} f(h)^\dagger f(hg) \,.
   \end{align*}
   With this, we can define a superoperator $\phi: L(\cH_G) \to L(\cH)$ by taking the linear extension of
   \begin{align*}
    \phi(\ket{g}\!\bra{h}) = \frac{1}{|G|} f^*(g^{-1} h) \,.
   \end{align*}
   Since $f(g)$ is always unitary, it follows immediately that $\phi$ is unital, i.e.~$\phi(\1) = \1$.
   We further claim that $\phi$ is completely positive.
   To show this, it suffices to show that the Choi operator 
\begin{equation*}
C = \sum_{g, h} \phi( \ket{g} \! \bra{h} ) \ot \ket{g} \! \bra{h} \in L(\cH \ot \cH_G)
\end{equation*}
is positive. For this, take any $\ket{x} \in \cH \ot \cH_G$ and decompose it as 
\begin{equation*}
\ket{x} = \sum_{g} \ket{x_g} \ket{g} \,.
\end{equation*}
Here, $\ket{x_g}$ are not necessarily orthogonal or normalised.
Then, 
\begin{align*}
\bra{x} C \ket{x} 
&= \sum_{g, h} \bra{x_g} \phi( \ket{g} \! \bra{h} ) \ket{x_h} \\
&= \frac{1}{|G|} \sum_{g, h} \bra{x_g} f^*(g^{-1} h) \ket{x_h} \\
&= \frac{1}{|G|^2} \sum_{g, h, l} \bra{x_g} f(l)^\dagger f(l g^{-1} h) \ket{x_h} \\
&= \frac{1}{|G|^2} \sum_{g, h, l} \bra{x_g} f(l g)^\dagger f(l h) \ket{x_h} \\
&= \frac{1}{|G|^2} \sum_{l} \langle \xi_l | \xi_l\rangle \geq 0 \,,
\end{align*}
with $\ket{\xi_l} = \sum_{g} f(l g) \ket{x_g}$, where we relabelled $l \mapsto l g$ in the fourth line.

Having shown that $\phi$ is unital and completely positive, we can apply Stinespring's dilation theorem to deduce that there exists some auxiliary space $\cK$ and an isometry $V: \cH \to \cH_G \otimes \cK$ such that for all $A \in L(\cH_G)$,
\begin{align*}
\phi(A) = V^\dagger(A \otimes \1_\cK)V \,.
\end{align*}

To relate this back to the original question of rounding the approximate representation $f$, we observe that 
\begin{align*}
f^*(g) = \phi(\pi(g)) = V^\dagger(\pi(g) \otimes \1_\cK)V \,.
\end{align*}
Taking the isometry in the theorem statement to be $V$ and the exact representation $\pi(g) \otimes \1_\cK$ (which, in a slight abuse of notation, we again just call $\pi$), we can therefore bound
\begin{align*}
\E_{g \sim \mu} \| f(g) - V^\dagger \pi(g) V \|_\psi^2
&= \E_{g \sim \mu} \| f(g) - f^*(g) \|_\psi^2 \\
&= \E_{g \sim \mu} \| \E_{h \sim W_n} (f(g) -  f(h)^\dagger f(hg)) \|_\psi^2 \\
&\leq \E_{g \sim \mu, h \sim W_n} \| f(g) - f(h)^\dagger f(hg) \|_\psi^2 \\
&= \E_{g \sim \mu, h \sim W_n} \| f(h) f(g) - f(hg) \|_\psi^2 \leq \eps \,,
\end{align*}
where the inequality follows from Cauchy-Schwarz and the last line uses left-unitary invariance of the state dependent norm (\cref{item:left_unitary_inv}) and the assumption that $f$ is an approximate representation.
\end{proof}

\begin{remark}
    Observe that if $f$ is \emph{exactly} left-multiplicative over some $\mu$, i.e.~if $f(g)f(h) = f(gh)$ for all $g \in \mathrm{supp}(\mu)$ and for all $h \in G$, then $f(g) = V^\dagger \pi(g) V$ for all $g \in \mathrm{supp}(\mu)$. 
\end{remark}

\subsection{Switching distributions for approximate Heisenberg-Weyl group representations}

We will be particularly concerned with approximate representations of the Heisenberg-Weyl group.
The following lemma shows that if a collection of operators $\{Z(a), X(b)\}_{a, b \in \bits^n}$ satisfies certain properties for \emph{uniform} expectations over the group, these operators can be rounded to Pauli operators for \emph{arbitrary} expectations.
We note that this switch from uniform to arbitrary measures does not work in general and crucially relies on each of $\{Z(a)\}$ and $\{X(b)\}$ being an \emph{exact} representation of $\Z_2^n$. 
\begin{lemma} \label{lem:pauli_rounding_all_dist}
Suppose we have collections of efficient binary observables $\{Z(a)\}_{a \in \bits^n}$ and $\{X(b)\}_{b \in \bits^n}$ on a Hilbert space $\cH$ that each form exact representations of $\Z_2^n$, i.e.~for all $a, b \in \bits^n$, $Z(a+b) = Z(a) Z(b)$ and $X(a+b) = X(a) X(b)$.
Consider a state $\psi$ and suppose that the following conditions hold: 
Suppose that
\begin{align}
\E_a Z(a) \psi Z(a) \capprox \psi \,, \\
\E_a X(a) \psi X(a) \capprox \psi \,, \\
\E_{a, b \in \bits^n} \norm{Z(a) X(b) - (-1)^{a \cdot b} X(b) Z(a)}_{\psi}^2 \leq \eps \,. \label{eqn:unif_comm_assumption}
\end{align}
Then there exists an isometry $V: \cH \to \C^{2^n} \otimes \cH_{\rm aux}$ such that for \emph{all distributions} $\mu$ on $\bits^n \times \bits^n$, 
\begin{align*}
\E_{(a,b) \sim \mu} \norm{Z(a)X(b) - V^\dagger (\sigma_Z(a) \sigma_X(b) \ot \1_{\cH_{\rm aux}})V}_{\psi}^2 \leq O(\eps) \,.
\end{align*}
\end{lemma}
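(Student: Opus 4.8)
The plan is to define a single matrix-valued function $f$ on the Heisenberg-Weyl group $W_n$ out of the operators $\{Z(a), X(b)\}$, verify that it is an approximate representation in the strong sense required by \Cref{lem:non-uniform-gh} (i.e.~with a \emph{uniform} average over the right-multiplier $h \sim W_n$), and then invoke \Cref{lem:non-uniform-gh} with $\mu$ taken to be the distribution on $W_n$ induced by pushing forward the given $\mu$ on $\bits^n \times \bits^n$ via $(a,b) \mapsto Z_a X_b$ (with sign $0$, say). Concretely, set $f(s,a,b) = (-1)^s Z(a) X(b)$; each $f$-value is unitary since $Z(a)$ and $X(b)$ are binary observables. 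The output isometry $V$ and representation $\pi$ will be those produced by \Cref{lem:non-uniform-gh}; by \Cref{lem:hw-representations}, any representation of $W_n$ decomposes into one-dimensional irreps and copies of the fundamental ($2^n$-dimensional Pauli) irrep, so after an isometric change of basis $\pi(s,a,b)$ can be written as $\sigma_Z(a)\sigma_X(b) \otimes \1_{\cH_{\rm aux}}$ up to absorbing the one-dimensional pieces — this is where the conclusion's form $V^\dagger(\sigma_Z(a)\sigma_X(b)\otimes\1)V$ comes from. (One must be slightly careful that $f(-\id)$ rounds to $-\id$ rather than $+\id$; because $Z(a)X(b)Z(a)X(b) = (-1)^{a\cdot b}\1$ exactly? — no, only approximately by \Cref{eqn:unif_comm_assumption}, so one should restrict $\pi$ to the subrepresentation where $\pi(-\id) = -\id$, which is exactly the span of fundamental copies; this is the standard fix and I would spell it out.)

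The real work is establishing the hypothesis of \Cref{lem:non-uniform-gh}, namely
\[ \E_{(s',a',b') \sim \nu,\ (s,a,b) \sim W_n} \| f(s,a,b)\, f(s',a',b') - f\big((s,a,b)\cdot(s',a',b')\big) \|_\psi^2 \leq O(\eps), \]
where $\nu$ is \emph{any} distribution on $W_n$, and in particular the pushforward of the target $\mu$. Expanding the group law, $(s,a,b)\cdot(s',a',b') = (s+s'+a\cdot b',\, a+a',\, b+b')$, so the left product is $Z(a)X(b)Z(a')X(b')$ and the target is $(-1)^{a\cdot b'}Z(a+a')X(b+b')$. Using the \emph{exact} linearity $Z(a)Z(a') = Z(a+a')$ and $X(b)X(b') = X(b+b')$, the discrepancy reduces entirely to commuting $X(b)$ past $Z(a')$: the quantity to bound is $\E \| Z(a)X(b)Z(a')X(b') - (-1)^{a'\cdot b}Z(a)Z(a')X(b)X(b')\|_\psi^2 = \E\|X(b)Z(a') - (-1)^{a'\cdot b}Z(a')X(b)\|_{Z(a)^\dagger \psi'' X(b')^\dagger\cdots}^2$ after peeling off the unitary $Z(a)$ on the left (\Cref{item:left_unitary_inv}) and the unitary $X(b')$ on the right (which changes the state to $X(b')\psi X(b')$, using \Cref{item:conj_mult}). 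So the task is: starting from the commutation bound \Cref{eqn:unif_comm_assumption}, which holds on \emph{uniform} average over $(a',b)$ against the \emph{fixed} state $\psi$, deduce the same bound on average over $(a',b)$ uniform but against the state $X(b')\psi X(b')$ where $b'$ is drawn from the $\mu$-marginal (and with the commutator evaluated between $Z(a')$ and $X(b)$, where $a'$ comes from the uniform $W_n$-sample and $b$ also from the uniform sample — so actually here both $a'$ and $b$ are uniform, which is exactly the regime of \Cref{eqn:unif_comm_assumption}).

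This is the ``distribution-switching'' heart of the lemma, and I expect the main obstacle to be handling the state $X(b')\psi X(b')$ (and analogously a $Z(a)$-conjugated state on the other side) rather than $\psi$ itself. The tool for this is the computational-invariance hypotheses $\E_a Z(a)\psi Z(a) \capprox \psi$ and $\E_b X(b)\psi X(b) \capprox \psi$: averaging the commutator bound over an extra uniform $Z$- or $X$-conjugation leaves it unchanged up to the exact linearity relations, and then \Cref{lem:computational_state_switching} lets me swap the conjugated state for $\psi$ at the cost of a negligible additive term — crucially, the operators involved ($Z(a)X(b)$ and their products) are efficient by hypothesis, so \Cref{def:comp_indist} applies. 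Schematically: bound the target expectation by its average over an auxiliary uniform conjugation (using exact linearity of $Z$ and $X$ to reabsorb the conjugating operators into shifted group indices), then replace the averaged state by $\psi$ via $\capprox$, then invoke \Cref{eqn:unif_comm_assumption}. Finally I would assemble these pieces, track the constant losses through \Cref{item:triangle_ineq_sq}, and feed the resulting $O(\eps)$ bound into \Cref{lem:non-uniform-gh}, then translate the representation-theoretic conclusion into the Pauli form using \Cref{lem:hw-representations} and the $\pi(-\id)=-\id$ restriction as above.
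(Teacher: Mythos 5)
Your proposal is correct and follows essentially the same route as the paper: define $f(\pm\sigma_Z(a)\sigma_X(b))=\pm Z(a)X(b)$, use averaging over uniform $Z$/$X$ conjugations together with the computational-invariance hypotheses, exact linearity, and \Cref{lem:computational_state_switching} to upgrade \Cref{eqn:unif_comm_assumption} to the arbitrary-measure hypothesis of \Cref{lem:non-uniform-gh}, then apply that lemma and use $f(-\1)=-\1$ (exactly, by definition of $f$, so your worry about the only-approximate relation $Z(a)X(b)Z(a)X(b)=(-1)^{a\cdot b}\1$ is moot) to kill the one-dimensional irreps. The paper implements your ``restrict to the $\pi(-\id)=-\id$ part'' step by swapping $\pi_+$ for copies of $\pi_-$ and showing the swap is invisible in $\norm{\cdot}_\psi$ via $\norm{\1+V^\dagger\pi(-\1)V}_\psi=0$, which is the rigorous form of exactly what you describe.
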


\begin{proof}
Define the following function from the Heisenberg-Weyl group to $U(\cH)$: 
\begin{align*}
f(\pm \sigma_Z(a) \sigma_X(b)) = \pm Z(a) X(b) \,.
\end{align*}
Our proof strategy is as follows: first show that $f$ is an approximate representation over arbitrary measures $\mu$, i.e.~that it satisfies the hypothesis of \cref{lem:non-uniform-gh}; then use \cref{lem:non-uniform-gh} and the structure of the irreps of the Heisenberg-Weyl group to round $Z(a)$ and $X(b)$ to the corresponding Pauli operators.

To show that $f$ is an approximate representation, we first bound the following quantity for an arbitrary measure $\mu$ on $\bits^n \times \bits^n$.
\begin{align*}
&\E_{a, b \in \bits^n} \E_{(c, d) \sim \mu} \norm{X(a) Z(b) X(c) Z(d) - (-1)^{b \cdot c} X(a + c) Z(b+d)}_{\psi}^2 \\
\intertext{By \cref{lem:computational_state_switching} and \Cref{lem:state_dep_norm_props}~\Cref{item:linearity}:}
&\approx_{\negl} \E_{u \in \bits^n} \E_{a, b \in \bits^n} \E_{(c, d) \sim \mu} \norm{X(a) Z(b) X(c) Z(d) - (-1)^{b \cdot c} X(a + c) Z(b+d)}_{Z(u) \psi Z(u)}^2 \\
\intertext{By \cref{lem:state_dep_norm_props} \cref{item:conj_mult}:}
&=\E_{u \in \bits^n} \E_{a, b \in \bits^n} \E_{(c, d) \sim \mu} \norm{X(a) Z(b) X(c) Z(d + u) - (-1)^{b \cdot c} X(a + c) Z(b+d+u)}_{\psi}^2 \\
\intertext{Again applying \cref{lem:computational_state_switching} and \cref{lem:state_dep_norm_props} \cref{item:conj_mult} and \cref{item:linearity} to $\E_v X(v) \psi X(v) \capprox \psi$:}
&\approx_{\negl}\E_{u,v \in \bits^n} \E_{a, b \in \bits^n} \E_{(c, d) \sim \mu} \norm{X(a) Z(b) X(c) Z(d + u) X(v) - (-1)^{b \cdot c} X(a + c) Z(b+d+u) X(v)}_{\psi}^2 \\
\intertext{By \cref{eqn:unif_comm_assumption} (noting that the marginal distributions of $d+u$ and $b + d + u$ are both uniform and independent from $v$) and \cref{lem:state_dep_norm_props} \cref{item:submult_infty}:}
&=\E_{u,v \in \bits^n} \E_{a, b \in \bits^n} \E_{(c, d) \sim \mu} \norm{(-1)^{v \cdot (d + u)} X(a) Z(b) X(c+v) Z(d + u) - (-1)^{b \cdot c + v \cdot (b + d + u)} X(a + c + v) Z(b+d+u)}_{\psi}^2 + O(\eps) \\
\intertext{Repeating the same step (now noting that $c+v$ and $d+u$ are independent and uniform, and likewise for $a+c+v$ and $b+d+u$) and cancelling phases:}
&=\E_{u,v \in \bits^n} \E_{a, b \in \bits^n} \E_{(c, d) \sim \mu} \norm{(-1)^{c\cdot (d+u)} X(a) Z(b+d+u) X(c+v) - (-1)^{b \cdot c + v \cdot (b + d + u)} X(a + c + v) Z(b+d+u)}_{\psi}^2 + O(\eps) \\
\intertext{With one more repetition of the same step, we finally get:}
&=\E_{u,v \in \bits^n} \E_{a, b \in \bits^n} \E_{(c, d) \sim \mu} \norm{X(a+c+v) Z(b+d+u) - X(a + c + v) Z(b+d+u)}_{\psi}^2 + O(\eps) = O(\eps)\,.
\end{align*}

From the above calculation and the fact that $f$ is exactly multiplicative under $\pm 1$ we get that $f$ is an approximate representation of the Heisenberg-Weyl group $W_n$ for any distribution $\mu$ over the group.
Expressed in equations, we have shown that for any distribution $\mu$ on $W_n$, 
\begin{align*}
\E_{g \sim \mu, h \sim W_n} \| f(h) f(g) - f(hg) \|_\psi^2 \leq O(\eps) \,.
\end{align*}

Furthermore from the definition of $f$ we have that $f$ is exactly multiplicative under $\pm \1$, which we can also write (slightly cumbersomely) as
\begin{align*}
\E_{g \sim \mu_-, h \sim W_n} \norm{f(h) f(g) - f(h g)}^2_f = 0 \,,
\end{align*}
where $\mu_-$ is the point distribution that has all its weight on the $-\1$-element of $W_n$.

We can therefore apply \cref{lem:non-uniform-gh} and, noting that in that theorem the representation $\pi$ and the isometry $V$ are independent of the distribution $\mu$, get that there exists an isometry $V: \cH \to \cH'$ and a unitary representation $\pi$ of $W_n$ on $\cH'$ such that 
\begin{align}
\E_{g \sim \mu} \norm{f(g) - V^\dagger \pi(g) V}_\psi^2 \leq O(\eps) \,, \label{eqn:close_some_rep}\\
\E_{g \sim \mu_-} \norm{f(g) - V^\dagger \pi(g) V}_\psi^2 = \norm{\1 + V^\dagger \pi(-\1) V}_\psi^2 = 0 \label{eqn:close_minus1}\,.
\end{align}
For \cref{eqn:close_minus1}, we used that $f(-\1) = -\1$.

\cref{eqn:close_some_rep} shows that $f$ can be rounded to some representation of the Heisenberg-Weyl group.
However, we want a stronger statement: we want to show that we can round $f$ to the actual Pauli matrices (tensored with identity), whereas an arbitrary representation might contain other representations of $W_n$.

For this, we write $\pi(g) = \pi_{+}(g) \oplus \pi_{-}(g)$ where $\pi_{\pm}(\cdot)$ is a representation satisfying $\pi_{\pm}(-g) = \pm \pi_{\pm}(g)$ for all $g$. By \Cref{lem:hw-representations}, any representation of the Pauli group can be written this way, by decomposing it into a direct sum of irreducible representations, and grouping together all the one-dimensional irreps into $\pi_{+}$, and all the copies of the fundamental representation into $\pi_{-}$. Moreover, by padding with copies of the trivial representation (and adding all-$0$ rows to $V$ appropriately), we can ensure that $\dim(\pi_{+}) = k \dim(\pi_{-})$. As such, we assume without loss of generality that $\dim(\pi_{+}) = k \dim(\pi_{-})$.

The intuition for the remainder of the proof is as follows: we will simply replace the $\pi_+$-part of the representation with $k$ copies of the $\pi_-$-part; call this modified representation $\pi'$.
Of course $\pi$ and $\pi'$ now differ a lot as representations, but we need to show that after applying the isometry $V$ and in the state-dependent norm, this difference does not matter.
This will follow from \cref{eqn:close_minus1}.

More formally, we define 
\begin{align*}
    \pi'(g) &:= \begin{pmatrix} \pi_{-}(g) \otimes I_k  & \\ & \pi_{-}(g) \end{pmatrix} \\
    h'(g) &:= V^\dagger \pi'(g) V\,.
\end{align*}
We will also use the notational shorthand
\[ h(g) = V^\dagger \pi(g) V.\]

Note that because $\pi_-$ only consists of copies of the fundamental representation (the Pauli matrices), so does $\pi'$.
This means that $\pi'$ is of the form we claimed in the theorem, i.e.~it is a tensor product of the Pauli matrices with identities.

It therefore suffices to show that $\norm{h'(g) - h(g)}_\psi = 0$.
To see that this is the case, note that since $\pi_-(g) \leq \1$,
\begin{align*}
    \| h'(g) - h(g) \|_\psi^2 &= \| V^\dagger \begin{pmatrix} \pi_{-}(g) \otimes I_k - \pi_{+}(g) & \\ & 0 \end{pmatrix} V \|_\psi^2 \\
    &\leq \norm{ V^\dagger \begin{pmatrix} 2 \cdot \1 & \\ & 0 \end{pmatrix} V}_\psi^2 \\
    &= \norm{ V^\dagger (\pi(\1) + \pi(-\1)) V}_\psi^2 \\
    &= \norm{\1 + V^\dagger \pi(-\1) V}_\psi^2 = 0 \,.
\end{align*}
The last line uses \cref{eqn:close_minus1}. This completes the proof.
\end{proof}

\subsection{Lifting (anti-)commutation from small-bias sets}
In~\cite{delaSalle2022spectral}, de la Salle gave a question-succinct version of the Pauli braiding test of~\cite{natarajan2017quantum}.
The key step in his proof is to show that if (anti-)commutation statements for a certain set of observables hold on average over a $\lambda$-biased set, they also hold on average over all observables (up to small corrections).
We need a similar statement, but unfortunately we cannot use the result from~\cite{delaSalle2022spectral} directly since it is proven in the Frobenius norm, whereas we need it in the state-dependent norm.
Additionally, since we want to use this step for \emph{compiled} games, we cannot use the full suite of techniques from non-local games (in particular, we cannot use prover switching).
We remedy this situation by proving the following variant of de la Salle's result.
This also gives a more elementary proof of de la Salle's original result.

\begin{definition}
A set $S \subseteq \bits^n$ is $\lambda$-biased if for all $b \in \bits^n$ such that $b \neq 0^n$, 
\begin{align*}
|\E_{a \in S} (-1)^{a \cdot b}| \leq \lambda \,.
\end{align*}
\end{definition}
It is known how to efficiently construct a $\lambda$-biased set $S$ of size $O(n/\poly(\lambda))$ using error correcting codes~\cite{NaorNaor}. In our applications, $\lambda$ will always be a universal constant.

\begin{lemma}
\label{lem:dls-commutator-close}
Let $M$ and $\{W(a)\}_{a \in \bits^n}$ be efficient binary observables acting on a Hilbert space $\cH$ satisfying $W(a) W(b) = W(a+b)$ and $\rho$ a state on $\cH$ such that 
\begin{align*}
\rho \capprox_\delta \E_a W(a) \rho W(a) \,.
\end{align*}
Further let $S \in \bits^n$ be a $\lambda$-biased set.
Then 
\begin{align*}
\E_{a \in \bits^n} \norm{W(a) M - M W(a)}_\rho^2 \leq \frac{1}{1-\lambda} \E_{a \in T} \norm{W(a) M - M W(a)}_\rho^2 + \frac{2 \delta}{1 - \lambda} \,.
\end{align*}
\end{lemma}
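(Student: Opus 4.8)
The plan is to expand everything in terms of the Fourier/character decomposition of $M$ with respect to the exact representation $\{W(a)\}$, exactly as in the Fourier analysis of the derandomized BLR test. Since $\{W(a)\}_{a\in\bits^n}$ is an exact representation of $\Z_2^n$ by a set of commuting binary observables, it is simultaneously diagonalisable, and we get a projective decomposition $\1 = \sum_{c \in \bits^n} P_c$ with $W(a) = \sum_c (-1)^{a\cdot c} P_c$. For any operator $M$, define its ``Fourier components'' $M_c := \sum_{c' } P_{c'} M P_{c'+c}$ (equivalently, $M_c = \E_a (-1)^{a\cdot c}\, W(a) M W(a)$ up to sign conventions), so that $M = \sum_c M_c$ and, crucially, $W(a) M_c - M_c W(a) = ((-1)^{a\cdot c} - 1)\, M_c W(a)$ — i.e. each Fourier component has a "definite commutator phase" with every $W(a)$. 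The point of the hypothesis $\rho \capprox_\delta \E_a W(a)\rho W(a)$ is that it makes the cross terms $\langle M_c, M_{c'}\rangle_\rho$ negligible (up to the $\delta$ error), so that both $\E_{a\in\bits^n}\norm{W(a)M - MW(a)}_\rho^2$ and $\E_{a\in S}\norm{W(a)M - MW(a)}_\rho^2$ become (approximately) weighted sums $\sum_c \gamma_a(c)\,\norm{M_c}_\rho^2$ with coefficients $\gamma_a(c) = |(-1)^{a\cdot c}-1|^2 = 2(1-(-1)^{a\cdot c})$ averaged over $a$ in the full cube or in $S$ respectively.

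Carrying this out: first I would establish the ``approximate Fourier orthogonality'' claim — that $\norm{\sum_c \alpha_c M_c}_\rho^2 \approx_{O(\delta)} \sum_c |\alpha_c|^2 \norm{M_c}_\rho^2$ for bounded coefficients — by writing $\langle M_c, M_{c'}\rangle_\rho = \langle M_c, M_{c'}\rangle_{\E_a W(a)\rho W(a)} + (\text{error} \le O(\delta))$ and noting that under $\E_a W(a)\rho W(a)$ the off-diagonal terms vanish exactly because $W(a) M_c$ and $W(a) M_{c'}$ pick up different phases as $a$ varies (averaging $(-1)^{a\cdot(c+c')}$ over $a$ kills $c\neq c'$), while the diagonal gives back $\norm{M_c}_\rho^2$ using the hypothesis once more. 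Using this, I would show
\[
\E_{a\in\bits^n}\norm{W(a)M-MW(a)}_\rho^2 \approx_{O(\delta)} \sum_{c\neq 0} 2\,\norm{M_c}_\rho^2,
\]
since $\E_{a\in\bits^n} 2(1-(-1)^{a\cdot c}) = 2$ for $c\neq 0$ and $0$ for $c=0$; and similarly
\[
\E_{a\in S}\norm{W(a)M-MW(a)}_\rho^2 \approx_{O(\delta)} \sum_{c\neq 0} 2\bigl(1 - \E_{a\in S}(-1)^{a\cdot c}\bigr)\,\norm{M_c}_\rho^2 \ \ge\ (1-\lambda)\sum_{c\neq 0} 2\,\norm{M_c}_\rho^2,
\]
where the inequality is exactly the $\lambda$-biased property $\E_{a\in S}(-1)^{a\cdot c} \le \lambda$ for $c\neq 0$. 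Combining the two displays and rearranging gives the claimed bound (with the stated constants $\tfrac{1}{1-\lambda}$ and $\tfrac{2\delta}{1-\lambda}$, after tracking the $O(\delta)$ terms carefully enough to land on the precise constant $2\delta$). Note the lemma as stated writes "$\E_{a\in T}$" on the right-hand side, which I read as a typo for $\E_{a\in S}$.

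The main obstacle I expect is doing the error bookkeeping so that the constants come out exactly as $\tfrac{1}{1-\lambda}$ and $\tfrac{2\delta}{1-\lambda}$ rather than merely $O(\delta)/(1-\lambda)$: this requires being careful that the hypothesis $\rho \capprox_\delta \E_a W(a)\rho W(a)$ is applied the minimal number of times, and that the "$\approx_\delta$" is used in a way compatible with the operator being bounded (the relevant operators $W(a)M - MW(a)$ have operator norm $\le 2$, and the difference $M - \sum_c M_c = 0$ is exact so no slack is lost there). A subtlety worth checking is that $\capprox$ denotes \emph{computational} indistinguishability (\Cref{def:comp_indist}) while the inequalities we want are information-theoretic norm bounds — but since $W(a)$ and $M$ are efficient observables, the relevant two-outcome POVMs built from products $W(a)M$ etc. are efficient, so computational indistinguishability suffices to transfer the state $\rho$ to $\E_a W(a)\rho W(a)$ inside these expressions, incurring only the stated $\delta$ (this is exactly the pattern of \Cref{lem:computational_state_switching}, and I would invoke it rather than re-deriving it). Everything else is the routine Fourier calculation.
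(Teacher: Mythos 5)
Your proposal is correct and is essentially the paper's proof: the paper also simultaneously diagonalises the exact $\Z_2^n$ representation $\{W(a)\}$, switches the state to $\tilde\rho=\E_a W(a)\rho W(a)$ via \Cref{lem:computational_state_switching}, reduces both averages to non-negative character sums (your $\sum_{c\neq 0}2(1-\E_{a\in T}(-1)^{a\cdot c})\norm{M_c}_\rho^2$ is exactly the paper's grouping of matrix entries $|\Gamma_{ij}|^2\sigma_i$ by $c_i+c_j$), applies the $\lambda$-bias bound, and switches back, landing on the same $\frac{2\delta}{1-\lambda}$ bookkeeping. Your reading of ``$\E_{a\in T}$'' as a typo for ``$\E_{a\in S}$'' is also right; the only cosmetic difference is that once the state is $\tilde\rho$ the Fourier orthogonality is exact, so the hypothesis need only be invoked for the two outer state switches rather than ``once more'' on the diagonal.
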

\begin{proof}
Let $T \subseteq \bits^n$ be an arbitrary set (later we will set $T = S$ or $T = \bits^n$).
Define $\tilde \rho = \E_a W(a) \rho W(a)$.
By \cref{lem:computational_state_switching}, for any $a \in \bits^n$ it holds that 
\begin{align*}
\norm{W(a) M - M W(a)}_\rho^2 \geq \norm{W(a) M - M W(a)}_{\tilde \rho}^2 - \delta \,.
\end{align*}
Consequently, this also holds on average over $a \in T$, i.e.~
\begin{align}
\E_{a \in T} \norm{W(a) M - M W(a)}_\rho^2 \geq \E_{a \in T} \norm{W(a) M - M W(a)}_{\tilde \rho}^2 - \delta \,. \label{eq:wmmw-rho-to-rhoprime}
\end{align}
Using that $M$ and $Z(a)$ are binary observables, 
\begin{align*}
\E_{a \in T} \norm{W(a) M - M W(a)}_{\tilde \rho}^2 = 2 - 2 \Re \E_{a \in T} \Tr \left[ M W(a) M W(a) \tilde \rho \right] \,.
\end{align*}
Our goal is therefore to relate the trace expression for $T = \Z_2^n$ and $T = S$.
For this, observe that $W(a)$ forms a unitary representation of $\Z_2^n$, which we can decompose into a direct sum of irreducible representations: 
\begin{align*}
W(a) = \tilde U D(a) \tilde U^\dagger \,, \qquad D(a) = \oplus_i \left( (-1)^{\gamma_i \cdot a} \ot \1_{m_i}  \right)
\end{align*}
for $U \in U(\cH)$ and distinct strings $\gamma_i \in \Z_2^n$ with multiplicities $m_i$ such that $\sum_i m_i = \dim \cH$.
In this basis, $\tilde \rho$ has a block-diagonal structure made up of blocks of size $m_i$, i.e.~
\begin{align*}
\tilde U^\dagger \tilde \rho \tilde U = \oplus_i \rho_i
\end{align*}
for positive semi-definite blocks $\rho_i$ of dimension $m_i$.
For each $i$,  let $W_i \in U(m_i)$ be a unitary that diagonalises $\rho_i$ and define 
\begin{align*}
U = \tilde U (\oplus_i W_i) \,.
\end{align*}
This is a unitary and by construction, 
\begin{align*}
\sigma \deq U^\dagger \tilde \rho U = \diag(\sigma_1, \dots, \sigma_{\dim \cH})
\end{align*}
is diagonal.
Furthermore, $(\oplus_i W_i)$ commutes with all matrices $D(a)$, so that we also have 
\begin{align*}
W(a) = U D(a) U^\dagger \,, \qquad D(a) = \oplus_i^{\dim \cH} \left( (-1)^{c_i \cdot a}  \right) \,,
\end{align*}
where $c_i \in \bits^n$ are the same as the $\gamma_i$ above, except that for the rest of the proof we find it more convenient to repeat strings rather than keeping track of their multiplicities by tensoring with identity.

For convenience, we define 
\begin{align*}
\Gamma = U^\dagger M U \,.
\end{align*}
Then, we can rewrite 
\begin{align*}
\E_{a \in T} \Tr \left[ M W(a) M W(a) \tilde \rho \right] 
&= \E_{a \in T} \Tr \left[ \Gamma D(a) \Gamma D(a) \sigma \right] \\
&= \E_{a \in T} \sum_{i,j,k} \Gamma_{ij} (-1)^{c_j \cdot a} \Gamma_{jk} (-1)^{c_k \cdot a} \sigma_{ki} \\
&= \sum_{i,j,k} \left( \E_{a \in T} (-1)^{(c_j + c_k) \cdot a} \right) \Gamma_{ij} \Gamma_{jk} \sigma_{ki} \\
\intertext{Since $\sigma = \diag(\sigma_i)$:}
&= \sum_{i,j} \left( \E_{a \in T} (-1)^{(c_j + c_i) \cdot a} \right) \Gamma_{ij} \Gamma_{ji} \sigma_{i} \\
\intertext{Since $M$ is Hermitian, so is $\Gamma$, so that $\Gamma_{ji} = \Gamma_{ij}^*$:}
&= \sum_{i,j} \left( \E_{a \in T} (-1)^{(c_j + c_i) \cdot a} \right) |\Gamma_{ij}|^2 \sigma_{i} \,.
\end{align*}
We now divide up the indices $i,j$ over which we sum into two sets: 
\begin{align*}
C = \{(i,j) \;|\; c_i \neq c_j \} \,, \qquad \bar C = [\dim \cH]^2 \setminus C \,.
\end{align*}
If $(i,j) \in \bar C$, then 
\begin{align*}
\E_{a \in \Z_2^n} (-1)^{(c_j + c_i) \cdot a} = \E_{a \in S} (-1)^{(c_j + c_i) \cdot a} = 1 \,.
\end{align*}
On the other hand, if $(i,j) \in C$, then since $S$ is $\lambda$-biased we have that 
\begin{align*}
\E_{a \in \Z_2^n} (-1)^{(c_j + c_i) \cdot a} &= 0 \,,\qquad 
|\E_{a \in S} (-1)^{(c_j + c_i) \cdot a}| \leq \lambda \,.
\end{align*}

We therefore get 
\begin{align*}
\E_{a \in \bits^n} \norm{W(a) M - M W(a)}_{\tilde \rho}^2 = 2 - 2 \sum_{i,j \in \bar C} |\Gamma_{ij}|^2 \sigma_{i}
\end{align*}
and, since $|\Gamma_{ij}|^2 \sigma_i \geq 0$,
\begin{align*}
\E_{a \in S} \norm{W(a) M - M W(a)}_{\tilde \rho}^2 \geq 2 - 2 \sum_{i,j \in \bar C} |\Gamma_{ij}|^2 \sigma_{i} - 2 \lambda \sum_{i,j \in C} |\Gamma_{ij}|^2 \sigma_{i} \,.
\end{align*}
We can simplify this by observing that since $\Gamma$ is a binary observable, 
\begin{align*}
\sum_{ij} |\Gamma_{ij}|^2 \sigma_{i} = \Tr[\Gamma^\dagger \Gamma \sigma] = 1 \,.
\end{align*}
As a result, 
\begin{align*}
\E_{a \in S} \norm{W(a) M - M W(a)}_{\tilde \rho}^2 &\geq 2 - 2 \sum_{i,j \in \bar C} |\Gamma_{ij}|^2 \sigma_{i} - 2 \lambda \left( 1 - \sum_{i,j \in \bar C} |\Gamma_{ij}|^2 \sigma_{i} \right)\\
&= (1- \lambda) \E_{a \in \Z_2^n} \norm{W(a) M - M W(a)}_{\tilde \rho}^2 \,.
\end{align*}
The lemma now follows by first switching from $\rho$ to $\tilde \rho$ on both sides using \Cref{eq:wmmw-rho-to-rhoprime} (incurring an error $\delta + \frac{\delta}{1-\lambda} \leq \frac{2\delta}{1 - \lambda})$ and then using the above bound.
\end{proof}

\cref{lem:dls-commutator-close} implies the following corollary on anti-commutation relations.
\begin{corollary}
    \label{lem:dls-anticom}
    Let $\{X(a)\}_{a \in \{0,1\}^n}$ and $\{Z(a)\}_{a \in \{0,1\}^n}$ be families of efficient binary observables acting on a Hilbert space $\mathcal{H}$ satisfying $W(a) W(b) = W(a+b)$ for all $W \in \{X,Z\}$ and $a,b  \in \{0,1\}^n$. Let $\rho$ be a state on $\mathcal{H}$ such that
   \[ \rho \capprox_\delta \E_a W(a) \rho W(a) \,.\]
   for all $W \in \{X,Z\}$. Further, let $S \in \{0,1\}^n$ be a $\lambda$-biased set. Then
   \[ \E_{a,b \in \{0,1\}^n} \| Z(a)X(b) - (-1)^{a \cdot b} X(b) Z(a)\|_\rho^2 \leq \frac{1}{(1 - \lambda)^2} \E_{a,b \in S} \| Z(a)X(b) - (-1)^{a \cdot b} X(b) Z(a)\|_\rho^2 +\frac{2\delta(2 - \lambda)}{(1-\lambda)^2}. \]
\end{corollary}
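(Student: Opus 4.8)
The plan is to derive \Cref{lem:dls-anticom} from \Cref{lem:dls-commutator-close} by a two-step application of that lemma, peeling off the $Z$-observables and the $X$-observables one at a time. The key observation is that the quantity $\|Z(a)X(b) - (-1)^{a\cdot b} X(b)Z(a)\|_\rho^2$ can be rewritten, using that $Z(a)$ is a unitary (in fact a binary observable) and left-unitary invariance of the state-dependent norm (\Cref{item:left_unitary_inv}), as $\|X(b) - (-1)^{a\cdot b} Z(a) X(b) Z(a)\|_\rho^2 = \|Z(a) (X(b) Z(a)) - (X(b)Z(a)) Z(a) (-1)^{\ldots}\|$-type commutator expressions. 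More cleanly: since $Z(a)^2 = \1$, we have $Z(a)X(b) - (-1)^{a\cdot b}X(b)Z(a) = \big(Z(a) M_b - M_b Z(a)\big)$ is \emph{not} quite right because of the phase; the right move is to first fix $b$ and think of $M := X(b)$ as the ``other observable'' $M$ in \Cref{lem:dls-commutator-close}, with $\{W(a)\}=\{Z(a)\}$, but the phase $(-1)^{a\cdot b}$ must be absorbed. To handle the phase, note $(-1)^{a\cdot b} X(b) Z(a) = Z(b')\ldots$—no; instead absorb it into a redefinition. Actually the cleanest route: $\|Z(a)X(b) - (-1)^{a\cdot b}X(b)Z(a)\|_\rho^2$, and since the phase is $\pm 1$ and $Z(a),X(b)$ are self-adjoint unitaries, one checks $Z(a)X(b) - (-1)^{a\cdot b}X(b)Z(a) = Z(a)\big(X(b) - (-1)^{a\cdot b}Z(a)X(b)Z(a)\big)$, and by \Cref{item:left_unitary_inv} the norm equals $\|X(b) - (-1)^{a\cdot b}Z(a)X(b)Z(a)\|_\rho^2$. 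Write $Y_a(b) := (-1)^{?}Z(a)X(b)Z(a)$—but this is getting complicated; the real point is that for \emph{fixed} $b$, the family $a \mapsto$ (the commutator of $Z(a)$ with $X(b)$, including phase) is handled by \Cref{lem:dls-commutator-close} applied with $W = Z$ and $M = X(b)$ \emph{provided} we can show $\E_a \|Z(a)X(b) - (-1)^{a\cdot b}X(b)Z(a)\|_\rho^2$ is exactly an average of commutator norms $\|Z(a)\tilde M - \tilde M Z(a)\|_\rho^2$ after an $a$-dependent but norm-preserving change of variables.

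Concretely, I would proceed as follows. Step 1: Fix $b \in \bits^n$. Apply \Cref{lem:dls-commutator-close} with the observable family $\{Z(a)\}_a$ and with $M = X(b)$, but using the phase-adjusted commutator; since \Cref{lem:dls-commutator-close} as stated concerns $\|W(a)M - MW(a)\|_\rho^2$ without phases, I first reduce the phased commutator to an unphased one. The trick: $(-1)^{a \cdot b} = (-1)^{\langle \text{(something linear in } a)\rangle}$ can be realized by noting $(-1)^{a\cdot b}X(b)Z(a) = X(b)Z(a) \cdot (\pm)$; but $Z(a)X(b)Z(a)X(b) = (-1)^{a\cdot b}\1$ by the Pauli relations holding \emph{exactly} for these operators? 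No—they only hold approximately. So instead, I observe that \Cref{lem:dls-commutator-close} is really a statement about a \emph{single} linear family $W(\cdot)$ and an arbitrary operator $M$, and its proof (diagonalizing $W(a) = UD(a)U^\dagger$) goes through verbatim if we replace $\|W(a)M - MW(a)\|^2$ by $\|W(a)M - (-1)^{\chi(a)}MW(a)\|^2$ for any character $\chi$ of $\Z_2^n$, because in the diagonalization the extra character just shifts $c_j \mapsto c_j + \chi$, $c_k \mapsto c_k$ inside the sum $\E_a(-1)^{(c_j+c_k)\cdot a}$, and $\lambda$-biasedness is insensitive to a fixed additive shift. So the honest approach is: either cite that the proof of \Cref{lem:dls-commutator-close} extends mutatis mutandis to phased commutators, or give the one-line modification. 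Step 2: With the phased version of \Cref{lem:dls-commutator-close} in hand, apply it with $W = Z$, $M = X(b)$ for each fixed $b$, getting
\[
\E_{a \in \bits^n} \| Z(a)X(b) - (-1)^{a\cdot b}X(b)Z(a)\|_\rho^2 \leq \frac{1}{1-\lambda}\E_{a \in S}\|Z(a)X(b) - (-1)^{a\cdot b}X(b)Z(a)\|_\rho^2 + \frac{2\delta}{1-\lambda},
\]
using the hypothesis $\rho \capprox_\delta \E_a Z(a)\rho Z(a)$. Step 3: Take $\E_{b \in \bits^n}$ of both sides. This bounds the full average by $\frac{1}{1-\lambda}\E_{b \in \bits^n}\E_{a\in S}\|\cdots\|_\rho^2 + \frac{2\delta}{1-\lambda}$. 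Step 4: Now the inner object $\E_{a \in S}\E_{b\in\bits^n}\|Z(a)X(b) - (-1)^{a\cdot b}X(b)Z(a)\|_\rho^2$ still has $b$ ranging over all of $\bits^n$; to shrink $b$ to $S$, rewrite the commutator (via \Cref{item:left_unitary_inv}, pulling out $X(b)$) as a commutator of $X(b)$ against the operator $M' = Z(a)$, with the same phase, and apply the phased \Cref{lem:dls-commutator-close} again, now with $W = X$, $M = Z(a)$, using the hypothesis $\rho \capprox_\delta \E_b X(b)\rho X(b)$. This gives, for each fixed $a$,
\[
\E_{b\in\bits^n}\|Z(a)X(b) - (-1)^{a\cdot b}X(b)Z(a)\|_\rho^2 \leq \frac{1}{1-\lambda}\E_{b\in S}\|\cdots\|_\rho^2 + \frac{2\delta}{1-\lambda}.
\]
Step 5: Chain the two bounds. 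Substituting the Step 4 bound into the Step 3 bound (after averaging Step 4 over $a \in S$) yields
\[
\E_{a,b\in\bits^n}\|\cdots\|_\rho^2 \leq \frac{1}{(1-\lambda)^2}\E_{a,b\in S}\|\cdots\|_\rho^2 + \frac{1}{1-\lambda}\cdot\frac{2\delta}{1-\lambda} + \frac{2\delta}{1-\lambda},
\]
and the error terms combine to $\frac{2\delta}{(1-\lambda)^2} + \frac{2\delta}{1-\lambda} = \frac{2\delta(2-\lambda)}{(1-\lambda)^2}$, which is exactly the claimed bound. (One must be slightly careful about the order of the two reductions and which average over $S$ versus $\bits^n$ is in play at each stage, but since all the bounds are uniform in the fixed parameter, Fubini lets us interleave the expectations freely.)

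The main obstacle I anticipate is Step 1–2, namely justifying that \Cref{lem:dls-commutator-close} applies with the extra phase $(-1)^{a\cdot b}$ present. The cleanest fix is to observe that its proof never uses that $M$ commutes or anticommutes with anything—it only uses that $M$ is a binary observable and that $W$ is an exact linear representation—and the phase character $(-1)^{a\cdot b}$ simply translates the ``frequencies'' $c_i$ appearing in the spectral decomposition of $W(a)$ by the fixed string $b$ (or equivalently partitions the index pairs by whether $c_i + c_j = b$ rather than $c_i + c_j = 0$), which does not affect $\lambda$-biasedness since a $\lambda$-biased set fools all nonzero characters including $\chi_{c_i+c_j+b}$. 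So either I state a phased version of \Cref{lem:dls-commutator-close} as an immediate variant (``the same proof shows\ldots''), or, to avoid reproving anything, I note that $\|Z(a)X(b) - (-1)^{a\cdot b}X(b)Z(a)\|_\rho = \|Z(a)\tilde X_b(a) - \tilde X_b(a)Z(a)\|_\rho$ fails because the phase genuinely depends on $a$—so the variant really is needed. I would state it explicitly as a remark or inline claim and give the two-line argument, then the rest of the corollary is a routine double application and bookkeeping of the $\delta$ error terms.
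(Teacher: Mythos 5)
Your proof is correct, and the overall structure --- two chained applications of \Cref{lem:dls-commutator-close}, first shrinking $a$ from $\bits^n$ to $S$ with $M = X(b)$ fixed, then shrinking $b$ with $M = Z(a)$ fixed, with the error terms combining as $\frac{2\delta}{1-\lambda} + \frac{2\delta}{(1-\lambda)^2} = \frac{2\delta(2-\lambda)}{(1-\lambda)^2}$ --- matches the paper's. The one place you diverge is in how the phase $(-1)^{a\cdot b}$ is absorbed. You propose a \emph{phased variant} of \Cref{lem:dls-commutator-close}, observing that in its spectral proof the character $(-1)^{a\cdot b}$ merely shifts the frequency pairs from the condition $c_i + c_j = 0$ to $c_i + c_j = b$, and a $\lambda$-biased set fools every nonzero character, so the same $C$/$\bar C$ partition argument goes through; this is a correct and complete justification, though it requires the reader to reopen the proof of the lemma. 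The paper instead uses \Cref{lem:dls-commutator-close} strictly as a black box via a tensoring trick: it passes to $\rho' = \rho \otimes \frac{1}{2^n}\1$ and the observables $\tilde W(a) = W(a) \otimes \sigma_W(a)$, for which the \emph{honest} Pauli anticommutation on the ancilla exactly cancels the phase, so that $[\tilde Z(a), \tilde X(b)] = \bigl(Z(a)X(b) - (-1)^{a\cdot b}X(b)Z(a)\bigr) \otimes \sigma_Z(b)\sigma_X(a)$ and the norm factorizes as $\|M \otimes N\|_{\rho'}^2 = \|M\|_\rho^2 \cdot \frac{1}{2^n}\Tr[N^\dagger N]$. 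The trade-off is that your route yields a slightly more general reusable lemma (commutation up to an arbitrary character), while the paper's route avoids reproving anything; both give the identical final bound.
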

\begin{proof}
    Let $\rho'_{AB} = \rho_{A} \otimes \frac{1}{2^n} I_B$. For $W \in \{X, Z\}$, define
    \[ \tilde{W}(a)_{AB} = W(a)_A \otimes \sigma_W(a)_B.\]
    Observe that
    \[ [\tilde{Z}(a), \tilde{X}(b) ] = (Z(a)X(b) - (-1)^{a,b} X(b)Z(a)) \otimes \sigma_{Z}(b) \sigma_{X}(a). \]
    Moreover, for any operators $M$ on the $A$ system and $N$ on the $B$ system, it holds that
    \begin{align*}
        \| M \otimes N\|_{\rho'}^2 &= \Tr[M^\dagger M \otimes N^\dagger N \rho'] \\
        &= \frac{1}{2^n} \Tr[M^\dagger M \rho] \cdot \Tr[N^\dagger N] \\
        &= \| M \|_\rho^2 \cdot \frac{1}{2^n} \Tr[N^\dagger N].
    \end{align*}
    Thus, we have
    \begin{align*}
        \E_{a,b \in \{0,1\}^n} &\| Z(a)X(b) - (-1)^{a \cdot b} X(b) Z(a) \|_\rho^2 \nonumber \\
        &= \E_{a,b \in \{0,1\}^n} \| [ \tilde{Z}(a), \tilde{X}(b) ]\|_{\rho'}^2 \\
        &\leq \frac{1}{1 - \lambda} \E_{a \in S} \E_{b \in \{0,1\}^n} \| [ \tilde{Z}(a), \tilde{X}(b) ]\|_{\rho'}^2 + \frac{2\delta}{1 - \lambda} \\
        &\leq \frac{1}{(1 - \lambda)^2} \E_{a, b \in S} \| [ \tilde{Z}(a), \tilde{X}(b)] \|_{\rho'}^2 + 2\delta \left( \frac{1}{1 - \lambda} + \frac{1}{(1 - \lambda)^2}\right) \\
        &= \frac{1}{(1 - \lambda)^2} \E_{a, b \in S} \| Z(a)X(b) - (-1)^{a \cdot b} X(b)Z(a) \|_{\rho}^2 + 2\delta \left( \frac{1}{1 - \lambda} + \frac{1}{(1 - \lambda)^2}\right) \,.
    \end{align*}
\end{proof}

\section{Description of the question-succinct protocol}

\subsection{Compiling nonlocal games using cryptography: the KLVY transformation}
\label{sec:compiler}

Kalai, Lombardi, Vaikuntanathan and Yang give a transformation that maps a $k$-player $1$-round nonlocal game into a $2k$-message ($k$-round) interactive protocol between a single prover and verifier. For simplicity, we will only present their transformation as it is applied to two-player nonlocal games, because this is the only context in which we need to use it. The general transformation, applicable to $k$-player nonlocal games for arbitrary $k$, is described in~\cite[Section 3.2]{KLVY21}. The following presentation is taken with some modifications from~\cite[Section 3.1]{KLVY21}.

\cite{KLVY21} presents a $\PPT$-computable transformation that converts any $2$-prover non-local game $G$ with question set $\cQ$ and verification predicate $V$ into a single-prover protocol (associated with security parameter $\secp$), defined as follows.
\label{def:compiled-game}
Fix a quantum homomorphic encryption scheme $\LQHE=(\Gen,\Enc,\Eval,\Dec)$.
\begin{enumerate}
    \item The verifier samples $(x, y)\gets\cQ$, $\sk\gets\Gen(1^\secp)$, and $c\gets\Enc(\sk,x)$. The verifier then sends $c$ to the prover as its first message. 
    \item The prover replies with a message $\alpha$.
    \item The verifier sends $y$ to the prover in the clear.
    \item The prover replies with a message $b$.
    \item Define $a \deq \Dec(\sk,\alpha)$. The verifier accepts if and only if $V(x,y,a,b)=1$.
\end{enumerate}

\subsection{Description of question-succinct protocol for $\mathsf{QMA}$}
In the following, we will denote the syntax of the protocol in regular font, and we will denote the actions of the honest prover(s) in italics.

\begin{longfbox}[breakable=false, padding=1em, padding-right=1.8em, padding-top=1.2em, margin-top=1em, margin-bottom=1em]
\begin{protocol} Question-succinct argument system for $\mathsf{QMA}$ \label{prot:main} \end{protocol}
\textbf{Inputs:} An instance $x \in \{0,1\}^*$, an algorithm $C$ which is the verifier for a $\QMA$ promise problem $A = (A_\mathrm{yes}, A_\mathrm{no})$ such that $A \in \mathsf{QMA}$, and a security parameter $\lambda \in \mathbb{N}$.
The protocol is given by applying the KLVY compiler (described in \Cref{sec:compiler}) with security parameter $\lambda$ to \Cref{prot:main-2-prover}.
\end{longfbox}

\begin{longfbox}[breakable=false, padding=1em, padding-right=1.8em, padding-top=1.2em, margin-top=1em, margin-bottom=1em]
\begin{protocol} Question-succinct (two-prover) proof system for $\mathsf{QMA}$ \label{prot:main-2-prover} \end{protocol}
\textbf{Inputs:} An instance $x \in \{0,1\}^*$ and an algorithm $C$ which is the verifier for a $\QMA$ promise problem $A = (A_\mathrm{yes}, A_\mathrm{no})$ such that $A \in \mathsf{QMA}$.

On input $x, C$, execute the reduction given in \Cref{thm:qma-to-hamiltonian} to produce a Hamiltonian problem $(H, \alpha(n), \beta(n))$ (see \Cref{def:hamiltonian-problem}) with $\beta(n) - \alpha(n) = 1 - \mathsf{negl}(n)$, where $H$ is an $n$-qubit operator ($n = \poly(|x|)$) with the following form:
\begin{align*}
H =  \1 - \underset{{w \in D}}{\E}\sum_{u \in Q(w)} \pi^w_u
\end{align*}
where
\begin{itemize}
    \item $w \in \{\1, X, Z\}^n$ is a Pauli string,
    \item $\{\pi^w_u\}_{u}$ is the projective measurement corresponding to measuring $n$ qubits in the natural way in the Pauli bases specified by $w$ (see \Cref{eq:pauli-proj} for a formal definition),
    \item $Q(w) \subseteq \{0,1\}^n$ is a set for which membership can be decided in polynomial time in $n$ given $w$, and
    \item $D$ is a distribution over $\{\1,X,Z\}^{n}$ which can be efficiently sampled from using sampling randomness of length $2^{\poly\log(n)}$. 
\end{itemize}
In words, if $H$ is of this form, then the quantity $1 - \Tr[ H \rho ]$ for any $n$-qubit state $\rho$ can be estimated by a QPT (in $n$) verifier who samples a $w$ from $D$, measures the $n$ qubits of $\rho$ in the Pauli bases specified by $w$, obtains outcomes $a$, and accepts iff $a \in Q(w)$.

\emph{Assume that we are considering a yes-instance $x \in A_{\mathrm{yes}}$. Honest Alice receives a witness that $x \in A_{\mathrm{yes}}$ as input, and converts it in polynomial time to a $n$-qubit witness $\rho$ that the ground energy of $H$ is $\leq \alpha(n)$. Honest Alice and Bob then share $n+1$ EPR pairs between themselves; most of the tests in the following protocol use only the last $n$ EPR pairs, but the first EPR pair will be used in \Cref{prot:acomgame}.}

The verifier executes each of the following tests with Alice and Bob with equal probability.
\begin{enumerate}
    \item \textbf{Pauli braiding test.} Described in \Cref{prot:pauli-braiding}; $n$ will be the number of qubits on which $H$ acts.
    \item \textbf{Mixed-versus-pure basis test.} Described in \Cref{prot:mixed-vs-pure}; the distribution $D$ will be the $D$ such that $H = \1 -  \underset{{w \sim D}}{\E}\sum_{u \in Q(w)} \pi^w_u$.
    \item \textbf{Hamiltonian test.} Described in \Cref{prot:hamiltonian}; the distribution $D$ and the sets $\{Q(w)\}_w$ will be those such that $H = \1 - \underset{{w \sim D}}{\E}\sum_{u \in Q(w)} \pi^w_u$.
\end{enumerate}
\end{longfbox}

\begin{longfbox}[breakable=false, padding=1em, padding-right=1.8em, padding-top=1.2em, margin-top=1em, margin-bottom=1em]
\begin{protocol} Pauli braiding test with succinct questions \label{prot:pauli-braiding} \end{protocol}
\textbf{Input:} an integer $n$.

Let $S \subseteq \bits^{n}$ be a $\mu$-biased set.
The verifier picks $a,b$ uniformly at random from $S \subseteq \{0,1\}^n$. Assuming that $|S| = \poly(n)$, this process takes $O(\log n)$ bits of randomness. Let the randomness which the verifier uses in this process be denoted $r = (r_a, r_b)$, where $r_a$ determines $a$ and $r_b$ determines $b$.
\begin{enumerate}
\item \textbf{(Commutation):} If $a \cdot b = 0$, the verifier executes the commutation test (\Cref{prot:comgame}) with questions $r_a, r_b$.
\item \textbf{(Anticommutation):} If $a \cdot b = 1$, the verifier executes the anticommutation test (\Cref{prot:acomgame}) with questions $r_a, r_b$.
\end{enumerate}
\end{longfbox}

\begin{longfbox}[breakable=false, padding=1em, padding-right=1.8em, padding-top=1.2em, margin-top=1em, margin-bottom=1em]
\begin{protocol} Commutation test
\label{prot:comgame} \end{protocol}
\textbf{Input:} questions $r_a$ and $r_b$ which index two elements $a$ and $b$ in $S \subseteq \{0,1\}^n$.
\begin{enumerate}[label=\arabic*.]
\item The verifier sends $(\textsf{com}, r_a, r_b)$ to Alice, and recieves responses $(u_a, u_b)$ with $u_a, u_b \in \{0,1\}$. \emph{Honest Alice computes $a$ and $b$ from $r_a$ and $r_b$, measures $\sigma_Z(a)$ and $\sigma_X(b)$ on her last $n$ qubits, and returns the results as $u_a$ and $u_b$ respectively.}
\item The verifier picks $W \in \{X,Z\}$ uniformly at random and sends $W$ to Bob. Bob responds with $v \in \{0,1\}^n$. Note that Bob is \emph{not} told that he is playing the commutation test: he receives only the question label $W$. \emph{Honest Bob measures his last $n$ qubits in the $W$ basis.}
\item If $W=Z$, the verifier accepts iff $\prod_{i: a_i = 1} v_{i} = u_a$. If $W=X$, the verifier accepts iff $\prod_{i: b_i = 1} v_{i} = u_b$.
\end{enumerate}
\end{longfbox}

\begin{longfbox}[breakable=false, padding=1em, padding-right=1.8em, padding-top=1.2em, margin-top=1em, margin-bottom=1em]
\begin{protocol} Anticommutation test 
\label{prot:acomgame} \end{protocol}
\textbf{Input:} questions $r_a$ and $r_b$ which index two elements $a$ and $b$ in $S \subseteq \{0,1\}^n$.

In this protocol, the verifier plays a version of the Mermin-Peres Magic Square game~\cite{aravind2002simple,mermin1990simple,peres1990incompatible} with the provers, in which Alice is asked to measure three observables forming a row or column of the square, and Bob is asked to measure an observable from a single cell of the square. Both provers are instructed to use the observables labeled by $r_a$ and $r_b$ for two specific cells in the square (the top centre and centre left cells), as indicated below.

\begin{enumerate}[label=\arabic*.]
\item The verifier chooses a cell index $j \in [9]$ uniformly at random; it then chooses uniformly at random a row or a column on a $3 \times 3$ grid which contains cell $j$. Suppose that the cell indices of the 3 cells in this row or column are $(i_1, i_2, i_3)$ (one of these will be equal to $j$).
\item The verifier sends $(\mathsf{MS}, r_a, r_b, i_1, i_2, i_3)$ to Alice, and receives responses $u_1, u_2, u_3 \in \{0,1\}$. \emph{Honest Alice computes $a$ and $b$ from $r_a, r_b$. Then she measures the three $(n+1)$-qubit observables associated with cells $i_1, i_2, i_3$ in the following grid, and returns all three 1-bit outcomes to the verifier:}

\begin{tabular}{c|c|c}
$\sigma_Z \otimes \1$ & $\1 \otimes \sigma_Z(a)$ & $\sigma_Z \otimes \sigma_Z(a)$ \\
\hline
$\1 \otimes \sigma_X(b)$ & $\sigma_X \otimes \1$ & $\sigma_X \otimes \sigma_X(b)$ \\
\hline
$- \sigma_Z \otimes \sigma_X(b)$ & $- \sigma_X \otimes \sigma_Z(a)$ & $- \sigma_Z \sigma_X \otimes \sigma_Z(a) \sigma_X(b)$
\end{tabular}

\emph{Operators before the tensor product are understood always to act on the first qubit of Alice's halves of the shared EPR pairs, and operators after the tensor product on the last $n$ qubits.}
\item If $j = 2$ (i.e.~Bob's question indicates the top centre cell), the verifier sends $Z$ to Bob, receives $v \in \{0,1\}^n$ as an answer, and accepts iff $\prod_{i: a_i = 1} v_{i}$ is equal to $u_2$ (if Alice was asked a row question) or $u_1$ (if Alice was asked a column question). If $j = 4$ (i.e.~Bob's question indicates the centre left cell), the verifier sends $X$ to Bob, receives $v \in \{0,1\}^n$ as an answer, and accepts iff $\prod_{i: b_i = 1} v_{i}$ is equal to $u_1$ (if Alice was asked a row question) or $u_2$ (if Alice was asked a column question. In all other cases, the verifier sends $(\mathsf{MS}, r_a, r_b, j)$ to Bob, receives a single-bit answer $v \in \{0,1\}$, and accepts iff $v = u_k$ for the $k \in [3]$ such that $u_k = j$. \emph{Honest Bob measures all his qubits in the $W$ basis when he receives a single bit question $W$, and in all other cases uses the same strategy as honest Alice, except that he only measures a single cell instead of 3.}

\end{enumerate}
\end{longfbox}

\begin{longfbox}[breakable=false, padding=1em, padding-right=1.8em, padding-top=1.2em, margin-top=1em, margin-bottom=1em]
\begin{protocol} Mixed-versus-pure basis test 
\label{prot:mixed-vs-pure} \end{protocol}
\textbf{Input:} a distribution $D$ over $\{\1, X,Z\}^n$ which can be sampled from using no more than $2^{\poly\log n}$ random bits.

\begin{enumerate}[label=\arabic*.]
    \item The verifier selects a basis $W$ from $\{X,Z\}$ uniformly at random and sends $W$ to Alice. It receives answer $u \in \{0,1\}^n$. \emph{Honest Alice measures all of her qubits in the $W$ basis.}
    \item The verifier selects a uniformly random $b \gets \{0,1\}$.
    \begin{enumerate}
        \item If $b=0$, the verifier sends $W$ to Bob, and receives answer $v \in \{0,1\}^n$. It accepts iff $u=v$. \emph{Honest Bob measures all of his qubits in the $W$ basis.}
        \item If $b=1$, the verifier samples a Pauli string $w$ from $D$ and sends its sampling randomness (which is $\poly\log n$ bits in length) to Bob. It accepts iff for all $i$ where $w_i = W$, it is the case that $u_i = v_i$. \emph{Honest Bob measures his last $n$ qubits in the Pauli bases designated by $w$ and reports all the measurement results. For qubits where $w_i = \1$, he always reports the measurement result 0.}
    \end{enumerate}
\end{enumerate}
\end{longfbox}

\begin{longfbox}[breakable=false, padding=1em, padding-right=1.8em, padding-top=1.2em, margin-top=1em, margin-bottom=1em]
\begin{protocol} Hamiltonian test
\label{prot:hamiltonian} \end{protocol}
\textbf{Input:} a distribution $D$ over $\{\1, X,Z\}^n$ which can be sampled from using no more than $2^{\poly\log n}$ random bits, along with a collection of sets $\{Q(w) : w \in \{\1,X,Z\}^n\}$ such that $Q(w) \subseteq \{0,1\}^n$ (described in a way such that membership in $Q(w)$ can be efficiently decided given $w$).
\begin{enumerate}[label=\arabic*.]
    \item The verifier sends question $\mathsf{tele}$ to Alice, and receives in response two strings, $u_x, u_z \in \{0,1\}^n$. \emph{Honest Alice teleports her ground state to Bob through their last $n$ shared EPR pairs and reports the teleportation corrections.}
    \item The verifier samples a Pauli string $w$ from $D$ and sends its sampling randomness (which is $\poly\log n$ bits in length) to Bob. The verifier then receives measurement outcomes from Bob, corrects Bob's outcomes using Alice's reported teleportation corrections, and does the appropriate energy test. More specifically, the verifier receives answer $v$ from Bob, and computes for every $i$ such that $w_i \neq \1$
\begin{equation*}
s_i = \underbrace{v_i}_{\text{Bob's measurement}} \oplus \underbrace{[(u_x)_i]^{\1[w_i = Z]} \oplus [(u_z)_i]^{\1[w_i = X]}}_{\text{correction from Alice}}
\end{equation*}
Then the verifier sets $s_i = 0$ for every $i$ such that $w_i = \1$, and accepts iff $s \in Q(w)$. \emph{Honest Bob measures his last $n$ qubits in the Pauli bases designated by $w$ and reports all the measurement results. For qubits where $w_i = \1$, he always reports the measurement result 0.}
\end{enumerate}
\end{longfbox}

\subsection{Question types}

For convenience, we summarise the different question types that Alice and Bob may each see in \Cref{prot:main-2-prover} here. (These are not necessarily in one-to-one correspondence with the subgames of the protocol, since some subgames are indistinguishable from Bob's point of view.)

We denote the set of Alice questions by $\cQ_A$ and the set of Bob questions by $\cQ_B$.

\paragraph{Alice questions:}
\begin{itemize}
\item $(\comm, r_a, r_b)$: compute $a$ and $b$ from $r_a$ and $r_b$ and measure $\sigma_Z(a)$ and $\sigma_X(b)$.
\item $(\mathsf{MS}, r_a, r_b, i_1, i_2, i_3)$: compute $a$ and $b$ and play magic square for the cells indicated by $i_1, i_2, i_3$ with the operators for cells 1 and 5 coinciding with the $\sigma_X(a)$ and $\sigma_Z(b)$ operators.
\item $X$ or $Z$: measure all qubits in $\sigma_X$ or $\sigma_Z$ basis.
\item $w$ for Pauli string $w \in \{\1,X,Z\}^n$: do mixed basis measurement in Pauli bases given by $w$.
\item $\tele$: do teleportation measurement.
\end{itemize}

\paragraph{Bob questions:}
\begin{itemize}
\item $X$ or $Z$: measure all qubits in $\sigma_X$ or $\sigma_Z$ basis.
\item $(\mathsf{MS}, r_a, r_b, j)$: compute $a$ and $b$ and play magic square for the cell indicated by $j$ with the operators for cells 1 and 5 coinciding with the $\sigma_X(a)$ and $\sigma_Z(b)$ operators.
\item $w$ for Pauli string $w \in \{\1,X,Z\}^n$: do mixed basis measurement in Pauli bases given by $w$.
\end{itemize}

\section{Modeling and state-dependent norms for the compiled game}
\label{sec:modelling}

\subsection{Modelling the prover in any compiled game}

We recall the formalism used by \cite{NZ23} to model the prover's strategy in a compiled nonlocal game. In general, the prover's actions in any compiled game can be modeled as follows. The prover
starts with some initial (pure) state $\ket{\psi}$. In the first
round, it performs a projective measurement depending on the ciphertext
question $c$ to obtain an outcome $\alpha$, \emph{followed by a unitary} depending on $c$ and $\alpha$, to obtain a post-measurement state. In the second round, it performs a projective measurement on its residual state depending on the plaintext question $y$, to obtain an outcome $b$.

We note that both measurements can be assumed to be projective without loss of generality, by the Naimark dilation theorem. We also note that the post-measurement unitary is necessary in the compiled case, since both ``provers" act sequentially on the same register; it is usually ignored in the case of nonlocal games, since the provers act on separate subsystems.

Mathematically, we model the strategy of the prover as follows:
\begin{enumerate}
    \item The initial pure state is denoted $\ket{\psi}$. We often write $\psi$ for $\ket{\psi}\bra{\psi}$.
    \item The first measurement (the ``Alice measurement") is modeled by a collection of non-Hermitian operators $A^c_\alpha$. These satisfy the condition that for each $c$, the collection of operators $\{(A^c_\alpha)^\dagger (A^c_\alpha)\}$ forms a projective measurement.  Thus, the probability that Alice returns outcome $\alpha$ in response to question $c$ is
    \[ \Pr[\alpha] = \bra{\psi} (A^c_\alpha)^\dagger (A^c_\alpha) \ket{\psi}. \]
    \item The \emph{un-normalized post-measurement state} after receiving question $c$ and responding with answer $\alpha$ is
    \begin{equation} \ket{\psi^{c}_{\alpha}} = A^c_\alpha \ket{\psi}. \label{eq:def-post-meas-states-chsh} \end{equation}
    Note that $\| \ket{\psi^c_\alpha}\|^2 = \Pr[\alpha]$.
    The post-measurement state marginalizing over outcomes for question $c$ is the mixed state
    \[ \psi^c = \sum_\alpha \psi^c_\alpha = \sum_\alpha \ket{\psi^c_\alpha}\bra{\psi^c_\alpha}. \]
    \item The second measurement (the ``Bob measurement") is modeled by a POVM measurement $\{B^y_b\}$ for each question $y$. While this measurement can always be taken to be projective without loss of generality, in the analysis it will sometimes be convenient to construct strategies where this measurement is a POVM.
\end{enumerate}

Altogether, then, we can specify a strategy for a compiled nonlocal game by a triple $(\ket{\psi}, \{A^{c}_{\alpha}\},
\{B^y_b\})$, where each element of the triple is implicitly a function
of the security parameter $\lambda$ and the encryption key $sk$.

For notational convenience, given any Alice question $q \in \cQ_A$, we sometimes use the shorthand
    \begin{align*}
    \psi^{\Enc(q)}_\alpha = \E_{sk} \E_{c \leftarrow \Enc_{sk}(q)} \psi^{c}_\alpha \,, \qquad \psi^{\Enc(q)} = \E_{sk} \E_{c \leftarrow \Enc_{sk}(q)} \psi^{c}
    \end{align*}

\begin{remark}[Implicit expectation over keys]
The security of an encryption scheme $(\Gen, \Enc, \Dec)$ requires the encryption keys to be chosen randomly by $\Gen$. In particular, this means that certain indistinguishability statements we will want to make hold only in expectation over keys $sk$ output by $\Gen$: for example, when we claim that two Alice post-measurement (mixed) states $\psi^{\Enc(q)}$ and $\psi^{\Enc(q')}$ are indistinguishable to Bob, we really mean that, for any two-outcome measurement $\{M, I-M\}$ that can be implemented by a circuit with size $\poly(\lambda)$, there exists a negligible function $\eta(\lambda)$ such that 
\begin{equation}
\E_{sk \leftarrow \Gen(\lambda)} \left| \Tr[M \psi^{\Enc(q)}] - \Tr[M \psi^{\Enc(q')}] \right| \leq \eta(\lambda)\,.
\label{eq:crypto-binary}
\end{equation}
We will frequently leave the expectation over these keys implicit: in other words, in the above example, we may simply write
\begin{equation*}
\left| \Tr[M \psi^{\Enc(q)}] - \Tr[M \psi^{\Enc(q')}] \right| \leq \eta(\lambda)\,.
\end{equation*}
Whenever we have expressions that include both $\Enc$ and $\Dec$, it is understood that both functions use the same keys, i.e.~that there is one global implicit expectation over keys.
For example, when we write $\norm{O(\Dec(\alpha))}^2_{\psi^{\Enc(q)}}$for some observables $O(\cdot)$ that depend on the decryption of a ciphertext $\alpha$, this is understood to mean 
\begin{align*}
\E_{sk \leftarrow \Gen(\lambda)} \norm{O(\Dec_{sk}(\alpha))}^2_{\psi^{\Enc_{sk}(q)}} \,,
\end{align*}
where, analogously to before, $\psi^{\Enc_{sk}(q)} \coloneqq \E_{c \leftarrow \Enc_{sk}(q)} \psi^{c}$. We usually use this convention for the squared norm, where it is justified by the linearity property (\Cref{lem:state_dep_norm_props} \Cref{item:linearity}). 

\end{remark}

We record the following fact, which follows directly from the security of the QHE scheme used in the KLVY compiler in \cref{prot:main}, as defined in \Cref{def:QHE-aux}.
\begin{lemma}\label{lem:alice_states_indist}
For any two Alice questions $q_1, q_2 \in \cQ_A$,
\begin{align*}
\psi^{\Enc(q_1)} \capprox \psi^{\Enc(q_2)} \,.
\end{align*}
\end{lemma}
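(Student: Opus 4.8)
The plan is to prove \Cref{lem:alice_states_indist} by a direct reduction to the IND-CPA security of the QHE scheme (\Cref{def:QHE-aux}). The key structural observation is that, for a QPT prover, the first-round operation --- the projective measurement $\{(A^c_\alpha)^\dagger A^c_\alpha\}$ followed by the unitary depending on $(c,\alpha)$, which together implement the channel $\rho \mapsto \sum_\alpha A^c_\alpha \rho (A^c_\alpha)^\dagger$ --- is realised by a $\poly(\lambda)$-size quantum circuit $\cE$ that takes the classical ciphertext $c$ as input. Hence, for fixed $\sk$, $\psi^{\Enc_\sk(q)} = \E_{c \gets \Enc_\sk(q)} \sum_\alpha A^c_\alpha \psi (A^c_\alpha)^\dagger$ is exactly the state obtained by running $\cE$ on a fresh encryption of $q$. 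We assume without loss of generality that all questions in $\cQ_A$ are encoded as bitstrings of a common length, so that IND-CPA security is meaningful.

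Unfolding \Cref{def:comp_indist} together with the key-averaging convention of the Remark following it, it suffices to show that for every family of efficient two-outcome POVMs $\{M, \1 - M\}$ the quantity $\E_\sk\, |g(\sk)|$ is negligible, where $g(\sk) \deq \Tr[M \psi^{\Enc_\sk(q_1)}] - \Tr[M \psi^{\Enc_\sk(q_2)}] \in [-1, 1]$. Suppose toward a contradiction that $\E_\sk |g(\sk)| \geq 1/p(\lambda)$ for some polynomial $p$ and infinitely many $\lambda$. I would then build a QPT IND-CPA adversary $\cA$ for the challenge message pair $(x_0, x_1) = (q_1, q_2)$: given the encryption oracle $\Enc_\sk(\cdot)$ and a challenge ciphertext $c^*$, $\cA$ first queries the oracle $\poly(\lambda)$ times on each of $q_1$ and $q_2$, runs $\cE$ followed by $\{M, \1 - M\}$ on each returned ciphertext, and averages to obtain estimates $\hat f_1, \hat f_2$ of $\Tr[M \psi^{\Enc_\sk(q_1)}]$ and $\Tr[M \psi^{\Enc_\sk(q_2)}]$, accurate to within $1/(8p)$ except with negligible probability (Hoeffding); it sets $\hat s \deq \mathrm{sign}(\hat f_1 - \hat f_2)$. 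Then $\cA$ runs $\cE$ followed by $\{M, \1 - M\}$ on $c^*$, obtaining a bit $m$, and outputs $m$ if $\hat s = +1$ and $1 - m$ if $\hat s = -1$. A short computation (using $\hat s(\sk)\, g(\sk) = |g(\sk)|$ whenever $|g(\sk)| > 1/(4p)$ and the estimate is good) shows that $\cA$'s IND-CPA advantage is at least $\E_\sk |g(\sk)| - 1/(2p) - \negl(\lambda) \geq 1/(2p) - \negl(\lambda)$, contradicting IND-CPA security. This proves the lemma.

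The reduction is efficient precisely because $\cE$ is a polynomial-size circuit and $M$ is an efficient POVM, and it invokes no property of the QHE scheme other than IND-CPA security --- neither correctness-with-auxiliary-input nor any form of compactness plays a role here. The one point that requires care, and which I expect to be the only real obstacle, is the key averaging: because the sign of the distinguishing bias $g(\sk)$ can vary with $\sk$, one cannot feed the distinguisher into IND-CPA in a black-box manner, and the argument genuinely relies on the IND-CPA adversary's chosen-plaintext encryption oracle to learn $\mathrm{sign}(g(\sk))$ for the sampled key before committing to its guess on the challenge ciphertext.
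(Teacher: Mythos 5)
Your proposal is correct and is precisely the argument the paper leaves implicit: the paper offers no proof of \Cref{lem:alice_states_indist}, simply asserting that it ``follows directly from the security of the QHE scheme,'' and your reduction --- including the sign-estimation via the chosen-plaintext oracle, which is genuinely needed because the key-averaging convention puts the absolute value inside $\E_{sk}$ --- fills that in correctly. The only point worth flagging is that estimating $\hat f_1, \hat f_2$ requires running the prover's first-round channel on $\poly(\lambda)$ fresh encryptions and hence on $\poly(\lambda)$ copies of the prover's initial state $\ket{\psi}$, which may be an arbitrary advice state; this is handled by giving the IND-CPA adversary the non-uniform quantum advice $\ket{\psi}^{\otimes k}$, so one must read the IND-CPA assumption in \Cref{def:QHE-aux} as holding against non-uniform QPT adversaries.
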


\paragraph{Measuring closeness of strategies.}

In the analysis of a compiled nonlocal game, it often occurs that we wish to show if we replace  Bob's measurements $\{B^y_b\}$ with new measurements $\{C^y_b\}$ that are close in the appropriate distance metric, then the winning probability of the strategy is approximately preserved. Specifically, we will often measure closeness in terms of the state-dependent norm on the post-measurement state, after Alice's measurement has been applied:
\[ \E_{x, y \sim D_G}  \sum_{\alpha, y} \| B^y_b - C^y_b \|_{\psi^{\Enc(x)}_\alpha}^2 \leq \eps, \]
where $D_G$ is the distribution over questions sampled in the game.

In some special cases, \Cref{lem:replace_on_state} can be applied to relate this state-dependent distance bound to the winning probability. Specifically, it applies for the case of \emph{unique games} where every answer for Bob determines a unique winning answer for Alice and vice versa; this lets us define $\psi_i$ in the lemma to be Alice's post-measurement state for this unique outcome, which satisfies the condition $\sum_i \Tr \psi_i \leq 1$. However, for general games, the situation is more complicated, and there is a nice relation between state-dependent distance and success probability only if one of the two strategies being compared consists of \emph{projective} measurements. The following lemma, based on Fact~4.31 from~\cite{NW19}, gives us the statement we will need.
\begin{lemma}
  \label{lem:close_strats}
    Let $G$ be a nonlocal game, $\ket{\psi}$ a state, $\{A^c_\alpha\}$ be a collection of compiled Alice measurements, and $\{B^y_b\}$ and $\{C^y_b\}$  be two collections of Bob measurements such that $B^y_b$ is projective, and 
    \[ \E_{x,y \sim D_G}  \sum_\alpha \sum_b  \| B^y_b - C^y_b \|_{\psi^{\Enc(x)}_\alpha}^2 \leq \eps,\]
    where $D_G$ is the distribution over question pairs in the game.
    Then the success probabilities of the strategy using $(\ket{\psi}, \{A^c_\alpha\}, \{B^y_b\})$ and the strategy using $(\ket{\psi}, \{A^c_\alpha\}, \{C^y_b\})$ are $O(\sqrt{\eps})$-close. 
\end{lemma}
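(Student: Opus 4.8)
The plan is to prove this as the compiled-game analogue of \cite[Fact~4.31]{NW19}, lifting the argument there from the Frobenius norm to the state-dependent norm and keeping track of the implicit expectation over encryption keys. First I would unpack the two quantities. Writing $c \leftarrow \Enc_{sk}(x)$, the probability that $(\ket{\psi}, \{A^c_\alpha\}, \{B^y_b\})$ wins is $p_B = \E_{x,y \sim D_G}\, \E_{sk,c}\, \sum_\alpha \sum_b V(x,y,\Dec_{sk}(\alpha), b)\, \bra{\psi^c_\alpha} B^y_b \ket{\psi^c_\alpha}$, and $p_C$ is the same with $C^y_b$ in place of $B^y_b$. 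By the implicit-key-expectation convention and linearity of the squared state-dependent norm in the state (\Cref{lem:state_dep_norm_props}~\Cref{item:linearity}), the hypothesis says exactly that $\E_{x,y}\,\E_{sk,c}\,\sum_\alpha \sum_b \| B^y_b - C^y_b \|^2_{\psi^c_\alpha} \leq \eps$. So I would fix $(x,y,sk,c)$, argue pointwise, and take expectations only at the end; the only structural facts I would use are that each $B^y_b$ is a projection (so $\bra{\chi} B^y_b \ket{\chi} = \| B^y_b \ket{\chi} \|^2$ for any $\chi$, and $\sum_b B^y_b = \1$) and the closeness bound.

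The core computation writes each per-outcome term via projectivity: $\bra{\psi^c_\alpha}(B^y_b - C^y_b)\ket{\psi^c_\alpha} = \| B^y_b \ket{\psi^c_\alpha} \|^2 - \| \sqrt{C^y_b} \ket{\psi^c_\alpha} \|^2$, and then applies the elementary identity $\| u \|^2 - \| v \|^2 = \Re\langle u - v,\, u + v \rangle$ (with $u = B^y_b \ket{\psi^c_\alpha}$, $v = \sqrt{C^y_b}\ket{\psi^c_\alpha}$) to get $| \bra{\psi^c_\alpha}(B^y_b - C^y_b)\ket{\psi^c_\alpha} | \leq \| (B^y_b - \sqrt{C^y_b})\ket{\psi^c_\alpha} \| \cdot \| (B^y_b + \sqrt{C^y_b})\ket{\psi^c_\alpha} \|$. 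Summing over $\alpha$, over $b \in W_\alpha := \{ b : V(x,y,\Dec_{sk}(\alpha), b) = 1\}$, and over the outer randomness, and applying Cauchy--Schwarz, I would obtain $| p_B - p_C | \leq \big( \E_{x,y}\E_{sk,c}\sum_\alpha\sum_b \| \sqrt{C^y_b} - B^y_b \|^2_{\psi^c_\alpha} \big)^{1/2} \cdot \big( \E_{x,y}\E_{sk,c}\sum_\alpha\sum_b \| (B^y_b + \sqrt{C^y_b})\ket{\psi^c_\alpha} \|^2 \big)^{1/2}$, where in each factor the inner sum has been harmlessly enlarged to \emph{all} outcomes $b$, the summands being nonnegative. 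The second factor is $O(1)$: $\| (B^y_b + \sqrt{C^y_b})\ket{\psi^c_\alpha} \|^2 \leq 2\,\bra{\psi^c_\alpha} B^y_b \ket{\psi^c_\alpha} + 2\,\bra{\psi^c_\alpha} C^y_b \ket{\psi^c_\alpha}$, and summing over $b$ (using $\sum_b B^y_b = \sum_b C^y_b = \1$) and over $\alpha$ (using $\sum_\alpha \| \ket{\psi^c_\alpha} \|^2 = \| \ket{\psi} \|^2 \leq 1$) bounds it by $4$. Hence $| p_B - p_C | \leq 2\, \big( \E_{x,y}\E_{sk,c}\sum_\alpha\sum_b \| \sqrt{C^y_b} - B^y_b \|^2_{\psi^c_\alpha} \big)^{1/2}$, reducing the task to comparing this ``square-root'' quantity with the hypothesis.

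For the final step I would invoke the standard POVM-versus-projection replacement estimate underlying \cite[Fact~4.31]{NW19} (a Powers--Størmer-type bound in spirit), which controls $\sum_b \| \sqrt{C_b} - B_b \|^2_\psi$ in terms of $\sum_b \| C_b - B_b \|^2_\psi$ for a projective measurement $\{ B_b \}$ and a POVM $\{ C_b \}$; combined with the bound above it gives $| p_B - p_C | = O(\sqrt{\eps})$. I expect this last estimate to be the main obstacle, and it is the only place where projectivity of $\{ B^y_b \}$ and, crucially, the \emph{full} sum over outcomes $b$ (not just the winning ones in $W_\alpha$) are genuinely needed — the conclusion fails if one controls the difference only on winning outcomes, or if $\{ B^y_b \}$ is merely a POVM. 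The remainder is bookkeeping with Cauchy--Schwarz, H\"older's inequality (cf.\ \Cref{rem:schatten_norm} and \Cref{lem:replace_on_state}), and the basic properties of the state-dependent norm in \Cref{lem:state_dep_norm_props}; passing from the nonlocal setting to the compiled one costs nothing beyond carrying the expectation over keys and ciphertexts through each inequality, which is legitimate by linearity of the squared state-dependent norm.
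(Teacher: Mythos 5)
Your overall skeleton matches the paper's proof in spirit: both arguments compare the two acceptance probabilities termwise, apply Cauchy--Schwarz over $(\alpha,b)$ and the outer randomness, bound one factor by $O(1)$ using normalisation of the measurements together with $\sum_\alpha\Tr[\psi^c_\alpha]\le 1$, and harmlessly enlarge the sum from the winning outcomes to all outcomes. The problem is exactly the step you flag as the main obstacle. The estimate you invoke --- that $\sum_b\|\sqrt{C^y_b}-B^y_b\|^2_{\psi}$ is controlled \emph{linearly} by $\sum_b\|C^y_b-B^y_b\|^2_{\psi}$ --- is false. Take $\cH=\C^2$, $\psi=\proj{0}$, $B_b=\proj{b}$, $C_0=\diag(1-\delta,1)$ and $C_1=\diag(\delta,0)$. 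Then $\sum_b\|C_b-B_b\|^2_\psi=2\delta^2$, whereas $\|(\sqrt{C_1}-B_1)\ket{0}\|^2=\delta$, so the square-rooted quantity is of order $\sqrt{\textstyle\sum_b\|C_b-B_b\|^2_\psi}$ rather than of the quantity itself; no Powers--St{\o}rmer-type bound can do better here. Feeding this into your final Cauchy--Schwarz yields only $|p_B-p_C|=O(\eps^{1/4})$, not the claimed $O(\sqrt\eps)$ (which is tight: in the same example the true gap in acceptance probability is $\delta=\Theta(\sqrt{\eps})$).

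The paper closes this gap by never taking square roots. Setting $\Delta^y_b=C^y_b-B^y_b$, it uses $C^y_b\succeq (C^y_b)^2=B^y_b+(\Delta^y_b)^2+B^y_b\Delta^y_b+\Delta^y_b B^y_b$, where the operator inequality uses $0\preceq C^y_b\preceq\1$ and the identity uses projectivity of $B^y_b$. Taking the trace against $\psi^c_\alpha$, summing over an arbitrary subset $S$ of outcomes, discarding the nonnegative $(\Delta^y_b)^2$ term, and bounding the cross terms by H\"older and Cauchy--Schwarz against $\sum_{b}\|B^y_b(\psi^c_\alpha)^{1/2}\|_2^2\le\Tr[\psi^c_\alpha]$ gives the one-sided inequality
\begin{equation*}
\sum_{b\in S}\Tr[C^y_b\psi^c_\alpha]\;\ge\;\sum_{b\in S}\Tr[B^y_b\psi^c_\alpha]-2\sqrt{\sum_{b\in S}\|\Delta^y_b\|^2_{\psi^c_\alpha}}\,,
\end{equation*}
and applying it once with $S$ the accepting set and once with $S$ the rejecting set yields the two-sided $O(\sqrt\eps)$ bound. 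If you want to salvage your write-up, replace the comparison between $B^y_b$ and $\sqrt{C^y_b}$ by the comparison between $\Tr[C^y_b\,\cdot\,]$ and $\Tr[(C^y_b)^2\,\cdot\,]$ as above; as currently written, the proposal has a genuine quantitative gap at its final step.
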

\begin{proof}
  Define
  \begin{equation*}
    \Delta^y_b = C^y_b - B^y_b.
  \end{equation*}

  For any $c, \alpha, y$, let $S$ be a subset of Bob's outcome space. Then we claim that
  \begin{equation}
    \sum_{b \in S} \Tr[ C^y_b \psi^c_\alpha] \geq \sum_{b \in S} \Tr[B^y_b \psi^c_\alpha] - 2\sqrt{\sum_{b \in S} \| \Delta^y_b\|_{\psi^c_\alpha}^2 }. \label{eq:switch-strat-lower-bound}
  \end{equation}
  Before we prove this claim, let us see why it implies the conclusion of the Lemma. For each $c, \alpha, y$, let $S_{c, \alpha, y}$ be the set of answers $b$ such that $(c, y, \alpha, b)$ is an accepting question and answer tuple for the compiled game verifier. Then the success probability of the Bob strategy $\{C^y_b\}$ is given by
  \begin{align*}
    \omega^*(\{C^y_b\}) &=  \E_{x,y \sim D_G} \E_{c \leftarrow \Enc(x)} \sum_\alpha \sum_{b \in S_{b, \alpha, y}} \Tr[C^y_b \psi^c_\alpha] \\
                        &\geq  \E_{x,y \sim D_G} \E_{c \leftarrow \Enc(x)} \sum_\alpha\left(   \sum_{b \in S_{b, \alpha, y}} \Tr[B^y_b \psi^c_\alpha] -2\sqrt{\sum_{b \in S(b, \alpha, y)} \| \Delta^y_b\|_{\psi^c_\alpha}^2 } \right) \\
                        &\geq \omega^*(\{B^y_b\}) - 2 \E_{x,y \sim D_G} \E_{c \leftarrow \Enc(x)} \sum_\alpha \sqrt{\sum_{b \in S(b, \alpha, y)} \| \Delta^y_b\|_{\psi^c_\alpha}^2 } \\
                        &\geq \omega^*(\{B^y_b\}) - 2 \sqrt{\E_{x,y \sim D_G} \E_{c \leftarrow \Enc(x)} \sum_\alpha \sum_b \|\Delta^y_b\|_{\psi^c_\alpha}^2} \\
    &= \omega^*(\{B^y_b\}) - 2\sqrt{\eps}.
  \end{align*}
  Applying the same logic with $S_{c,y, \alpha}$ taken to be the set of \emph{rejecting} Bob answers $b$ yields the statement
  \begin{align*}
    1 - \omega^*(\{C^y_b\}) &\geq ( 1- \omega^*(\{B^y_b\})) - 2\sqrt{\eps} \\
    \omega^*(\{C^y_b\}) &\leq \omega^*(\{B^y_b\}) + 2\sqrt{\eps}.
  \end{align*}
  Thus, the conclusion of the Lemma follows.

  Now it remains to prove \Cref{eq:switch-strat-lower-bound}. First, let us observe that
  \begin{align*}
    C^y_b &\succeq  (C^y_b)^2 \\
          &= B^y_b + (\Delta^y_b)^2 + B^y_b \Delta^y_b + \Delta^y_b B^y_b.
  \end{align*}
  Here, the first line follows from the fact that, as a valid POVM element, $C^Y_b \preceq I$, and the second line uses the projectivity of $B^y_b$ to replace $(B^y_b)^2$ by $B^y_b$. We will now prove \Cref{eq:switch-strat-lower-bound} by lower-bounding the success probability of a ``subnormalized'' strategy using the \emph{squared} operators $(C^y_b)^2$ as POVM elements. 
\begin{align*}
     \sum_{b \in S} \Tr[C^y_b  \psi^c_\alpha ] &\geq \sum_{b \in S} \Tr[(C^y_b)^2 \psi^c_\alpha ] \\
     &=\sum_{b \in S} \Tr[B^y_b \psi^c_\alpha] + \sum_{b \in S} \| \Delta^y_b \|_{\psi^c_\alpha}^2 + \sum_{b \in S} \Tr[(B^y_b \Delta^y_b + \Delta^y_b B^y_b)\psi^c_\alpha] \\
     &\geq \sum_{b \in S} \Tr[B^y_b \psi^c_\alpha] - 2 \sum_{b \in S} \| B^y_b (\psi^c_\alpha)^{1/2} \|_2 \| \Delta^y_b (\psi^c_\alpha)^{1/2} \|_2 \\
     &\geq \sum_{b \in S} \Tr[B^y_b \psi^c_\alpha] - 2 \sqrt{\sum_{b \in S} \| B^y_b (\psi^c_\alpha)^{1/2} \|^2_2}\sqrt{\sum_{b \in S}  \| \Delta^y_b (\psi^{c}_\alpha)^{1/2} \|^2_2} \\
     &\geq \sum_{b \in S} \Tr[B^y_b \psi^c_\alpha] - 2\sqrt{\sum_{b \in S} \|\Delta^y_b\|_{\psi^c_\alpha}^2}.
\end{align*}
Here, we used H\"{o}lder for the first inequality, Cauchy-Schwarz for the second, and the fact that $\{B^y_b\}$ is a normalized measurement for the third.
\end{proof}

\subsection{Modelling the prover in \Cref{prot:main}}

We will also need to establish some notation specific to the prover's `Bob' operators in \Cref{prot:main}.

\begin{definition}[Projective measurements]
~
\begin{enumerate}
\item \emph{Pure basis measurements.} In the Pauli braiding test (\Cref{prot:pauli-braiding}), and also in the $b=0$ case of the mixed versus pure basis test (\Cref{prot:mixed-vs-pure}), Bob is asked a single bit question corresponding to a basis setting ($X$ or $Z$). For each basis setting $W \in \{X,Z\}$, we notate Bob's projective measurement after receiving question $W$ as a set of projectors $\{P^{W}_u\}_{u}$ with outcomes $u \in \{0,1\}^n$.
\item \emph{Mixed basis measurements.} In the mixed versus pure basis test (\Cref{prot:mixed-vs-pure}), Bob receives a question $w \in \{\1, X, Z\}^n$ corresponding to $n$ Pauli basis settings (one for each of $n$ qubits). For each Pauli string $w \in \{\1, X,Z\}^n$, we notate Bob's projective measurement after receiving question $w$ as a set of projectors $\{M^{w}_u\}_{u \in \bits^n}$.
We can assume without loss of generality that Bob always answers ``0'' on indices where he was asked to measure the identity, i.e.~formally we can assume that $M^w_u = 0$ if there exists an index $i \in [n]$ for which $w_i = \1$ but $u_i = 1$.
The reason that this assumption is without loss of generality is that we can always replace Bob's measurements by a post-processed version that has this property; since the verifier ignores all indices for which $w_i = \1$, this post-processing affects neither Bob's success probability nor any of the rigidity statements we show below.
\end{enumerate}
\end{definition}

\begin{definition}[Pure basis observables]
    For $W \in \{X, Z\}$ and $a \in \bits^n$, define the binary observable
    \[ W(a) \deq \sum_{u} (-1)^{ u \cdot a} P^W_u. \]
\end{definition}

\begin{definition}[Mixed basis observables]
For $w \in \{\1, X,Z\}^n$ and $a \in \bits^n$, define the binary observables 
    \begin{align*}
        O^w(a) &= \sum_{u \in \bits^n} (-1)^{a \cdot u} M^w_u \,,\\
        O^w_W(a) &= \sum_{u \in \bits^n} (-1)^{\vec 1_{w = W \wedge a = 1} \cdot u} M^w_u \,.
    \end{align*}
We also write $O^w = O^w(\vec 1)$ and $O^w_W = O^w_W(\vec 1)$.
\end{definition}

\begin{remark} \label{rem:observable_redundancy}
Recall that $M^w_u = 0$ if there exists an index $i$ for which $w_i = \1$ but $u_i = 1$.
This implies that if two bitstrings $a, a' \in \bits^n$ only differ on indices $i$ for which $w_i = \1$, then $O^w(a) = O^w(a')$.
By the same reasoning, if $a, a' \in \bits^n$ only differ on indices $i$ for which $w_i \neq W$, then $O^w_W(a) = O^w_W(a')$.

For the same reason, we can also restrict the sum over $u$ only to those indices that matter, i.e.~we can drop the sum over those $u$ for which $M^w_u = 0$.
In particular, this yields another expression for $O^w$ and $O^w_W$, which will occasionally be more convenient to use: 
\begin{align*}
O^w &= \sum_{u \in \bits^{|\{u \neq \1\}|}} (-1)^{\vec 1 \cdot u} (M^w\setr{w \neq \1})_{u} \,, \\
O^w_W &= \sum_{u \in \bits^{|\{w = W \}|}} (-1)^{\vec 1 \cdot u} (M^w\setr{w = W})_{u} \,.
\end{align*}
\end{remark}

\begin{lemma} \label{lem:O_factorises}
For all $w \in \{\1, X, Z\}^n$,
\begin{align*}
O^w = O^w_X O^w_Z \,.
\end{align*}
\end{lemma}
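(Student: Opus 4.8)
The plan is to prove the identity by a direct computation that uses only two ingredients: the projectivity of Bob's mixed-basis measurement $\{M^w_u\}_u$, and the (without-loss-of-generality) convention recorded before \Cref{rem:observable_redundancy} that $M^w_u = 0$ whenever $u_i = 1$ for some index $i$ with $w_i = \1$.

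First I would unfold the right-hand side using the definition of the mixed basis observables. Writing $\vec 1_{w=W}$ for the vector in $\bits^n$ whose $i$-th entry is $\1[w_i = W]$, we have $O^w_X = \sum_u (-1)^{\vec 1_{w=X}\cdot u} M^w_u$ and $O^w_Z = \sum_{u'} (-1)^{\vec 1_{w=Z}\cdot u'} M^w_{u'}$. Multiplying these and using orthogonality of the projectors, $M^w_u M^w_{u'} = \delta_{u,u'} M^w_u$, collapses the double sum to $O^w_X O^w_Z = \sum_u (-1)^{(\vec 1_{w=X} + \vec 1_{w=Z})\cdot u}\, M^w_u$. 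Since each coordinate $w_i$ lies in $\{\1, X, Z\}$, the two indicator vectors have disjoint supports, so $\vec 1_{w=X} + \vec 1_{w=Z} = \vec 1_{w \neq \1}$ (no carries), giving $O^w_X O^w_Z = \sum_u (-1)^{\vec 1_{w \neq \1}\cdot u}\, M^w_u$.

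Finally I would identify this with $O^w = O^w(\vec 1) = \sum_u (-1)^{\vec 1 \cdot u} M^w_u$. By the vanishing convention, every $u$ that contributes to the sum satisfies $u_i = 0$ on all coordinates $i$ with $w_i = \1$; hence $\vec 1_{w \neq \1} \cdot u \equiv \vec 1 \cdot u \pmod 2$ for all such $u$, and replacing the exponent is legitimate. Equivalently, one can just quote the alternative expressions for $O^w$ and $O^w_W$ from \Cref{rem:observable_redundancy}, which already restrict the sums to the relevant coordinates. There is no real obstacle here; the only step requiring a moment's care is this last one, where the identity-index convention is exactly what lets one pass between $\vec 1_{w \neq \1}$ and $\vec 1$ inside the sum. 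Everything else is an immediate consequence of projectivity and the disjoint-support observation.
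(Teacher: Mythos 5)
Your proof is correct and follows essentially the same route as the paper's: both rest on orthogonality of the projective measurement $\{M^w_u\}_u$ together with the convention that $M^w_u = 0$ whenever $u_i = 1$ on an index with $w_i = \1$. The only (cosmetic) difference is that you collapse the double sum over full-length outcome strings directly, whereas the paper first passes to the reduced measurements of \Cref{rem:observable_redundancy} and multiplies those; the underlying computation is identical.
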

\begin{proof}
First observe that for any $u \in \bits^{|\{w \neq \1\}|}$, 
\begin{align*}
(M^w\setr{w = X})_{u\setr{w = X}} (M^w\setr{w = Z})_{u\setr{w = Z}}
&= \left( \sum_{a: a\setr{w = X} = u\setr{w = X}} M^w_a \right)
\left( \sum_{b: b\setr{w = Z} = u\setr{w = Z}} M^w_b \right) \\
&= \sum_{\substack{a \sth \\
a\setr{w = X} = u\setr{w = X}\\
\wedge \; a\setr{w = Z} = u\setr{w = Z}}} M^w_a \\
&= \sum_{a: a\setr{w \neq \1} = u} M^w_a \\
&= (M^w\setr{w \neq \1})_{u} \,.
\end{align*}
Here, the second equality uses orthogonality of the projectors $\{M^w_a\}_{a}$.
With the expressions from \cref{rem:observable_redundancy}, we get that 
\begin{align*}
O^w_X O^w_Z 
&= \left( \sum_{u_X \in \bits^{|\{w = X \}|}} (-1)^{\vec 1 \cdot u_X} (M^w\setr{w = X})_{u_X}  \right) \left( \sum_{u_Z \in \bits^{|\{w = Z \}|}} (-1)^{\vec 1 \cdot u_Z} (M^w\setr{w = Z})_{u_Z}  \right) \\
&= \sum_{u \in \bits^{|\{u \neq \1\}|}} (-1)^{\vec 1 \cdot u} (M^w\setr{w = X})_{u\setr{w = X}} (M^w\setr{w = Z})_{u\setr{w = Z}} \\
&= \sum_{u \in \bits^{|\{u \neq \1\}|}} (-1)^{\vec 1 \cdot u} (M^w\setr{w \neq \1})_{u} = O^w \,. \\
\end{align*}
\end{proof}

\section{Analysis of the question-succinct protocol}

\subsection{Consistency test and ``prover switching''}
\label{sec:prover-switching}

The following lemma shows that the consistency test (the $b=0$ case in \cref{prot:mixed-vs-pure} that just checks whether Alice's and Bob's answers are consistent) implies that on Alice's post-measurement state, Bob's measurement in the same basis has a definite outcome.
This is formalised by saying that Bob's measurement operator is close to identity (with a sign indicating the outcome) on Alice's post-measurement state.
\begin{lemma} \label{lem:pure_basis_sign}
Suppose a computationally efficient prover $P$ modelled as in \Cref{sec:modelling} wins with probability $1 - \eps$ in the compiled version of \cref{prot:mixed-vs-pure}.
Then for all $b \in \bits^n$ and $W \in \{X,Z\}$,
\begin{align*}
\sum_{\alpha} \norm{W(b) - (-1)^{\Dec(\alpha) \cdot b} \1}_{\psi^{\Enc(W)}_\alpha}^2 \leq O(\eps) \,.
\end{align*}
\end{lemma}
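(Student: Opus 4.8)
The plan is to run the standard ``consistency test implies definite outcome'' argument, specialised to the compiled setting; note that it is purely information-theoretic given the prover modelling of \Cref{sec:modelling} and uses no cryptographic indistinguishability. First I would isolate the relevant branch of \Cref{prot:mixed-vs-pure}. The coin flip lands on $b=0$ with probability $\tfrac12$, and the $b=1$ branch contributes at most $\tfrac12$ to the winning probability, so conditioned on $b=0$ the prover still wins with probability $\geq 1-2\eps$; since conditioned on $b=0$ the basis $W$ is uniform over the two-element set $\{X,Z\}$ and each choice accepts with probability at most $1$, for \emph{each} fixed $W \in \{X,Z\}$ we get
\[ \E_{sk}\; \E_{c \leftarrow \Enc_{sk}(W)} \sum_\alpha \Tr\!\bigl[ P^W_{\Dec_{sk}(\alpha)}\, \psi^c_\alpha \bigr] \;\geq\; 1 - 4\eps, \]
because in the $b=0$ branch the verifier accepts exactly when Bob's outcome $v$ equals $\Dec_{sk}(\alpha)$, which happens with probability $\Tr[P^W_{\Dec_{sk}(\alpha)}\psi^c_\alpha]$ given $c,\alpha$. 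Using $\sum_\alpha \Tr[\psi^c_\alpha] = 1$ for every $c$ and the shorthand $\psi^{\Enc(W)}_\alpha$ (with the implicit $\E_{sk}$ attached to both $\Dec_{sk}$ and the state), this rearranges to
\[ \sum_\alpha \Tr\!\bigl[ (\1 - P^W_{\Dec(\alpha)})\, \psi^{\Enc(W)}_\alpha \bigr] \;\leq\; 4\eps. \]

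Next I would bound the target quantity pointwise in $sk$ and $\alpha$. Writing $W(b) = \sum_u (-1)^{u\cdot b} P^W_u$ and abbreviating $u_0 = \Dec_{sk}(\alpha)$, we have
\[ W(b) - (-1)^{u_0 \cdot b}\1 \;=\; \sum_u \bigl( (-1)^{u\cdot b} - (-1)^{u_0 \cdot b} \bigr) P^W_u, \]
and since $\{P^W_u\}_u$ is a projective measurement, squaring this Hermitian operator gives $\sum_u \bigl( (-1)^{u\cdot b} - (-1)^{u_0 \cdot b} \bigr)^2 P^W_u$. Each coefficient equals $0$ when $u\cdot b = u_0\cdot b$ (in particular when $u=u_0$) and $4$ otherwise, so this is $\preceq 4\sum_{u \neq u_0} P^W_u = 4(\1 - P^W_{u_0})$. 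Hence for every $c$ and $\alpha$,
\[ \norm{W(b) - (-1)^{\Dec_{sk}(\alpha)\cdot b}\1}^2_{\psi^c_\alpha} \;\leq\; 4\,\Tr\!\bigl[ (\1 - P^W_{\Dec_{sk}(\alpha)})\, \psi^c_\alpha \bigr]. \]

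Finally I would sum over $\alpha$, take $\E_{sk}\,\E_{c \leftarrow \Enc_{sk}(W)}$ of both sides term by term — which on the left reproduces the implicit-expectation shorthand by linearity of the squared state-dependent norm in the state (\Cref{lem:state_dep_norm_props} \Cref{item:linearity}) — and plug in the bound from the first step to conclude $\sum_\alpha \norm{W(b) - (-1)^{\Dec(\alpha)\cdot b}\1}^2_{\psi^{\Enc(W)}_\alpha} \leq 16\eps = O(\eps)$. There is no real obstacle here: the only things to watch are the constant-factor bookkeeping when passing from the overall winning probability of \Cref{prot:mixed-vs-pure} to the conditional acceptance probability of the branch where $b=0$ and the basis is the fixed $W$, and keeping the implicit expectation over keys consistently attached to $\Dec_{sk}(\alpha)$ and $\psi^{\Enc_{sk}(W)}_\alpha$ throughout.
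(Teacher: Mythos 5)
Your proposal is correct and follows essentially the same route as the paper's proof: both reduce to the winning condition of the $b=0$ branch, $\sum_\alpha \Tr[P^W_{\Dec(\alpha)}\psi^{\Enc(W)}_\alpha] \geq 1-O(\eps)$, and then exploit the expansion $W(b)=\sum_u(-1)^{u\cdot b}P^W_u$ together with orthogonality of the projectors (the paper lower-bounds the cross term $\sum_\alpha(-1)^{\Dec(\alpha)\cdot b}\Tr[W(b)\psi^{\Enc(W)}_\alpha]$ by separating the $v=\Dec(\alpha)$ term, while you phrase the same computation as the operator inequality $(W(b)-(-1)^{\Dec(\alpha)\cdot b}\1)^2\preceq 4(\1-P^W_{\Dec(\alpha)})$). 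Your explicit constant-factor bookkeeping for conditioning on the branch is a minor presentational refinement of what the paper absorbs into $O(\eps)$.
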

\begin{proof}
Expanding out the definition of the state-dependent distance, we see that it suffices to show that 
\begin{align*}
\E_{c \leftarrow \Enc(W)} \sum_{\alpha} (-1)^{\Dec(\alpha) \cdot b}\Tr[W(b) A^c_\alpha \psi A^c_\alpha] \geq 1 - O(\eps) \,.
\end{align*}
The winning condition of the pure-basis case in \cref{prot:mixed-vs-pure} (i.e.~the $b=0$ case in the notation of \cref{prot:mixed-vs-pure}) implies 
\begin{align}
\E_{c \leftarrow \Enc(W)} \sum_{\alpha} \Tr[P^W_{\Dec(\alpha)} A^c_{\alpha} \psi A^c_{\alpha}] \geq 1 - O(\eps) \,. \label{eqn:win_cond1}
\end{align}
Since the prover's measurements are normalised, this is equivalent to 
\begin{align}
\E_{c \leftarrow \Enc(W)} \sum_{\alpha} \sum_{v \neq \Dec(\alpha)} \Tr[P^W_{v} A^c_{\alpha} \psi A^c_{\alpha}] \leq O(\eps) \,. \label{eqn:win_cond2}
\end{align}
Inserting the definition of $W(b)$: 
\begin{align*}
&\E_{c \leftarrow \Enc(W)} \sum_{\alpha} (-1)^{\Dec(\alpha) \cdot b}\Tr[W(b) A^c_\alpha \psi A^c_\alpha]\\
&= \E_{c \leftarrow \Enc(W)} \sum_{\alpha} \sum_{v} (-1)^{(\Dec(\alpha) + v) \cdot b}\Tr[P^W_v A^c_\alpha \psi A^c_\alpha] \\
&= \E_{c \leftarrow \Enc(W)} \sum_{\alpha} \Tr[P^W_{\Dec(\alpha)} A^c_\alpha \psi A^c_\alpha] + \E_{c \leftarrow \Enc(W)} \sum_{\alpha} \sum_{v \neq \Dec(\alpha)} (-1)^{(\Dec(\alpha) + v) \cdot b}\Tr[P^W_v A^c_\alpha \psi A^c_\alpha] \\
\intertext{Since $\Tr[P^W_v A^c_\alpha \psi A^c_\alpha] \geq 0$:}
&\geq \E_{c \leftarrow \Enc(W)} \sum_{\alpha} \Tr[P^W_{\Dec(\alpha)} A^c_\alpha \psi A^c_\alpha] - \E_{c \leftarrow \Enc(W)} \sum_{\alpha} \sum_{v \neq \Dec(\alpha)} \Tr[P^W_v A^c_\alpha \psi A^c_\alpha] \geq 1 - O(\eps) \,,
\end{align*}
where in the last line we used \cref{eqn:win_cond1} to bound the first term and \cref{eqn:win_cond2} to bound the second term.
\end{proof}

We can use \cref{lem:pure_basis_sign} to show that Bob's observables $W(b)$ commute with Alice's post-measurement state $\psi^{\Enc(W)}$.
This lemma can be seen as a substitute for the ``prover switching'' technique from the non-local games literature.
\begin{lemma} \label{lem:conj_invariant}
Suppose a computationally efficient prover $P$ modelled as in \Cref{sec:modelling} wins with probability $1 - \eps$ in the compiled version of \cref{prot:mixed-vs-pure}.
Then for all $b \in \bits^n$ and $W \in \{X,Z\}$, 
\begin{align*}
\norm{W(b) \psi^{\Enc(W)} W(b) - \psi^{\Enc(W)}}_1 \leq O(\sqrt{\eps}) \,.
\end{align*}
\end{lemma}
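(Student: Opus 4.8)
The plan is to derive \Cref{lem:conj_invariant} directly from \Cref{lem:pure_basis_sign}, which already says that on each of Alice's post-measurement branches $\psi^{\Enc(W)}_\alpha$ the observable $W(b)$ acts, in the state-dependent norm, like the scalar $(-1)^{\Dec(\alpha)\cdot b}$. Since conjugating a branch by a $\pm 1$ scalar leaves it unchanged, I expect that conjugating by $W(b)$ changes it only by an amount controlled by that state-dependent distance, and summing the branches back up will give the claim.

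Concretely, I would first write $\psi^{\Enc(W)} = \E_{sk}\E_{c\leftarrow\Enc_{sk}(W)}\sum_\alpha \psi^c_\alpha$, so that by the triangle inequality for the trace norm it suffices to bound $\E_{sk}\E_c\sum_\alpha \norm{W(b)\psi^c_\alpha W(b) - \psi^c_\alpha}_1$. Fixing $sk,c,\alpha$ and setting $s \deq (-1)^{\Dec_{sk}(\alpha)\cdot b}$, I would use $s^2 = 1$ to write $\psi^c_\alpha = s\,\psi^c_\alpha\,s$ and then apply the telescoping identity
\[ W(b)\,\psi^c_\alpha\,W(b) - s\,\psi^c_\alpha\,s = W(b)\,\psi^c_\alpha\,(W(b) - s\1) + (W(b) - s\1)\,\psi^c_\alpha\,s. \]
Bounding each summand by H\"older ($\norm{XY}_1 \le \norm{X}_2\norm{Y}_2$), using $\norm{W(b)}_\infty = 1$, the Hermiticity of $W(b) - s\1$ together with \Cref{rem:schatten_norm} to recognise $\norm{(\psi^c_\alpha)^{1/2}(W(b)-s\1)}_2 = \norm{W(b)-s\1}_{\psi^c_\alpha}$, and $\norm{(\psi^c_\alpha)^{1/2}}_2 = \sqrt{\Tr[\psi^c_\alpha]}$, this yields
\[ \norm{W(b)\psi^c_\alpha W(b) - \psi^c_\alpha}_1 \le 2\sqrt{\Tr[\psi^c_\alpha]}\cdot\norm{W(b) - s\1}_{\psi^c_\alpha}. \]

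Finally, I would sum over $\alpha$, take the expectation over $sk$ and $c$, and apply Cauchy--Schwarz: the first factor is $\sqrt{\E_{sk}\E_c\sum_\alpha \Tr[\psi^c_\alpha]} = \sqrt{\E_{sk}\E_c\Tr[\psi^c]} = 1$ since the prover's operators form a measurement and $\psi$ is normalised, and the second factor is, by the linearity of the squared state-dependent norm in the state (\Cref{lem:state_dep_norm_props} \Cref{item:linearity}) and the convention that the expectation over keys is implicit, exactly $\left(\sum_\alpha\norm{W(b) - (-1)^{\Dec(\alpha)\cdot b}\1}_{\psi^{\Enc(W)}_\alpha}^2\right)^{1/2}$, which is $O(\sqrt\eps)$ by \Cref{lem:pure_basis_sign}. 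This gives $\norm{W(b)\psi^{\Enc(W)}W(b) - \psi^{\Enc(W)}}_1 \le O(\sqrt\eps)$. The only point to handle with care --- rather than a genuine obstacle --- is the bookkeeping over encryption keys: the sign $s$ depends on both $sk$ and $\alpha$, but since all we use is $s^2 = 1$ the telescoping step is valid for every key, and the final Cauchy--Schwarz is taken over the joint distribution of $sk$, $c$ and the index $\alpha$. Everything else is a routine conversion of a squared-state-dependent-norm estimate into a trace-norm estimate.
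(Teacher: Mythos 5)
Your proof is correct and follows essentially the same route as the paper's: both reduce via the triangle inequality to per-branch bounds of the form $\norm{(W(b)-(-1)^{\Dec(\alpha)\cdot b}\1)\,\psi^c_\alpha}_1$, convert these to the squared state-dependent norm via Cauchy--Schwarz (your H\"older-plus-Cauchy--Schwarz step is exactly the content of \Cref{lem:replace_on_state}, which the paper cites instead of re-deriving), and conclude with \Cref{lem:pure_basis_sign}. The only cosmetic difference is that the paper splits $W(b)\psi W(b)-\psi$ using unitary invariance and dagger-invariance of the trace norm where you use a telescoping identity; both yield the same factor-of-two bound.
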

\begin{proof}
We first use the triangle inequality:
\begin{align*}
&\norm{W(b) \psi^{\Enc(W)} W(b) - \psi^{\Enc(W)}}_1 \\
&\leq \sum_{\alpha} \norm{W(b) \psi^{\Enc(W)}_\alpha W(b) - \psi^{\Enc(W)}_\alpha}_1 \\
&\leq \sum_{\alpha} \norm{W(b) \psi^{\Enc(W)}_\alpha W(b) - (-1)^{\Dec(\alpha) \cdot b}\psi^{\Enc(W)}_\alpha W(b)}_1 + \norm{(-1)^{\Dec(\alpha) \cdot b} \psi^{\Enc(W)}_\alpha W(b) - \psi^{\Enc(W)}_\alpha}_1 \\
\intertext{For the first term, we use the unitary invariance of the 1-norm, and for the second term we use that the 1-norm is invariant under taking the dagger (and $W(b)$ and $\psi^c_\alpha$ are Hermitian):}
&= \sum_{\alpha} \norm{W(b) \psi^{\Enc(W)}_\alpha - (-1)^{\Dec(\alpha) \cdot b}\psi^{\Enc(W)}_\alpha}_1 + \norm{ W(b) \psi^{\Enc(W)}_\alpha - (-1)^{\Dec(\alpha) \cdot b} \psi^{\Enc(W)}_\alpha}_1 \\
&= 2 \sum_{\alpha} \norm{(W(b) - (-1)^{\Dec(\alpha) \cdot b}) \psi^{\Enc(W)}_\alpha}_1 \\
\intertext{We can now apply \cref{lem:pure_basis_sign} and \cref{lem:replace_on_state}:}
&\leq O(\sqrt \eps) \,.
\end{align*}
\end{proof}

\subsection{Analysis of compiled Pauli braiding test: obtaining the group relations}
\begin{lemma}\label{thm:pauli-braiding-approx-phase}
Suppose a computationally efficient prover $P$ modelled as in \Cref{sec:modelling} wins with probability $1 - \eps$ in the compiled version of \Cref{prot:pauli-braiding}.
Then the prover's observables satisfy the following properties for any Alice question $q \in \cQ_A$:
\begin{align}
\E_{a, b \in \bits^n} \norm{Z(a) X(b) - (-1)^{a \cdot b} X(b) Z(a)}_{\psi^{\Enc(q)}}^2 \leq O(\eps) + \negl(\lambda) \,. \label{eqn:unif_commutation}
\end{align}
\end{lemma}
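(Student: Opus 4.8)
The plan is to extract the commutation relation \eqref{eqn:unif_commutation} from the success of the compiled Pauli braiding test by combining the analysis of the commutation subgame (which directly gives commutation on a $\lambda$-biased set) with the succinctness-lifting lemma \Cref{lem:dls-anticom}, and then transferring the resulting bound from the state $\psi^{\Enc(\comm,r_a,r_b)}$ (or whichever Alice question the subgame uses) to an arbitrary Alice question $q$ via computational indistinguishability (\Cref{lem:alice_states_indist} together with \Cref{lem:computational_state_switching}).

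First I would unpack what winning the commutation subgame (\Cref{prot:comgame}) with probability $1-O(\eps)$ tells us. Bob is asked a pure basis question $W\in\{X,Z\}$ and returns $v\in\bits^n$; Alice is asked $(\comm,r_a,r_b)$ and returns single bits $u_a,u_b$. The acceptance condition forces, on Alice's post-measurement state, that $\prod_{i:a_i=1} v_i = u_a$ when $W=Z$ and $\prod_{i:b_i=1}v_i = u_b$ when $W=X$; phrased in terms of observables, $Z(a)$ (built from Bob's $P^Z_u$) agrees with Alice's $\sigma_Z(a)$-observable on the relevant state, and similarly on the $X$ side. The key structural fact I would invoke is \Cref{lem:O_factorises}-style reasoning on the Alice side plus the standard consistency argument (analogous to \Cref{lem:pure_basis_sign}): since honest Alice measures $\sigma_Z(a)$ and $\sigma_X(b)$ sequentially, the anticommutation subgame (Magic Square, \Cref{prot:acomgame}) certifies that on the appropriate post-measurement state $Z(a)X(b)\approx (-1)^{a\cdot b}X(b)Z(a)$, but \emph{only for $a,b\in S$}, the $\lambda$-biased set. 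This gives, for the Alice question used in the subgame,
\[
\E_{a,b\in S}\norm{Z(a)X(b)-(-1)^{a\cdot b}X(b)Z(a)}_{\psi^{\Enc(q_0)}}^2 \leq O(\eps)+\negl(\lambda).
\]
I would then apply \Cref{lem:dls-anticom} with $\rho=\psi^{\Enc(q_0)}$: the hypothesis $\rho\capprox_\delta \E_a W(a)\rho W(a)$ for $W\in\{X,Z\}$ is exactly the conclusion of \Cref{lem:conj_invariant} (with $\delta=O(\sqrt\eps)$), and $Z(a),X(a)$ are exact representations of $\Z_2^n$ by construction. This lifts the bound from averaging over $S$ to averaging over all of $\bits^n$, giving \eqref{eqn:unif_commutation} for the specific Alice question $q_0$.

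Finally, to get the statement for an \emph{arbitrary} $q\in\cQ_A$: the left-hand side of \eqref{eqn:unif_commutation} is of the form $\E_{a,b}\norm{U_{a}U'_{b} - (-1)^{a\cdot b} U'_b U_a}_{\psi^{\Enc(q)}}^2$ where $Z(a),X(b)$ are efficient unitaries and the sign is efficiently computable; so \Cref{lem:computational_state_switching}, combined with $\psi^{\Enc(q)}\capprox\psi^{\Enc(q_0)}$ from \Cref{lem:alice_states_indist}, shows the quantity is the same up to $\negl(\lambda)$ regardless of which Alice question we condition on. The main obstacle I anticipate is the bookkeeping in the first step: carefully setting up the Magic Square analysis in the compiled (state-dependent, sequential-prover) setting so that it outputs a clean anticommutation statement of the form needed, and making sure the $\comm$-test analysis correctly yields that Bob's $W(a)$ agree with Alice's honest $\sigma_W$ observables on $\psi^{\Enc(q_0)}_\alpha$ — this is where prover-switching is unavailable and one must instead route everything through \Cref{lem:pure_basis_sign}, \Cref{lem:conj_invariant}, and \Cref{lem:close_strats}. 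The \Cref{lem:dls-anticom} application and the final indistinguishability step are comparatively routine.
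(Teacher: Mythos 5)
Your proposal is correct and follows essentially the same route as the paper: obtain the (anti-)commutation bounds on the $\lambda$-biased set from the two subtests, lift to all of $\bits^n$ via \Cref{lem:dls-anticom} (verifying its hypothesis through \Cref{lem:conj_invariant}), and transfer to an arbitrary Alice question by \Cref{lem:alice_states_indist} and \Cref{lem:computational_state_switching}. The only difference is that the ``bookkeeping'' you flag as the main obstacle is dispatched in the paper by directly citing existing compiled-game analyses (Lemma~23 of~\cite{NZ23} for the commutation test and Theorem~5.7 of~\cite{CMMNPSWZ24} for the Magic Square anticommutator), rather than rederiving them.
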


To prove this lemma, we first examine what each subtest of the Pauli braiding test lets us conclude.

\begin{lemma}
\label{lem:anticom-test}
Suppose a computationally efficient prover $P$ succeeds with probability $1 - \eps$ in the compiled anticommutation test (\Cref{prot:acomgame}) with inputs $r_a, r_b$, and let $B^{r_a}$ and $B^{r_b}$ be the Bob observables corresponding to these questions. Then for any Alice question $x \in \cQ_A$, there exists a negligible function $\eta(\lambda)$ (depending on $P$ and on $x, r_a, r_b$) such that
\[ \| \{B^{r_a}, B^{r_b}\} \|_{\psi^{\Enc(x)}}^2 \leq O(\eps) + \eta(\lambda)\,. \]
\end{lemma}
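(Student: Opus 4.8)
The plan is to reduce the statement to a claim about one fixed Alice question and then carry out the rigidity analysis of the Mermin--Peres magic square game inside the compiled-prover model of \Cref{sec:modelling}. For the reduction, note that $\|\{B^{r_a},B^{r_b}\}\|_{\psi^{\Enc(x)}}^2 = \|B^{r_a}B^{r_b} - (-1)\, B^{r_b}B^{r_a}\|_{\psi^{\Enc(x)}}^2$ is the squared state-dependent norm of a fixed linear combination of products of \emph{efficient} unitaries (recall $B^{r_a} = Z(a)$ and $B^{r_b} = X(b)$ for the strings $a,b$ indexed by $r_a,r_b$), and by \Cref{lem:alice_states_indist} all Alice post-measurement states $\psi^{\Enc(x)}$ are computationally indistinguishable; hence \Cref{lem:computational_state_switching} lets us prove the bound for $\psi^{\Enc(x)}$ with $x$ any convenient Alice question, and to transport freely between the post-measurement states of the various magic-square questions of \Cref{prot:acomgame} as the argument requires.

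Next I would extract the two ingredients behind the honest magic square. \emph{(i) Consistency.} For each cell $c$ of the $3\times 3$ grid let $B_c$ be the binary observable Bob applies on the question he receives when cell $c$ is addressed (so $B_2 = B^{r_a}$, $B_4 = B^{r_b}$, and for the other seven cells $B_c$ is the single-cell magic-square observable), and for each line (row or column) $\ell$ let $\{A^\ell_{(1)},A^\ell_{(2)},A^\ell_{(3)}\}$ be the binary observables of Alice's (projective, by Naimark) measurement on question $\ell$; these pairwise commute \emph{exactly} since Alice measures the triple jointly. From the Bob-side acceptance condition, arguing exactly as in the proof of \Cref{lem:pure_basis_sign} (projectivity of $B_c$ and nonnegativity of traces of PSD operators), one gets that on Alice's post-measurement state Bob's cell observable agrees with the sign of Alice's reported outcome for that cell; combining the two lines through a common cell then yields that Bob's cell observables are consistent however the cell is addressed and that Alice's position-$k$ observables for the two lines through a cell agree, all in the state-dependent norm and transported onto a common state via \Cref{lem:computational_state_switching} and \Cref{lem:alice_states_indist}. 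This is the step that plays the role of ``prover switching'' in the compiled model. \emph{(ii) Alice's line relations.} From the Alice-side acceptance condition (the product of her three reported signs is $+1$ for a row and $-1$ for a column) together with the exact commutation of the jointly-measured triples, $A^\ell_{(1)}A^\ell_{(2)}A^\ell_{(3)} \approx \pm\1$ on the relevant state, so each of Alice's three line observables is approximately the product of the other two.

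Finally I would run the standard magic-square computation with these relations: use (i) to replace $B^{r_a}=B_2$ and $B^{r_b}=B_4$ by Alice observables on suitably chosen lines, use (ii) to rewrite each as a product of Alice observables attached to the \emph{other} cells of its line, and use the exact commutations within each measured triple to reorganise the resulting word; keeping track of signs, the product of the row relations contributes $+1$ while the product of the column relations contributes $-1$, and this mismatch is exactly what forces $B^{r_a}B^{r_b} + B^{r_b}B^{r_a}\approx 0$. Collecting the $O(\sqrt\eps)$ errors along the chain and the $\negl(\lambda)$ terms from the state transports gives $\|\{B^{r_a},B^{r_b}\}\|_{\psi^{\Enc(x)}}^2 \le O(\eps)+\eta(\lambda)$, which with the reduction above is the claim.

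I expect the main obstacle to be ingredient (i) and, more broadly, the state bookkeeping. In the nonlocal setting all of these consistency and commutation facts come from moving operators between the provers' tensor factors; here there is a single register acted on sequentially, with an arbitrary prover unitary interposed after Alice's measurement, so every approximate identity is naturally pinned to a particular Alice post-measurement state. The substitutions in the last step need them all on one common state, and moving them there is legitimate only because the observables involved are efficient (so \Cref{lem:computational_state_switching} applies) and Alice's post-measurement states are computationally indistinguishable (\Cref{lem:alice_states_indist}); carefully tracking which norm each error lives in, and how the errors accumulate through the chain, is where essentially all of the work goes.
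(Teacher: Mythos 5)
Your proposal takes a genuinely different route from the paper: the paper does not prove the compiled Magic Square rigidity statement at all, but simply invokes Theorem~5.7 of~\cite{CMMNPSWZ24}, which bounds $\sum_\alpha \|\{B^2,B^4\}\ket{\psi^c_\alpha}\|_2^2$ by $17280\,\eps + \eta(\lambda)$, and then observes that this quantity is exactly the average squared state-dependent norm in the lemma. What you have sketched is, in effect, a from-scratch rederivation of that cited theorem: the decomposition into (i) Bob--Alice consistency on Alice's post-measurement states, (ii) Alice's exact line relations from joint measurement of commuting triples, and (iii) the sign-chasing contradiction between rows and columns is precisely the standard Magic Square rigidity argument, and your substitutes for prover switching (\Cref{lem:alice_states_indist} plus \Cref{lem:computational_state_switching}, with the consistency step modelled on \Cref{lem:pure_basis_sign}) are the right compiled-setting tools. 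The trade-off is clear: the paper's citation is short but opaque, while your route makes the argument self-contained at the cost of a long chain of approximate identities whose error bookkeeping you have only gestured at.

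One place where your sketch glosses over the genuinely hard part: the consistency statements you extract in step (i) have the form $\|B_j - (-1)^{\Dec(\alpha)_k}\1\|^2_{\psi^{\Enc(\ell)}_\alpha}$ being small summed over $\alpha$, and the sign depends on the \emph{decryption} of Alice's answer. Such statements cannot be transported to a different Alice post-measurement state directly via \Cref{lem:computational_state_switching}, because the sign is not efficiently computable by a distinguisher without the secret key. The paper's pattern for handling this (cf.\ \Cref{lem:conj_invariant}) is to first derive a decryption-free consequence --- e.g.\ approximate invariance of $\psi^{\Enc(\ell)}$ under conjugation by Bob's observable, in trace norm --- and only then invoke indistinguishability to move between Alice questions. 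Your plan acknowledges that ``the state bookkeeping'' is where the work goes, but this specific obstruction (decryption-dependent signs blocking naive state switching) is the crux of why the compiled analysis in~\cite{CMMNPSWZ24} is substantially more involved than the nonlocal one, and it needs to be addressed explicitly before the sign-chasing in your final step can be carried out on a single common state.
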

\begin{proof}
    In Theorem~5.7 of~\cite{CMMNPSWZ24}, essentially the same statement is shown for the Magic Square game. In that theorem, it is shown that for the Bob's observables $B^2, B^4$ used for cells $2$ and $4$ in the square, the expected anticommutator is bounded by
    \[ \clef \E_{c \leftarrow \Enc(x)} \sum_\alpha \| \{B^2, B^4\} \ket{\psi^c_\alpha} \|_2^2 \leq 17280 \eps + \eta(\lambda).\]
    In our setting, $B^2 = B^{r_a}$ and $B^{4} = B^{r_b}$, and the quantity on the left-hand side is exactly the average squared state-dependent norm, by~\Cref{def:state_dep_inner_product}.
\end{proof}

\begin{lemma}[Lemma~23 of~\cite{NZ23}]
\label{lem:com-test}
    Suppose a computationally efficient prover $P$ modelled as in \Cref{sec:modelling} succeeds with probability $1-\eps$ in the compiled commutation test (\Cref{prot:comgame}) with inputs $r_a, r_b$, and let $B^{r_a}, B^{r_b}$ be Bob's observables corresponding to the questions $r_a, r_b$. Then for any Alice question $x \in \cQ_A$, there exists a negligible function $\eta(\lambda)$ (depending on $P$ and on $x, r_a, r_b$) such that
    \[ \| [ B^{r_a}, B^{r_b}] \|_{\psi^{\Enc(x)}}^2 \leq O(\eps) + \eta(\lambda)\,.\]
\end{lemma}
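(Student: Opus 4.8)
The plan is to run the standard nonlocal-games analysis of a consistency test, using the computational indistinguishability of Alice's ciphertexts (\Cref{lem:alice_states_indist}) in place of the ``prover switching'' available in the nonlocal setting. Write $a,b$ for the strings that $r_a,r_b$ index. By the structure of \Cref{prot:comgame}, Bob's relevant observables are $B^{r_a} = Z(a) = \sum_v (-1)^{v\cdot a} P^Z_v$ (read off from his $Z$-basis measurement) and $B^{r_b} = X(b) = \sum_v (-1)^{v\cdot b} P^X_v$ (from his $X$-basis measurement), and the verifier's acceptance conditions compare the $\pm 1$ value of $Z(a)$ against the bit $u_a$ Alice reports and that of $X(b)$ against $u_b$. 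Set $\beta(\alpha) = (-1)^{u_a}$ and $\gamma(\alpha) = (-1)^{u_b}$, where $(u_a,u_b) = \Dec(\alpha)$; since $P$ is efficient, $Z(a),X(b)$ are efficient observables.

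First I would extract a ``definite outcome'' statement from each branch. As $W \in \{X,Z\}$ is chosen with probability $\tfrac12$ each, a prover winning with probability $1-\eps$ wins each branch with probability $1 - O(\eps)$. On the $W=Z$ branch the acceptance probability for a fixed ciphertext $c$ and outcome $\alpha$ is $\Tr\!\big[\tfrac{\1 + \beta(\alpha) Z(a)}{2}\,\psi^c_\alpha\big]$, so averaging over $c \leftarrow \Enc(\comm,r_a,r_b)$ and rearranging exactly as in the proof of \Cref{lem:pure_basis_sign} gives
\[ \sum_\alpha \norm{Z(a) - \beta(\alpha)}_{\psi^{\Enc(\comm,r_a,r_b)}_\alpha}^2 \leq O(\eps), \]
and the $W=X$ branch gives the same bound with $(Z(a),\beta)$ replaced by $(X(b),\gamma)$. (The implicit expectation over keys is harmless, since the squared state-dependent norm is linear in the state, \Cref{lem:state_dep_norm_props}~\Cref{item:linearity}, and $\beta,\gamma$ are just functions of $\Dec_{sk}(\alpha)$.)

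Next I would turn these into a commutator bound on $\psi^{\Enc(\comm,r_a,r_b)} = \sum_\alpha \psi^{\Enc(\comm,r_a,r_b)}_\alpha$. Fix $\alpha$, abbreviate $\psi_\alpha = \psi^{\Enc(\comm,r_a,r_b)}_\alpha$, $c=\beta(\alpha)$, $d=\gamma(\alpha)$. Writing $Z(a)X(b) - cd\1 = (Z(a)-c)X(b) + c(X(b)-d)$ and $X(b)Z(a) - dc\1 = (X(b)-d)Z(a) + d(Z(a)-c)$, then estimating each summand with the triangle inequality and $\norm{\cdot}_\infty$-submultiplicativity (\Cref{lem:state_dep_norm_props}~\Cref{item:submult_infty}) — inserting, e.g., $\pm(Z(a)-c)d$ to get $\norm{(Z(a)-c)X(b)}_{\psi_\alpha} \leq \norm{Z(a)-c}_{\psi_\alpha} + 2\norm{X(b)-d}_{\psi_\alpha}$, using that $c,d$ are scalars of modulus one — I bound both $\norm{Z(a)X(b)-cd\1}_{\psi_\alpha}$ and $\norm{X(b)Z(a)-cd\1}_{\psi_\alpha}$ by $O(\norm{Z(a)-c}_{\psi_\alpha} + \norm{X(b)-d}_{\psi_\alpha})$. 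The two scalar terms $cd\1$ coincide (scalars commute), so subtracting yields
\[ \norm{[Z(a),X(b)]}_{\psi_\alpha} \leq O\big(\norm{Z(a)-\beta(\alpha)}_{\psi_\alpha} + \norm{X(b)-\gamma(\alpha)}_{\psi_\alpha}\big); \]
squaring (\Cref{lem:state_dep_norm_props}~\Cref{item:triangle_ineq_sq}) and summing over $\alpha$ against the two branch bounds gives $\norm{[Z(a),X(b)]}_{\psi^{\Enc(\comm,r_a,r_b)}}^2 \leq O(\eps)$.

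Finally, to replace $\psi^{\Enc(\comm,r_a,r_b)}$ by $\psi^{\Enc(x)}$ for an arbitrary Alice question $x \in \cQ_A$, I would apply \Cref{lem:computational_state_switching} with a point distribution, $k=\ell=2$, $s=0$, and efficient unitaries $Z(a),X(b)$, whose hypothesis $\psi^{\Enc(x)} \capprox \psi^{\Enc(\comm,r_a,r_b)}$ is precisely \Cref{lem:alice_states_indist}; this changes $\norm{[Z(a),X(b)]}^2$ by at most a negligible $\eta(\lambda)$, which depends on $P$ (through Bob's measurements) and on $x,r_a,r_b$ (through which observables and which Alice question appear). Combining with the previous paragraph gives $\norm{[B^{r_a},B^{r_b}]}_{\psi^{\Enc(x)}}^2 \leq O(\eps) + \eta(\lambda)$, which is the claim; this is essentially Lemma~23 of~\cite{NZ23}. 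The only genuinely ``compiled'' ingredient is this last step — in the nonlocal world it would be prover switching — and the only place that needs care is keeping track of the conditioning on Alice's outcome $\alpha$ (and the hidden key expectation) throughout the estimates; the rest is routine consistency-test bookkeeping.
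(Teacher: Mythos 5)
Your proposal is correct and follows essentially the same route as the paper: the paper's proof simply cites Lemma~23 of~\cite{NZ23} for the bound $\| [Z(a),X(b)] \|_{\psi^{\Enc((\comm,r_a,r_b))}}^2 \leq O(\eps)$ and then performs exactly your final state-switching step via \Cref{lem:computational_state_switching} and \Cref{lem:alice_states_indist}. The only difference is that you re-derive the content of the cited lemma from scratch (the ``definite outcome'' bounds in the style of \Cref{lem:pure_basis_sign} plus the scalar-insertion commutator estimate), which is a sound, self-contained version of the same consistency-test argument.
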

\begin{proof}
    Lemma~23 of~\cite{NZ23} shows that, for all $(r_a, r_b)$,
    \[ \| [B^{r_a}, B^{r_b}] \|_{\psi^{\Enc((\comm, r_a, r_b))}}^2 \leq O(\eps).\]
    To obtain the conclusion of the lemma, it suffices to use computational indistinguishability together with the fact that the commutator is efficiently measurable, which follows from \Cref{lem:computational_state_switching}.
\end{proof}

\begin{proof}[Proof of \Cref{thm:pauli-braiding-approx-phase}]
We are given that $P$ succeeds with probability $1 - \eps$ in \Cref{prot:pauli-braiding}. This means that it holds that
\begin{align*}
    &\sum_{a,b \in S : a \cdot b = 0} \frac{1}{|S|^2} \cdot (1 - \Pr[P\text{ passes commutation on }Z(a), X(b)])\nonumber \\
    &\qquad +  \sum_{a,b \in S : a \cdot b = 1} \frac{1}{|S|^2} \cdot (1-\Pr[P\text{ passes anticommutation on }Z(a), X(b)]) \nonumber \\
    &\qquad \leq \eps.
\end{align*}
For any given $a,b$, suppose $a \cdot b = 0$, and let $1-\eps_{a,b} =\Pr[P\text{ passes commutation on }Z(a), X(b)]$. Then by \Cref{lem:com-test}, it holds that
\[ \| [Z(a), X(b) ] \|_{\psi^{\Enc(q)}}^2 \leq O(\eps_{a,b}) + \eta_{a,b}(\lambda), \]
and $O(\cdot)$ is a convex function.

Likewise, suppose $a \cdot b = 1$, and let $1-\eps_{a,b} =\Pr[P\text{ passes anticommutation on }Z(a), X(b)]$. Then by \Cref{lem:anticom-test}, it holds that
\[ \| \{Z(a), X(b) \} \|_{\psi^{\Enc(q)}}^2 \leq O(\eps_{a,b}) + \eta_{a,b}(\lambda), \]
and $O(\cdot)$ is a convex function.

Putting these relations together, we have
\[ \sum_{a,b \in S} \frac{1}{|S|^2} \| Z(a) X(b) - (-1)^{a \cdot b} X(b)Z(a) \|_{\psi^{\Enc(q)}}^2 \leq O(\eps) + \eta(\lambda),\]
where $\eta(\lambda) = \max_{a,b} \eta_{a,b}(\lambda)$. 
Finally, applying \Cref{lem:dls-anticom}, we get
\[ \E_{a,b \in \{0,1\}^n} \| Z(a) X(b) - (-1)^{a \cdot b} X(b) Z(a) \|^2_{\psi^{\Enc(q)}} \leq \frac{1}{(1 - \lambda)^2} O(\eps) + \eta'(\lambda),\]
where $\eta'$ is some negligible function. Here we have used that $\psi^{\Enc(q)} \capprox_{\negl} \E_{a \in S} W(a) \psi^{\Enc(q)} W(a)$ for $W \in \{X, Z\}$, which follows from \Cref{lem:conj_invariant}.
\end{proof}

\subsection{Analysis of compiled mixed-vs-pure basis test}

\begin{theorem} \label{thm:mixed_basis_final}
Suppose a computationally efficient prover $P$ modelled as in \Cref{sec:modelling} wins with probability $1 - \eps$ in both~\Cref{prot:pauli-braiding} and \cref{prot:mixed-vs-pure}. Then there exists a Hilbert space $\cH' = \C^{2^n} \otimes \cH_{\rm aux}$ (where $n$ is the number of qubits an honest prover would use) and an isometry $V: \cH \to \cH'$ such that for every Alice question $q \in \cQ_A$
\begin{align*}
\E_{w \sim D} \E_{a \in \bits^n} \| O^w(a) - V^\dagger (\sigma_w(a) \otimes \1_{\rm aux}) V \|_{\psi^{\Enc(q)}}^2 \leq O(\sqrt\eps) + \negl(\lambda) \,.
\end{align*}
\end{theorem}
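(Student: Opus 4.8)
The strategy is to reduce the statement to the pure-basis rounding lemma (\Cref{lem:pauli_rounding_all_dist}): I will show that on Alice's post-measurement states — and hence, by \Cref{lem:alice_states_indist}, on every $\psi^{\Enc(q)}$ — Bob's mixed-basis observable $O^w(a)$ factors as a product of his pure-basis observables, which \Cref{lem:pauli_rounding_all_dist} then rounds over the (non-uniform) distribution induced by $D$. For notation, given $w \in \{\1,X,Z\}^n$ and $a \in \bits^n$, let $c = c(w,a) \in \bits^n$ agree with $a$ on coordinates $i$ with $w_i = Z$ and be $0$ elsewhere, and $d = d(w,a)$ be defined analogously for $w_i = X$. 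Then $c$ and $d$ have disjoint support, so $\sigma_w(a) = \sigma_Z(c)\,\sigma_X(d)$ with no phase correction; moreover, by \Cref{rem:observable_redundancy} $O^w_Z(a) = O^w(c)$ and $O^w_X(a) = O^w(d)$, and since $\{O^w(\cdot)\}$ satisfies $O^w(a')O^w(a'') = O^w(a'+a'')$ (immediate from orthogonality of $\{M^w_u\}_u$, generalizing \Cref{lem:O_factorises}) with $c$ and $d$ summing to $a$ off the identity coordinates, we get $O^w(a) = O^w_Z(a)\,O^w_X(a) = O^w_X(a)\,O^w_Z(a)$, the two factors commuting.

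First I would collect the inputs. Since $P$ wins \Cref{prot:pauli-braiding} with probability $1-\eps$, \Cref{thm:pauli-braiding-approx-phase} gives the uniform commutation relation $\E_{a,b}\|Z(a)X(b) - (-1)^{a\cdot b}X(b)Z(a)\|_{\psi^{\Enc(q)}}^2 \le O(\eps)+\negl$ for every $q \in \cQ_A$, and $\{Z(a)\}_a$, $\{X(b)\}_b$ are \emph{exact} representations of $\Z_2^n$ directly from their definition and orthogonality of $\{P^W_u\}_u$. Since $P$ wins \Cref{prot:mixed-vs-pure} with probability $1-\eps$, \Cref{lem:conj_invariant} gives $\|W(b)\psi^{\Enc(W)}W(b) - \psi^{\Enc(W)}\|_1 \le O(\sqrt\eps)$ for all $b$ and $W$, which combined with \Cref{lem:alice_states_indist} and the fact that conjugating by the efficient unitary $W(b)$ preserves indistinguishability yields $W(b)\psi^{\Enc(q)}W(b) \capprox_{O(\sqrt\eps)} \psi^{\Enc(q)}$ for every $q$. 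Feeding these into \Cref{lem:pauli_rounding_all_dist} (applied once, and noting that its output isometry $V:\cH\to\C^{2^n}\otimes\cH_{\rm aux}$, obtained by Stinespring dilation from the function $f$ built out of $\{Z(a),X(b)\}$ alone, does not depend on the state and hence is the \emph{same} for all $q$, while the lemma's hypotheses hold on $\psi^{\Enc(q)}$ for every $q$) gives a single $V$ such that for every $q$ and every distribution $\nu$ on $\bits^n\times\bits^n$,
\[ \E_{(s,t)\sim\nu}\big\| Z(s)X(t) - V^\dagger(\sigma_Z(s)\sigma_X(t)\otimes\1_{\rm aux})V\big\|_{\psi^{\Enc(q)}}^2 \le O(\sqrt\eps)+\negl. \]
(Tracking the error through \Cref{lem:pauli_rounding_all_dist} shows that feeding in the $O(\sqrt\eps)$-approximate conjugation invariance, rather than an exact one, only worsens its conclusion from $O(\eps)$ to $O(\sqrt\eps)$.)

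Next I would prove the ``mixed $\approx$ pure'' estimates: for every $q$,
\[ \E_{w\sim D}\E_a \big\| O^w_X(a) - X(d)\big\|_{\psi^{\Enc(q)}}^2 \le O(\eps)+\negl \quad\text{and}\quad \E_{w\sim D}\E_a \big\| O^w_Z(a) - Z(c)\big\|_{\psi^{\Enc(q)}}^2 \le O(\eps)+\negl. \]
The winning condition of the $b=1$ branch of \Cref{prot:mixed-vs-pure} asserts that the outcome of Bob's mixed measurement $M^w$, restricted to coordinates where $w_i = W$, equals the outcome of Alice's $W$-basis measurement there; running the argument of \Cref{lem:pure_basis_sign} with only the ``bad outcome set'' changed gives $\E_{w\sim D}\sum_\alpha \| O^w_X(a) - (-1)^{d\cdot\Dec(\alpha)}\1\|_{\psi^{\Enc(X)}_\alpha}^2 \le O(\eps)$ for all $a$, and combining with \Cref{lem:pure_basis_sign} applied to the string $d$ (and the squared triangle inequality from \Cref{lem:state_dep_norm_props}) replaces $(-1)^{d\cdot\Dec(\alpha)}\1$ by $X(d)$; summing over $\alpha$ (linearity of the squared norm in the state, $\sum_\alpha \psi^{\Enc(X)}_\alpha = \psi^{\Enc(X)}$) and then switching $\psi^{\Enc(X)}\to\psi^{\Enc(q)}$ via \Cref{lem:computational_state_switching} and \Cref{lem:alice_states_indist} (valid since $O^w_X(a), X(\cdot)$ are efficient and $D$ is efficiently sampleable) gives the first estimate, and symmetrically the second. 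To assemble, fix $q$; from $O^w(a) = O^w_Z(a)O^w_X(a)$, the triangle inequality, left-invariance of the state-dependent norm under the unitary $O^w_Z(a)$, and the identity $\|AB\|_\psi = \|A\|_{B\psi B^\dagger}$ (both from \Cref{lem:state_dep_norm_props}),
\[ \big\| O^w(a) - Z(c)X(d)\big\|_{\psi^{\Enc(q)}} \le \big\| O^w_X(a) - X(d)\big\|_{\psi^{\Enc(q)}} + \big\| O^w_Z(a) - Z(c)\big\|_{X(d)\psi^{\Enc(q)}X(d)}; \]
squaring, averaging over $w\sim D$ and $a$, bounding the first term by the estimate above and the second by first replacing $X(d)\psi^{\Enc(q)}X(d)$ with $\psi^{\Enc(q)}$ (the fixed-string conjugation invariance, applied inside \Cref{lem:computational_state_switching}) and then applying the estimate, one gets $\E_{w,a}\| O^w(a) - Z(c)X(d)\|_{\psi^{\Enc(q)}}^2 \le O(\sqrt\eps)+\negl$. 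A final squared triangle inequality with the rounding bound of the previous paragraph — instantiated with $\nu$ the law of $(c(w,a),d(w,a))$ when $w\sim D$ and $a$ uniform, and using $\sigma_Z(c)\sigma_X(d) = \sigma_w(a)$ — yields the claimed bound.

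The step I expect to be the main obstacle, and the reason the argument is structured this way, is the ``$VV^\dagger$ problem'': one cannot round $O^w_X(a)$ and $O^w_Z(a)$ separately and then multiply, because $V$ is merely an isometry, so $VV^\dagger \ne \1$. This is circumvented by using \Cref{lem:pauli_rounding_all_dist} to round the \emph{product} $Z(c)X(d)$ in one shot over an arbitrary distribution — exactly the scenario needed here, since $D$ is not uniform. A secondary subtlety is that the factorization $O^w(a) = O^w_Z(a)O^w_X(a)$ must be used in the $Z$-then-$X$ order matching \Cref{lem:pauli_rounding_all_dist}, so that we never need to commute two of Bob's (cheating) operators past each other over the non-uniform $D$-induced distribution, where no commutation relation is available.
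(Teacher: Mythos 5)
Your proposal is correct and follows essentially the same route as the paper: compare $O^w(a)$ to the product of pure-basis observables $Z(c)X(d)$ using the sign lemmas extracted from the consistency and mixed-basis branches of \Cref{prot:mixed-vs-pure} (the paper's \Cref{lem:pure_basis_sign}, \Cref{lem:mixed_basis_sign}, \Cref{lem:O_indiv_close}, \Cref{lem:mixed_final}), then round that product over the non-uniform $D$-induced distribution via \Cref{thm:pauli-braiding-approx-phase} fed into \Cref{lem:pauli_rounding_all_dist}, exactly circumventing the $VV^\dagger$ problem as the paper does. The only deviations are cosmetic: you work with explicit strings $c(w,a), d(w,a)$ instead of the paper's $D'$ notation, and in the analogue of \Cref{lem:mixed_final} you use a direct two-term split with a single fixed conjugation $X(d)\psi X(d)$ (undone by the approximate conjugation invariance from \Cref{lem:conj_invariant}) where the paper averages over $Z(a)$-conjugations and shifts, both of which rest on the same underlying facts.
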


\cref{thm:mixed_basis_final} also implies that the prover's projective measurements $\{M^w_u\}$ in the mixed vs pure basis test must be close to the corresponding Pauli projectors.
More formally, we show the following statement about the prover's measurements.

\begin{corollary} \label{thm:mixed-vs-pure-projectors}
Suppose a computationally efficient prover $P$ modelled as in \Cref{sec:modelling} wins with probability $1 - \eps$ in both~\Cref{prot:pauli-braiding} and \cref{prot:mixed-vs-pure}. Then there exists a Hilbert space $\cH' = \C^{2^n} \otimes \cH_{\rm aux}$ (where $n$ is the number of qubits an honest prover would use) and an isometry $V: \cH \to \cH'$ such that for every Alice question $q \in \cQ_A$, and for any subset $S \subseteq \{0,1\}^n$,
\begin{align}
\E_{w \sim D} \sum_{u \in S} \| M^w_u - V^\dagger (\pi^w_u \otimes \1_{\rm aux}) V \|_{\psi^{\Enc(q)}}^2 \leq O(\sqrt\eps)  + \negl(\lambda) \,. \label{eqn:proj_close}
\end{align}
Here, $\{M^w_u\}$ are the prover's measurements in the mixed vs pure basis test, and $\pi^w_a$ are the Pauli projectors, i.e.
\begin{align*}
\pi^w_u = \bigotimes_i \left( \frac{\1 + (-1)^{u_i} \sigma_{w_i}}{2} \right) = \E_{a \in \bits^n} (-1)^{u \cdot a} \sigma_w(a)\,.
\end{align*}
\end{corollary}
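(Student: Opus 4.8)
The plan is to derive \Cref{thm:mixed-vs-pure-projectors} from \Cref{thm:mixed_basis_final} by a short Fourier-analytic argument over $\Z_2^n$, keeping the \emph{same} isometry $V$. The starting point is the observation that the mixed basis projectors and the Pauli projectors are recovered from their associated observables by the identical Fourier inversion: for every $u \in \bits^n$,
\[ M^w_u = \E_{a \in \bits^n} (-1)^{a \cdot u}\, O^w(a), \qquad \pi^w_u = \E_{a \in \bits^n} (-1)^{a \cdot u}\, \sigma_w(a). \]
The second identity is exactly \Cref{eq:pauli-proj}. The first follows from the definition $O^w(a) = \sum_u (-1)^{a \cdot u} M^w_u$ together with orthogonality of characters; one checks it holds for \emph{all} $u \in \bits^n$, including $u$ with a $1$ on an index where $w_i = \1$, because there $M^w_u = 0$ by our normalisation convention, which matches the fact that $\E_a (-1)^{a \cdot u} O^w(a)$ also vanishes there. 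So first I would record these two identities.

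Subtracting them, for every $w$ and every $u$ we get
\[ M^w_u - V^\dagger(\pi^w_u \otimes \1_{\rm aux}) V = \E_{a \in \bits^n} (-1)^{a \cdot u}\left( O^w(a) - V^\dagger(\sigma_w(a) \otimes \1_{\rm aux}) V \right). \]
Writing $\delta^w(a) \deq O^w(a) - V^\dagger(\sigma_w(a) \otimes \1_{\rm aux}) V$, the right-hand side is precisely the $u$-th Fourier coefficient of the operator-valued function $a \mapsto \delta^w(a)$. Next I would apply Parseval's identity for the Fourier transform on $\Z_2^n$: since $\norm{A}_{\psi^{\Enc(q)}}^2 = \norm{A\,(\psi^{\Enc(q)})^{1/2}}_2^2$ by \Cref{rem:schatten_norm}, and the Fourier transform commutes with right-multiplication by $(\psi^{\Enc(q)})^{1/2}$, Parseval (equivalently, orthogonality of characters applied inside the positive semidefinite form $\langle A, B\rangle_{\psi^{\Enc(q)}} = \Tr[A^\dagger B\, \psi^{\Enc(q)}]$) gives
\[ \sum_{u \in \bits^n} \norm{M^w_u - V^\dagger(\pi^w_u \otimes \1_{\rm aux}) V}_{\psi^{\Enc(q)}}^2 = \E_{a \in \bits^n} \norm{O^w(a) - V^\dagger(\sigma_w(a) \otimes \1_{\rm aux}) V}_{\psi^{\Enc(q)}}^2. \]

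Finally, restricting the left-hand sum to $u \in S$ only discards nonnegative terms, so the equality above becomes the inequality ``$\leq$''; taking $\E_{w \sim D}$ on both sides and invoking \Cref{thm:mixed_basis_final} yields $\E_{w \sim D} \sum_{u \in S} \norm{M^w_u - V^\dagger(\pi^w_u \otimes \1_{\rm aux}) V}_{\psi^{\Enc(q)}}^2 \leq O(\sqrt{\eps}) + \negl(\lambda)$, which is the claim. I do not expect a genuine obstacle here; this is the standard ``observables-to-projectors'' conversion from self-testing, and the only points needing care are (i) checking the Fourier inversion formula for $M^w_u$ in light of the zero-padding convention on identity indices, and (ii) verifying that Parseval is valid for the possibly-degenerate state-dependent semi-inner product — both routine.
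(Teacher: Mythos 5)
Your proposal is correct and is essentially identical to the paper's proof: the paper also reduces to $S = \bits^n$ by nonnegativity, inverts the Fourier transform to write $M^w_u = \E_a (-1)^{u\cdot a} O^w(a)$, and runs the Parseval computation in the state-dependent inner product before invoking \Cref{thm:mixed_basis_final} with the same isometry $V$. The two points you flag for care are handled the same way there (the inversion is exact over $\Z_2^n$ regardless of the zero-padding convention, and Parseval only needs the positive semidefinite sesquilinear form $\langle \cdot,\cdot\rangle_{\psi}$), so there is nothing further to add.
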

\begin{proof}
Since each of the terms in \cref{eqn:proj_close} is non-negative, it suffices to show this for $S = \bits^n$.
Recall that for any $a \in \bits^n$,
\begin{align*}
O^w(a) &= \sum_{u \in \bits^{n}} (-1)^{a \cdot u} M^w_u \,.
\end{align*}
In other words, $O^w(a)$ and $\{M^w_u\}_u$ are related by a Fourier transform.
We can invert this Fourier transform to get that 
\begin{align*}
M^w_u = \E_{a \in \bits^n} (-1)^{u \cdot a} O^w(a) \,.
\end{align*}
The proof now is essentially that of Parseval's identity, but we spell out the details for completeness: inserting this and the expansion of $\pi^w_u$ in terms of Pauli observables, we get that for any $w \in \{\1, X, Z\}^n$,
\begin{align*}
\sum_{u \in \bits^n} \| M^w_u - V^\dagger (\pi^w_u \otimes \1_{\rm aux}) V \|_{\psi^{\Enc(q)}}^2
&= \sum_{u \in \bits^n} \| \E_{a \in \bits^n} (-1)^{u \cdot a} \underbrace{\left( O^w(a) - V^\dagger (\sigma_w(a) \otimes \1_{\rm aux}) V \right)}_{\deq \Gamma(a)} \|_{\psi^{\Enc(q)}}^2 \\
&= \sum_{u \in \bits^n} \left\langle \E_{a \in \bits^n} (-1)^{u \cdot a} \Gamma(a), \E_{a' \in \bits^n} (-1)^{u \cdot a} \Gamma(a')\right\rangle_{\psi^{\Enc(q)}} \\
&= \E_{a, a' \in \bits^n} \left( \sum_{u \in \bits^n} (-1)^{u \cdot (a + a')} \right) \left\langle \Gamma(a), \Gamma(a')\right\rangle_{\psi^{\Enc(q)}} \\
&= \E_{a, a' \in \bits^n} 2^n \delta_{a, 
 a'} \left\langle \Gamma(a), \Gamma(a')\right\rangle_{\psi^{\Enc(q)}} \\
 &= \E_{a \in \bits^n} \| \Gamma(a) \|_{\psi^{\Enc(q)}}^2 \\
&= \E_{a \in \bits^n} \| O^w(a) - V^\dagger (\sigma_w(a) \otimes \1_{\rm aux}) V \|_{\psi^{\Enc(q)}}^2
\end{align*}
Here, $\langle\cdot , \cdot \rangle_\psi$ is the state-dependent inner product defined in \cref{def:state_dep_inner_product}. 
Taking $V$ to be the isometry from \cref{thm:mixed_basis_final} and inserting the expectation over $w \sim D$, the result now follows directly from \cref{thm:mixed_basis_final}.
\end{proof}

Before proving \cref{thm:mixed_basis_final}, we introduce a piece of notation that will be useful throughout the rest of the section.
\begin{definition} \label{def:dprime}
Let $D$ be the distribution over Pauli strings in $\{\1,X,Z\}^n$ from \cref{prot:mixed-vs-pure}.
Then we define $D'$ to be distribution over Pauli strings in $\{\1,X,Z\}^n$ induced by the following sampling procedure: to sample $w' \sim D'$, first sample $w \sim D$ and $a \sim \bits^n$, then set 
\begin{align*}
w'_i = \begin{cases}
w_i & \text{ if $a_i = 1$,} \\
\1 & \text{otherwise.}
\end{cases}
\end{align*}
\end{definition}

The utility of this notation is that we can now collapse the expectations $\E_{w \sim D} \E_{a \sim \bits^n}$ into one expectation.
Concretely, this allows us to restate the conclusion of \cref{thm:mixed_basis_final} as 
\begin{align}
\E_{w \sim D'} \| O^w - V^\dagger (\sigma_w \otimes \1_{\rm aux}) V \|_{\psi^{\Enc(q)}}^2 \leq O(\sqrt\eps) \,. \label{eqn:conclusion_restate_D}
\end{align}

\begin{proof}[Proof of \cref{thm:mixed_basis_final}]
We will show that \cref{eqn:conclusion_restate_D} holds.
By the triangle inequality for the squared norm (\cref{lem:state_dep_norm_props} \cref{item:triangle_ineq_sq}) we have that 
\begin{align*}
& \E_{w \sim D'} \| O^w - V^\dagger (\sigma_w \otimes I_{d}) V \|_{\psi^{\Enc(q)}}^2 \\
&\leq 2 \E_{w \sim D'} \norm{O^w  - X(1_{\{w = X\}}) Z(1_{\{w = Z\}})}_{\psi^{\Enc(q)}}^2 + 2 \E_{w \sim D'} \norm{X(1_{\{w = X\}}) Z(1_{\{w = Z\}}) - V^\dagger (\sigma_w \otimes I_{d}) V}_{\psi^{\Enc(q)}}^2 \,.
\end{align*}
The first term is at most $O(\sqrt \eps)$ by \cref{lem:mixed_final}, which we prove below.
The second term is at most $O(\eps)$ by combining \cref{thm:pauli-braiding-approx-phase} and \cref{lem:pauli_rounding_all_dist}.
\end{proof} 

It remains to prove \cref{lem:mixed_final}.
For this, we first establish a few auxiliary lemmas.

\begin{lemma} \label{lem:mixed_basis_sign}
Suppose a computationally efficient prover $P$ modelled as in \Cref{sec:modelling} wins with probability $1 - \eps$ in \cref{prot:mixed-vs-pure}.
Then for all $W \in \{X,Z\}$,
\begin{align*}
\E_{w \sim D'} \sum_{\alpha} \norm{O^w_W - (-1)^{\Dec(\alpha) \cdot 1_{\{w = W\}} }\1}_{\psi^{\Enc(W)}_\alpha}^2 \leq O(\eps) \,.
\end{align*}
\end{lemma}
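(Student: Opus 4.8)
The plan is to treat \Cref{lem:mixed_basis_sign} as the $b=1$ analogue of \Cref{lem:pure_basis_sign}: the latter handles the $b=0$ (consistency) branch of \Cref{prot:mixed-vs-pure} and certifies that Bob's \emph{pure}-basis observable $W(b)$ equals the sign $(-1)^{\Dec(\alpha)\cdot b}$ on Alice's post-measurement state, and here we want the same conclusion for Bob's \emph{mixed}-basis observable $O^w_W$ restricted to the coordinates $\{w=W\}$. First I would extract the relevant winning condition: conditioning on Alice receiving the pure-basis question $W$ and on the verifier choosing $b=1$ (an event of probability $\tfrac14$ inside \Cref{prot:mixed-vs-pure}), the prover still wins with probability $\ge 1-4\eps$, which — since the $b=1$ check accepts iff $\Dec(\alpha)\setr{w=W} = v\setr{w=W}$ for Bob's mixed-basis outcome $v$ — translates into the disagreement bound
\[
\E_{w\sim D}\,\E_{c\leftarrow\Enc(W)}\sum_\alpha\sum_{v\,:\,v\setr{w=W}\neq\Dec(\alpha)\setr{w=W}}\Tr[M^w_v\,\psi^c_\alpha]\;\le\;O(\eps).
\]
I would also record how to read the conclusion: by \Cref{def:dprime}, drawing $w\sim D'$ is the same as drawing $w\sim D$ and $a\sim\bits^n$ and zeroing out $w$ on coordinates with $a_i=0$, so for the originating pair $(w,a)$ one has $1_{\{w'=W\}} = 1_{\{w=W\,\wedge\,a=1\}}$ and $O^{w'}_W = O^w_W(a) = \sum_v(-1)^{1_{\{w=W\wedge a=1\}}\cdot v}M^w_v$ (the protocol only ever sends Bob a string drawn from $D$, so the $D'$-observable is the coarse-graining of Bob's $D$-measurement onto the non-identity coordinates). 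Thus the target is $\E_{w\sim D}\E_{a\in\bits^n}\sum_\alpha\|O^w_W(a)-(-1)^{\Dec(\alpha)\cdot 1_{\{w=W\wedge a=1\}}}\1\|^2_{\psi^{\Enc(W)}_\alpha}\le O(\eps)$.

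The core is the standard manipulation "a binary observable is close to a fixed sign on a state exactly when the associated measurement rarely disagrees with that sign''. Fix $w$ and $a$, put $T=\{w=W\}\cap\{i:a_i=1\}$ and $s_\alpha=(-1)^{\Dec(\alpha)\cdot 1_T}$. Since $O^w_W(a)=\sum_v(-1)^{1_T\cdot v}M^w_v$ is a binary observable, $s_\alpha=\pm1$, and $\sum_\alpha\Tr[\psi^{\Enc(W)}_\alpha]=1$, I would expand
\[
\sum_\alpha\|O^w_W(a)-s_\alpha\1\|^2_{\psi^{\Enc(W)}_\alpha}\;=\;2-2\sum_\alpha s_\alpha\Tr[O^w_W(a)\,\psi^{\Enc(W)}_\alpha],
\]
then substitute the definitions and split the inner sum over $v$ into the part with $v|_{T}=\Dec(\alpha)|_{T}$ (where the phase $(-1)^{1_T\cdot(\Dec(\alpha)+v)}$ is $+1$) and the rest (bounded crudely by $1$ in absolute value); rewriting $\sum_{v:\,v|_{T}=\Dec(\alpha)|_{T}}\Tr[M^w_v\psi^c_\alpha]=\Tr[\psi^c_\alpha]-\sum_{v:\,v|_{T}\neq\Dec(\alpha)|_{T}}\Tr[M^w_v\psi^c_\alpha]$ this yields
\[
\sum_\alpha\|O^w_W(a)-s_\alpha\1\|^2_{\psi^{\Enc(W)}_\alpha}\;\le\;4\,\E_{c\leftarrow\Enc(W)}\sum_\alpha\sum_{v\,:\,v|_{T}\neq\Dec(\alpha)|_{T}}\Tr[M^w_v\,\psi^c_\alpha].
\]
Because $T\subseteq\{w=W\}$, every $v$ with $v|_{T}\neq\Dec(\alpha)|_{T}$ also has $v\setr{w=W}\neq\Dec(\alpha)\setr{w=W}$, so the right-hand side is at most $4\,\E_{c\leftarrow\Enc(W)}\sum_\alpha\sum_{v:\,v\setr{w=W}\neq\Dec(\alpha)\setr{w=W}}\Tr[M^w_v\psi^c_\alpha]$, which no longer depends on $a$. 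Averaging over $w\sim D$ and $a\in\bits^n$ and applying the disagreement bound from the first step closes the argument; the implicit expectation over keys and the linearity of the squared state-dependent norm in the state (\Cref{lem:state_dep_norm_props}\Cref{item:linearity}) let every step be carried out inside a single outer $\E_{sk}$.

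I expect the only genuinely delicate point to be the bookkeeping between $D$ and $D'$: one must be clear that "$O^w_W$ for $w\sim D'$'' in the statement means the coarse-grained version $O^w_W(a)$ of Bob's measurement for the string actually drawn from $D$, and one must notice that the $b=1$ check on the coordinates $\{w=W\}$ automatically dominates the disagreement probability on \emph{any} sub-collection of those coordinates — it is exactly this monotonicity, $\{v:v|_{T}\neq\Dec(\alpha)|_{T}\}\subseteq\{v:v\setr{w=W}\neq\Dec(\alpha)\setr{w=W}\}$, that makes the $a$-average come for free and lets us upgrade the $D$-level winning condition to the $D'$-level conclusion. Everything else is the same algebra as in \Cref{lem:pure_basis_sign}.
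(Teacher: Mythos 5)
Your proposal is correct and follows essentially the same route as the paper's proof: both reduce the claim to the $b=1$ winning condition of \Cref{prot:mixed-vs-pure}, unpack the $D'$-expectation as $\E_{w\sim D}\E_{a}$ with $O^w_W(a)$, expand the squared state-dependent norm via the binary-observable identity, and use the same monotonicity (agreement/disagreement on $\{w=W\wedge a=1\}$ versus the superset $\{w=W\}$, with non-negativity of the traces) to eliminate the $a$-dependence and land on the winning condition. The only difference is cosmetic bookkeeping — you track disagreement sums directly where the paper writes the equivalent ``$2\times(\text{first term})-1$'' manipulation.
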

\begin{proof}
The proof is similar to \cref{lem:pure_basis_sign}.
Expanding out the definition of the state dependent distance, we see that it suffices to show 
\begin{align*}
\E_{w \sim D'} \E_{c \leftarrow \Enc(W)} \sum_{\alpha} (-1)^{\Dec(\alpha) \cdot 1_{\{w = W\}}} \Tr[O^w_Z A^c_\alpha \psi A^c_\alpha] \geq 1 - O(\eps) \,.
\end{align*}
Recalling the definition of $D'$ from \cref{def:dprime}, this is equivalent to
\begin{align}
\E_{w \sim D} \E_{a \sim \bits^n} \E_{c \leftarrow \Enc(W)} \sum_{\alpha} (-1)^{\Dec(\alpha) \cdot 1_{\{w = W \wedge a=1\}}} \Tr[O^w_Z(a) A^c_\alpha \psi A^c_\alpha] \geq 1 - O(\eps) \,. \label{eqn:modcond}
\end{align}
The winning condition of \cref{prot:mixed-vs-pure} implies that 
\begin{align}
\E_{w \sim D} \E_{c \leftarrow \Enc(W)} \sum_{\alpha} \Tr[(M^w\setrw])_{\Dec(\alpha)\setrw} A^c_\alpha \psi A^c_\alpha] \geq 1 - O(\eps) \,. \label{eqn:wincond}
\end{align}
We now need to use \cref{eqn:wincond} to show \cref{eqn:modcond}.
For this, first recall that 
\begin{align*}
O^w_W(a) &= \sum_{u \in \bits^n} (-1)^{\vec 1_{w = W \wedge a = 1} \cdot u} M^w_u \,.
\end{align*}
Inserting this definition into the l.h.s.~of \cref{eqn:modcond}, we get that 
\begin{align*}
&\E_{w \sim D} \E_{a \sim \bits^n} \E_{c \leftarrow \Enc(W)} \sum_{\alpha} (-1)^{\Dec(\alpha) \cdot \vec 1_{\{w = W \wedge a=1\}}} \Tr[O^w_Z(a) A^c_\alpha \psi A^c_\alpha]\\
&=\E_{w \sim D} \E_{a \sim \bits^n} \E_{c \leftarrow \Enc(W)} \sum_{\alpha} \sum_{u \in \bits^n} (-1)^{(\Dec(\alpha) + u) \cdot \vec 1_{\{w = W \wedge a=1\}}} \Tr[ M^w_u A^c_\alpha \psi A^c_\alpha] \\
\intertext{We now split the sum over $u$ into the terms for which $u\setr{w = W \wedge a = 1} = \Dec(\alpha)\setr{w = W \wedge a = 1}$ and the remaining terms, and note that the phase $(-1)^{(\Dec(\alpha) + u) \cdot \vec 1_{\{w = W \wedge a=1\}}}$ disappears when $u\setr{w = W \wedge a = 1} = \Dec(\alpha)\setr{w = W \wedge a = 1}$:}
&=\E_{w \sim D} \E_{a \sim \bits^n} \E_{c \leftarrow \Enc(W)} \sum_{\alpha} \sum_{\substack{u \in \bits^n \sth \\ u\setr{w = W \wedge a = 1} = \Dec(\alpha)\setr{w = W \wedge a = 1}}}  \Tr[ M^w_u A^c_\alpha \psi A^c_\alpha] \\
&\qquad+ \E_{w \sim D} \E_{a \sim \bits^n} \E_{c \leftarrow \Enc(W)} \sum_{\alpha} \sum_{\substack{u \in \bits^n \sth \\ u\setr{w = W \wedge a = 1} \neq \Dec(\alpha)\setr{w = W \wedge a = 1}}} (-1)^{(\Dec(\alpha) + u) \cdot \vec 1_{\{w = W \wedge a=1\}}} \Tr[ M^w_u A^c_\alpha \psi A^c_\alpha] \\
\intertext{For the second term we note that the trace expression is always non-negative, so we can bound}
&\geq \E_{w \sim D} \E_{a \sim \bits^n} \E_{c \leftarrow \Enc(W)} \sum_{\alpha} \sum_{\substack{u \in \bits^n \sth \\ u\setr{w = W \wedge a = 1} = \Dec(\alpha)\setr{w = W \wedge a = 1}}}  \Tr[ M^w_u A^c_\alpha \psi A^c_\alpha] \\
&\qquad- \E_{w \sim D} \E_{a \sim \bits^n} \E_{c \leftarrow \Enc(W)} \sum_{\alpha} \sum_{\substack{u \in \bits^n \sth \\ u\setr{w = W \wedge a = 1} \neq \Dec(\alpha)\setr{w = W \wedge a = 1}}} \Tr[ M^w_u A^c_\alpha \psi A^c_\alpha] \\
\intertext{Using the normalisation of the measurements, we can see that (second term) = 1 - (first term).
Using this and changing the order of sums and expectations, we can simplify this to}
&= \Big( 2 \E_{c \leftarrow \Enc(W)} \sum_{\alpha} \E_{w \sim D} \E_{a \sim \bits^n}  \sum_{\substack{u \in \bits^n \sth \\ u\setr{w = W \wedge a = 1} = \Dec(\alpha)\setr{w = W \wedge a = 1}}}  \Tr[ M^w_u A^c_\alpha \psi A^c_\alpha]  \Big) - 1 \\
\intertext{Restricting the sum over $u$ to a subset of the summands and using that each summand $\Tr[ M^w_u A^c_\alpha \psi A^c_\alpha]$ is non-negative, we can bound}
&\geq \Big( 2 \E_{c \leftarrow \Enc(W)} \sum_{\alpha} \E_{w \sim D} \E_{a \sim \bits^n}  \sum_{\substack{u \in \bits^n \sth \\ u\setr{w = W} = \Dec(\alpha)\setr{w = W}}}  \Tr[ M^w_u A^c_\alpha \psi A^c_\alpha] \Big) - 1
\intertext{Since there is no remaining dependence on $a$, we can simply remove the expectation over $a$. 
We can also rewrite this more compactly using the notation from \cref{def:reduced_meas}:}
&= \Big( 2 \E_{c \leftarrow \Enc(W)} \sum_{\alpha} \E_{w \sim D} \Tr[ (M^w\setr{w = W})_{\Dec(\alpha)\setr{w = W}} A^c_\alpha \psi A^c_\alpha] \Big) - 1
\intertext{Finally, we note that the term in parentheses is bounded in \cref{eqn:wincond}, so we get}
&\geq 1 - O(\eps) \,.
\end{align*}
This proves \cref{eqn:modcond}.
\end{proof}

\begin{lemma} \label{lem:O_indiv_close}
Suppose a computationally efficient prover $P$ modelled as in \Cref{sec:modelling} wins with probability $1 - \eps$ in \cref{prot:mixed-vs-pure}.
Then for all $W \in \{X,Z\}$,
\begin{align*}
\E_{w \sim D'} \norm{W(1_{\{w = W\}}) - O^w_W}_{\psi^{\Enc(W)}}^2 &= O(\eps) \,.
\end{align*}
\end{lemma}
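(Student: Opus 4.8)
\subsection*{Proof plan for \texorpdfstring{\cref{lem:O_indiv_close}}{Lemma}}

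The plan is to obtain \cref{lem:O_indiv_close} by combining \cref{lem:pure_basis_sign} and \cref{lem:mixed_basis_sign} through the triangle inequality. Both of those lemmas say that, on Alice's post-measurement state $\psi^{\Enc(W)}_\alpha$ (conditioned on Alice receiving the pure-basis question $W$ and answering $\alpha$), a certain Bob observable acts like the scalar $(-1)^{\Dec(\alpha) \cdot 1_{\{w = W\}}}\,\1$: \cref{lem:pure_basis_sign} gives this for the pure-basis observable $W(1_{\{w = W\}})$ (applying it with $b = 1_{\{w = W\}}$, which is a fixed element of $\bits^n$ once $w$ is fixed), while \cref{lem:mixed_basis_sign} gives it for the mixed-basis observable $O^w_W$ in expectation over $w \sim D'$. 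Since both observables are close to the \emph{same} scalar, they must be close to each other.

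Concretely, I would first decompose $\psi^{\Enc(W)} = \sum_\alpha \psi^{\Enc(W)}_\alpha$ and use linearity of the squared state-dependent norm (\cref{lem:state_dep_norm_props}~\cref{item:linearity}) to write, for each fixed $w$,
\[ \norm{W(1_{\{w=W\}}) - O^w_W}_{\psi^{\Enc(W)}}^2 = \sum_\alpha \norm{W(1_{\{w=W\}}) - O^w_W}_{\psi^{\Enc(W)}_\alpha}^2 . \]
Then, for each $\alpha$ separately, I would insert $(-1)^{\Dec(\alpha) \cdot 1_{\{w = W\}}}\,\1$ as an intermediate term and apply the triangle inequality for the squared norm (\cref{lem:state_dep_norm_props}~\cref{item:triangle_ineq_sq}):
\[ \norm{W(1_{\{w=W\}}) - O^w_W}_{\psi^{\Enc(W)}_\alpha}^2 \leq 2\norm{W(1_{\{w=W\}}) - (-1)^{\Dec(\alpha) \cdot 1_{\{w=W\}}}\1}_{\psi^{\Enc(W)}_\alpha}^2 + 2\norm{O^w_W - (-1)^{\Dec(\alpha) \cdot 1_{\{w=W\}}}\1}_{\psi^{\Enc(W)}_\alpha}^2 . \]
It is essential to perform this triangle-inequality step per $\alpha$, \emph{before} averaging over $w$, because the intermediate scalar depends on both $\alpha$ and $w$.

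Finally, I would take $\E_{w \sim D'}$ and sum over $\alpha$. The second term is at most $O(\eps)$ directly by \cref{lem:mixed_basis_sign}. For the first term, note that for each fixed $w$ the string $1_{\{w=W\}}$ lies in $\bits^n$, so \cref{lem:pure_basis_sign} (with $b = 1_{\{w=W\}}$) bounds $\sum_\alpha \norm{W(1_{\{w=W\}}) - (-1)^{\Dec(\alpha) \cdot 1_{\{w=W\}}}\1}_{\psi^{\Enc(W)}_\alpha}^2 \leq O(\eps)$ for \emph{every} $w$, hence also in expectation over $w \sim D'$. Adding the two estimates yields $\E_{w \sim D'}\norm{W(1_{\{w=W\}}) - O^w_W}_{\psi^{\Enc(W)}}^2 = O(\eps)$. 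This argument is essentially bookkeeping and I do not expect a genuine obstacle; the only points needing care are the ordering of the triangle-inequality step relative to the expectation over $w$, and keeping the implicit expectation over encryption keys consistent throughout (which is harmless, since it is just an outer average that commutes with linearity of the squared norm).
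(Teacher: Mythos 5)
Your proposal is correct and follows essentially the same route as the paper's proof: decompose via linearity of the squared state-dependent norm over $\alpha$, insert the scalar $(-1)^{\Dec(\alpha)\cdot 1_{\{w=W\}}}\1$ as an intermediate term via the triangle inequality for the squared norm, and then bound the two resulting terms by \cref{lem:pure_basis_sign} (which holds for every fixed $w$, hence in expectation over $w \sim D'$) and \cref{lem:mixed_basis_sign} respectively.
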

\begin{proof}
By \cref{lem:state_dep_norm_props} \cref{item:linearity}:
\begin{align*}
&\E_{w \sim D'} \norm{W(1_{\{w = W\}}) - O^w_W}_{\psi^{\Enc(W)}}^2 \\
&= \E_{w \sim D'} \sum_{\alpha} \norm{W(1_{\{w = W\}}) - O^w_W}_{\psi^{\Enc(W)}_\alpha}^2 \\
\intertext{By \cref{lem:state_dep_norm_props} \cref{item:triangle_ineq_sq}:}
&\leq 2 \E_{w \sim D'} \sum_{\alpha} \norm{W(1_{\{w = W\}}) - (-1)^{\Dec(\alpha) \cdot 1_{\{w = W\}}} \1}_{\psi^{\Enc(W)}_\alpha}^2 + 2 \E_{w \sim D'} \sum_{\alpha} \norm{O^w_W - (-1)^{\Dec(\alpha) \cdot 1_{\{w = W\}} }\1}_{\psi^{\Enc(W)}_\alpha}^2 \\ 
\intertext{These are exactly the terms we bounded in \cref{lem:pure_basis_sign} (which holds for every choice of $w$ and consequently also in expectation over $w \sim D'$) and \cref{lem:mixed_basis_sign}, so we get:}
&= O(\eps) \,.
\end{align*}
\end{proof}

We are now in a position to prove \cref{lem:mixed_final}, the missing ingredient in the proof of \cref{thm:mixed_basis_final}.

\begin{lemma} \label{lem:mixed_final}
Suppose a computationally efficient prover $P$ modelled as in \Cref{sec:modelling} wins with probability $1 - \eps$ in both~\Cref{prot:pauli-braiding} and \cref{prot:mixed-vs-pure}.
Then for any Alice question $q \in \cQ_A$,
\begin{align*}
\E_{w \sim D'} \norm{O^w  - X(1_{\{w = X\}}) Z(1_{\{w = Z\}})}_{\psi^{\Enc(q)}}^2 = O(\sqrt \eps) + \negl(\lambda) \,.
\end{align*}
\end{lemma}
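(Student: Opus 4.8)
The plan is to use the factorisation $O^w = O^w_X O^w_Z$ from \Cref{lem:O_factorises} and then replace $O^w_Z$ by $Z(1_{\{w=Z\}})$ and $O^w_X$ by $X(1_{\{w=X\}})$ one at a time, controlling the errors with \Cref{lem:O_indiv_close} and the computational-indistinguishability machinery of \Cref{lem:computational_state_switching} and \Cref{lem:alice_states_indist}. The one genuinely non-trivial point is that after peeling off $O^w_Z$ we are left with a term of the form $\norm{O^w_X - X(1_{\{w=X\}})}$ measured against the \emph{conjugated} state $Z(1_{\{w=Z\}})\,\psi^{\Enc(Z)}\,Z(1_{\{w=Z\}})$, and we need \Cref{lem:conj_invariant} (the ``prover-switching substitute'' coming from the consistency test) to strip off this conjugation; this is the step where we lose a square root, and is why the final bound degrades from $O(\eps)$ to $O(\sqrt\eps)$.

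Concretely: first I would use that $O^w$, $X(1_{\{w=X\}})$, $Z(1_{\{w=Z\}})$ and $O^w_X$ are all efficient binary observables and that $D'$ is efficiently sampleable, so by \Cref{lem:computational_state_switching} together with \Cref{lem:alice_states_indist} (and the fact that $Z\in\cQ_A$) it suffices to prove the bound with $\psi^{\Enc(q)}$ replaced by $\psi^{\Enc(Z)}$, at the cost of an additive $\negl(\lambda)$. Then, applying \Cref{lem:O_factorises} and the squared triangle inequality (\Cref{lem:state_dep_norm_props}~\Cref{item:triangle_ineq_sq}),
\[ \norm{O^w_X O^w_Z - X(1_{\{w=X\}}) Z(1_{\{w=Z\}})}_{\psi^{\Enc(Z)}}^2 \leq 2\norm{(O^w_X - X(1_{\{w=X\}})) O^w_Z}_{\psi^{\Enc(Z)}}^2 + 2\norm{X(1_{\{w=X\}})(O^w_Z - Z(1_{\{w=Z\}}))}_{\psi^{\Enc(Z)}}^2 . \]
The second term equals $2\norm{O^w_Z - Z(1_{\{w=Z\}})}_{\psi^{\Enc(Z)}}^2$ by left-unitary invariance (\Cref{lem:state_dep_norm_props}~\Cref{item:left_unitary_inv}), whose expectation over $w\sim D'$ is $O(\eps)$ by \Cref{lem:O_indiv_close} (with $W=Z$).

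For the first term I would apply the squared triangle inequality once more to separate out $\norm{(O^w_X - X(1_{\{w=X\}}))(O^w_Z - Z(1_{\{w=Z\}}))}_{\psi^{\Enc(Z)}}^2$, which is at most $4\norm{O^w_Z - Z(1_{\{w=Z\}})}_{\psi^{\Enc(Z)}}^2$ by submultiplicativity (\Cref{lem:state_dep_norm_props}~\Cref{item:submult_infty}, using $\norm{O^w_X - X(1_{\{w=X\}})}_\infty \leq 2$) and hence $O(\eps)$ in expectation, leaving the main term $\norm{(O^w_X - X(1_{\{w=X\}}))Z(1_{\{w=Z\}})}_{\psi^{\Enc(Z)}}^2$. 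I would rewrite this last term as $\norm{O^w_X - X(1_{\{w=X\}})}_{Z(1_{\{w=Z\}})\psi^{\Enc(Z)}Z(1_{\{w=Z\}})}^2$ via \Cref{lem:state_dep_norm_props}~\Cref{item:conj_mult}, then use \Cref{lem:conj_invariant} (with $W=Z$, $b = 1_{\{w=Z\}}$) and \Cref{lem:state_replace} to replace the conjugated state by $\psi^{\Enc(Z)}$ up to an additive error $O(\sqrt\eps)$ that is uniform in $w$, then switch $\psi^{\Enc(Z)}$ to $\psi^{\Enc(X)}$ via \Cref{lem:computational_state_switching} and \Cref{lem:alice_states_indist} (incurring $\negl(\lambda)$), and finally bound $\E_{w\sim D'}\norm{O^w_X - X(1_{\{w=X\}})}_{\psi^{\Enc(X)}}^2 = O(\eps)$ by \Cref{lem:O_indiv_close} (with $W=X$). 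Collecting the $O(\eps)$, $O(\sqrt\eps)$ and $\negl(\lambda)$ contributions yields $\E_{w\sim D'}\norm{O^w - X(1_{\{w=X\}})Z(1_{\{w=Z\}})}_{\psi^{\Enc(q)}}^2 = O(\sqrt\eps) + \negl(\lambda)$, as claimed. The only real obstacle is the conjugated-state step described above: in the non-local setting one would simply move $Z(1_{\{w=Z\}})$ onto the other prover, but here we must instead spend a square root and invoke \Cref{lem:conj_invariant}.
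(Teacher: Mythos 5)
Your proof is correct, and it leans on exactly the same ingredients as the paper's (the factorisation of \cref{lem:O_factorises}, the two bounds of \cref{lem:O_indiv_close}, the state-switching of \cref{lem:alice_states_indist,lem:computational_state_switching}, and \cref{lem:conj_invariant} as the prover-switching substitute), but the algebraic route is slightly different. The paper first replaces $\psi^{\Enc(Z)}$ by the \emph{averaged} conjugation $\E_{a}Z(a)\psi^{\Enc(Z)}Z(a)$, shifts the uniform expectation over $a$ by $1_{\{w=Z\}}$, and uses exact linearity to transport $Z(1_{\{w=Z\}})$ from the right of $X(1_{\{w=X\}})$ onto the right of $O^w$; it then left-multiplies by $O^w_X$ and applies the triangle inequality once. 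You instead factorise first, telescope $O^w_XO^w_Z - XZ = (O^w_X-X)O^w_Z + X(O^w_Z-Z)$, and strip the single conjugation $Z(1_{\{w=Z\}})\,\psi^{\Enc(Z)}\,Z(1_{\{w=Z\}})$ pointwise via \cref{lem:conj_invariant} (which indeed holds for each fixed $b$, uniformly in $w$). Both routes pay the $O(\sqrt\eps)$ at the same place, namely the invocation of \cref{lem:conj_invariant} through \cref{lem:state_replace}; your version needs a couple of extra triangle-inequality terms (handled correctly via submultiplicativity with $\norm{O^w_X - X(1_{\{w=X\}})}_\infty\le 2$) but avoids the expectation-shifting trick, so it is arguably the more elementary presentation of the same argument. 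One small remark: \cref{lem:state_replace} is stated for an ``observable'' but, as its one-line H\"older proof shows, applies to the non-unitary difference $O^w_X - X(1_{\{w=X\}})$ you feed it, so that step is fine (the paper uses it the same way).
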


\begin{proof}
By \cref{lem:alice_states_indist}, for all $q \in \cQ_A$ we have that $\psi^{\Enc(q)} \capprox_{\negl(\lambda)} \psi^{\Enc(Z)}$.
We can therefore use \cref{lem:computational_state_switching} to get that
\begin{align*}
&\E_{w \sim D'} \norm{O^w  - X(1_{\{w = X\}}) Z(1_{\{w = Z\}})}_{\psi^{\Enc(q)}}^2 \\
&\leq \E_{w \sim D'} \norm{O^w  - X(1_{\{w = X\}}) Z(1_{\{w = Z\}})}_{\psi^{\Enc(Z)}}^2 + \negl(\lambda)
\intertext{By \cref{lem:conj_invariant} and \cref{lem:state_replace}:} 
&\leq \E_{w \sim D'} \norm{O^w  - X(1_{\{w = X\}}) Z(1_{\{w = Z\}})}_{\E_{a \in \bits^n} Z(a) \psi^{\Enc(Z)} Z(a)}^2 + \negl(\lambda) + O(\sqrt \eps)\\
\intertext{By \cref{lem:state_dep_norm_props} \cref{item:linearity} and \cref{lem:state_dep_norm_props} \cref{item:conj_mult}:}
&= \E_{w \sim D'} \E_{a \in \bits^n} \norm{O^w Z(a) - X(1_{\{w = X\}}) Z(1_{\{w = Z\}} + a)}_{ \psi^{\Enc(Z)}}^2 + \negl(\lambda) + O(\sqrt \eps) \\
\intertext{Since the expectation over $a$ is uniform, we can shift it by $1_{\{w = Z\}}$:}
&= \E_{w \sim D'} \E_{a \in \bits^n} \norm{O^w Z(1_{\{w = Z\}} + a) - X(1_{\{w = X\}}) Z(a)}_{ \psi^{\Enc(Z)}}^2 + \negl(\lambda) + O(\sqrt \eps) \\
&= \E_{w \sim D'} \E_{a \in \bits^n} \norm{O^w Z(1_{\{w = Z\}}) Z(a) - X(1_{\{w = X\}}) Z(a)}_{ \psi^{\Enc(Z)}}^2 + \negl(\lambda) + O(\sqrt \eps) \\
\intertext{Now performing the same steps in reverse:}
&= \E_{w \sim D'} \E_{a \in \bits^n} \norm{O^w Z(1_{\{w = Z\}}) - X(1_{\{w = X\}})}_{Z(a) \psi^{\Enc(Z)} Z(a)}^2 + \negl(\lambda) + O(\sqrt \eps) \\
&\leq \E_{w \sim D'} \norm{O^w Z(1_{\{w = Z\}}) - X(1_{\{w = X\}})}_{\psi^{\Enc(Z)}}^2 + \negl(\lambda) + O(\sqrt \eps)
\intertext{From \cref{lem:O_factorises} we have that $O^w = O^w_X O^w_Z$. 
Inserting this and using \cref{lem:state_dep_norm_props} \cref{item:left_unitary_inv}, we get that:}
&= \E_{w \sim D'} \norm{O^w_Z Z(1_{\{w = Z\}}) - O^w_X X(1_{\{w = X\}})}_{\psi^{\Enc(Z)}}^2 + \negl(\lambda) + O(\sqrt \eps)
\intertext{By \cref{lem:state_dep_norm_props} \cref{item:triangle_ineq_sq}:}
&\leq 2 \E_{w \sim D'} \norm{O^w_Z Z(1_{\{w = Z\}}) - \1}_{\psi^{\Enc(Z)}}^2 + \norm{O^w_X X(1_{\{w = X\}}) - \1}_{\psi^{\Enc(Z)}}^2 + \negl(\lambda) + O(\sqrt \eps) \\
\intertext{By \cref{lem:state_dep_norm_props} \cref{item:left_unitary_inv}:}
&\leq 2 \E_{w \sim D'} \norm{Z(1_{\{w = Z\}}) - O^w_Z}_{\psi^{\Enc(Z)}}^2 + \norm{X(1_{\{w = X\}}) - O^w_X}_{\psi^{\Enc(Z)}}^2 + \negl(\lambda) + O(\sqrt \eps) \\
\intertext{By \cref{lem:O_indiv_close}:}
&\leq \negl(\lambda) + O(\sqrt \eps) \,.
\end{align*}
\end{proof}

\subsection{Subsampled Hamiltonian}
\label{sec:hamiltonian-subsamp}

\begin{definition}[Hamiltonian problem]
\label{def:hamiltonian-problem}
We refer to a tuple of the form $(H, \alpha, \beta)$, where $H$ is a Hermitian operator and $\alpha$ and $\beta$ are both real numbers, as a \emph{Hamiltonian problem}. We may refer to such a tuple where $H$ acts on $n$ qubits as $n$-qubit Hamiltonian problems.
\end{definition}

\begin{definition}[deciding a Hamiltonian problem]
Given a tuple of the form $(H, \alpha, \beta)$, where $H$ is a Hermitian operator and $\alpha$ and $\beta$ are both real numbers, we refer to the problem of deciding whether the ground energy of $H$ is $\leq \alpha$ (yes case) or $\geq \beta$ (no case) as the problem of \emph{deciding} $(H, \alpha, \beta)$.
\end{definition}

\begin{definition}[family of Hamiltonians]
We will use the notation $\cH = \{\cH(n)\}_{n \in \mathbb{N}}$ to refer to a family of Hamiltonians, i.e., a collection of sets $\cH(n)$ (one for every value of $n \in \mathbb{N}$) such that the $n$th set $\cH(n)$ contains only $n$-qubit Hamiltonian problems. We assume that the length of the binary description of $(H, \alpha, \beta)$ for any $(H, \alpha, \beta) \in \cH(n)$ is $\poly(n)$.
\end{definition}

\begin{definition}[QMA-completeness of a family of Hamiltonians]
We say a family of Hamiltonians $\cH = \{\cH(n)\}_{n \in \mathbb{N}}$ is QMA-complete if, for every promise problem $A = (A_\mathrm{yes}, A_\mathrm{no})$ in $\mathsf{QMA}$, and every instance $x \in \{0,1\}^*$, the problem of deciding whether $x \in A_\mathrm{yes}$ or $x \in A_\mathrm{no}$ can be Karp reduced in polynomial time (in $|x|$, given as input $x$ and also the algorithm $C$ which characterises the verifier for $A$) to the problem of deciding some element of $\cH(n)$ for $n = \poly(|x|)$.
\end{definition}

\begin{definition}[2-local X/Z Hamiltonian family with inverse polynomial promise gap]
A 2-local X/Z Hamiltonian family is a family of Hamiltonians where $\cH(n)$ contains only $(H, \alpha, \beta)$ such that $H$ is a sum of at most $m(n)$ terms for $m(n) = \poly(n)$, each one of which is an $n$-qubit tensor product of $\sigma_Z$, $\sigma_X$ and $\1$ such that at most 2 of the $n$ factors in the tensor product are not $\1$. If, in addition, the tuples in $\cH(n)$ all have the same values of $\alpha$ and $\beta$ (for every $n$), and it is the case that $\beta(n) - \alpha(n)$ (where $\alpha$ and $\beta$ are considered as functions of $n$) is lower bounded by $\frac{1}{\poly(n)}$, then we say $\cH = \{\cH(n)\}_{n \in \mathbb{N}}$ is a 2-local X/Z Hamiltonian family with inverse-polynomial promise gap.
\end{definition}

\begin{lemma}
\label{lem:xz-complete}
There is a family of 2-local X/Z Hamiltonians with inverse-polynomial promise gap which is $\mathsf{QMA}$-complete.
\end{lemma}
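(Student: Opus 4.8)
The plan is to start from the well-known fact that the 2-local Hamiltonian problem is $\mathsf{QMA}$-complete with inverse-polynomial promise gap (Kitaev's original construction, as presented in~\cite{KSV02}), and then to reduce it to the special case where every local term is a tensor product of Pauli $\sigma_X$, $\sigma_Z$ and $\1$. There are two things to arrange: (i) each term should involve only $\sigma_X$ and $\sigma_Z$ (no $\sigma_Y$, no off-diagonal terms with complex entries), and (ii) each term should be 2-local. For (i), I would invoke the standard ``gadget'' reductions: the result of Biamonte and Love~\cite{biamonte2008realizable} (and the closely related work of Cubitt--Montanaro, Schuch--Verstraete) shows that the 2-local $XZ$ model—Hamiltonians whose terms are drawn from $\{\sigma_X \otimes \sigma_X, \sigma_Z \otimes \sigma_Z, \sigma_X \otimes \sigma_Z, \sigma_X, \sigma_Z\}$ (up to real coefficients and single-qubit identity padding)—is already $\mathsf{QMA}$-complete. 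Since the reductions involved are perturbative gadget reductions that preserve a $\frac{1}{\poly(n)}$ promise gap (possibly after rescaling, which only affects $\alpha,\beta$ by a polynomial factor and the number of qubits by a polynomial factor), the resulting family still has inverse-polynomial promise gap.

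Concretely, the key steps in order would be: (1) Recall that a generic $\mathsf{QMA}$ verifier $C$ on instance $x$ can be compiled, via the Kitaev circuit-to-Hamiltonian construction, into a 2-local Hamiltonian $(H_0,\alpha_0,\beta_0)$ on $n_0 = \poly(|x|)$ qubits with $\beta_0 - \alpha_0 \geq \frac{1}{\poly(n_0)}$, and this reduction is polynomial-time computable given $x$ and $C$. (2) Apply the gadget reduction of~\cite{biamonte2008realizable} to replace $H_0$ by a Hamiltonian $H_1$ whose every term lies in the 2-local $XZ$ family, at the cost of blowing up the qubit count to $n_1 = \poly(n_0)$ and adjusting the promise thresholds to $(\alpha_1,\beta_1)$ with $\beta_1 - \alpha_1 \geq \frac{1}{\poly(n_1)}$; crucially this reduction is still polynomial-time and preserves the yes/no structure (completeness/soundness). (3) Normalize so that all tuples in $\mathcal{H}(n)$ for a fixed $n$ share the same $\alpha(n)$ and $\beta(n)$: this can be done by padding with trivial terms and uniform rescaling, or simply by noting that the thresholds output by the gadget construction depend only on $n$ and the fixed promise gap of the underlying $\mathsf{QMA}$ problem after standard amplification, not on the particular instance. (4) Collect these Hamiltonians into a family $\mathcal{H} = \{\mathcal{H}(n)\}_n$ and observe that the binary description length of each $(H,\alpha,\beta)$ is $\poly(n)$ since $H$ has $\poly(n)$ terms each of $\poly(n)$-bit description, verifying the definition of a 2-local X/Z Hamiltonian family with inverse-polynomial promise gap, and that Karp-reducibility from any $A \in \mathsf{QMA}$ holds by composing steps (1)--(3).

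The main obstacle is not conceptual but bookkeeping: one must make sure that the perturbative gadget reductions in~\cite{biamonte2008realizable} genuinely preserve an inverse-polynomial promise gap rather than collapsing it, since gadget constructions typically introduce large energy scales (coupling strengths growing polynomially or worse) and a correspondingly shrunken effective gap on the low-energy subspace. The standard fix is to first amplify the promise gap of the original $\mathsf{QMA}$ problem to a constant (or to $1 - \negl$) via parallel repetition before applying Kitaev's construction, so that there is ``room'' for the gadgets to eat a polynomial factor and still leave $\beta - \alpha \geq \frac{1}{\poly(n)}$; alternatively one cites the versions of these reductions (e.g.\ in the Cubitt--Montanaro classification or in~\cite{biamonte2008realizable} itself) that are explicitly stated to preserve inverse-polynomial gaps. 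I would handle this by stating the amplification step explicitly and then citing the gadget reduction as a black box with the promise-gap-preservation property made precise. A secondary, purely cosmetic issue—ensuring all instances at a given $n$ use the same $(\alpha,\beta)$—is dispatched by the padding/rescaling remark above, and since $n$ itself is allowed to be any fixed polynomial in $|x|$, there is no tension with the definition of $\mathsf{QMA}$-completeness of a Hamiltonian family.
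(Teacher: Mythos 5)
Your proposal is correct and rests on the same foundation as the paper: the paper's entire proof is a one-line citation to Theorem~2 of~\cite{biamonte2008realizable}, which is exactly the gadget-reduction result you invoke as your key step. The additional bookkeeping you sketch (gap preservation through the perturbative gadgets, normalising $\alpha,\beta$ across instances of the same size) is sound but is subsumed by that cited theorem, so no further argument is needed.
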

\begin{proof}
This is Theorem~2 of~\cite{biamonte2008realizable}.
\end{proof}

\begin{lemma}
\label{lem:ksv-ampl}
Let $\cH = \{\cH(n)\}_{n \in \mathbb{N}}$ be a $\mathsf{QMA}$-complete 2-local X/Z Hamiltonian family. Then there exists a polynomial $t(n)$ (the amplification parameter) such that deciding any $(H, \alpha(n), \beta(n)) \in \cH(n)$ can be efficiently reduced to deciding a problem of the form $(H', \alpha'(n), \beta'(n))$, where there exists a negligible function $\mathsf{negl}(\cdot)$ such that $\beta'(n) - \alpha'(n) \geq 1 - \mathsf{negl}(n)$, and $H'$ (which acts on $t(n) \cdot n$ qubits) has the following form:
\begin{align*}
H' = \1 - \underset{w \sim D}{\E}\sum_{a \in T(w)} \pi^w_a
\end{align*}
where $D$ is a distribution over $\{\1,X,Z\}^{t(n) \cdot n}$ which can be efficiently sampled from, and membership in $T(w) \subseteq \{0,1\}^{t(n) \cdot n}$ can be decided in time polynomial in $n$ given $w$.
\end{lemma}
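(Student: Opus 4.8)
The plan is to carry out the ``naïve parallel amplification'' of Kitaev--Solovay--Vyalyi~\cite{KSV02} (used also in~\cite{BKLMMVVY22}), while keeping track of the fact that the resulting verifier still has the ``measure Paulis in the bases $w$ and accept iff the outcome lies in $Q(w)$'' form. First I would invoke \Cref{lem:xz-complete} to assume the instance has been reduced to a $2$-local X/Z Hamiltonian $H$ on $n$ qubits with $\beta(n)-\alpha(n)\geq 1/q_0(n)$ for a polynomial $q_0$, and then rewrite $H$ in ``measurement form''. Collecting terms, $H=\sum_w c_w\,\sigma_w$ where $w$ ranges over a $\poly(n)$-size set of strings in $\{\1,X,Z\}^n$ with at most two non-identity entries and $c_w\in\R$. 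For each $w$, the binary observable $\mathrm{sign}(c_w)\sigma_w$ has $-1$-eigenprojector $\Pi_w\deq\sum_{a\in Q_0(w)}\pi^w_a$ with $Q_0(w)=\{a\in\bits^n:\mathrm{sign}(c_w)(-1)^{1_{\{w\neq\1\}}\cdot a}=-1\}$, and membership in $Q_0(w)$ is decidable in polynomial time from $w$ and the hardcoded coefficient $c_w$. Since $c_w\sigma_w=|c_w|\1-2|c_w|\Pi_w$ and $C\deq\sum_w|c_w|\leq\poly(n)$, the operator $H_1\deq\tfrac12\1+\tfrac1{2C}H=\1-\E_{w\sim D_0}\sum_{a\in Q_0(w)}\pi^w_a$ (with $D_0(w)=|c_w|/C$) has spectrum in $[0,1]$, and deciding $(H,\alpha,\beta)$ is equivalent to deciding $(H_1,\alpha_1,\beta_1)$ with $\alpha_1=\tfrac12+\tfrac\alpha{2C}$, $\beta_1=\tfrac12+\tfrac\beta{2C}$ and promise gap $\gamma(n)\deq\tfrac{\beta-\alpha}{2C}\geq 1/\poly(n)$. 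Concretely, $1-\Tr[H_1\rho]$ is exactly the acceptance probability of the verifier $V_0$ that samples $w\sim D_0$, measures $\rho$ in the Pauli bases $w$, and accepts iff the outcome is in $Q_0(w)$.

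Next I would repeat $V_0$ in parallel $t=t(n)\deq\lceil q(n)^3\rceil$ times, where $\gamma(n)=1/q(n)$. Let $\theta=1-\tfrac{\alpha_1+\beta_1}2$, the midpoint of the two single-copy acceptance rates. Define $V'$ on $tn$ qubits: sample $w=(w^{(1)},\dots,w^{(t)})\sim D\deq D_0^{\otimes t}$ (a distribution over $\{\1,X,Z\}^{tn}$ efficiently sampleable with $\poly(n)$ random bits), measure in the Pauli bases $w$ to get $a=(a^{(1)},\dots,a^{(t)})$, and accept iff $a\in T(w)\deq\{a:|\{\ell:a^{(\ell)}\in Q_0(w^{(\ell)})\}|\geq\theta t\}$, where membership in $T(w)$ is decidable in $\poly(n)$ time given $w$. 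Then $H'\deq\1-\E_{w\sim D}\sum_{a\in T(w)}\pi^w_a$ has exactly the required form (and spectrum in $[0,1]$), and its ground energy is $1$ minus the optimal acceptance probability of $V'$. For completeness, take the witness $\rho^{\otimes t}$ for a ground state $\rho$ of $H_1$ of energy $\leq\alpha_1$: the $t$ block-acceptances are i.i.d.\ Bernoulli with parameter $1-\Tr[H_1\rho]\geq 1-\alpha_1$, and since $(1-\alpha_1)-\theta=\gamma/2$, Hoeffding gives acceptance probability $\geq 1-e^{-t\gamma^2/2}=1-\negl(n)$, so the ground energy of $H'$ is $\leq\alpha'(n)\deq e^{-t\gamma^2/2}$ in the yes case.

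For soundness I would argue that even an entangled cheating witness $\xi$ on $tn$ qubits obeys the same Chernoff tail as $t$ independent copies. Fix $\xi$ and let $X_\ell$ be the indicator that block $\ell$ accepts. For fixed $w$ the per-block accept projectors commute across blocks, and $\E_{w^{(\ell)}\sim D_0}$ of the $\ell$-th block's accept projector is precisely $\1-H_1$, which in the no case satisfies $\1-H_1\preceq(1-\beta_1)\1$; peeling off blocks one at a time using the elementary fact that $A\preceq p\1,\ B\succeq 0\implies A\otimes B\preceq p(\1\otimes B)$ shows $\Pr[X_\ell=1\ \forall\ell\in S]\leq(1-\beta_1)^{|S|}$ for every $S\subseteq[t]$. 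Expanding $\E[z^{\sum_\ell X_\ell}]=\E[\prod_\ell(1+(z-1)X_\ell)]$ for $z\geq 1$ then yields the moment generating function bound $(1+(z-1)(1-\beta_1))^t$, i.e.\ the i.i.d.\ Bernoulli$(1-\beta_1)$ bound, so the Chernoff bound (with Pinsker, $\theta-(1-\beta_1)=\gamma/2$) gives acceptance probability $\leq e^{-t\gamma^2/2}=\negl(n)$ for every $\xi$; hence the ground energy of $H'$ is $\geq\beta'(n)\deq 1-e^{-t\gamma^2/2}$ in the no case, and $\beta'-\alpha'\geq 1-\negl(n)$.

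I expect the only genuinely non-routine point to be this last step: a cheating prover may entangle the $t$ witness copies, so the $t$ acceptance events need not be independent and a naive union bound over threshold sets is too lossy when $\theta$ is close to $1$. The resolution is the observation above that averaging the $\ell$-th block's accept projector over $w^{(\ell)}$ recovers exactly $\1-H_1$, which is what forces the conjunction probabilities (hence the whole upper tail) to be dominated by their i.i.d.\ values. This is essentially the content of~\cite{KSV02}, so one could alternatively invoke parallel amplification of $\QMA$ as a black box; everything else (the affine rewriting of $H$, and verifying that $D=D_0^{\otimes t}$ and $T$ retain efficient sampleability and efficient membership testing) is bookkeeping.
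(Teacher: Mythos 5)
Your proposal is correct and follows essentially the same route as the paper: reduce the 2-local X/Z Hamiltonian to ``measurement form'' (the paper does this via the Morimae--Fitzsimons energy test, which is the same construction as your explicit affine rescaling $H \mapsto \tfrac{1}{2}\1 + \tfrac{1}{2C}H$), then apply KSV-style threshold parallel repetition with $D = D_0^{\otimes t}$ and a midpoint threshold; where the paper simply cites Lemma~14.1 of \cite{KSV02} for soundness against entangled witnesses, you supply the block-peeling/MGF-domination argument explicitly, which is a correct self-contained rendering of that step. One small fix: with $t=\lceil q^3\rceil$ the error $e^{-t\gamma^2/2}=e^{-q/2}$ is negligible only when $q(n)=\omega(\log n)$, so you should take, e.g., $t=\lceil q^2 n\rceil$ to cover the case of a constant or slowly growing $1/\gamma$.
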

\begin{proof}
We do this reduction in two stages. Firstly, we note that instead of taking a 2-local X/Z Hamiltonian problem $(H, \alpha(n), \beta(n))$ as our starting point, we can take a tuple $(H'', \alpha''(n), \beta''(n))$ where $H''$ is in the form
\begin{align*}
H'' = \1 - \underset{{w \sim D''}}{\E}\sum_{a \in S(w)} \pi^{w}_{a}
\end{align*}
where $D''$ is an efficiently sampleable distribution over $\{\1,X,Z\}^n$, and $S(w) \subseteq \{0,1\}^n$ is a set for which membership can be decided efficiently given $w$. (The `$\1 -$' is merely to make sure that the yes case is the low-energy case.) The promise gap will decrease at most by a factor of $\frac{1}{\poly(n)}$ as a result of this conversion (so that $\beta''(n) - \alpha''(n)$ is still at least $\frac{1}{\poly(n)}$). To get $H''$, we will use the Morimae-Fitzsimons energy test~\cite{morimae2016post}. This procedure will have an inverse-polynomial promise gap if the original Hamiltonian had an inverse-polynomial promise gap and constant locality, as Morimae and Fitzsimons calculate. It is straightforward to see that the success probability of the procedure on any given pure state $\ket{\psi}$ can be expressed as $1 - \bra{\psi} H'' \ket{\psi}$ for some $H''$ of the form above.

Then, to amplify the gap, we apply the technique from Lemma 14.1 in Kitaev, Shen and Vyalyi~\cite{KSV02} to $H''$, to obtain the desired Hamiltonian $H'$. The amplification technique is to measure a independent randomly chosen term from $H''$ on $t(n)$ many $n$-qubit registers, and to accept if the number of accepting outcomes is above a certain threshold. Concretely,
\[ H' = \1 - \E_{w \sim D} \sum_{a \in T(w)} \pi^w_a,\]
where $D = (D'')^{\otimes t(n)}$, and $T(w)$ is defined implicitly by the following algorithm to decide membership in it:
\begin{enumerate}
\item Divide up the answer string $a$ into $t(n)$ blocks $B_i$ of $n$ bits each. Also divide up $w$ into $t(n)$ blocks $w_i$ such that each $w_i$ is in $\{\1,X,Z\}^n$.
\item For each block $B_i, i \in [t(n)]$, compute the quantity (in $\{-1,+1\}$, for notational convenience) $b_i = (-1)^{\1[a_{|B_i} \in S(w_i)]}$. This results in $t(n)$ quantities $b_1, \cdots, b_{t(n)} \in \{-1,+1\}$.
\item $a$ is in $T(w)$ iff
\begin{align*}
\sum_{i=1}^{t(n)} b_i < t(n) \cdot \frac{\alpha''(n) + \beta''(n)}{2}.
\end{align*}
\end{enumerate}
It is clear that this algorithm runs in time $\poly(n)$. As for the condition on eigenvalues of $H'$, it is shown in \cite{KSV02} that the bound $\beta'(n) - \alpha'(n) \geq 1 - \negl(n)$ holds if $t(n)$ is chosen to be polynomial in $n$.
\end{proof}

\begin{lemma}
\label{lem:prg-expansion}
Fix a PRG family $\cF = \{ f_\lambda : \{0,1\}^\lambda \rightarrow \{0,1\}^{\lambda} \}_{\lambda \in \mathbb{N}}$. Assuming that $\cF$ is secure against (non-uniform) QPT adversaries, and for any function $\ell(\cdot)$ such that $\ell(x) = \omega( \log(x) )$, there exists a PRG family $\cG = \{ g_\lambda : \{0,1\}^{\ell(\lambda)} \rightarrow \{0,1\}^{\lambda} \}_{\lambda \in \mathbb{N}}$ which is secure against (non-uniform) QPT adversaries.
\end{lemma}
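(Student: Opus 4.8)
The plan is to prove this by the standard length-extension (``counter mode'') construction for pseudorandom generators, together with a hybrid argument; the only point that requires care is the choice of security parameter at which the base generator is instantiated. First, by the standard PRG boosting argument (a hybrid over output blocks), we may assume without loss of generality that $\cF$ provides, for every $m \in \Nat$, a length-doubling generator $f_m : \bits^m \to \bits^{2m}$ that is pseudorandom against non-uniform $\QPT$ adversaries. The case $\ell(\lambda) \geq \lambda$ is trivial (take $g_\lambda$ to be truncation of its seed, which already outputs a perfectly uniform string), so assume $\ell(\lambda) < \lambda$. Define $g_\lambda$ on a seed $s \in \bits^{\ell(\lambda)}$ by setting $s^{(0)} = s$, computing $(s^{(i)}, y^{(i)}) = f_{\ell(\lambda)}(s^{(i-1)})$ with $|s^{(i)}| = |y^{(i)}| = \ell(\lambda)$ for $i = 1, \dots, t$, where $t = \lceil \lambda/\ell(\lambda) \rceil$, and outputting the first $\lambda$ bits of $y^{(1)} y^{(2)} \cdots y^{(t)}$. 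Since $t = \poly(\lambda)$ and $f_{\ell(\lambda)}$ is efficiently computable, $g_\lambda$ is $\QPT$-computable, with seed length $\ell(\lambda)$ and output length $\lambda$, as required.

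For security I would run the usual hybrid over the $t$ iterations. Let $H_i$ be the distribution that draws $y^{(1)}, \dots, y^{(i)}$ and $s^{(i)}$ uniformly and independently, computes $(s^{(j)}, y^{(j)}) = f_{\ell(\lambda)}(s^{(j-1)})$ for $j = i+1, \dots, t$, and outputs the first $\lambda$ bits of $y^{(1)} \cdots y^{(t)}$; then $H_0$ is the output of $g_\lambda$ on a uniform seed and $H_t$ is the uniform distribution on $\bits^\lambda$. A non-uniform $\QPT$ distinguisher for $H_{i-1}$ versus $H_i$ with advantage $\delta$ yields a non-uniform $\QPT$ distinguisher for $f_{\ell(\lambda)}$ with advantage $\delta$: the reduction samples the uniform prefix $y^{(1)}, \dots, y^{(i-1)}$ itself, places its challenge (either $f_{\ell(\lambda)}$ of a uniform seed, or a uniform $2\ell(\lambda)$-bit string) into position $i$, completes the remaining iterations, and runs the given distinguisher, passing the latter's (possibly quantum) advice along unchanged. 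Summing the triangle inequality over $i \in [t]$, the distinguishing advantage of any non-uniform $\QPT$ adversary against $\cG$ is at most $t$ times the distinguishing advantage of a non-uniform $\QPT$ adversary against $\cF$.

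The step I expect to be the main obstacle is the accompanying parameter bookkeeping: since $\cF$ is instantiated on seeds of length $\ell(\lambda)$, the reduction's advantage is a priori only controlled in terms of $\ell(\lambda)$, whereas we need it to be $\negl(\lambda)$ even after the $\poly(\lambda)$ blow-up from the number of hybrids $t$. This is precisely what the hypothesis $\ell(\lambda) = \omega(\log\lambda)$ provides: it guarantees that $\bits^{\ell(\lambda)}$ is large enough --- super-polynomially so in $\lambda$ --- that $\cF$, applied on such seeds, remains secure against the $\poly(\lambda)$-time distinguishers arising in the reduction, so that the $t$-fold loss is absorbed into a quantity that is still negligible in $\lambda$. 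The remaining bookkeeping --- verifying that every reduction above is advice-preserving, so that security of $\cG$ against \emph{non-uniform} (quantum-advice) $\QPT$ adversaries follows from the corresponding property of $\cF$ --- is routine.
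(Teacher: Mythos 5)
Your construction (iterated stretching of a length-doubling generator instantiated at seed length $\ell(\lambda)$, followed by a hybrid over the $t=\lceil\lambda/\ell(\lambda)\rceil$ blocks) is the standard one and is essentially what the paper intends --- its own proof is a single sentence deferring to GGM-style ``standard arguments'', so you have supplied strictly more detail than the paper does. The hybrid itself is fine, as is your silent repair of the statement's typo that $f_\lambda$ maps $\lambda$ bits to $\lambda$ bits (a zero-stretch ``PRG'' is vacuous; your WLOG boosting step needs $\cF$ to have at least one bit of stretch).

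The step you yourself flag as the main obstacle, however, is resolved incorrectly. Your reduction turns a $\poly(\lambda)$-time distinguisher for $g_\lambda$ into a $\poly(\lambda)$-time distinguisher for $f_{\ell(\lambda)}$, but the hypothesis on $\cF$ only controls adversaries that are polynomial-time \emph{in the seed length} $m$, with advantage negligible \emph{in $m$}. In the regime actually used by the paper ($\ell(\lambda)=\poly\log\lambda$ in \Cref{thm:hamiltonian-subsamp}), a $\poly(\lambda)$-time adversary runs in time superpolynomial --- indeed subexponential --- in $m=\ell(\lambda)$, so it is not a legitimate adversary against $f_{\ell(\lambda)}$ under the stated assumption; and even setting that aside, a function that is $\negl(m)$ need not be $\negl(\lambda)$ when $m=\ell(\lambda)=\lambda^{o(1)}$ (e.g.\ $\negl(m)=m^{-\log m}$ at $m=(\log\lambda)^2$ is not negligible in $\lambda$). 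The fact that $2^{\ell(\lambda)}$ is superpolynomial in $\lambda$ only rules out brute force; it does not imply that $\cF$ on $\ell(\lambda)$-bit seeds fools $\poly(\lambda)$-time distinguishers. To close the gap one must either assume $\cF$ is subexponentially (or at least suitably superpolynomially) secure, or read its security as quantified against adversaries and advantage measured in the \emph{output} length rather than the seed length. To be fair, the paper's one-line proof does not engage with this point either, but since your write-up explicitly identifies it as the crux and then offers a justification that does not hold, it stands as a genuine gap in the argument as written.
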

\begin{proof}
Standard arguments to show that PRGs can be used to construct PRFs~\cite{GGM} suffice.
\end{proof}

\begin{theorem}
\label{thm:hamiltonian-subsamp}
Let $\cH = \{\cH(n)\}_{n \in \mathbb{N}}$ be a $\mathsf{QMA}$-complete 2-local X/Z Hamiltonian family. Then there exists a polynomial $t(n)$ (the amplification parameter) such that deciding any $(H, \alpha(n), \beta(n)) \in \cH(n)$ can be efficiently reduced to deciding a problem of the form $(H', \alpha'(n), \beta'(n))$, where there exists a negligible function $\mathsf{negl}(\cdot)$ such that $\beta'(n) - \alpha'(n) \geq 1 - \mathsf{negl}(n)$, and $H'$ (which acts on $t(n) \cdot n$ qubits) has the following form:
\begin{align*}
H' = \1 - \underset{w \sim D}{\E}\sum_{a \in T(w)} \pi^w_a
\end{align*}
where membership in $T(w) \subseteq \{0,1\}^{t(n) \cdot n}$ can be decided in time polynomial in $n$ given $w$,
and $D$ is a distribution over $\{\1,X,Z\}^{t(n) \cdot n}$ which can be efficiently sampled from using sampling randomness of length $2^{\poly\log(n)}$.
\end{theorem}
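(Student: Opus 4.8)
The plan is to obtain the statement from \Cref{lem:ksv-ampl} by one ``subsampling'' step. \Cref{lem:ksv-ampl} already hands us, for the same polynomial $t(n)$, a Hamiltonian $H_0 = \1 - \E_{w \sim D_0}\sum_{a \in T(w)}\pi^w_a$ acting on $t(n)\cdot n$ qubits, with constant promise gap $\beta_0(n) - \alpha_0(n) \ge 1 - \mathsf{negl}(n)$, such that membership in $T(w)$ is decidable in time $\poly(n)$ given $w$, and $D_0 = (D'')^{\otimes t(n)}$ is sampleable in time $\poly(n)$ from some number $R = R(n) = \poly(n)$ of uniformly random bits. The only thing left to do is to replace $D_0$ by a distribution $D$ over the \emph{same} set $\{\1,X,Z\}^{t(n)\cdot n}$ that is sampleable from $\poly\log(n)$ bits of randomness (which in particular gives the $2^{\poly\log n}$ bound claimed in the statement), keeping $T$ literally unchanged, while moving the ground energy of $H_0$ by at most $\mathsf{negl}(n)$. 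I would do this with the expanding PRG of \Cref{lem:prg-expansion}: fix a PRG family secure against non-uniform $\QPT$ adversaries with quantum advice, and apply \Cref{lem:prg-expansion} with $\ell(\secp) = (\log\secp)^2$ (which is $\omega(\log\secp)$) to obtain $\cG = \{\cG_\secp : \{0,1\}^{\ell(\secp)} \to \{0,1\}^\secp\}$ secure against the same class. Instantiating at output length $\secp = R$ and defining $D$ by ``draw a uniform seed $s \in \{0,1\}^{\ell(R)}$, expand it to $\cG_R(s) \in \{0,1\}^R$, and run the $D_0$-sampler on these bits'' gives a distribution over $\{\1,X,Z\}^{t(n)\cdot n}$ that is sampleable in time $\poly(n)$ from $\ell(R) = \poly\log n$ bits; I then set $H' = \1 - \E_{w\sim D}\sum_{a\in T(w)}\pi^w_a$. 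The reduction is the one from \Cref{lem:ksv-ampl} followed by this modification, which only rewrites the description of the sampler for $D$ and shifts the thresholds, so it is still polynomial time.

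Next I would argue that the ground energy barely changes. For $r \in \{0,1\}^R$, let $w(r) \in \{\1,X,Z\}^{t(n)\cdot n}$ be the string the $D_0$-sampler outputs and let $\Pi_r = \sum_{a \in T(w(r))}\pi^{w(r)}_a$, a measurement operator computable in time $\poly(n)$ from $r$; then for every state $\tau$ on $t(n)\cdot n$ qubits, $\Tr[H_0\tau] = 1 - \E_{r}\Tr[\Pi_r\tau]$ and $\Tr[H'\tau] = 1 - \E_{s}\Tr[\Pi_{\cG_R(s)}\tau]$. For any fixed family of states $\{\tau_n\}$ with $\tau_n$ on $t(n)\cdot n$ qubits, the procedure ``on input $r$, apply the two-outcome measurement $\{\Pi_r, \1 - \Pi_r\}$ to the advice state $\tau_n$ and output the outcome'' is a non-uniform $\QPT$ adversary with quantum advice, so PRG security yields $|\E_r \Tr[\Pi_r\tau_n] - \E_s\Tr[\Pi_{\cG_R(s)}\tau_n]| \le \mathsf{negl}(n)$ (using $R = \poly(n)$). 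Taking $\{\tau_n\}$ to be the honestly amplified witness $\rho^{\otimes t(n)}$ for a yes-instance and a ground state of $H'$ for a no-instance shows that the ground energies of $H'$ and $H_0$ differ by at most $\mathsf{negl}(n)$; hence setting $\alpha'(n) = \alpha_0(n) + \mathsf{negl}(n)$ and $\beta'(n) = \beta_0(n) - \mathsf{negl}(n)$ keeps the gap at $1 - \mathsf{negl}(n)$ and makes deciding $(H_0,\alpha_0,\beta_0)$ reduce to deciding $(H',\alpha',\beta')$. Composing with \Cref{lem:ksv-ampl} finishes the proof.

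The one subtlety — and the reason we insist that the PRG be secure against adversaries with \emph{quantum advice} — is the no-instance case, where the energy lower bound must hold for \emph{every} state, not only for efficiently preparable ones, while a PRG fools only efficient distinguishers. The resolution is that although a low-energy state may be infeasible to prepare, once it is supplied to the distinguisher as non-uniform quantum advice the rest of that distinguisher (recover $w(r)$ from the seed, decide membership in $T(w(r))$, apply the resulting measurement) runs in polynomial time; and since the map from instances to energy-minimising states is a bona fide non-uniform advice family, a single negligible error bound applies uniformly. All remaining points — efficiency of the sampler and of computing $\Pi_r$ from $r$, and that $\Pi_r$ is a legitimate (commuting, projective) measurement on the witness registers — are routine and were already verified in the proof of \Cref{lem:ksv-ampl}.
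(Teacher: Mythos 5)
Your proposal is correct and follows essentially the same route as the paper: first apply \Cref{lem:ksv-ampl}, then derandomise the term-sampling distribution with the seed-stretching PRG of \Cref{lem:prg-expansion}, and argue that the ground energy moves by at most $\negl(n)$ via a reduction to PRG security against non-uniform adversaries with quantum advice (taking the relevant extremal state as advice in each direction). The only difference is that your distinguisher uses a single challenge string and a single copy of the advice state, so its advantage literally equals the energy gap $|\E_r \Tr[\Pi_r\tau_n] - \E_s\Tr[\Pi_{\cG_R(s)}\tau_n]|$, whereas the paper gives the adversary polynomially many challenges and copies of the ground state and amplifies via a Chernoff bound; your version is a clean simplification of the same argument.
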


\begin{proof}
Fix a PRG family $\cG = \{ g_\lambda : \{0,1\}^{\ell(\lambda)} \rightarrow \{0,1\}^{\lambda} \}_{\lambda \in \mathbb{N}}$ with $\ell(\lambda) = \poly \log (\lambda)$ which is secure against non-uniform QPT adversaries. (Such a family exists by \Cref{lem:prg-expansion}.) Take an X/Z Hamiltonian family $\cH = \{\cH(n)\}_{n \in \mathbb{N}}$ with gap $\beta(n) - \alpha(n) = 1 - \mathsf{negl}(n)$ which is QMA-complete, such that any $H$ such that $(H, \alpha(n), \beta(n)) \in \cH(n)$ is a Hamiltonian on $t(n) \cdot n$ qubits with the following form:
\begin{align}
H = \1 - \underset{w \sim \Delta}{\E}\sum_{a \in T(w)} \pi^w_a \label{eq:original-hamiltonian}
\end{align}
where $\Delta$ is a distribution over $\{\1,X,Z\}^{t(n) \cdot n}$ which can be efficiently sampled from, and membership in $T(w) \subseteq \{0,1\}^{t(n) \cdot n}$ can be decided in time polynomial in $n$ given $w$. (Such a family exists by \Cref{lem:ksv-ampl}.)

The following algorithm accepts any given state $\rho$ with probability exactly $1 - \Tr[H \rho]$:
\begin{process}
\label{proc:sample-original}
\end{process}
\begin{enumerate}
\item Sample $w$ from $D$. We will assume this takes $R(n)$ bits of randomness for some polynomial $R$.
\item Measure every qubit of $\rho$ in the Pauli bases determined by $w$. Let the outcome be $a \in \{0,1\}^{t(n) \cdot n}$.
\item Accept iff $a \in T(w)$.
\end{enumerate}

Now consider the following \emph{derandomised} version of this algorithm:
\begin{process}[Derandomised version of \Cref{proc:sample-original}]
\label{proc:sample-derandomised}
\end{process}
\begin{enumerate}
\item Use the PRG $g_{\ell(R(n))}$ (i.e.~the $\ell(R(n))$th PRG in the family $\cG$) in order to generate a pseudorandom seed $s$ of length $R(n)$ using $\ell(R(n))$ bits of true randomness. Use $s$ to sample $w$ from $D$.
\item Measure every qubit of $\rho$ in the Pauli bases determined by $w$. Let the outcome be $a \in \{0,1\}^{t(n) \cdot n}$.
\item Accept iff $a \in T(w)$.
\end{enumerate}

It is clear that this algorithm accepts any given state $\rho$ with probability exactly $1 - \Tr[H' \rho]$, where $H'$ is the following Hamiltonian:
\begin{align}
H' = \1 - \underset{w \sim D}{\E}\sum_{a \in T(w)} \pi^w_a \label{eq:derandomised-hamiltonian}
\end{align}
where we define $D$ to be the distribution which step (i) in \Cref{proc:sample-derandomised} samples from. Since $R$ is a polynomial and $\ell(n) = \poly\log(n)$, this setting of parameters yields sampling randomness for $D$ of $2^{\poly\log(n)}$, as desired.

Now we show that there exist $\alpha'(\cdot), \beta'(\cdot)$ with $\beta'(n) - \alpha'(n) \geq 1 - \mu(n)$ for some negligible function $\mu(\cdot)$ such that the problem of deciding $\cH'$ is QMA-complete. Consider a sequence of Hamiltonians $\{H(n) : n \in \mathbb{N}\}$ such that, for all $n$, $(H(n), \alpha(n), \beta(n)) \in \cH(n)$. (We fix a single Hamiltonian per value of $n$ for notational simplicity; the same argument can be applied to all the elements of $\cH(n)$.) For each $n$, let $H'(n)$ be the Hamiltonian defined by the `subsampling' procedure above applied to $H(n)$, i.e. the Hamiltonian defined in \Cref{eq:derandomised-hamiltonian}. We will show that there exists a negligible function $\negl(\cdot)$ such that the lowest eigenvalue of $H'(n)$ differs from the lowest eigenvalue of $H(n)$ by at most a $\negl(n)$ amount. 

Suppose, for contradiction's sake, that this is not the case; then, defining $v(n)$ to be the lowest eigenvalue of $H(n)$ and $v'(n)$ to be the lowest eigenvalue of $H'(n)$, (wlog) it is the case that $v'(n) - v(n) \geq p(n)$ for some non-negligible function $p(n)$. Then we construct a QPT adversary $\cA$ taking non-uniform quantum advice which can break the security of $\cG$. Let $r(\cdot)$ be a sufficiently large polynomial, to be set later. We assume (wlog by standard reductions) that, on input security parameter $\lambda$, $\cA$ either receives a challenge consisting of $r(\lambda)$ samples from the range of the PRG $g_\lambda$ or $r(\lambda)$ uniformly random strings of length $\lambda$. On input security parameter $\lambda$, $\cA$ gets as advice the following. Let $n$ be such that $R(n) = \lambda$. $\cA$ gets $r(\lambda)$ many copies of a ground state for the Hamiltonian $H(n)$; the value of $v(n)$, the lowest eigenvalue of $H(n)$; and the value of $p(n) = v'(n) - v(n)$. It then receives its challenge strings $s_1, \dots, s_r$, and does the following for each string $s_i$: use $s_i$ to sample a term from $H(n)$; measure this term of $H(n)$ on the $i$th copy of the ground state it was given as advice; compute the average of all its measurement outcomes; output `PRG' if the average is greater than $v(n) + \frac{p(n)}{2}$, and `uniform' otherwise.

This procedure is clearly polynomial time in $\lambda$. We now analyse $\cA$'s success probability. In the case where $s$ is a uniformly random string, the average energy of a term in $H(n)$ measured against the ground state of $H(n)$ is at most $v(n)$. Then, by a Chernoff bound, the probability that $\cA$ guesses wrongly in this case is upper bounded by the following calculation (letting $X$ be the random variable associated with the sum of $\cA$'s measurement outcomes):
\begin{align*}
\Pr[X \geq (1+\delta)\mu] &\leq \exp\Big(-\frac{2 \delta^2 \mu^2}{r(\lambda)}\Big) \\
\Pr[X \geq (v(n) + \frac{1}{2}p(n))r(\lambda)] &\leq \exp\bigg( 
-2\Big(\frac{1}{2}\frac{p(n)}{v(n)}\Big)^2 \cdot (v(n) \cdot r(\lambda))^2 / r(\lambda) \bigg) \\
&\leq \exp\Big(\frac{1}{2} p(n)^2 r(\lambda)\Big).
\end{align*}
Therefore, setting $r(\lambda)$ to be a sufficiently large polynomial (i.e.~sufficiently larger than $\frac{1}{p(n)}^2$), we can ensure that the probability that $\cA$ guesses wrongly in the uniform case is negligible.

The calculation for the PRG case is similar. By a union bound, then, we can conclude that $\cA$ distinguishes between the PRG and the uniform cases with $1 - \negl(\lambda)$ probability, which contradicts the non-uniform security of the PRG family $\cG$. It follows that the lowest eigenvalue of $H'(n)$ differs from the lowest eigenvalue of $H(n)$ by at most a $\negl(n)$ amount.

\end{proof}

\begin{theorem}\label{thm:qma-to-hamiltonian}
Let $C$ be a $\QMA$ verification algorithm for a $\QMA$ promise problem $A = (A_{yes}, A_{no})$, and let $x$ be an input of length $n$. Then there is a Hamiltonian problem $(H, \alpha(n), \beta(n))$ that can be efficiently computed from the input $(x, C)$, such that if $x \in A_{yes}$, then $H$ is a YES instance of the Hamiltonian problem, and if $x \in A_{no}$, then $H$ is a NO instance, and $\beta(n) - \alpha(n) = 1 - \negl(n)$.

$H$ has the form
\begin{align*}
H' = \1 - \underset{w \sim D}{\E}\sum_{a \in T(w)} \pi^w_a
\end{align*}
\end{theorem}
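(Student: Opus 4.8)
The plan is to chain together the QMA-completeness of 2-local X/Z Hamiltonians with the subsampling-and-amplification reduction of \Cref{thm:hamiltonian-subsamp}. First I would invoke \Cref{lem:xz-complete} to fix a QMA-complete family $\cH = \{\cH(n)\}_{n \in \mathbb{N}}$ of 2-local X/Z Hamiltonians with inverse-polynomial promise gap. By the definition of QMA-completeness of a family of Hamiltonians, the promise problem $A$ together with its verifier $C$ admits, for the input $x$ of length $n$, a polynomial-time Karp reduction producing some $(H_0, \alpha_0, \beta_0) \in \cH(m)$ for $m = \poly(n)$, with $x \in A_{yes}$ mapping to a YES instance and $x \in A_{no}$ to a NO instance, and $\beta_0 - \alpha_0 \geq 1/\poly(m)$.

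Next I would apply \Cref{thm:hamiltonian-subsamp} to the family $\cH$. This yields a polynomial $t(\cdot)$ and an efficient reduction from deciding $(H_0, \alpha_0, \beta_0)$ to deciding a Hamiltonian problem $(H', \alpha'(m), \beta'(m))$, where $\beta'(m) - \alpha'(m) \geq 1 - \negl(m)$, the operator $H'$ acts on $t(m) \cdot m$ qubits and has the form
\[ H' = \1 - \underset{w \sim D}{\E}\sum_{a \in T(w)} \pi^w_a , \]
with $T(w) \subseteq \{0,1\}^{t(m) \cdot m}$ poly-time decidable given $w$, and $D$ a distribution over $\{\1,X,Z\}^{t(m) \cdot m}$ samplable from $2^{\poly\log m}$ bits of randomness. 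Composing the two reductions (both polynomial time, so the composition is polynomial time) gives the desired map $(x, C) \mapsto (H', \alpha'(m), \beta'(m))$; since $m = \poly(n)$ we have $2^{\poly\log m} = 2^{\poly\log n}$ and the promise gap remains $1 - \negl(n)$. Setting $(H, \alpha, \beta) = (H', \alpha', \beta')$ (and absorbing the polynomial blow-up $n \mapsto m$ into the instance-size parameter, as is done elsewhere when writing ``$n = \poly(|x|)$'') finishes the proof.

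There is no serious obstacle here: essentially all of the content is imported from \Cref{lem:xz-complete} (Biamonte et al.) and \Cref{thm:hamiltonian-subsamp}, the latter of which itself packages the Morimae--Fitzsimons energy test, KSV parallel amplification, and the PRG-based subsampling argument. The only points requiring care are bookkeeping ones: checking that the YES/NO correspondence is preserved through both reductions (immediate, since Karp reductions compose and \Cref{thm:hamiltonian-subsamp} perturbs the ground energy only by a negligible amount, so the thresholds $\alpha', \beta'$ still separate the two cases), and tracking the qubit count and randomness length through the polynomial substitution so that ``$2^{\poly\log n}$'' and ``$1 - \negl(n)$'' remain valid after the blow-up $n \mapsto t(m) \cdot m = \poly(n)$.
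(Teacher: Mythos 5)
Your proposal matches the paper's proof exactly: the paper likewise composes \Cref{lem:xz-complete} with \Cref{thm:hamiltonian-subsamp}, and your additional bookkeeping about the polynomial blow-up and the preservation of the gap is correct (the paper states it even more tersely).
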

\begin{proof}
The statement of $\QMA$ completeness in \Cref{lem:xz-complete} implies that there is an efficient mapping $(x, C)$ to a 2-local X/Z Hamiltonian problem $(H_{XZ}, \alpha_{XZ}(n), \beta_{XZ}(n))$. Applying \Cref{thm:hamiltonian-subsamp} to this family of Hamiltonians yields the result.
\end{proof}

\subsection{Analysis of the compiled Hamiltonian test}

\begin{lemma}\label{lem:ham-test-analysis}
    Given an instance $x$, let $H(n)$ be the corresponding Hamiltonian generated in \Cref{prot:main}, and let $P$ be a prover that succeeds in the compiled Pauli braiding and mixed-versus-pure tests (\Cref{prot:pauli-braiding} and \Cref{prot:mixed-vs-pure}) with probability $1-\delta$, and in the Hamiltonian test (\Cref{prot:hamiltonian}) with probability $p_{H}$. Then there exists a state $\rho$ such that
    \[ \Tr[\rho H(n)] \leq (1 - p_{H}) + O(\delta^{1/4}) + \negl(\lambda).\]
\end{lemma}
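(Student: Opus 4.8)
The plan is to extract $\rho$ directly from the prover's Hamiltonian-test strategy, in two moves: first replace Bob's mixed-basis POVMs by the ideal Pauli projectors supplied by \Cref{thm:mixed-vs-pure-projectors}, and then undo Alice's teleportation corrections by a Pauli conjugation that is the \emph{same} for every basis choice $w$, so that what survives the average over $w$ is a single $w$-independent state. \textbf{Setting up:} in \Cref{prot:hamiltonian} Alice always receives the question $\tele$, Bob receives the sampling randomness for $w \sim D$, and the verifier accepts iff $v \oplus t(w,\Dec(\alpha)) \in Q(w)$, where for $(u_x,u_z) \in \bits^n \times \bits^n$ the correction string $t(w,u_x,u_z)$ has $i$-th coordinate $(u_x)_i$ if $w_i = Z$, $(u_z)_i$ if $w_i = X$, and $0$ if $w_i = \1$; by the WLOG conventions of \Cref{sec:modelling} we may also assume $Q(w)$ contains only strings supported off the identity coordinates of $w$. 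Thus $p_H = \E_{w \sim D} \E_{c \leftarrow \Enc(\tele)} \sum_\alpha \sum_{v\, :\, v \oplus t(w,\Dec(\alpha)) \in Q(w)} \Tr[M^w_v \psi^c_\alpha]$.

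\textbf{Idealizing Bob.} Let $V : \cH \to \C^{2^n} \otimes \cH_{\rm aux}$ be the isometry of \Cref{thm:mixed-vs-pure-projectors} (applied with $\eps = \delta$), and set $C^w_v \deq V^\dagger(\pi^w_v \otimes \1_{\rm aux}) V$, which is a valid POVM since $\sum_v \pi^w_v = \1$. Taking $q = \tele$ and $S = \bits^n$ in \Cref{thm:mixed-vs-pure-projectors} yields $\E_{w \sim D} \sum_\alpha \sum_v \| M^w_v - C^w_v \|_{\psi^{\Enc(\tele)}_\alpha}^2 \le O(\sqrt\delta) + \negl(\lambda)$. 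Since the $M^w_v$ are projective, \Cref{lem:close_strats} (applied with the Hamiltonian test viewed as the nonlocal game) shows that the acceptance probability $p_H'$ of the strategy obtained by substituting $\{C^w_v\}$ for $\{M^w_v\}$ satisfies $p_H' \ge p_H - O(\delta^{1/4}) - \negl(\lambda)$ (the $\delta^{1/4}$ arising as the square root of the $O(\sqrt\delta)$ norm bound).

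\textbf{Undoing the corrections and concluding.} A one-coordinate-at-a-time computation gives the identity $(\sigma_X(u_x)\sigma_Z(u_z))\,\pi^w_v\,(\sigma_X(u_x)\sigma_Z(u_z))^\dagger = \pi^w_{v \oplus t(w,u_x,u_z)}$ for all $w,v,u_x,u_z$: conjugation by $\sigma_X$ flips the sign of $\sigma_Z$, conjugation by $\sigma_Z$ flips the sign of $\sigma_X$, and identity coordinates are inert — and, decisively, the conjugating Pauli $\sigma_X(u_x)\sigma_Z(u_z)$ does not depend on $w$. Hence, writing $\Pi^w_Q \deq \sum_{u \in Q(w)} \pi^w_u$, we get $\sum_{v\, :\, v \oplus t(w,u_x,u_z) \in Q(w)} \pi^w_v = (\sigma_X(u_x)\sigma_Z(u_z))\,\Pi^w_Q\,(\sigma_X(u_x)\sigma_Z(u_z))^\dagger$. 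Substituting into the expression for $p_H'$ and cycling the conjugations out of the projector and into the state, $p_H' = \E_{w \sim D} \Tr[(\Pi^w_Q \otimes \1_{\rm aux})\,\tilde\rho]$, where $\tilde\rho \deq \E_{c \leftarrow \Enc(\tele)} \sum_\alpha (\sigma_Z(u_z)\sigma_X(u_x) \otimes \1_{\rm aux})\, V\psi^c_\alpha V^\dagger\, (\sigma_X(u_x)\sigma_Z(u_z) \otimes \1_{\rm aux})$ with $(u_x,u_z) = \Dec(\alpha)$. Since $\tilde\rho \ge 0$ and $\Tr \tilde\rho = \E_c \sum_\alpha \Tr \psi^c_\alpha = 1$, the reduced state $\rho \deq \ptr{\rm aux}{\tilde\rho}$ is a genuine $n$-qubit density operator, and $p_H' = \E_{w \sim D}\Tr[\Pi^w_Q \rho] = \E_{w \sim D}\sum_{u \in Q(w)}\Tr[\pi^w_u \rho] = 1 - \Tr[H(n)\rho]$ using $H(n) = \1 - \E_{w \sim D}\sum_{u \in Q(w)} \pi^w_u$. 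Combining with the previous step, $\Tr[H(n)\rho] \le (1-p_H) + O(\delta^{1/4}) + \negl(\lambda)$.

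\textbf{Expected main obstacle.} The conceptual heart is the correction step: one must recognize that the whole family of per-$w$ teleportation corrections is implemented by conjugation with the single $w$-independent Pauli $\sigma_X(u_x)\sigma_Z(u_z)$, which is precisely what allows the corrections to be absorbed into one state $\rho$ (as is required, since $H(n)$ averages over $w$ and the extracted witness must not know $w$). The remaining work is bookkeeping: invoking \Cref{lem:close_strats} in the right direction to turn the $O(\sqrt\delta)$ state-dependent-norm bound into an $O(\delta^{1/4})$ probability gap, and keeping the identity-coordinate conventions consistent across the correction strings $t(w,\cdot,\cdot)$, the sets $Q(w)$, and the prover's operators $M^w_v$.
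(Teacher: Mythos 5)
Your proposal is correct and follows essentially the same route as the paper's proof: idealize Bob's mixed-basis measurements via \Cref{thm:mixed-vs-pure-projectors} combined with \Cref{lem:close_strats} (yielding the $O(\delta^{1/4})$ loss), then absorb the $w$-independent teleportation corrections into the state using the conjugation identity $\sigma_Z(u_z)\sigma_X(u_x)\,\pi^w_u\,\sigma_X(u_x)\sigma_Z(u_z) = \pi^w_{\mathrm{correct}'(w,u,(u_x,u_z))}$ and take the partial trace over the auxiliary register. The only cosmetic difference is how the identity-coordinate convention is handled (you restrict $Q(w)$, the paper notes $\pi^w_u = 0$ when $w_i = \1$ and $u_i \neq 0$), which is immaterial.
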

\begin{proof}
First, since $P$ succeeds with probability $1 - \delta$ in the Pauli braiding and mixed-versus-pure tests, by \Cref{thm:mixed-vs-pure-projectors} it holds that there exists a Hilbert space $\cH' = \C^{2^n} \otimes \cH_{\rm aux}$ (where $n$ is the number of qubits an honest prover would use) and an isometry $V: \cH \to \cH'$ such that for every Alice question $q \in \cQ_A$, and for any subset $S \subseteq \{0,1\}^n$,
\begin{align}
\E_{w \sim D} \sum_{u \in S} \| M^w_u - V^\dagger (\pi^w_u \otimes \1_{\rm aux}) V \|_{\psi^{\Enc(q)}}^2 \leq O(\sqrt\delta) + \negl(\lambda) \,. \label{eqn:proj_close_repeat}
\end{align}
where $D$ is the set of Pauli strings $D(n)$ corresponding to the Hamiltonian $H(n)$.

In particular, setting $q = \tele$, we have that
\begin{equation}
    \E_{w\sim D} \sum_{u \in S} \E_{c \leftarrow \Enc(\tele)} \sum_{\alpha} \| M^w_u - V^\dagger (\pi^w_u \otimes \1_{\aux}) V\|^2_{\psi^c_\alpha} \leq O(\sqrt{\delta}) + \negl(\lambda). \label{eqn:M-close-to-pauli-on-H}
\end{equation}
where $\psi^c_\alpha = A^c_\alpha \psi (A^c_\alpha)^\dagger$.

Now, recall that the Hamiltonian $H(n)$ has the form
\[ H(n) = \1 - \E_{w \sim D} \sum_{u \in Q(w)} \pi^w_u,\]
where $\pi^w_u$ is a Pauli basis projector. From the success in the Hamiltonian test, we know that
\begin{align}
    \E_{w \sim D} \sum_{u}  \E_{c = \Enc(\tele)} \sum_{\alpha} \Tr[M^w_u A^{c}_\alpha \psi (A^{c}_\alpha)^\dagger] \cdot \mathbb{1}[\mathrm{correct}(w, u, \Dec(\alpha)) \in Q(w)] = p_H, \label{eq:ham-test-success}
\end{align}
where the function $\mathrm{correct}$ implements the Pauli corrections from the teleportation step, and is defined by
\begin{equation*}
\mathrm{correct}(w,u, (u_x, u_z))_i = \begin{cases} u_i \oplus [(u_x)_i]^{\1[w_i = Z]} \oplus [(u_z)_i]^{\1[w_i = X]} & w_i \neq \1\\
0 & w_i = \1
\end{cases}
\end{equation*}
Also note that the right hand side of \Cref{eq:ham-test-success} is $1- p_H$, because of the left we are summing over corrected outcomes that are contained in $Q(w)$; the test was defined to accept if the outcome is \emph{not} in $Q(w)$. (This is so that a high success probability corresponds to a low energy according to the Hamiltonian.)

By \Cref{eqn:M-close-to-pauli-on-H} together with \Cref{lem:close_strats}, with $B^y_b$ taken to be $M^w_u$ and $C^y_b$ taken to be $V^\dagger (\pi_u^w \otimes \1_\aux) V$, it thus follows that 
\begin{equation*} \E_{w \sim D} \sum_u \E_{c \leftarrow \Enc(\tele)} \sum_\alpha \Tr[ V^\dagger (\pi_u^w \otimes \1_\aux) V A^c_\alpha \psi (A^c_\alpha)^\dagger] \cdot \1[\mathrm{correct}(w, u, \Dec(\alpha)) \in Q(w)] \geq p_H - O(\delta^{1/4}) - \negl(\lambda).
\end{equation*}

We will now show how to construct the desired state $\rho$. We will do this by simulating the effect of the Pauli corrections by actual Pauli operators applied to Alice's post-measurement state.

First, we need to deal with a technical inconvenience: the function $\mathrm{correct}$ is defined differently for $w_i = \1$ and for $w_i \neq \1$, and in the $w_i = \1$ case, it acts by replacing the measurement outcome with $0$. Fortunately, we recall that by the definition of 
\[ \pi^w_u = \bigotimes_{i=1}^{n} \left(\frac{I + (-1)^{u_i} \sigma_{w_i}}{2}\right),\]
it follows that whenever $w_i = \1$ and $u_i \neq 0$ for any $i$, the corresponding projector $\pi^w_u = 0$. Thus, we may freely assume that whenever $w_i = \1$, $u_i = 0$. This enables us to replace $\mathrm{correct}$ with the modified function $\mathrm{correct}'$ defined by
\begin{equation*}
    \mathrm{correct}'(w,u,(u_x, u_z))_i = u_i \oplus [(u_x)_i]^{\1[w_i = Z]} \oplus [(u_z)_i]^{\1[w_i = X]}. 
\end{equation*}
With respect to this modified function, we have
\begin{align*} 
&\E_{w \sim D} \sum_u \E_{c \leftarrow \Enc(\tele)} \sum_\alpha \Tr[ V^\dagger (\pi_u^w \otimes \1_\aux) V A^c_\alpha \psi (A^c_\alpha)^\dagger] \cdot \1[\mathrm{correct}'(w, u, \Dec(\alpha)) \in Q(w)] \nonumber \\
&\qquad \geq p_H - O(\delta^{1/4}) - \negl(\lambda).
\end{align*}
By shifting the sum over $u$, this may be equivalently rewritten as 
\begin{equation*} \E_{w \sim D} \sum_{u \in Q(w)} \E_{c \leftarrow \Enc(\tele)} \sum_\alpha \Tr[ V^\dagger (\pi_{\mathrm{correct}'(w, u, (u_x, u_z))}^w \otimes \1_\aux) V A^c_\alpha \psi (A^c_\alpha)^\dagger]  \geq p_H - O(\delta^{1/4}) - \negl(\lambda).\end{equation*}

Now, we observe that the Pauli corrections have the following simple form:
\begin{equation*}
    \pi^w_{\mathrm{correct}'(w,u,(u_x, u_z))} = \sigma_Z(u_z) \sigma_X(u_x) \pi^w_u \sigma_X(u_x) \sigma_Z(u_z).
\end{equation*}
Applying this observation, plus the cyclicity of the trace, we get
\begin{align*}
    &\E_{w \sim D} \sum_{u \in Q(w)} \E_{c \leftarrow \Enc(\tele)} \sum_\alpha \Tr[  (\pi_{u}^w \otimes \1_\aux) \sigma_X(u_x) \sigma_Z(u_z) V A^c_\alpha \psi (A^c_\alpha)^\dagger V^\dagger \sigma_Z(u_z) \sigma_X(u_x)]  \nonumber \\
    &\qquad \geq p_H - O(\delta^{1/4}) - \negl(\lambda),
\end{align*}
where $(u_x, u_z) = \Dec(\alpha)$. Now, let us define
\begin{equation*}
    \rho = \Tr_{\aux} \E_{c \leftarrow \tele} \sum_\alpha \sigma_X(u_x) \sigma_Z(u_z) V A^c_\alpha \psi (A^c_\alpha)^\dagger V^\dagger \sigma_Z(u_z) \sigma_X(u_x).
\end{equation*}
It is clear that $\rho$ is a normalized quantum state. Moreover, we have that
\begin{equation*}
    \Tr[H(n) \rho] = 1 - \E_{w \sim D} \sum_{u \in Q(w)} \Tr[ \pi^w_u \rho ] \leq (1-p_h) + O(\delta^{1/4}) + \negl(\lambda),
\end{equation*}
which was the desired conclusion.
\end{proof}

\subsection{Analysis of full compiled protocol}

\begin{theorem}\label{thm:main}
    The protocol \Cref{prot:main} is a question-succinct argument system for $\QMA$ assuming a $\mathsf{QHE}$ scheme satisfying the definition given in \Cref{def:QHE-aux}.
    
More precisely, let $V$ be the verifier, and $P$ be the honest prover described in the protocol. Then for any promise problem $A = (A_{yes}, A_{no})$ in $\QMA$ with verification algorithm $C$, the following hold:
    \begin{itemize}
        \item \textbf{Completeness:} Let $x \in A_{yes}$ , and let $\ket{\psi}$ be an accepting $\QMA$ witness for $x$. Then the verifier $V$ on input $(x, C,1^\lambda)$, in interaction with the honest prover $P$ on input $(x,C, \ket{\psi}^{\poly n}, 1^\lambda)$, accepts with probability $\geq 1 - \negl(n)$.  
        \item \textbf{Soundness:} Let $x \in A_{no}$, and let $\ket{\phi}$ be \emph{any} state on $\poly(n + \lambda)$ qubits. Then the verifier $V$ on input $(x,C, 1^\lambda)$, in interaction with any QPT prover $P^*$ on input $(x,C, \ket{\phi}, 1^\lambda)$, accepts with probability at most $s$ for some universal constant $s < 1$.
        \item \textbf{Question-succinctness:} On any input $x$ and for security parameter $\lambda$, the number of bits sent by $V$ to $P$ is $O(\poly\log n + \poly(\lambda))$.
    \end{itemize}
\end{theorem}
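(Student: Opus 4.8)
The plan is to verify the three claimed properties (completeness, soundness, question-succinctness) by assembling the lemmas already proven above; essentially all of the technical content is in place, and what remains is bookkeeping. Throughout, recall that by \Cref{thm:qma-to-hamiltonian} the Hamiltonian $H=H(n)=\1-\E_{w\sim D}\sum_{u\in Q(w)}\pi^w_u$ attached to the instance satisfies $0\preceq H\preceq\1$ (each $\sum_u\pi^w_u=\1$, so $\E_w\sum_{u\in Q(w)}\pi^w_u\in[0,\1]$) and $\beta(n)-\alpha(n)=1-\negl(n)$, hence $\alpha(n)\le\negl(n)$ and $\beta(n)\ge 1-\negl(n)$; and that the verifier of \Cref{prot:main} is $\PPT$, since it samples $D$, runs $\Gen$ and $\Enc$ (efficient by compactness of encryption in \Cref{def:QHE-aux}), decrypts with the classical decryption of \Cref{def:QHE-aux}, and runs the two-prover predicate (polynomial time, as membership in $Q(w)$ is decidable in $\poly(n)$ by \Cref{thm:qma-to-hamiltonian}).

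\emph{Completeness.} I would first argue that the two-prover protocol \Cref{prot:main-2-prover} has completeness $1-\negl(n)$: honest Alice and Bob share EPR pairs, so (i) the honest Pauli/magic-square strategy passes the Pauli braiding test (\Cref{prot:pauli-braiding}) with probability $1$, since the relevant (anti)commutation checks hold exactly on EPR pairs; (ii) aligned-basis measurements on EPR pairs agree, so the mixed-vs-pure basis test (\Cref{prot:mixed-vs-pure}) is passed with probability $1$; and (iii) in the Hamiltonian test (\Cref{prot:hamiltonian}), teleporting the amplified witness $\rho$ and applying the reported corrections reproduces exactly the distribution of measuring $\rho$ in the bases $w\sim D$, so the acceptance probability is $1-\Tr[H\rho]\ge 1-\alpha(n)\ge 1-\negl(n)$. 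Since the verifier runs the three subtests with equal probability, the two-prover protocol is passed with probability $\ge 1-\negl(n)$. Finally, the KLVY compiler preserves quantum completeness up to the $\negl(\lambda)$ correctness error of the QHE scheme~\cite{KLVY21}: the honest compiled prover holds the EPR pairs, homomorphically evaluates Alice's measurement on the encrypted question via $\Eval$, and applies Bob's measurement to its residual state when $y$ arrives in the clear; correctness with auxiliary input (\Cref{def:QHE-aux}) guarantees the joint statistics match. This prover is $\QPT$ and uses $\poly(n)$ copies of the $\QMA$ witness to assemble $\rho$ (from the amplification of \Cref{thm:hamiltonian-subsamp}).

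\emph{Soundness.} Let $x\in A_{no}$ and suppose a $\QPT$ prover $P^*$ with advice $\ket{\phi}$ is accepted with probability $p^*$; model $P^*$ as in \Cref{sec:modelling} (projective measurements WLOG by Naimark). Since the verifier runs each of the three subtests with probability $\tfrac13$, $P^*$ passes each of them with probability at least $3p^*-2$; writing $\delta\deq 3(1-p^*)$, this means $P^*$ succeeds in the compiled Pauli braiding and mixed-vs-pure tests with probability $\ge 1-\delta$ and in the Hamiltonian test with probability $p_H\ge 1-\delta$. \Cref{lem:ham-test-analysis} then produces a state $\rho$ with
\[ \Tr[\rho H(n)] \le (1-p_H) + O(\delta^{1/4}) + \negl(\lambda) \le O(\delta^{1/4}) + \negl(\lambda). \]
On the other hand, since $x\in A_{no}$, \Cref{thm:qma-to-hamiltonian} says $H(n)$ is a no-instance, so $\Tr[\rho H(n)]\ge\beta(n)\ge 1-\negl(n)$. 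Hence $1-\negl(n)\le O(\delta^{1/4})+\negl(\lambda)$, and since the implicit constant is universal, for all $n$ and $\lambda$ above a fixed threshold this forces $\delta$—and therefore $1-p^*$—to exceed a universal positive constant. Taking $s<1$ to be $1$ minus that constant (in particular $s>\tfrac23$, so the degenerate case $p^*\le\tfrac23$ is covered as well) yields acceptance probability $\le s$.

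\emph{Question-succinctness.} In the two-prover protocol \Cref{prot:main-2-prover}, each verifier message is $\poly\log n$ bits: Alice's questions are one of $(\comm,r_a,r_b)$, $(\mathsf{MS},r_a,r_b,i_1,i_2,i_3)$, a single basis symbol, a Pauli string $w$ specified by its $\poly\log n$ bits of sampling randomness (\Cref{thm:hamiltonian-subsamp}), or $\tele$, where $r_a,r_b$ are $O(\log n)$-bit indices into the $\poly(n)$-size $\mu$-biased set of~\cite{NaorNaor}; Bob's questions are no longer. After compilation the verifier sends $c\gets\Enc(\sk,x)$ (encrypting Alice's question) and then Bob's question in the clear. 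All of Alice's honest computations are $\poly\log n$-depth, $\poly(n)$-size circuits, so by compactness of encryption (\Cref{def:QHE-aux}) $\Enc$ runs in time $\poly\log n\cdot\poly(\lambda)$ per plaintext bit; hence $c$ has length $\poly\log n\cdot\poly(\lambda)$, and Bob's plaintext message adds another $\poly\log n$. The total is $\poly\log n\cdot\poly(\lambda)$, as claimed.

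The one subtle point—and the only step demanding care—is that the soundness constant $s$ must be \emph{universal}: one must check that the $O(\cdot)$'s entering through \Cref{thm:pauli-braiding-approx-phase}, \Cref{thm:mixed-vs-pure-projectors} and \Cref{lem:ham-test-analysis} are absolute, that the $\negl(\lambda),\negl(n)$ terms can be absorbed once the parameters are large enough, and that the Hamiltonian produced by \Cref{thm:qma-to-hamiltonian} genuinely satisfies $0\preceq H\preceq\1$ with $\beta(n)-\alpha(n)=1-\negl(n)$, so that the gap argument closes with an instance-independent $s$.
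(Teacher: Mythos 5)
Your proposal is correct and follows essentially the same route as the paper: completeness from the honest strategy plus KLVY correctness, soundness by combining the per-subtest success bounds with \Cref{lem:ham-test-analysis} and the gap of the subsampled Hamiltonian, and succinctness read off from the message lengths. The only point you gloss over is that instances with $n$ below your "fixed threshold" still need a universal $s$; the paper handles this with a padding argument ensuring $\lambda_{\min}(H)\geq 2/3$ for all $n$, which closes that small gap.
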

\begin{proof}
We recall that the protocol \Cref{prot:main} is obtained by applying the KLVY compilation to \Cref{prot:main-2-prover}. This in turn consists of three subtests: the Pauli braiding test (\Cref{prot:pauli-braiding}), the mixed-versus-pure basis test (\Cref{prot:mixed-vs-pure}), and the Hamiltonian test (\Cref{prot:hamiltonian}).

\paragraph{Completeness:}
For the completeness, we observe that the honest prover passes the Pauli braiding test (\Cref{prot:pauli-braiding}) and the mixed-versus-pure basis test (\Cref{prot:mixed-vs-pure}) with certainty, and, using a state $\ket{\phi}$, passes the Hamiltonian test (\Cref{prot:hamiltonian}) with probability equal to $1 - \bra{\phi} H(n) \ket{\phi}$. Taking $H(n)$ to be the Hamiltonian computed in \Cref{prot:main}, and $\ket{\phi}$ to be the appropriate polynomial number of copies of $\ket{\psi}$, we have $\bra{\phi} H \ket{\phi} = 1 - \negl(n)$. 

\paragraph{Soundness:}
We now establish soundness of this protocol. Suppose $x$ is a NO instance of the QMA language. Then the Hamiltonian $H$ from \Cref{prot:main-2-prover} has minimum eigenvalue at least
\[ \lambda_{\min}(H) \geq 1 - \negl(n). \]
Moreover, we can assume by a padding argument that without loss of generality, that $\lambda_{\min}(H) \geq 2/3$ for all $n$.

Now, suppose that $P'$ is a prover that succeeds in \Cref{prot:main} with probability $1 - \delta$. This means that $P'$ succeeds in the KLVY compilations of the Pauli braiding test (\Cref{prot:pauli-braiding}), the mixed-versus-pure basis test (\Cref{prot:mixed-vs-pure}), and the Hamiltonian test (\Cref{prot:hamiltonian}) each with probability at least $1 - 3\delta$. By \Cref{lem:ham-test-analysis}, this means that there exists a state $\rho$ such that
\[ \tr[\rho H] \leq 3\delta + O(\delta^{1/4}).\]
For any $\delta$ below some universal $\delta_0$, the RHS of the expression above will be at most $1/2$, and thus in contradiction with the fact that $\lambda_{\min}(H) \geq 2/3$. 

\paragraph{Succinctness:} The question-succinctness property is evident from the description of the protocol. We note that the longest messages are sent in the mixed-versus-pure basis test (\Cref{prot:mixed-vs-pure}), and the Hamiltonian test \Cref{prot:hamiltonian}, where to describe a term in $H$ the verifier must send $\poly \log (n)$ bits. In all other tests, the verifier sends at most $O(\log (n))$ bits to the prover.
\end{proof}

\section{Compiling from a question-succinct protocol into a fully succinct protocol using succinct arguments of knowledge}
\label{sec:killian}

\Cref{prot:main} is a \emph{question-succinct} cryptographic single-prover protocol, in the sense that the messages the verifier sends to the prover are $\poly \log n \cdot \poly \lambda$ bits long, where $n$ is the size of the instance and $\lambda$ is the security parameter. In \cite[Section 9]{BKLMMVVY22}, Bartusek et al.~present two compilers which map any question-succinct single-prover argument system for $\mathsf{QMA}$ satisfying a certain \emph{obliviousness} property into a fully succinct single-prover argument system for $\mathsf{QMA}$. More specifically, for their compilers to work, Bartusek et al.~require that the verifier's questions in the question-succinct protocol can be computed independently of the prover's answers and also the $\mathsf{QMA}$ instance (except for its length). They also require implicitly that the verifier's questions can be generated extremely efficiently, namely, in time $\tilde{O}(n) + \poly\log(n) \cdot \poly(\lambda)$.

It can be easily verified that \Cref{prot:main} satisfies the obliviousness property. However, the efficiency property is not quite satisfied because of the commutation and anticommutation tests. Specifically, the choice of which test to execute is a function of $a \cdot b$, where $a$ and $b$ are strings of length $\poly(n)$, and so na\"{i}vely, computing this inner product takes time $\poly(n)$. However, there is an easy fix for this: after sampling $r_a,r_b$, the verifier decides uniformly at random whether to run the commutation or anticommutation test, and then afterwards, if it chose wrongly, it automatically accepts. This affects the soundness gap by at most a constant factor.

With this modification made, the verifier of \Cref{prot:main} is indeed oblivious and efficient. In fact, all the information the verifier sends to the prover in this modified version of \Cref{prot:main} takes one of the following forms:
\begin{itemize}
\item Encryptions of uniform randomness of some predetermined length.
\item Non-encrypted uniform randomness of some predetermined length.
\end{itemize}
Its decision process about which messages to send can also be made completely independent of the prover's answers. As such, the compilers presented in \cite[Section 9]{BKLMMVVY22} apply in a black-box fashion to the modified version of \Cref{prot:main}. Nonetheless, because the presentation of the analysis in \cite[Section 9]{BKLMMVVY22} is fairly terse, we present for the reader's benefit some additional intuition about how Bartusek et al.'s first compiler works when applied to \Cref{prot:main} in particular.
We emphasise that this is only for intuition and not meant as a full (re-)proof of the \cite{BKLMMVVY22} compiler.

\subsection{Post-quantum succinct arguments of knowledge}
The central building block for Bartusek et al.'s first compiler is a \emph{post-quantum succinct argument of knowledge} (based on Killian's work \cite{kilian1992note}) with the following commit-and-open structure. Fix some instance $x$, some $\mathsf{NP}$ language $L$, and some associated predicate $R_L(\cdot, \cdot)$ (that is, $R_L(x,w)$ checks whether $w$ is a valid witness that $x \in L$, and can be computed in polynomial time). Suppose also that the verifier has already sent the prover the hash key $hk$ for some \emph{collapsing}\footnote{Collapsing is a post-quantum strengthening of collision resistance; see \cite{unruh2016computationally} for a definition.} hash function family. Then the following protocol allows the prover to succinctly prove knowledge of some $w$ such that $R_L(x, w) = 1$:

\begin{longfbox}[breakable=false, padding=1em, padding-right=1.8em, padding-top=1.2em, margin-top=1em, margin-bottom=1em]
\begin{protocol}[Three-message succinct argument of knowledge]
\label{prot:saok}
\end{protocol}
\begin{enumerate}[label=\arabic*.]
\item The prover encodes the witness $w$ under an error correcting code $E$ (with corresponding decoding $D$) to obtain a string $\tilde w = E(w)$. The prover also constructs a PCPP proof $\pi$ that $R(x,D(\tilde w)) = 1$. The prover then constructs a Merkle tree \cite{merkle1987digital} on $m = (\tilde w, \pi)$ using $hk$, computes the root of this tree $rt_m$, and sends the verifier $rt_m$. The prover also constructs a Merkle tree on $w$ itself, computes the root of this tree $rt_w$, and sends the verifier $rt_w$. For notational convenience, we will call the algorithm that constructs $rt$ using $hk$ by the name $\mathsf{Merkle}_{hk}(\cdot)$. We will also use $rt$ to denote the combination $rt = (rt_m, rt_w)$.
\item The verifier sends the prover a challenge string $j = (j_1, \dots, j_k)$ which indicates a set of indices for which it wants the prover to reveal $m_{j_1}, \dots, m_{j_k}$.
\item The prover reveals $m_{j_1}, \dots, m_{j_k}$ and also some auxiliary information (namely, the path of hashes in the Merkle tree leading from the root to $m_{j_i}$), which allows the verifier to verify whether or not $(m_{j_1}, \dots, m_{j_k})$ were valid openings and also to compute the verification predicate for the PCPP proof that $R(x,D(\tilde w)) = 1$.
\end{enumerate}
\end{longfbox}

Clarifying information about Merkle trees and their uses in commit-and-open protocols like this can be found in \cite[Section 2.1]{CMSZ}, but for us the important part is the message structure of this protocol. In particular, note that all the messages in this protocol are $\poly \log |x| \cdot \poly \lambda$ in length, where $\lambda$ is the security parameter for $hk$.

\Cref{prot:saok} is similar to Killian's classic protocol \cite{kilian1992note} (instantiated in a form that allows for extraction of the witness $w$): the only change which must be made to the protocol to make it post-quantum (apart from using a collapsing instead of a collision-resistant hash function) is the `extra' Merkle commitment to $w$ in the first message in addition to the commitment to $m = (\tilde w, \pi)$.\footnote{It is not clear whether this additional commitment is necessary; however, the authors of \cite{LMS} were not able to make the state-preserving extraction analysis work without it, although for the analysis in the setting of \cite{CMSZ} the original Killian protocol is sufficient.} However, the \emph{analysis} of this protocol in the post-quantum setting is significantly more subtle than its analysis in the classical setting. In particular, the soundness statement for \Cref{prot:saok}---which says that, given a prover who wins with high probability, there is an efficient extractor that extracts $w$ given black-box access to the prover---is usually proven in the classical setting via a \emph{rewinding} argument, in which the extractor reconstructs some significant fraction of $m$ by choosing an index $j$, obtaining $m_j$, rewinding the prover, choosing another index $j'$, obtaining $m_{j'}$, etc., and finally using the `error robustness' properties of both the PCPP and of the encoding $E$ in order to extract $w$ and to be sure that it satisfies $R(x, \cdot)$ despite the missing indices. Rewinding arguments that are secure against quantum adversaries with quantum auxiliary input tend to be much more difficult than their classical counterparts, because of the possibility that executing the protocol even once will destroy the auxiliary input and prevent rewinding from succeeding.

In \cite{CMSZ}, it was shown how to analyse \Cref{prot:saok} in the post-quantum setting using a clever technique involving alternating projections. However, the analysis of \Cref{prot:saok} presented in \cite{CMSZ} was not particularly composable. If the succinct argument of knowledge in \Cref{prot:saok} is used as a subprotocol in some longer protocol, and there are other tests in the full protocol which follow after the succinct argument of knowledge---this is the case for us---then it may be necessary to ensure that running extraction does not hinder the prover's ability to pass in the remainder of the protocol. To motivate this requirement, consider a situation in which we are trying to design a reduction $\cR$ which reduces the security of a succinct protocol called \textsf{SuccinctProtocol}, in which \Cref{prot:saok} is used as a subprotocol, to the security of a non-succinct protocol \textsf{OriginalProtocol}. \textsf{OriginalProtocol} requires the prover to output a full witness $w$ instead of merely passing in a succinct argument of knowledge for $w$; therefore, given some prover $P$ who is successful in \textsf{SuccinctProtocol}, the reduction has to run extraction on $P$ in order to recover $w$, so that it can succeed with the challenger/verifier for \textsf{OriginalProtocol}. However, if \textsf{OriginalProtocol} contains tests (mirrored in \textsf{SuccinctProtocol}) that happen \emph{after} its prover is supposed to output $w$, the reduction $\cR$ will not necessarily succeed if extracting $w$ destroys $P$'s ability to succeed in the remainder of \textsf{SuccinctProtocol}, because then $\cR$ may be unable to answer the remaining questions in \textsf{OriginalProtocol}.

In order to ensure that, even after extraction has been performed, the prover \emph{continues} to pass with high probability in the rest of the protocol, the prescribed extractor from \cite{CMSZ} was required to measure a projector that corresponded to coherently computing whether or not the verifier would accept in the remainder of the protocol and conditioning on the accept outcome. The issue is that the verifier's final decision predicate might depend on secret information (that is: the \emph{entire} protocol might not be public-coin, even though \Cref{prot:saok} is public-coin), and so an efficient extractor might be unable to do this.

In followup work \cite{LMS}, it was shown how to analyse \Cref{prot:saok} in a more composable way, so that the extractor only needs to measure the projection which corresponds to the verifier of \Cref{prot:saok} accepting (note that this verifier's decision predicate is public, so the extractor will always be able to do this), but even so the extractor's activity is essentially \emph{undetectable} to the prover, meaning that the prover (who uses the extractor's `leftover state' instead of its original state) will continue to succeed in the rest of the protocol (if it succeeded with high probability to begin with) even after extraction has been performed. This condition on the extractor is true by default in the classical setting, but it is nontrivial in the quantum setting. Because this guarantee may be somewhat surprising, we sketch in the next section how this guarantee is shown.

\subsection{\cite{LMS} extraction}

The following (taken largely from~\cite[Definition 9.1]{BKLMMVVY22}) is the formal statement of succinctness and security for~\Cref{prot:saok} for which we will sketch a proof in this section.
\begin{lemma}
\label{thm:lms}
\Cref{prot:saok} satisfies the following properties:
\begin{itemize}
    \item \emph{Succinctness}. When invoked on security parameter $\lambda$ for the hash function family, instance size $|x| = n$, and a relation $R$ decidable in time $T$, the communication complexity of the protocol is $\mathsf{poly}(\lambda, \log T)$. The verifier's computational complexity is $\mathsf{poly}(\lambda, \log T) + \tilde O(n)$.
    \item \emph{$\eps$-state-preserving extraction.} There exists an extractor $E^{(\cdot)}(x, \epsilon)$ with the following properties.
    \begin{itemize}
          \item Efficiency: $E^{(\cdot)}(x, \epsilon)$ runs in time $\poly(n, \secp, 1/\epsilon)$ as a quantum oracle algorithm (with the ability to apply controlled $U$-gates given an oracle $U(\cdot)$), outputting a classical transcript $\tilde \tau$ and a classical string $w$.
          \item State-preserving: Let $\ket{\psi} \in \cA \otimes \cI$ be any $\poly(\secp)$-qubit pure state and let $\rho = \Tr_{\cA}(\ket{\psi}) \in \mathrm{D}(\cI)$.\footnote{In general, the prover's input state on $\cI$ may be entangled with some external register $\cA$, and we ask that computational indistinguishability holds even given $\cA$. Our definition is stated this way for maximal generality, though we remark that the applications in this section do not require indistinguishability in the presence of an entangled external register.} Consider the following two games:
          \begin{itemize}
              \item Game 0 (real): Generate a transcript $\tau$ by running $P^*(\rho_{\cI},x)$ with the honest verifier $V$. Output $\tau$ along with the residual state on $\cA \otimes \cI$.
              \item Game 1 (simulated): Generate a transcript-witness pair $(\tilde \tau,w) \gets E^{P^*(\rho_{\cI},x)}$. Output $\tilde \tau$ and the residual state on $\cA \otimes \cI$.
          \end{itemize}
          Then, we have that the output distributions of Game 0 and Game 1 are computationally $\varepsilon$-indistinguishable to any quantum distinguisher.
          
          \item Extraction correctness: for any $P^*$ as above, the probability that $\tilde \tau$ is an accepting transcript but $w$ is \emph{not} in $R_x$ is at most $\epsilon + \negl(\secp)$. 
          
      \end{itemize}
\end{itemize}
\end{lemma}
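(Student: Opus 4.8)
\textbf{Succinctness} is essentially bookkeeping on the commit-and-open structure. The first message consists of the two Merkle roots $rt_m$ and $rt_w$, each a single hash output of length $\poly(\lambda)$. The challenge $j = (j_1,\dots,j_k)$ is a list of $k$ indices into the committed string $m = (\tilde w, \pi)$, which has length $\poly(T)$; since we instantiate the PCPP for $R$ with query complexity $k = \poly(\lambda, \log T)$ and a code $E$ of constant rate, this is $\poly(\lambda, \log T)$ bits. The third message is $k$ authentication paths, each of length $\poly\log(T)\cdot\poly(\lambda)$, together with the $k$ revealed symbols; again $\poly(\lambda, \log T)$. For the verifier's running time, checking the authentication paths against $rt_m$ and running the PCPP decision predicate both take $\poly(\lambda, \log T)$ given oracle access to $x$, and the additive $\tilde{O}(n)$ accounts only for the verifier's handling of the size-$n$ instance (reading it, or a succinct encoding/hash of it, as required by the PCPP verifier's input-oracle queries). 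This gives the claimed bounds.

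\textbf{The extractor and extraction correctness.} Because Protocol~\ref{prot:saok} is public-coin with a single challenge round, the plan is to fix the prover's post-first-message state $\ket{\psi}$ and root $rt$, and for each challenge $j$ define the projector $\Pi_j$ that coherently runs the prover's third-message unitary, checks the authentication paths against $rt$, and applies the PCPP decision predicate, projecting onto ``accept''; each $\Pi_j$ is efficiently implementable because all of these checks are public. The prover's success probability is $\delta = \E_j \| \Pi_j \ket{\psi} \|^2$. Following \cite{CMSZ}, I would then apply the Jordan-decomposition / alternating-projections machinery to the two-outcome ``does the verifier accept'' measurement on $\ket{\psi}$ together with a uniformly random challenge register: this both lets one estimate $\delta$ and, by repeatedly measuring fresh random $\Pi_j$'s interleaved with re-preparing a uniform challenge, collect (essentially) independent accepting openings as long as the relevant Jordan value $p$ is non-negligible, while perturbing the prover's state only slightly in each round. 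The extractor runs this loop $\poly(n, \lambda, 1/\eps)$ times, collecting opened symbols $m_j$ on a large pseudorandom set $S$ of positions, then decodes $\tilde w|_S$ using the decoder $D$ and the error-robustness of the PCPP to obtain a candidate witness $w$ and a transcript $\tilde\tau$. Extraction correctness --- that an accepting $\tilde\tau$ forces $R(x,w)=1$ up to $\eps + \negl(\lambda)$ --- follows by combining the collision resistance of the hash (implied by collapsing), which pins $m$ down from $rt_m$, with the soundness and robustness guarantees of the PCPP and of the code $E$.

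\textbf{State-preservation via collapsing (the main obstacle).} The delicate point is showing that the residual state on $\cA \otimes \cI$ after extraction is computationally $\eps$-indistinguishable from the residual state after an honest execution, even to a distinguisher holding the external register $\cA$ and even though later (possibly private-coin) tests of the outer protocol will act on $\cI$. The idea, following \cite{LMS}, is that the honest verifier already measures $\Pi_j$ for one random $j$, so the only ``extra'' disturbance the extractor can cause is the measurement of the committed symbols $m_j$ it reveals; and since a valid opening is information-theoretically determined by the fixed root $rt_m$, the collapsing property says measuring those symbols is undetectable --- a distinguisher noticing the difference would distinguish a coherent superposition over valid openings from a measured one, breaking collapsing. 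Concretely I would (i) use the Jordan-subspace analysis to argue that the alternating-projection extractor's leftover state is statistically close to the original state up to a measurement of only the revealed Merkle leaves and path hashes, and (ii) invoke collapsing in a hybrid argument, layer by layer up the Merkle tree and position by position among the $k$ revealed leaves, to replace that measurement by the identity up to computational indistinguishability. The technical crux --- and the step I expect to be hardest --- is making (i) quantitative enough that the error stays below $\eps$ across the $\poly(1/\eps)$-length extraction loop, and making (ii)'s reduction carry the external register $\cA$ and faithfully simulate the private-coin remainder of the outer protocol; this is exactly what distinguishes \cite{LMS}-style extraction from the earlier \cite{CMSZ} analysis. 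Given Lemma~\ref{thm:lms}, plugging it into the compiler that turns Protocol~\ref{prot:main} into a fully succinct protocol is then routine and follows \cite[Section~9]{BKLMMVVY22}.
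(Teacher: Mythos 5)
Your succinctness bookkeeping and your account of extraction correctness (Kilian/CMSZ-style alternating projections, collision resistance pinning down $m$ from $rt_m$, PCPP soundness plus code robustness) match what the paper relies on; note, though, that the paper does not reprove this lemma from scratch — it imports it from \cite{LMS} and \cite[Section 9]{BKLMMVVY22} and only supplies an expository sketch of the extractor.

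The genuine gap is in your state-preservation argument. First, you never use the second Merkle root $rt_w$ — the dedicated commitment to $w$ itself that the prover sends alongside $rt_m$ — even though this is the only place in the entire analysis where that commitment is needed, and \cite{LMS} were unable to make state-preserving extraction work without it. Second, your step (i), that the alternating-projection extractor's leftover state is statistically close to the original state ``up to a measurement of only the revealed Merkle leaves and path hashes,'' is not known to be true and is precisely the obstruction: the CMSZ loop performs many interleaved measurements of $\Pi_u$ and of $\Pi_{j'}$ for fresh random challenges $j'$, and these disturb the state in ways that are not captured by a single measurement of the opened leaves, so your proposed hybrid over tree layers has nothing to latch onto. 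The actual mechanism is structurally different: after the one ``free'' projection $\Pi_c$ (which the honest verifier effectively performs anyway), one runs the \emph{entire} CMSZ extractor coherently as a unitary $U_{\mathrm{CMSZ}}$, producing a superposition $\sum_{w'}\ket{w'}\ket{\mathrm{aux}_{w'}}$ supported (up to small error) on $w'$ with $\mathsf{Merkle}_{hk}(w')=rt_w$; one then measures only the witness register, invokes the collapse-binding of the commitment $rt_w$ to argue this single measurement is computationally undetectable, and finally applies $U_{\mathrm{CMSZ}}^\dagger$ to uncompute. Without the coherent-run-and-uncompute structure and without the collapsing guarantee tied to $rt_w$ (rather than to the leaves of $rt_m$), the indistinguishability of Game 0 and Game 1 does not follow.
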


In order to describe the extractor which is guaranteed by \Cref{thm:lms}, we firstly fix some notation related to the prover's state and actions in \Cref{prot:saok}. We can model any prover $P^*$ in \Cref{prot:saok} as a process which does the following:
\begin{process}
\label{proc:original}
\end{process}
\begin{enumerate}
\item Send some message $rt$ to the verifier. Let the state that $P^*$ has left over after sending $rt$ be $\ket{\psi_{P^*}}$ (held in a private register $\mathsf{P}$).
\item Receive a challenge $j$ from the verifier. We assume that $j$ is provided as a state $\ket{j}$ in a \emph{message register} $\mathsf{M}$ which is accessible both to the verifier and to the prover.
\item Apply some unitary $U_{P^*}$ which acts on both $\mathsf{P}$ and $\mathsf{M}$.
\item Measure some part of the resulting state (wlog in the standard basis) to get a response $z$, and send this to the verifier.
\end{enumerate}
The verifier will then check its decision predicate $V_{rt}(j,z)$, and accept iff $V_{rt}(j,z)$ evaluates to 1. If \Cref{prot:saok} is a subprotocol in some longer protocol, then the verifier rejects immediately if $V_{rt}(j,z) = 0$.

Now we will `purify' the prover in order to make the \cite{LMS} extraction procedure easier to state. It is easy to see that the state left over in all registers at the end of the process below is exactly equivalent to the state which is left over at the end of \Cref{proc:original}:
\begin{process}
\label{proc:purified}
\end{process}
\begin{enumerate}
\item Send some message $rt$ to the verifier. Let the state that $P^*$ has left over after sending $rt$ be $\ket{\psi_{P^*}}$ (held in a private register $\mathsf{P}$).
\item Prepare a state $\sum_j \ket{j}$ (we will ignore normalisation) that is a \emph{uniform superposition} over challenges in a message register $\mathsf{M}$. Also create a new register $\mathsf{Z}$ initialised to the all zero state which will be used in step (iv).
\item Apply the prover's unitary $U_{P^*}$ jointly to $\mathsf{P}$ and $\mathsf{M}$.
\item Coherently copy (in the standard basis) the part of the state that would have been measured in step (iv) of \Cref{proc:original} above to obtain a superposition over responses $z$ into register $\mathsf{Z}$. For short we will call the unitary that does this copying $\mathsf{CNOT}_z$.
\item Coherently compute the verifier's predicate $V_{rt}(j,z)$ into yet another new register $\mathsf{A}$.
\item Measure registers $\mathsf{M}$ and $\mathsf{Z}$; obtain outcomes $j$ and $z$.
\item Measure register $\mathsf{A}$; obtain $V_{rt}(j,z)$, and accept iff it is 1.
\end{enumerate}
As before, if \Cref{prot:saok} is a subprotocol in some longer protocol, then the verifier rejects immediately if $V_{rt}(j,z)$ measures to 0. In other words, conditioning on continuing in the protocol essentially \emph{projects} into the subspace where $V_{rt}(j,z) = 1$.

Note that steps (vi) and (vii) can be switched, because they act on different registers and therefore commute. As such, we could also have stated \Cref{proc:purified} in the following way:
\begin{process}
\label{proc:switched}
\end{process}
\begin{enumerate}
\item Send some message $rt$ to the verifier. Let the state that $P^*$ has left over after sending $rt$ be $\ket{\psi_{P^*}}$ (held in a private register $\mathsf{P}$).
\item Prepare a state $\sum_j \ket{j}$ (we will ignore normalisation) that is a \emph{uniform superposition} over challenges in a message register $\mathsf{M}$. Also create a new register $\mathsf{Z}$ initialised to the all zero state which will be used in step (iii).
\item This step can be stated in words as a series of substeps:
\begin{enumerate}
\item Apply the prover's unitary $U_{P^*}$ jointly to $\mathsf{P}$ and $\mathsf{M}$.
\item Coherently copy (in the standard basis) the part of the state that would have been measured in step (iv) of \Cref{proc:original} above to obtain a superposition over responses $z$ into register $\mathsf{Z}$. For short we will call the unitary that does this copying $\mathsf{CNOT}_z$.
\item \emph{Project} the state in registers $\mathsf{P}$, $\mathsf{M}$ and $\mathsf{Z}$ into the subspace where $V_{rt}(j,z) = 1$. If the projective measurement results in $V_{rt}(j,z) = 0$, reject.
\item Undo $\mathsf{CNOT}_z$ and $U_{P^*}$ in that order.
\end{enumerate}
In other words, in this step, apply the projective measurement where one of the projectors in the measurement is
\begin{align*}
\Pi_c \coloneqq U_{P^*}^\dagger \mathsf{CNOT}_z^\dagger \Big( \sum_{V_{rt}(j,z) = 1} \1_{\mathsf{P}} \otimes \ket{j,z}_{\mathsf{MZ}} \bra{j,z}_{\mathsf{MZ}} \Big) \mathsf{CNOT}_z U_{P^*}
\end{align*}
and the other is $I - \Pi_c$; reject if the outcome is $I - \Pi_c$.

\item Redo $\mathsf{CNOT}_z$ and $U_{P^*}$ (to counteract substep (iv) of the previous step, step (iii)). Measure registers $\mathsf{M}$ and $\mathsf{Z}$; obtain outcomes $j$ and $z$.
\end{enumerate}
The \cite{LMS} extractor will set up the state which exists after step (iii) in \Cref{proc:switched} (it can do this given an appropriate notion of black-box access to $P^*$: see e.g.~\cite{unruh2016computationally}), and then it will insert an extra step (E) in between steps (iii) and (iv) in \Cref{proc:switched}, in which it extracts $w$ in a way that is \emph{undetectable}, in the sense that the state left over after the whole of \Cref{proc:switched} with (E) inserted between (iii) and (iv) is computationally indistinguishable from the state left over after the whole of \Cref{proc:switched} without (E). We will now describe in two stages how it accomplishes this.

\paragraph{\cite{CMSZ} extraction}
In \cite{CMSZ}, an algorithm was written down that, for any $P^*$ in \Cref{prot:saok}, takes in the state in registers $\mathsf{P}$, $\mathsf{M}$ and $\mathsf{Z}$ after step (ii) in \Cref{proc:switched} and alternates the following two projectors $\Pi_u$ and $\Pi_c$ ($u$ for `uniform' and $c$ for `correct') in order to attempt to \emph{extract} a witness $w$ such that $\mathsf{Merkle}_{hk}(w) = rt_w$, where $rt_w$ is the second part of the message $rt = (rt_m, rt_w)$ that $P^*$ sent in step (i).
\begin{align*}
\Pi_u &= \1_{\mathsf{P}} \otimes \ket{+_m}_{\mathsf{M}}\bra{+_m}_{\mathsf{M}} \otimes \ket{0}\bra{0}_{\mathsf{Z}}, \quad \text{where } \ket{+_m} \coloneqq \sum_{j} \ket{j} \text{ with the appropriate normalisation} \\
\Pi_c &= U_{P^*}^\dagger \mathsf{CNOT}_z^\dagger \Big( \sum_{V_{rt}(j,z) = 1} \1_{\mathsf{P}} \otimes \ket{j,z}_{\mathsf{MZ}} \bra{j,z}_{\mathsf{MZ}} \Big) \mathsf{CNOT}_z U_{P^*}
\end{align*}
Note that $\Pi_c$ coincides with the $\Pi_c$ we wrote down in \Cref{proc:switched}. The analysis involves decomposing the joint space of $\mathsf{P}$, $\mathsf{M}$ and $\mathsf{Z}$ into the Jordan subspaces (for more information about the Jordan decomposition and how it is usually used in quantum computing, see \cite[Section 1.2.3]{Vid20-course})  of $\Pi_u$ and $\Pi_c$. We establish some notation in order to state the extraction guarantee which \cite{CMSZ} prove.

Let the set of Jordan subspaces of $\Pi_u$ and $\Pi_c$ be $\{S_i\}_i$. Let $\ket{u_i}$ be the rank-1 projector associated with $\Pi_u$ in subspace $i$, and let $\ket{c_i}$ be the rank-1 projector associated with $\Pi_c$ in subspace $i$: that is, for any $\ket{\phi} \in S_i$, $\Pi_u \ket{\phi} = \ket{u_i}\bra{u_i} \cdot \ket{\phi}$, and similarly $\Pi_c \ket{\phi} = \ket{c_i}\bra{c_i} \cdot \ket{\phi}$. Since the state in registers $\mathsf{P}$, $\mathsf{M}$ and $\mathsf{Z}$ after step (ii) in \Cref{proc:switched}, which we will name $\ket{\phi_\mathrm{start}}$ for notational convenience, lies \emph{inside} $\Pi_u$, it can be decomposed as
\begin{align*}
\ket{\phi_\mathrm{start}} = \sum_i \alpha_i \ket{u_i}.
\end{align*}

The extraction guarantee which \cite{CMSZ} prove is the following.

\begin{lemma}[\cite{CMSZ}; informal]
\label{lem:cmsz-guarantee}
For any state $\ket{\phi} \in \Pi_u$ with a decomposition $\ket{\phi} = \sum_i (\alpha_i \ket{c_i} + \beta_i \ket{u_i})$, and given the ability to implement the two projectors $\Pi_u$ and $\Pi_c$ (note that $\Pi_c$ implicitly depends on $rt$ through $V_{rt}$), the \cite{CMSZ} rewinding algorithm takes as input $\ket{\phi}$ and outputs $w$ such that $\mathsf{Merkle}_{hk}(w) = rt_w$ with probability $1-\eps$ in time $\poly(1/\eps)$, if the following condition holds:
\begin{align}
\label{eq:cmsz-succ-condition}
\sum_i (| \alpha_i |^2 + | \beta_i |^2) \cdot \1\big[ \:\: | \langle c_i | u_i \rangle |^2 \text{ is non-negligible} \:\: \big] \geq 1 - \mathsf{negl}(\lambda).
\end{align}
\end{lemma}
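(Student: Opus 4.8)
The plan is to give a sketch, following \cite{CMSZ}, that reduces the claim to a two-dimensional analysis via Jordan's lemma. First I would invoke the Jordan decomposition of the pair of projectors $(\Pi_u, \Pi_c)$: the joint Hilbert space of $\mathsf{P}, \mathsf{M}, \mathsf{Z}$ splits as an orthogonal direct sum $\bigoplus_i S_i$ of subspaces of dimension $\le 2$ that are simultaneously invariant under $\Pi_u$ and $\Pi_c$, with $\Pi_u$ projecting onto $\C\ket{u_i}$ and $\Pi_c$ onto $\C\ket{c_i}$ inside a two-dimensional block $S_i$, so that the whole dynamics inside $S_i$ is governed by the single overlap $p_i \deq |\langle c_i | u_i \rangle|^2$. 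In this basis the hypothesis \eqref{eq:cmsz-succ-condition} says exactly that all but a negligible fraction of $\|\ket\phi\|^2$ is supported on the ``good'' blocks, those with $p_i \ge 1/\poly(\lambda)$; since the blocks are invariant under both measurements, the extractor's behaviour decomposes over them, so it suffices to analyse it conditioned on the state lying in one good block.

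Inside a good block, the two measurements $\{\Pi_u, \1 - \Pi_u\}$ and $\{\Pi_c, \1 - \Pi_c\}$, alternated against each other, implement a classical random walk between the configurations ``in the image of $\Pi_u$'' and ``in the image of $\Pi_c$'', with transition probability $p_i$ in each direction. The key observation is that whenever the walk is in the image of $\Pi_c$, applying $\mathsf{CNOT}_z U_{P^*}$ moves the state into the span of $\ket{j,z}$ with $V_{rt}(j,z) = 1$, so a standard-basis measurement of $\mathsf{M}, \mathsf{Z}$ yields a valid opening (a challenge $j$ distributed close to uniform, together with symbols $m_{j_1}, \dots, m_{j_k}$ and Merkle paths passing $V_{rt}$); a subsequent $\Pi_u$-measurement returns to the image of $\Pi_u$ with probability $p_i$, ready to repeat. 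Hence $O(\ell / p_i)$ iterations collect $\ell$ such openings with overwhelming probability, at near-uniformly-random index sets. Since $p_i$ is not known a priori, one wraps this in an estimate-then-run loop in the style of Marriott--Watrous (first obtain a constant-factor estimate of $p_i$ by repeated alternating projections, or equivalently run with a geometrically increasing iteration budget and stop once enough openings are collected), so that the total running time is $\poly(1/\eps)$, where $\eps$ absorbs both the negligible mass on bad blocks and the target failure probability.

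It then remains to assemble $w$. Because the hash family is collapsing, and in particular collision-resistant, with all but negligible probability every opening produced by the loop is consistent with a single string committed under the root $rt$: two inconsistent openings would immediately exhibit a hash collision. Thus the collected symbols form a consistent partial view of one codeword; once enough coordinates of $\tilde w = E(w)$ have been recovered (a constant decodable fraction, which a coupon-collector argument guarantees after $\tilde O(|w|/k)$ openings), the decoder $D$ reconstructs a $w$ with $\mathsf{Merkle}_{hk}(w) = rt_w$, which is the conclusion of the lemma. (The robustness of the PCPP $\pi$ is what upgrades this to ``$w \in R_x$'', i.e.~the extraction-correctness clause of \Cref{thm:lms}; the state-preserving refinement of \cite{LMS} additionally argues that, restricted to good blocks, the alternating projection acts as a small-angle rotation that can be reversed up to negligible error and whose intermediate measurements are undetectable, again using collapsing.)

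The main obstacle is the quantum rewinding itself: unlike in the classical setting one cannot simply reset the prover and resample a fresh challenge, because each extraction measurement disturbs the residual state and there is no a priori bound on the damage. The resolution --- and the technical heart of \cite{CMSZ} --- is precisely the Jordan reduction above, which confines the disturbance to a two-dimensional random walk with a single gap parameter $p_i$ that can be controlled even without knowing it, via the estimate-then-run wrapper; the further subtlety of making the whole procedure undetectable, so that the prover still passes the remainder of the larger protocol afterwards (as needed for the reduction sketched around \Cref{thm:lms}), is what is resolved in \cite{LMS}.
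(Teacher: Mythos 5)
The paper does not actually prove this lemma: it is stated as an informal citation of \cite{CMSZ}, inside a section the authors explicitly flag as intuition rather than a re-proof, and the only supporting material is the surrounding prose about the Jordan decomposition of $(\Pi_u,\Pi_c)$. Your sketch is consistent with that prose and is a faithful informal reconstruction of the actual \cite{CMSZ} argument: the block decomposition, the interpretation of the hypothesis as ``all but negligible mass lives in blocks with non-negligible overlap $p_i$,'' the alternating-projection random walk with $O(1/p_i)$ hitting time, the Marriott--Watrous-style estimate-then-run wrapper giving $\poly(1/\eps)$ runtime, and the use of collision resistance to glue the collected openings into one committed string. One structural point worth flagging: in \cite{CMSZ} the extractor does not post-select on $\Pi_c$ and then read the challenge $j$ off the collapsed state (which would yield the acceptance-conditioned, possibly skewed, distribution over index sets); rather it samples a \emph{fresh uniform} challenge each round, measures the response, and then runs the alternating-projection ``repair'' subroutine to restore the success probability before the next round --- the covering/decoding step then goes through because the opened index sets are genuinely uniform. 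At the level of an informal lemma this does not break your argument, but your phrasing elides the repair step, which is where the non-negligible-overlap hypothesis is actually consumed. You also correctly attribute the undetectability refinement to \cite{LMS}, which matches how the paper uses the lemma downstream in \Cref{proc:extractor}.
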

In other words, if the weight in the superposition $\sum_i (\alpha_i \ket{c_i} + \beta_i \ket{u_i})$ is overwhelmingly in Jordan subspaces where $| \langle c_i | u_i \rangle |^2$ is non-negligible, then \cite{CMSZ} extraction will succeed with probability $1 - \frac{1}{\poly(\lambda)}$.

\paragraph{Coherent \cite{CMSZ} extraction.}
In \cite{LMS}, it is shown how to accomplish `undetectable extraction', i.e.~how to insert an extra step (E) in between steps (iii) and (iv) in \Cref{proc:switched}, corresponding with the execution of the extractor, such that the state left over after the whole of \Cref{proc:switched} with (E) inserted between (iii) and (iv) is computationally indistinguishable from the state left over after the whole of \Cref{proc:switched} without (E).

Lombardi, Ma and Spooner begin by considering the entire \cite{CMSZ} process as a black-box unitary $U_\mathrm{CMSZ}$ which takes as input a state $\ket{\phi} \in \Pi_u$ and outputs some state $\sum_{w'} \ket{w'} \ket{aux_{w'}}$ (ignoring normalisation) which is a superposition over candidate witnesses $w'$ and associated auxiliary states. It can be shown that, if the superposition $\sum_{w'} \ket{w'} \ket{aux_{w'}}$ has its weight only on terms with candidates $w'$ such that $\mathsf{Merkle}_{hk}(w') = rt_w$, then measuring $w'$ at this point is \emph{computationally undetectable}: this is because $\mathsf{Merkle}_{hk}(\cdot)$ is a \emph{collapse-binding commitment}, and the statement that no efficient algorithm can tell the difference between $\sum_{w'} \ket{w'}_\mathsf{W} \ket{aux_{w'}}_\mathsf{aux}$ and $\mathsf{Meas}_\mathsf{W} \big( \sum_{w'} \ket{w'}_\mathsf{W} \ket{aux_{w'}}_\mathsf{aux} \big)$, when the superposition is only over $w'$ such that $\mathsf{Merkle}_{hk}(w') = rt_w$, is precisely the definition of collapse-binding. (This is the only step in the analysis where the `extra' commitment to $w$ which the prover sends in step (i) of \Cref{prot:saok} in addition to the commitment to the PCPP is used.)

In other words, if the success condition for \cite{CMSZ} rewinding stated in \Cref{lem:cmsz-guarantee} is true, namely, the starting state $\ket{\phi} \in \Pi_u, \ket{\phi} = \sum_i \alpha_i \ket{u_i}$ on which $U_\mathrm{CMSZ}$ is run is such that
\begin{align*}
\sum_i | \alpha_i |^2 \cdot \1\big[ \:\: | \langle c_i | u_i \rangle |^2 \text{ is non-negligible} \:\: \big] \geq 1 - \mathsf{negl}(\lambda),
\end{align*}
then (except with at most $\frac{1}{\poly(\lambda)}$ probability for any $\poly$ of our choice) the extractor can run $U_\mathrm{CMSZ}$ and then measure the $\mathsf{W}$ register in the resulting superposition, and this will output a successful witness candidate $w'$ such that $\mathsf{Merkle}_{hk}(w') = rt_w$ and the measurement will be undetectable to the prover.

Our first hope might be that, if $P^*$ succeeds in \Cref{prot:saok} with high probability, then the CMSZ success condition \Cref{eq:cmsz-succ-condition} is true for the total state lying in registers $\mathsf{P}$, $\mathsf{M}$ and $\mathsf{Z}$ after step (ii) in \Cref{proc:switched}. Unfortunately, this turns out \emph{not} to be (necessarily) true; we will not go into why here.

However, one can guarantee the condition for the success of the CMSZ procedure for the total state in registers $\mathsf{P}$, $\mathsf{M}$ and $\mathsf{Z}$ after step (iii) of \Cref{proc:switched}. Note that the state after step (iii) is precisely the state after step (ii) after $\Pi_c$ has been applied to it once. The effect of applying $\Pi_c$ to a superposition $\sum_i \alpha_i \ket{u_i}$ is as follows:
\begin{align}
\Pi_c \big( \sum_i \alpha_i \ket{u_i} \big) &\propto \sum_i \alpha_i \big( \ket{c_i} \bra{c_i} \cdot \ket{u_i} \big) \label{eq:unnormalised-pi-c} \\
&= \sum_i \big(\alpha_i \langle c_i | u_i \rangle \big) \: \ket{c_i}
\end{align}
Note that this (subnormalised) state now has a decomposition of the form $\sum_i \beta_i \ket{c_i}$; and, moreover, if $|\langle c_i | u_i \rangle|^2$ is negligible, then the weight on $\ket{c_i}$ (i.e.~the squared norm of the coefficient of $\ket{c_i}$) in the superposition will be negligibly small. For brevity's sake we will ignore the issue of the normalisation, but the intuition that the coefficients where $|\langle c_i | u_i \rangle|^2$ is small get suppressed holds even in the presence of the correct renormalisation.

Therefore, the entire `undetectable extraction' procedure, given black-box access to a prover $P^*$ for \Cref{prot:saok}, is as follows:

\begin{process}[extraction]
\label{proc:extractor}
\end{process}
\begin{enumerate}
\item Run $P^*$ normally in order to generate its first message $rt$ to the verifier. Let the state that $P^*$ has left over after sending $rt$ be $\ket{\psi_{P^*}}$ (held in a private register $\mathsf{P}$).
\item Prepare a state $\sum_j \ket{j}$ (we will ignore normalisation) that is a \emph{uniform superposition} over challenges in a message register $\mathsf{M}$. Also create a new register $\mathsf{Z}$ initialised to the all zero state which will be used in step (iii).
\item This step can be stated in words as a series of substeps:
\begin{enumerate}
\item Apply the prover's unitary $U_{P^*}$ jointly to $\mathsf{P}$ and $\mathsf{M}$.
\item Coherently copy (in the standard basis) the part of the state that would have been measured in step (iv) of \Cref{proc:original} above to obtain a superposition over responses $z$ into register $\mathsf{Z}$. For short we will call the unitary that does this copying $\mathsf{CNOT}_z$.
\item \emph{Project} the state in registers $\mathsf{P}$, $\mathsf{M}$ and $\mathsf{Z}$ into the subspace where $V_{rt}(j,z) = 1$. If the projective measurement results in $V_{rt}(j,z) = 0$, output $\mathsf{fail}$.
\item Undo $\mathsf{CNOT}_z$ and $U_{P^*}$ in that order.
\end{enumerate}
In other words, in this step, apply the projective measurement where one of the projectors in the measurement is
\begin{align*}
\Pi_c \coloneqq U_{P^*}^\dagger \mathsf{CNOT}_z^\dagger \Big( \sum_{V_{rt}(j,z) = 1} \1_{\mathsf{P}} \otimes \ket{j,z}_{\mathsf{MZ}} \bra{j,z}_{\mathsf{MZ}} \Big) \mathsf{CNOT}_z U_{P^*}
\end{align*}
and the other is $I - \Pi_c$; output $\mathsf{fail}$ if the outcome is $I - \Pi_c$.
\item[(E)] Perform $U_\mathrm{CMSZ}$, the unitary that does the CMSZ rewinding procedure coherently, which results in a state $\sum_{w'} \ket{w'}_\mathsf{W} \ket{aux_{w'}}_\mathsf{aux}$. Measure the $\mathsf{W}$ register and check if the outcome is a string $w$ such that $\mathsf{Merkle}_{hk}(w) = rt_w$. If no, output $\mathsf{fail}$. If yes, apply $U_\mathrm{CMSZ}^\dagger$.
\item Redo $\mathsf{CNOT}_z$ and $U_{P^*}$ (to counteract substep (iv) of step (iii)). Measure registers $\mathsf{M}$ and $\mathsf{Z}$; obtain outcomes $j$ and $z$.
\end{enumerate}
Conditioned on the extractor not outputting $\mathsf{fail}$, the state that remains in registers $\mathsf{P}$ and $\mathsf{M}$ (note that $\mathsf{Z}$ is a work register for the extractor) after \Cref{proc:extractor} is computationally indistinguishable from the state which would remain in those same registers after a real (successful) execution of \Cref{prot:saok}. Note that the probability that the extractor outputs $\mathsf{fail}$ in step (iii) is the same as the probability that the prover fails the real execution of \Cref{prot:saok}, and that the probability the extractor outputs $\mathsf{fail}$ in step (E) is at most $\eps$ if the extractor runs for time $\poly(1/\eps)$. As such, if \Cref{prot:saok} is a sub-protocol in a longer protocol, then extraction can be performed while affecting the prover's probability of passing in the rest of the longer protocol by only $\eps + \mathsf{negl}(\lambda)$.

\subsection{A fully succinct version of \Cref{prot:main}}

The intuition for compiling \Cref{prot:main} (which is already question-succinct) into a fully succinct protocol is as follows. Every time the prover $P$ is supposed to send a message to the verifier $V$ in \Cref{prot:main}, we ask the prover $\tilde P$ in the succinct version of \Cref{prot:main} to \emph{commit} (using some succinct computationally binding commitment) to the answer that $P$ would have provided, and then execute \Cref{prot:saok} in order to prove succinctly that it `knows' a valid opening to that commitment. At the end of the protocol, $V$ in \Cref{prot:main} takes the answers that it receives and evaluates a decision predicate. Since $\tilde V$, the succinct protocol's verifier, does not have the prover's answers (instead it only experienced a short interactive proof that the prover `knew' answers of some description), it cannot evaluate the decision predicate for itself. Instead, the verifier $\tilde V$ reveals all of its secret randomness at the end of the protocol, and asks $\tilde P$ to execute \Cref{prot:saok} one more time to prove that it knows full-length answers which are `consistent' with its ($\tilde P$'s) earlier commitments, and moreover that these full-length answers satisfy $V$'s decision predicate.

More specifically, the fully succinct version of \Cref{prot:main} is as follows:

\begin{longfbox}[breakable=true, padding=1em, padding-right=1.8em, padding-top=1.2em, margin-top=1em, margin-bottom=1em]
\begin{protocol}[Fully succinct version of \Cref{prot:main}]
\label{prot:succinct}	
\end{protocol}
Inputs: an instance $x$, a description of the verification algorithm $C$ of a promise problem $A \in \mathsf{QMA}$, a security parameter $\lambda$, and (for the honest prover $\tilde P$) polynomially many copies of a witness that $x \in A_\mathrm{yes}$. 
\begin{enumerate}
	\item \textbf{Phase 1: hash key, `Alice' question and answer}
	\begin{enumerate}
	\item The verifier $\tilde V$ of \Cref{prot:succinct} samples a hash key $hk$ from some collapsing hash function family. It also runs $V$, the verifier of \Cref{prot:main}, on the inputs $x$, $C$, $\lambda$ in order to generate the two questions of \Cref{prot:main}, one of which is a ciphertext $\hat q_1$ encrypting some `Alice' question $q_1$, and the other of which is a plaintext `Bob' question $q_2$. $\tilde V$ sends $hk$ and $\hat q_1$ to the prover $\tilde P.$
	\item Honest $\tilde P$ responds with a \emph{succinct commitment}\footnote{Succinct computationally binding commitments can be constructed from collapsing hash functions using Merkle trees.} to the answer that $P$ would have given $V$. We will call $\tilde P$'s response here $\mathsf{com}_1$.
	\item $\tilde P$ and $\tilde V$ execute \Cref{prot:saok} so that $\tilde P$ can prove that it knows a \emph{valid opening} to its commitment $\mathsf{com}_1$. The property of being a valid opening can be phrased as an $\mathsf{NP}$ relation $R_1(\mathsf{com}_1, w_1) =  (\mathsf{Merkle}_{hk}(w_1) = \mathsf{com}_1) $.
	\end{enumerate}
	\item \textbf{Phase 2: `Bob' question and answer}
	\begin{enumerate}
	\item $\tilde V$ sends $q_2$ (which it generated earlier along with $\hat q_1$) to $\tilde P$.
	\item Honest $\tilde P$ responds with a succinct commitment to the answer that $P$ would have given $V$. We will call $\tilde P$'s response here $\mathsf{com}_2$.
	\item $\tilde P$ and $\tilde V$ execute \Cref{prot:saok} so that $\tilde P$ can prove that it knows a valid opening to its commitment $\mathsf{com}_2$. As before, this can be phrased as an $\mathsf{NP}$ relation $R_2(\mathsf{com}_2, w_2) = ( \mathsf{Merkle}_{hk}(w_2) = \mathsf{com}_2)$.
	\end{enumerate}
	\item \textbf{Phase 3: Proof of knowledge that $V$ would have accepted}
	\begin{enumerate}
	\item $\tilde V$ now reveals the secret key $sk$ of the quantum homomorphic encryption scheme under which $\hat q_1$ was encrypted.
	\item $\tilde V$ and $\tilde P$ execute \Cref{prot:saok} so that $\tilde P$ can prove that it knows $\hat a_1$ and $a_2$ such that:
	\begin{enumerate}
	\item $\hat a_1$ is a valid opening of $\mathsf{com}_1$,
	\item $a_2$ is a valid opening of $\mathsf{com}_2$,
	\item $V(x,C,1^\lambda,\mathsf{Dec}_{sk}(\hat q_1), \mathsf{Dec}_{sk}(\hat a_1), q_2, a_2) = 1$.
	\end{enumerate}
    Formally, this can be expressed as an $\mathsf{NP}$ relation $R_3(\underbrace{(\mathsf{com}_1, \mathsf{com}_2, x, C, 1^\lambda, \hat q_1, q_2, sk)}_{\text{instance}}, \underbrace{(\hat{a}_1, a_2)}_{\text{witness}})$, in the standard way.
	\end{enumerate}
\end{enumerate}
\end{longfbox}

\begin{theorem}\label{thm:succinct-soundness}
There exists a negligible function $\mathsf{negl}(\cdot)$ such that, for any $(x,C,\lambda)$, and any $\eps = \frac{1}{\poly(\lambda)}$, if there exists an efficient prover $\tilde P$ which causes $\tilde V$ (the verifier of \Cref{prot:succinct}) to accept on input $(x,C,\lambda)$ with probability $\tilde p$, then there exists an efficient prover $P$ which causes $V$ (the verifier of \Cref{prot:main}) to accept $(x,C,\lambda)$ with probability $p$, such that $p \geq \tilde p - \mathsf{negl}(\lambda) - O(\eps)$.
\end{theorem}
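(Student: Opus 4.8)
The plan is to construct the prover $P$ for \Cref{prot:main} out of $\tilde P$ by using the state-preserving extractor of \Cref{thm:lms} to recover, from each succinct commitment that $\tilde P$ sends, the actual message to which $\tilde P$'s ``knowledge'' refers, and to forward those messages to $V$ verbatim. Purifying if necessary, we may assume $\tilde P$ is unitary. On receiving the ciphertext question $\hat q_1$ from $V$, the prover $P$ samples a hash key $hk$, internally runs $\tilde P$ through Phase~1 of \Cref{prot:succinct} using $hk$ and the externally supplied $\hat q_1$ (in place of a freshly sampled one), obtains the commitment $\mathsf{com}_1$, and invokes the \Cref{prot:saok} extractor of \Cref{thm:lms} with parameter $\eps$ for the relation $R_1$, obtaining a string $w_1$ and retaining $\tilde P$'s residual state; it then sends $w_1$ to $V$ as its first answer. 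On receiving the plaintext question $q_2$, it feeds $q_2$ to the residual $\tilde P$ as in Phase~2, obtains $\mathsf{com}_2$, runs the extractor again (parameter $\eps$) for $R_2$ to obtain $w_2$, and sends $w_2$ as its second answer. Since $\eps = 1/\poly(\lambda)$ and each extractor runs in time $\poly(n,\lambda,1/\eps)$, the resulting $P$ is efficient.

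To lower bound the probability that $V$ accepts, note that $V$ accepts exactly when $\hat q_1, w_1, q_2, w_2$ (together with $sk$) satisfy the decision predicate of \Cref{prot:main}'s verifier --- which, spelled out, is precisely the relation $R_3$ from Phase~3 of \Cref{prot:succinct}, evaluated with the pair $(w_1, w_2)$ in place of $(\hat a_1, a_2)$. To control this, I would introduce a thought experiment in which an analyst, who (unlike $P$) knows $sk$, runs the \emph{entire} protocol \Cref{prot:succinct} against $\tilde P$ but replaces each of its three \Cref{prot:saok} executions by the corresponding extractor of \Cref{thm:lms} (all with parameter $\eps$), recording $w_1$ from Phase~1, $w_2$ from Phase~2, and the witness $(\hat a_1, a_2)$ from Phase~3. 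Through the end of Phase~2 this experiment coincides exactly with $P$'s interaction with $V$, so the joint distribution of $(sk, \hat q_1, w_1, q_2, w_2)$ is identical in the two; hence $V$'s acceptance probability against $P$ equals the probability, in the thought experiment, that $(w_1, w_2)$ satisfies $R_3$.

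The heart of the argument is then a hybrid over the three proof-of-knowledge executions. Starting from the real run of \Cref{prot:succinct} --- in which all three \Cref{prot:saok} verifiers accept with probability $\tilde p$ --- I would replace the Phase-1, then Phase-2, then Phase-3 proof of knowledge by its extractor. By the state-preserving property of \Cref{thm:lms}, which holds even conditioned on the classical extraction outputs recorded so far, each replacement changes the probability that the remaining \Cref{prot:saok} verifiers accept by at most $\eps + \negl(\lambda)$; and by extraction correctness, each extracted witness is valid whenever its extracted transcript is accepting, except with probability $\eps + \negl(\lambda)$. Summing the six error terms, I would conclude that with probability at least $\tilde p - O(\eps) - \negl(\lambda)$ all three extractions succeed: $w_1$ and $\hat a_1$ are both valid openings of $\mathsf{com}_1$, $w_2$ and $a_2$ are both valid openings of $\mathsf{com}_2$, and the extracted pair $(\hat a_1, a_2)$ satisfies $R_3$. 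Finally, since the Merkle commitments are computationally binding --- a consequence of the collapsing property, which implies collision resistance against efficient adversaries, and every party in the thought experiment is efficient --- the two openings of $\mathsf{com}_1$ must coincide ($w_1 = \hat a_1$) and likewise $w_2 = a_2$, except with negligible probability. Therefore $(w_1, w_2) = (\hat a_1, a_2)$ satisfies $R_3$ with probability at least $\tilde p - O(\eps) - \negl(\lambda)$ in the thought experiment, which by the distributional identity above means that $V$ accepts when interacting with $P$ with at least that probability.

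The step I expect to be the main obstacle is the sequential composition of the extractor: after extracting in Phase~1 one must still be able to extract in Phases~2 and~3, and the event ``the Phase-$i$ extraction produced a valid witness'' must be carried unchanged through the subsequent extractions. This is exactly what the undetectability / state-preserving guarantee of \Cref{thm:lms} (the \cite{LMS} refinement of the \cite{CMSZ} analysis) provides, but one has to invoke it with care --- in particular conditioning on the already-recorded classical outputs --- and track the accumulated $O(\eps)$ loss. A secondary, more cosmetic subtlety is that $P$ cannot itself run Phase~3, because $V$ never reveals $sk$; this is why the ``goodness'' of the extracted answers $w_1, w_2$ is established in the auxiliary $sk$-knowing experiment and only then transferred back to $P$ through the binding of the commitments.
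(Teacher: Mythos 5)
Your proposal is correct and follows essentially the same route as the paper's proof: construct $P$ by extracting $w_1, w_2$ from the Phase-1 and Phase-2 commitments and forwarding them to $V$, analyse via a mental experiment in which an $sk$-knowing party also runs the Phase-3 extraction, and use computational binding of the Merkle commitments to identify $(w_1,w_2)$ with the Phase-3 witness $(\hat a_1, a_2)$, accumulating an $O(\eps)+\negl(\lambda)$ loss from the three state-preserving extractions. The only cosmetic difference is that the paper's mental-experiment reduction $\cR'$ recovers $sk$ by brute force and efficiency is restored only inside the binding reduction (where the adversary samples its own keys), whereas you let the analyst know $sk$ from the start; these amount to the same thing.
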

\begin{proof}[Proof sketch]
The analysis proceeds via reduction to \Cref{prot:main}. In particular, we consider a reduction $\cR$ that plays the part of the verifier $\tilde V$ with $\tilde P$ in \Cref{prot:succinct} and the part of the prover $P$ with $V$ in \Cref{prot:main}, and show that $\cR$ passes in \Cref{prot:main} with about the same probability that $\tilde P$ passes in \Cref{prot:succinct}. We assume that $\cR$ gets the same kind of black-box access to $\tilde P$ that the extractor of \Cref{proc:extractor} does.

$\cR$ does the following:
\begin{enumerate}

\item \textbf{Phase 1: `Alice' question and answer}
\begin{enumerate}
\item Receives $\hat q_1$ from $V$; samples $hk$ for itself; inputs $hk, \hat q_1$ into $\tilde P$.
\item Gets $\mathsf{com}_1$ from $\tilde P$.
\item Gets $rt_1$, the first message of the first execution of \Cref{prot:saok} in \Cref{prot:succinct}, from $\tilde P$.
\item Runs \Cref{proc:extractor} on $\tilde P$ in order to extract a witness $w_1$ such that $w_1$ is a valid opening for $\mathsf{com}_1$. (Note that, in the case of honest $\tilde P$, $w_1$ is an encryption of an `Alice' answer $a_1$.)
\item Returns $w_1$ to $V$ as its `Alice' answer.
\end{enumerate}

\item \textbf{Phase 2: `Bob' question and answer}
\begin{enumerate}
\item Receives $q_2$ from $V$; inputs $q_2$ into $\tilde P$.
\item Gets $\mathsf{com}_2$ from $\tilde P$.
\item Gets $rt_2$, the first message of the second execution of \Cref{prot:saok} in \Cref{prot:succinct}, from $\tilde P$.
\item Runs \Cref{proc:extractor} on $\tilde P$ in order to extract a witness $w_2$ such that $w_2$ is a valid opening for $\mathsf{com}_2$. (Note that, in the case of honest $\tilde P$, $w_2$ is a `Bob' answer $a_2$.)
\item Returns $w_2$ to $V$ as its `Bob' answer.
\end{enumerate}

\item $\cR$ ignores Phase 3 of \Cref{prot:succinct} and aborts after Phase 2.
\end{enumerate}

Note that $\cR$ does not use the third execution of \Cref{prot:saok} in \Cref{prot:succinct}; in fact, $\cR$ cannot continue playing with $\tilde P$ after Phase 2 because it does not know the secret key for the homomorphic encryption which $V$ generated. However, the third execution of \Cref{prot:saok} in \Cref{prot:succinct} will be used in the analysis. In particular, in order to prove that $\cR$ succeeds in \Cref{prot:main} with about the same probability that $\tilde P$ succeeds in \Cref{prot:succinct}, we will consider the following mental experiment. Consider an $\cR'$ which behaves identically to $\cR$ in Phase 1 and Phase 2 of \Cref{prot:succinct}, and in Phase 3, instead of aborting, does an inefficient brute force search for the secret key of the encryption.
\begin{process}[A mental experiment: $\cR'$'s execution]
\label{proc:mental-experiment}	
\end{process}

\begin{enumerate}
\item \textbf{Phase 1: `Alice' question and answer}

\begin{enumerate}
\item Receives $\hat q_1$ from $V$; samples $hk$ for itself; inputs $hk, \hat q_1$ into $\tilde P$.
\item Gets $\mathsf{com}_1$ from $\tilde P$.
\item Gets $rt_1$, the first message of the first execution of \Cref{prot:saok} in \Cref{prot:succinct}, from $\tilde P$.
\item Runs \Cref{proc:extractor} on $\tilde P$ in order to extract a witness $w_1$ such that $w_1$ is a valid opening for $\mathsf{com}_1$. (Note that, in the case of honest $\tilde P$, $w_1$ is an encryption of an `Alice' answer $a_1$.)
\item Returns $w_1$ to $V$ as its `Alice' answer.
\end{enumerate}

\item \textbf{Phase 2: `Bob' question and answer}
\begin{enumerate}
\item Receives $q_2$ from $V$; inputs $q_2$ into $\tilde P$.
\item Gets $\mathsf{com}_2$ from $\tilde P$.
\item Gets $rt_2$, the first message of the second execution of \Cref{prot:saok} in \Cref{prot:succinct}, from $\tilde P$.
\item Runs \Cref{proc:extractor} on $\tilde P$ in order to extract a witness $w_2$ such that $w_2$ is a valid opening for $\mathsf{com}_2$. (Note that, in the case of honest $\tilde P$, $w_2$ is a `Bob' answer $a_2$.)
\item Returns $w_2$ to $V$ as its `Bob' answer.
\end{enumerate}

\item \textbf{Phase 3: proof of knowledge that $V$ would have accepted}
\begin{enumerate}
\item Finds the secret key $sk$ of the quantum homomorphic encryption scheme being used by $V$ (by brute force, let's say) and inputs this key into $\tilde P$.
\item Gets $rt_3$, the first message of the third and last execution of \Cref{prot:saok} in \Cref{prot:succinct}, from $\tilde P$.
\item Runs \Cref{proc:extractor} on $\tilde P$ in order to extract a witness $w_3$ such that $w_3 = (\hat a_1, a_2)$, where
	\begin{enumerate}
	\item $\hat a_1$ is a valid opening of $\mathsf{com}_1$,
	\item $a_2$ is a valid opening of $\mathsf{com}_2$,
	\item $V(\mathsf{Dec}_{sk}(\hat q_1), \mathsf{Dec}_{sk}(\hat a_1), q_2, a_2) = 1$.
	\end{enumerate}
\end{enumerate}

\end{enumerate}
$\cR'$ is, of course, inefficient; however, it is efficient \emph{given the secret key of the homomorphic encryption}. Moreover, suppose we condition on all three extractions in \Cref{proc:mental-experiment} succeeding. Then, if $w_1$ (the witness that $\cR'$ extracts from $\tilde P$ during Phase 1) is equal to $\hat a_1$ (which is our name for the first part of the witness $w_3$ extracted by $\cR'$ from $\tilde P$ during Phase 3) except with negligible probability in the mental experiment, and similarly $w_2 = a_2$ except with negligible probability, then it is the case that $\cR$ will be accepted by $V$ with probability at least $1 - \mathsf{negl}(\lambda)$. (This is because, if extraction succeeds, then $w_3$ consists of a pair of answers which, by definition, causes $V$ to accept.) The event that all three extractions succeed happens with probability at least $\tilde p - 3\eps - \mathsf{negl}(\lambda)$ by a union bound, where $\tilde p$ is the probability that $\tilde P$ passes overall. Therefore, it is sufficient to prove the following lemma:

\begin{lemma}
In \Cref{proc:mental-experiment}, except with negligible probability, $w_1 = \hat a_1$ and $w_2 = a_2$.
\end{lemma}
\begin{proof}[Proof sketch]
Consider an adversary $\cA$ for the security for the collapsing hash function family from which $hk$ was drawn. $\cA$ can generate its own homomorphic encryption keys and then simulate the entire interaction between $V$, $\cR'$ and $\tilde P$ efficiently. If $w_1 \neq \hat a_1$ or $w_2 \neq \hat a_2$, then $\cA$ has generated two valid openings of a computationally binding\footnote{We note that, in this particular situation, classical binding actually suffices; collapse-binding is not necessary.} commitment (either two valid openings to $\mathsf{com}_1$ or two valid openings to $\mathsf{com}_2$); this can only happen with negligible probability by the definition of computational binding. The claim follows.
\end{proof}

Therefore, $w_1 = \hat a_1$ and $w_2 = a_2$ except with negligible probability, and so $\cR$ will be accepted by $V$ with probability at least $\tilde p - \mathsf{negl}(\lambda) - O(\eps)$.
\end{proof}

\begin{theorem}\label{thm:main-succinct-technical}
The protocol \Cref{prot:succinct} is a succinct argument system for $\QMA$ assuming a $\mathsf{QHE}$ scheme satisfying the definition given in \Cref{def:QHE-aux}, and assuming the existence of collapsing hash functions (see \cite[Section 3.6]{BKLMMVVY22} for a definition of these objects).
    
More precisely, let $V$ be the verifier, and $P$ be the honest prover described in the protocol. Then for any promise problem $A = (A_{yes}, A_{no})$ in $\QMA$ with verification algorithm $C$, the following hold:
    \begin{itemize}
        \item \textbf{Completeness:} Let $x \in A_{yes}$ , and let $\ket{\psi}$ be an accepting $\QMA$ witness for $x$. Then the verifier $V$ on input $(x, C,1^\lambda)$, in interaction with the honest prover $P$ on input $(x,C, \ket{\psi}^{\poly n}, 1^\lambda)$, accepts with probability $\geq 1 - \negl(n)$.  
        \item \textbf{Soundness:} Let $x \in A_{no}$, and let $\ket{\phi}$ be \emph{any} state on $\poly(n + \lambda)$ qubits. Then the verifier $V$ on input $(x,C, 1^\lambda)$, in interaction with any QPT prover $P^*$ on input $(x,C, \ket{\phi}, 1^\lambda)$, accepts with probability at most $s'$ for some universal constant $s' < 1$.
        \item \textbf{Succinctness:} On any input $x$ and for security parameter $\lambda$, the runtime of the verifier $V$ is $\tilde{O}(n) + O(\poly\log n \cdot \poly(\lambda))$, and the total number of bits communicated between the prover and verifier is $O(\poly\log n \cdot \poly(\lambda))$.
    \end{itemize}
\end{theorem}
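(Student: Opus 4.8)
The plan is to derive all three properties of \Cref{prot:succinct} by assembling results already in hand: \emph{completeness} and \emph{succinctness} follow essentially by inspection of the protocol, while \emph{soundness} is an immediate consequence of \Cref{thm:succinct-soundness} combined with the soundness clause of \Cref{thm:main}. Throughout, I would take the underlying copy of \Cref{prot:main} to be the oblivious-and-efficient variant described at the start of \Cref{sec:killian} (where the choice between the commutation and anticommutation subtests is made by an independent coin, with a wrong guess leading to automatic acceptance); this is what makes the verifier's question generation run in time $\tilde O(n) + \poly\log n \cdot \poly(\lambda)$, and it changes the soundness constant of \Cref{thm:main} by at most a constant factor. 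The hash-function assumption enters only through \Cref{thm:lms} (for the succinct arguments of knowledge and their extraction) and through the succinct binding commitments $\mathsf{com}_1,\mathsf{com}_2$.

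For completeness, I would observe that the honest prover $\tilde P$ of \Cref{prot:succinct} runs the honest prover $P$ of \Cref{prot:main} internally: in Phases~1 and~2 it forms $P$'s message, sends a Merkle commitment to it, and executes \Cref{prot:saok} honestly to prove knowledge of the opening (Killian-style arguments have perfect completeness when the PCPP statement is true and all Merkle openings are valid); in Phase~3, having received $sk$, it knows $\hat a_1$ and $a_2$, and by the completeness clause of \Cref{thm:main} the predicate $V(x,C,1^\lambda,\mathsf{Dec}_{sk}(\hat q_1),\mathsf{Dec}_{sk}(\hat a_1),q_2,a_2)$ evaluates to $1$ with probability $1-\negl(n)$, so $\tilde P$ can run the final \Cref{prot:saok} honestly as well. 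Hence $\tilde V$ accepts with probability $1-\negl(n)$. (The honest $\tilde P$ is quantum polynomial time given polynomially many witness copies, since $P$ is and since the Merkle commitments and the PCPP proofs for the relations $R_1,R_2,R_3$ are all computable in time $\poly(n,\lambda)$.)

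For soundness, fix $x\in A_{\mathrm{no}}$ and let $\eps$ be a sufficiently small constant (in particular $\eps = 1/\poly(\lambda)$, as required by \Cref{thm:succinct-soundness}, so the extractor of \Cref{thm:lms} still runs in polynomial time). If a QPT prover $\tilde P$ makes $\tilde V$ accept with probability $\tilde p$, then \Cref{thm:succinct-soundness} yields a QPT prover $P$ making the verifier of \Cref{prot:main} accept with probability $p \geq \tilde p - \negl(\lambda) - O(\eps)$, while the soundness clause of \Cref{thm:main} gives $p \leq s$ for a universal constant $s < 1$. Rearranging, $\tilde p \leq s + O(\eps) + \negl(\lambda)$; choosing $\eps$ small enough that $s + O(\eps) < 1$ and absorbing the negligible term, we obtain $\tilde p \leq s'$ for a universal constant $s' < 1$. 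I do not expect a real obstacle here: the one conceptual difficulty, that the reduction performs three successive extractions and each must leave the prover's residual state essentially undisturbed so the later extractions still succeed, is exactly the $\eps$-state-preserving guarantee of \Cref{thm:lms} that \Cref{thm:succinct-soundness} already invokes; the only thing to double-check is that $\eps$ can simultaneously be taken small enough to beat the constant $s$ while remaining $1/\poly(\lambda)$, which it can.

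For succinctness, I would bound the communication message by message. The hash key $hk$ and the later-revealed secret key $sk$ have length $\poly(\lambda)$; the ciphertext $\hat q_1$ encrypts an `Alice' question of length $\poly\log n$, so by the \emph{compactness of encryption} clause of \Cref{def:QHE-aux} (using that the honest Alice computations are log-depth in $n$) it has length $\poly\log n \cdot \poly(\lambda)$; the plaintext `Bob' question $q_2$ has length $\poly\log n$; and the commitments $\mathsf{com}_1,\mathsf{com}_2$ have length $\poly(\lambda)$. The three invocations of \Cref{prot:saok} are for the relations $R_1,R_2$ (recomputing a Merkle root on an opening of length $\poly(n)$, decidable in time $\poly(n,\lambda)$) and $R_3$ (additionally running $V$'s decision predicate, which is decidable in time $\poly(n,\lambda)$ since the $\QMA$-to-Hamiltonian reduction and the membership test for $Q(w)$ are efficient); so in each case the relation-decision time $T$ satisfies $\log T = O(\log n + \log\lambda)$, and by the succinctness clause of \Cref{thm:lms} each invocation costs $\poly(\lambda,\log T) = \poly(\lambda,\log n)$ bits of communication and contributes $\poly(\lambda,\log n) + \tilde O(n)$ to the verifier's running time. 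Summing all contributions gives total communication $O(\poly\log n \cdot \poly(\lambda))$ and total verifier running time $\tilde O(n) + O(\poly\log n \cdot \poly(\lambda))$, where the $\tilde O(n)$ term accounts for reading the instance, generating $V$'s questions via the efficient variant of \Cref{prot:main}, and the linear-in-$n$ cost of the \Cref{prot:saok} verifiers from \Cref{thm:lms}.
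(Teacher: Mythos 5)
Your proposal is correct and follows essentially the same route as the paper: completeness and soundness are obtained by combining \Cref{thm:main} with \Cref{thm:succinct-soundness} (taking $\eps$ a small enough constant), and succinctness by a message-by-message accounting using the succinctness clause of \Cref{thm:lms}. The only detail the paper spells out that you gloss over is \emph{why} the verifier's question generation runs in $\tilde O(n)$ time without executing the full QMA-to-Hamiltonian reduction — the paper stipulates that $C$ comes with an explicit polynomial bounding its runtime, so the lengths of the (encrypted or plain) random challenges can be computed from $|x|$ and that polynomial alone.
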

\begin{proof}
    Completeness, soundness, and the communication bound in succinctness all follow from \Cref{thm:main} and \Cref{thm:succinct-soundness}. For the runtime part of succinctness, we must be slightly more careful. Recall that the verifier $V$ of \Cref{prot:main} satisfies the obliviousness and efficiency properties that (1) the challenges it generates depend only on the algorithm $C$ and on the \emph{length} of the instance $x$, not on the instance itself, and (2) all the challenges can be generated in time $\poly\log n \cdot \poly \lambda + \tilde O(n)$. Specifically, the questions generated by $V$ were of the following form:
    \begin{itemize}
    \item Encryptions of uniform randomness of some predetermined length scaling as $\poly\log(|x|)$, or
    \item Non-encrypted uniform randomness of some predetermined length scaling as $\poly\log(|x|)$,
    \end{itemize}
    where the length depends on $|x|$ and $C$.
    For our current purposes, we would like to claim that this means that $V$ can generate its challenges in time $\tilde{O}(|x|) + \poly \log |x| \cdot \poly(\lambda)$, and in particular does \emph{not} need to run the potentially costly reduction from $x$ to a Hamiltonian problem $(H, \alpha, \beta)$. In order to ensure that this is the case, let us specify that $C$ is given as a description of the algorithm, \emph{together with} an explicit polynomial upper-bounding the runtime of $C$. Then, the length of the randomness to be generated in the challenges depends only on the number of qubits and terms in the Hamiltonian $H$, which in turn can be efficiently computed given $|x|$, and the explicit polynomial. This means it can be computed in $\tilde{O}(|x|)$ time.

    Now, let us calculate the runtime for the succinct verifier $\tilde{V}$ in Phase 1. First, the runtime to generate the questions $\hat{q}_1, q_2$ is $\tilde{O}(n) + \poly\log n \cdot \poly\lambda$ by the previous paragraph. Moreover, since these questions come from a question-succinct protocol, their length is $\poly \log n \cdot \poly \lambda$---we will need this later when we analyse Phase 3. Next, let us compute the runtime for the succinct argument of knowledge at the end of Phase 1. The relation $R_1$ has instance length equal to $\ell_1 = |\mathsf{com}_1|  = \poly(\log n) \cdot \poly(\lambda)$, and $R_1$ can be decided in time $T_1 = \poly(n, \lambda)$. Thus, by \Cref{thm:lms}, the runtime of the succinct argument is $\tilde{O}(\ell_1) + \poly(\log(T_1)) \cdot \poly(\lambda) = \poly(\log n) \cdot \poly(\lambda)$.

    Now we move on to Phases 2 and 3. Here we can see that the runtime is dominated by the runtime of the succinct arguments of knowledge. In Phase 2, the runtime is identical to that of Phase 1. For Phase 3, the relation $R_3$ has instance length 
    \begin{align*}
    \ell_3 &= |\mathsf{com}_1| + |\mathsf{com}_2| + |x| + |C| + \lambda + |\hat{q}_1| + |q_2| + |sk| \\
    &= n + \poly(\lambda) \cdot \poly \log n.
    \end{align*}
    It can be decided in time $T_3 = \poly(n, \lambda)$. Thus, again applying~\Cref{thm:lms}, the runtime of the succinct argument is $\tilde{O}(\ell_3) + \poly(\log T_3) \cdot \poly \lambda = \tilde{O}(n) + \poly(\log n) \cdot \poly(\lambda)$.
    
    All together, the total runtime of $\tilde{V}$ is  $\tilde{O}(n) + \poly\log(n) \cdot \poly(\lambda)$ as desired. 
\end{proof}

\bibliographystyle{myhalpha}
\bibliography{main}

\end{document}